\providecommand{\tabularnewline}{\\}
\newcommand{\enote}[1]{\textcolor{cyan}{\small {\textbf{(Eric:} #1\textbf{)}}}}
\begin{document} 

	\title{Group coset monogamy games and an application\\ to device-independent continuous-variable QKD}
	\author{
		Eric Culf\thanks{Department of Mathematics and Statistics, University of Ottawa, Canada. Email: \texttt{eculf019@uottawa.ca}}
		\and Thomas Vidick\thanks{Faculty of Mathematics and Computer Science, The Weizmann Institute of Science and Department of Computing and Mathematical Sciences, California Institute of Technology. Email:  \texttt{thomas.vidick@weizmann.ac.il}}
		\and Victor V. Albert\thanks{Joint Center for Quantum Information and Computer Science, NIST and University of Maryland, USA. Email: \texttt{vva@umd.edu}}
		}
	\date{\today}
	\maketitle

	\begin{abstract}
		We develop an extension of a recently introduced subspace coset state monogamy-of-entanglement game [Coladangelo, Liu, Liu, and Zhandry; Crypto'21] to general group coset states, which are uniform superpositions over elements of a subgroup to which has been applied a group-theoretic generalization of the quantum one-time pad. We give a general bound on the winning probability of a monogamy game constructed from subgroup coset states that applies to a wide range of finite and infinite groups. To study the infinite-group case, we use and further develop a measure-theoretic formalism that allows us to express continuous-variable measurements as operator-valued generalizations of probability measures.

		We apply the monogamy game bound to various physically relevant groups, yielding realizations of the game in continuous-variable modes as well as in rotational states of a polyatomic molecule. We obtain explicit strong bounds in the case of specific group-space and subgroup combinations. As an application, we provide the first proof of one sided-device independent security of a squeezed-state continuous-variable quantum key distribution protocol against general coherent attacks.
	\end{abstract}

	\newpage

	\section{Introduction \& summary of results}

	Quantum entanglement between several parties can be considered as a shared resource. The principle of \emph{monogamy} of entanglement informally states that two parties
	cannot be maximally entangled with each other if they are also entangled
	with a third.  One way to understand this restriction is through the notion of a \emph{monogamy
	game}, in which two players, Bob and Charlie, are tasked with simultaneously
	determining features of a state sent to them by a third
	party Alice. Both Bob and Charlie have to extract their feature correctly in order to win the game. Winning is not always possible since the desired
	features are stored in the multipartite correlations between Alice
	and the duo of Bob and Charlie, and those correlations cannot be made
	available to both Bob and Charlie due to entanglement monogamy. This can be seen as a consequence of the no-cloning property of quantum information.

	Monogamy games have been developed to prove strong statements about the
	viability of various quantum cryptographic protocols. The initial example of a monogamy game was used to prove that the BB'84 quantum key
	distribution (QKD) protocol \cite{BB84} is secure in a one-sided device-independent model,
	\emph{i.e.}\ secure even if one does not make assumptions about the receiver's measurement device \cite{tomamichel2013monogamy,Primaatmaja2022}. More recently, a monogamy game based on an extension of BB'84 states called \emph{subspace coset states} was introduced \cite{coladangelo2021hidden}, with
applications to uncloneable decryption and copy-protection
	of pseudorandom functions.
	This game can be seen as a variant of the original $n$-qubit game, and in \cite{CV22} an exponentially decaying (in the number of qubits $n$) bound was shown on the maximum winning probability in the game.

	We generalize
	this monogamy game in several directions by recasting the original
	game in group-theoretic terms and studying the resulting formulation for a variety of groups. Our reformulation reveals an unexpected link between the monogamy game and  states studied in quantum error
	correction. We prove several monogamy bounds for a wide range of groups, notably non-abelian groups and continuous (topological) groups. As a motivating application of our results, we highlight a proof of one-sided device-independent security for a continuous-variable quantum-key distribution that is resistant to coherent attacks; to the best of our knowledge, this is the first such proof.

	\subsection*{Coset states, group theory, and quantum error correction}

	We recall the key ingredient behind the subspace coset monogamy game of \cite{coladangelo2021hidden} --- the coset state --- in group-theoretic terms; generalizations of such states will serve as the backbone of our new monogamy games.

	Coset states were originally defined on an $n$-qubit space, whose computational basis states are labeled by elements of the group $G=\Z_2^n$. The coset states are
	\begin{equation}
	\ket{H,s,s'}\,=\,\frac{1}{\sqrt{|H|}}\sum_{u\in H}(-1)^{u\cdot s'}\ket{u+s}\,.\label{eq:one}
	\end{equation}
	Here, $H$ is a set of binary strings that is closed under binary
	addition --- a subgroup of $G$; this set labels which canonical basis
	elements participate in the superposition of the ``base'' subspace state
	$\ket{H,0,0}$. The binary strings $s,s'$ can be set,
	respectively, by applying appropriate Pauli $X$ and $Z$ operators
	on this base state (a.k.a.\ Pauli-twirling or one-time padding).
	In the group-theoretic interpretation, the string $s$ is a coset representative of the quotient $G/H$ that partitions $G$ into a union of cosets
		$s+H$, while $(-1)^{u\cdot s'} \equiv \gamma_{s'}(u)$ is an irreducible representation
		of $H$ labelled by the string $s'$.

	From the perspective of quantum error correction, the coset states~\eqref{eq:one} are precisely the code or error words of a
	CSS stabilizer quantum error-correcting code. Following related work for continuous-variable (CV) spaces \cite{GKP01}, such codes have recently
	been generalized \cite{ACP20} to the setting of group-valued Hilbert spaces for a general group $G$ spanned by
	$\{|g\rangle\,|\,g\in G\}$, where each canonical basis state is in
	one-to-one correspondence with an element of a group. As in the
	$n$-qubit case above, we can construct coset states $\ket{aH^{\gamma}_{m,n}}$, where $H$ is a subgroup of $G$, $a$ is a coset representative of $G/H$, and $\gamma_{m,n}$ is a matrix element of an irreducible representation of $H$.

	A wide variety of interesting states can be represented as $H\subset G$ coset states, notably, eigenstates of position and momentum quadratures of a CV mode ($\R^m \subset \R^n$), CV GKP states ($\Z\subset\R$) \cite{GKP01}, and quantum superpositions of orientations of an asymmetric molecule or, more generally, any 3D rigid body ($H\subset SO(3)$) \cite{ACP20}. For each group type, there is a connection to a stabilizer-like error-correcting code on the corresponding group space. We list all the group types considered in the paper, along with their corresponding codes, in Table \ref{tab:qec}.

	For fixed $H$, the set of coset states forms a complete set of states for the Hilbert space. If the group is finite, they are orthonormal as basis states in the Kronecker sense, and if the group is continuous they are orthogonal as distributions in the Dirac sense --- what is generally known as a Zak basis (see \cite[Appx. F]{ACP20} for more context). For fixed $H$ and $\gamma$, the set of such states houses an induced representation \cite{Arovas} of the parent group $G$.

	Since coset states of infinite groups are not normalizable, approximate or ``damped'' versions \cite{GKP01,Menicucci2014,ACP20,designs} have to be constructed in order to utilize such states in the lab. Thus, to formally discuss preparation of coset states, we need to take into account the damping operation used. However, we can avoid this when we discuss measurement in a basis of coset states. There we may represent the measurement process using an operator-valued generalization of a probability measure, thereby avoiding problems of normalization and convergence.

	\begin{table}
	\centering
	\begin{tabular}{lllll}
	\toprule
 & Space & Group & Subgroup & Related error-correcting code\tabularnewline
	\midrule
	Ref. \cite{coladangelo2021hidden} & $n$ qubits & $\Z_{2}^{n}$ & $\Z_{2}^{m}$ & qubit CSS  \cite{PhysRevA.54.1098,PhysRevLett.77.793,Steane1996b}\tabularnewline
	Sec. \ref{sec:U1game} & planar rotor & $U(1)$ & $\Z_{n}$ & rotor GKP  \cite{GKP01,ACP20}\tabularnewline
	Secs. \ref{sec:Cgame}, \ref{sec:Rngame} & $n$ modes & $\R^{n}$ & $\R^{m}$ & analog CSS  \cite{braunstein1998quantum,lloyd98,Gu_2009,eczoo_analog_stabilizer}\tabularnewline
	Sec. \ref{sec:Rgame} & single mode & $\R$ & $\Z$ & GKP \cite{GKP01}\tabularnewline
	Sec. \ref{sec:SO3game} & rigid body & $SO(3)$ & point group & molecular \cite{ACP20}\tabularnewline
	Sec. \ref{sec:finite} & finite group & $G$ & $H$ & \multirow{3}{*}{group GKP \cite{ACP20,PhysRevX.10.041018,eczoo_group_gkp}}\tabularnewline
	Sec. \ref{sec:abelian-infinite} & abelian group & $G$ & $H$ & \tabularnewline
	Sec. \ref{sec:compact} & compact group & $G$ & $H$ & \tabularnewline
	\bottomrule
	\end{tabular}

	\caption{List of group spaces and relevant subgroups for the coset monogamy games considered in this manuscript. Coset states for each space form code and error words of quantum error-correcting codes, listed in the last column of the table.}
	\label{tab:qec}
	\end{table}

	\subsection*{Generalized monogamy-of-entanglement games}

	We study general coset monogamy games from two perspectives, corresponding to an entanglement-based and a state-sending version of the game, respectively.

	The state-sending version is closer to the original subspace coset game of \cite{coladangelo2021hidden}. In the original multi-qubit case utilizing the abelian coset states (\ref{eq:one}), Alice sends the state $\ket{H,s,s'}$ with randomly chosen $H$, $s$, and $s'$.
	Bob and Charlie can split the state in an arbitrary way (including the application of an arbitrary CPTP map to it) and then separate.
	Once they are separated, they are each given a description of $H$ and tasked with determining $s$ and $s'$, respectively, up to the choice of representatives.
	In our non-Abelian finite-group case, Alice prepares and sends a coset state $\ket{aH^\gamma_{m,n}}$, and Bob and Charlie attempt to guess $a$ and $\gamma_{m,n}$, respectively.

	In the entanglement-based version, Bob and Charlie prepare a tripartite shared entangled state. Once all parties are separated, Alice measures her system in a randomly-chosen coset basis to get outcomes $a,\gamma_{m,n}$. She then informs Bob and Charlie of her measurement basis, and they make guesses of $a$ and $\gamma_{m,n}$, respectively.

	The entanglement-based game can be studied more directly, so we focus on that one in all the cases we consider. This especially important for infinite groups, as Alice's measurement can be expressed using an operator measure, hence avoiding discussion of the non-normalisable coset states. As in the case of $n$-qubit games~\cite{BL20,coladangelo2021hidden}, it is possible to transform a strategy for the state-sending game into a strategy for the entanglement-based game, which leads to a bound on the former as well. We work this relationship out formally in the abelian case, but note that it holds in the same way in the non-abelian case. The result of this transformation naturally provides a state-sending coset monogamy game where Alice sends the damped version of the coset states.

	\subsection*{Device-independent continuous-variable QKD}

	Inspired by the one-sided device-independent quantum key distribution (QKD) security proof introduced in~\cite{tomamichel2013monogamy}, we analyze a QKD protocol using continuous-variable (CV) coset states for $G=\R^{n}$ and $H=\R^{n/2}$ --- conceptually the closest continuous generalisation to the original $\Z_2^{n/2}\subset \Z_2^n$ qubit protocol \cite{CV22}. The CV protocol considered reduces to a Gaussian one: the unnormalizable coset states are infinitely squeezed states, but their damped versions are practically-realizable finitely squeezed states.

	We show that these squeezed-state protocols are one-sided device independent (one-sided DI) secure against coherent attacks in the finite-key regime, making them the first CV protocols with such a level of security.
	Previous one-sided DI proofs of security for CVQKD protocols were limited to memoryless attackers~\cite{furrer2012continuous,gehring2015implementation}; indeed, overcoming this limitation for the case of discrete-variable (qubit-based) QKD protocols was one of the main motivations for the introduction of the $n$-qubit coset monogamy game in~\cite{tomamichel2013monogamy}.

	Our analysis leads to an error tolerance which is comparable to the one obtained for DV protocols in~\cite{tomamichel2013monogamy}. While our protocol, employing squeezed states (cf. \cite{PhysRevA.63.022309}), remains more challenging than the coherent-state based Gaussian CV protocols \cite{GGDL19}, the security benefits of one-sided device independence may outweigh the experimental challenges.


	While we consider only the $G=\R^n$ protocol in detail in this manuscript, we note that similar protocols should be possible for the other group spaces, and do not see an obstruction to proving analogous device-independent security for such protocols. In particular, our general formulation should allow for QKD protocols utilizing GKP states ($G=\R^n$ and $H\cong \Z^n$).
	Moreover, our formulation paves the way for analyzing more general subgroups of $G=\R^n$ that form degenerate lattices or products of lattices and planes, with the former corresponding to the recently developed GKP-stabilizer codes \cite{PhysRevLett.125.080503} that protect an entire logical mode against small fluctuations in all physical modes.

		\subsection*{Acknowledgements}

		EC acknowledges the support of an NSERC CGS M grant, and thanks Florence Grenapin and Jason Crann for interesting discussions on this topic.
		EC and VA acknowledge Alexander Barg for the suggestion to use algebraic-geometric codes for the QKD protocol.
		TV is supported by a grant from the Simons Foundation (828076, TV) and a research grant from the Center for New Scientists at the Weizmann Institute of Science.
		VVA acknowledges financial support from NSF QLCI grant OMA-2120757, and thanks Olga Albert and Ryhor Kandratsenia for providing daycare support throughout this work.
		Contributions to this work by NIST, an agency of the US government, are not subject to US copyright. Any mention of commercial products does not indicate endorsement by NIST.

		\subsection*{Outline}

		In Sections \ref{sec:U1game}-\ref{sec:compact} we develop monogamy games for the group spaces listed in Table \ref{tab:qec}, along with the mathematical formalism necessary to tackle other continuous and infinite groups. The discussion of the game in each section is meant to be stand-alone with only the proof of the winning probability bounds relying on the general results of Sections \ref{sec:abelian-infinite}-\ref{sec:compact}. The sections are intended to proceed in approximate order of mathematical difficulty.
		The squeezed-state device-independent QKD protocol is developed in Section \ref{subsec:qkd}, and Figure \ref{fig:qkds-favourite} plots the derived asymptotic error tolerance vs.\,key rate. In \cref{sec:POVM}, we work out the measure-theoretic formalism of integration with respect to an operator-valued measure. In \cref{sec:damping}, we formulate general damping operators and provide the relationship between such operators and maximally-entangled states that allows us to study state-sending versions of the games.

		Some of the sections that follow are technical.
		We recommend that readers unfamiliar with monogamy games first read about the qubit game \cite{coladangelo2021hidden} and then continue with the planar-rotor or CV two-mode generalizations in Secs. \ref{sec:U1game} and \ref{sec:Cgame}, respectively.
		Readers interested in qudit games based on non-abelian groups should consult \cref{sec:finite}.
		Readers interested in learning how to rigorously handle continuous-parameter measurements on noncompact infinite-dimensional spaces may skip to the locally compact abelian group games in \cref{sec:abelian-infinite} and associated mathematical details in \cref{sec:POVM}.
		Mathematically inclined readers interested in our general games may jump to the compact-group formalism in \cref{sec:compact}.
		Readers interested in the QKD protocol may go to either the two-mode CV warmup in \cref{sec:Cgame} or the $n$-mode CV games in \cref{sec:Rngame}, the latter also containing the QKD protocol.

	\section{The coset monogamy game on $U(1)$}\label{sec:U1game}

	We introduce the group-valued space of the planar rotor $G=U(1)$ and its associated coset states and monogamy game.

	\subsection{Planar rotor states}

	Systems confined to rotate in a two-dimensional plane may be described as a planar rotor. For such a system, the set of classical states can be represented by the group of rotations in the plane $G=U(1)$. There are various ways to work with this group, but we will consider it as $U(1)=\R/2\pi\Z$ and make use of the set of representatives $[0,2\pi)$. Since the space is continuous, the Hilbert space of quantum states is the space of square-integrable functions $L^2(U(1))$. The inner product on this space is provided by the Haar measure -- the unique normalized measure invariant under the action of the group -- which in the case of $U(1)$ takes the form
	\begin{align}
		\braket{\psi}{\phi}=\frac{1}{2\pi}\int_0^{2\pi}\overline{\psi(x)}\phi(x)dx\;.
	\end{align}

	The Fourier series provides the canonical orthonormal basis of $L^2(U(1))$: the basis of states $\ket{\ell}$ for $\ell\in\Z$ given by functions $\psi_\ell(x)=e^{i\ell x}$ for $x\in[0,2\pi)\cong U(1)$. This corresponds to the basis of angular momentum eigenstates of a $U(1)$ system. Dual to this basis are the position eigenstates $\ket{\theta}$ given by $\psi_\theta(x)=\delta_{U(1)}(\theta-x)=2\pi\delta(\theta-x)$ for $\theta\in[0,2\pi)$, which satisfy the generalised orthonormality condition $\braket{\theta}{\theta'}=2\pi\delta(\theta-\theta')$. These are however not states, since the Dirac delta is not a function, so they cannot be normalized. Accordingly, it takes some care to work with this kind of state. We will approach this in two different but complementary ways. First, we can consider the basis not as a set of physical states but as a measurement, which allows us to treat it in a measure-theoretic way. For any state $\ket{\psi}\in L^2(U(1))$, the probability of measuring a position in some set $E\subseteq U(1)$ is \begin{align}
		\frac{1}{2\pi}\int_E\abs*{\braket{\theta}{\psi}}^2d\theta=\braket{\psi}{\frac{1}{2\pi}\int_E\ketbra{\theta}d\theta}{\psi}.
	\end{align}
	In this way, we take the operator measure of a (Borel measurable) set in $U(1)$ as the operator $A^{U(1)}(E):=\frac{1}{2\pi}\int_E\ketbra{\theta}d\theta$, which can be seen a continuous-variable generalization of a projective measurement or an operator-valued generalization of a probability measure. Note also that these are well-defined operators on $L^2(U(1))$, acting as
	\begin{align}
		(A^{U(1)}(E)\ket{\psi})(x)=\chi_E(x)\psi(x)=\begin{cases}\psi(x)&x\in E\\0&x\notin E\end{cases}.
	\end{align}
	In the operator measure picture, the completeness of the basis is expressed by showing that the measure is a POVM --- $A^{U(1)}(U(1))=\Id$. The mathematical formalism of operator measures is worked out in \cref{sec:POVM}.

	The other way we approach unnormalizable states is by damping, that is we act on the state by an operator that makes it normalizable. A common way to do this is to replace the deltas by Gaussians:
	\begin{align}
		\ket{\theta}\mapsto\sqrt{\frac{a}{\pi}}\int_{\theta-\pi}^{\theta+\pi}e^{-a(x-\theta)^2}\ket{x}dx.
	\end{align}
	In the limit $a\rightarrow\infty$, this returns to the original delta function; so, for large $a$, this provides a very good approximation to the behaviour of the position eigenstates despite being normalizable. We formalise this in \cref{sec:damping} and work out how to pass from operator measures to damped states without going through an unnormalizable basis.

	The position and momentum bases, though disparate, are particular cases of the same construction, the \emph{coset state basis}. A coset state basis is a generally unnormalizable basis corresponding to a closed subgroup of $G$; the position basis corresponds to the subgroup of all elements $G$ and the momentum basis corresponds to the trivial subgroup $\{0\}$. The remaining closed subgroups are $\Z_n\leq U(1)$ for $n\in\N$, groups of rotations by multiples of $2\pi/n$. Fixing $n\in\N$, we define the $\Z_n$-\emph{subgroup state} as the uniform superposition over all elements of $\Z_n$,\footnote{We use a different normalization convention than previous work \cite[Eq. (124)]{ACP20} throughout the paper, resulting in rescaled Dirac-delta functions on relevant quotient spaces. Our convention translates to using a normalized Haar measure in the case of compact $G$, $\frac{1}{|\Z_n|}\sum_{x\in \Z_n}\to \int_G dg$, while in the previous convention, $\frac{1}{\sqrt{|\Z_n|}}\sum_{x\in \Z_n}\to \frac{1}{\sqrt{|G|}}\int_G dg$ with $|G|$ the group volume.}
	\begin{align}
		\ket{\Z_n}=\frac{1}{|\Z_n|}\sum_{x\in \Z_n}\ket{x}=\frac{1}{n}\sum_{k\in\Z_n}\ket{2\pi k/n}\;.
	\end{align}
	Note also that for the position and angular momentum bases, $\ket{\{0\}}=\ket{0}$ and $\ket{U(1)}=\frac{1}{2\pi}\int\ket{x}dx=\ket{\ell=0}$. To extend the subgroup state to a basis we orthogonalise in two ways: use superpositions with orthogonal supports in the basis, and introduce phases. The canonical way to move to an orthogonal support is to consider analogous superpositions over a \emph{coset} of $\Z_n$ rather than the group itself. The cosets are the equivalence classes $U(1)/\Z_n$, which as $U(1)$ is abelian form a group $U(1)/\Z_n\cong U(1)$. Thus, for $x+\Z_n\in U(1)/\Z_n$, the coset state $\ket{x+\Z_n}=\frac{1}{n}\sum_{y\in \Z_n}\ket{x+y}$. Also, we introduce phases given by the dual group $\hat{\Z}_n$ of $\Z_n$, the group of continuous group homomorphisms $\Z_n\rightarrow\set*{z\in\C}{|z|=1}$ under multiplication. For the usual representation as a quotient of $\Z$, the dual group of $\Z_n$ is $\hat{\Z}_n\cong\Z_n$ with action given by $\gamma_k(m)=e^{2\pi i\frac{km}{n}}$. In the same way, for $\Z_n$ seen as a subgroup of $U(1)$, the action $\gamma_k(x)=e^{ikx}$. Then, the subgroup coset states are defined by
	\begin{align}
		\ket{n,x,k}:=\ket{x+\Z_n^{\gamma_k}}=\frac{1}{n}\sum_{y\in \Z_n}\gamma_k(y)\ket{x+y}=\frac{1}{n}\sum_{m\in\Z_n}e^{2\pi i\frac{km}{n}}\ket{x+2\pi m/n}\;.
	\end{align}
	These are unnormalizable, but they are orthogonal in the sense that
	\begin{align}
	\begin{split}
		\braket{n,x,k}{n,x',k'}&=\frac{1}{n^2}\sum_{y,y'\in \Z_n}e^{i(k'y'-ky)}\delta_{U(1)}((x+y)-(x'+y'))\\
		&=\delta_{U(1)/\Z_n}(x-x'+\Z_n)\frac{1}{n}\sum_{y\in \Z_n}e^{i(k'(y+(x-x'))-ky)}\\
		&=\delta_{U(1)/\Z_n}(x-x'+\Z_n)\delta_{k,k'}\;,
	\end{split}
	\end{align}
	and complete in the sense that the position eigenstates are contained in their span. In general, the definition of the coset state basis depends on a choice of coset representatives, but since this only changes the states up to global phase, we do not need to consider it. This definition directly extends the coset states of a finite group. Again, to work with them more rigorously, we can consider the coset measure they induce. Now, as the basis is indexed by $U(1)/\Z_n\times \hat{\Z}_n\cong U(1)\times\Z_n$, the coset operator measure is an operator-valued measure on that set. For Borel measurable $E\subseteq U(1)/\Z_n\times\Z_n$,
	\begin{align}
		A^{\Z_n}(E)=n\int_E\ketbra{n,x,k}d(x+\Z_n,k)=\frac{n}{2\pi}\sum_{k=0}^{n-1}\int_{E_n}\ketbra{n,x,k}dx\;,
	\end{align}
	where we write $E=E_0\times\{0\}\cup\ldots\cup E_{n-1}\times\{n-1\}$. The additional coefficient $n$ is required because the dual measure on $\hat{\Z}_n$ is normalized so that $\mu(\{\gamma_0\})=1$ and not $\mu(\hat{\Z}_n)=1$. Again, this provides a well-defined operator:
	\begin{align}
		\braket{\phi}{A^{\Z_n}(E)}{\psi}=\frac{1}{2\pi n}\sum_{k=0}^{n-1}\int_{E_k}\sum_{y,y'\in \Z_n}e^{ik(y-y')}\bar{\phi}(x+y)\psi(x+y')dx\;.
	\end{align}
	We note again that this measure satisfies $A^{\Z_n}(U(1)/\Z_n\times\Z_n)=\Id$, which is equivalent to completeness of the basis. For a general (locally compact) abelian group, the coset measure is formally introduced in \cref{sec:abelian-coset-measure}.

	\subsection{Monogamy game and winning probability}

	We can use these planar-rotor coset states to play a monogamy game inspired by the strong monogamy game of \cite{coladangelo2021hidden}. Here, we describe the entanglement-based version of the game, which can be understood using the coset measure. 
	In this game, two cooperating players, Bob and Charlie, play against an honest referee, Alice. Let $p_1<\ldots<p_N$ be a set of distinct primes and let $0<\varepsilon<\frac{\pi}{p_N^2}$. The game proceeds as follows:

	\begin{enumerate}[1.]
		\item Bob and Charlie prepare a shared state $\rho_{ABC}$ but then are no longer allowed to communicate.

		\item Alice chooses $j=1,\ldots,N$ uniformly at random and measures her register in the basis $\set*{\ket{p_j,x,k}}$ to get measurements $x\in U(1)/\Z_{p_j}$, $k\in\Z_{p_j}$.

		\item Alice sends $p_j$ to Bob and Charlie. Bob answers with a guess $x_B$ for $x$ and Charlie answers with a guess $k_C$ for $k$.

		\item Bob and Charlie win if $|x-x_B|<\varepsilon$ in $U(1)/\Z_n$ and $k=k_C$.
	\end{enumerate}

	In this section, we do not formally introduce strategies for this game. Nevertheless, we note that Charlie makes a measurement with a finite number of outcomes, so his measurement may in general be expressed by a POVM. On the other hand, Bob has infinitely many (in fact continuously many) measurement outcomes. This means that Bob's measurement is modelled by an operator-valued measure. Also, Bob's winning condition is slackened compared to the game with finite information, as he needs to only guess within a neighborhood. This is because we cannot expect Bob to answer with infinite precision in a continuous space. First, this is physically infeasible as he would need to transmit infinitely many bits to Alice, and also the winning probability would always be $0$, as the space of correct answers would have measure $0$.

	General games of this form for abelian groups are introduced in \cref{sec:abelian-coset-game}; they are parametrized by the underlying group $G$, the collection of subgroups $\mc{S}$, as well as Bob and Charlie's neighborhoods of correct answers $E\subseteq G$ and $F\subseteq\hat{G}$. For the above game, $\ttt{G}_{N,\varepsilon}$, we have $G=U(1)$, $\mc{S}=\set*{U_{p_1},\ldots,U_{p_N}}$, $E=(-\varepsilon,\varepsilon)$, and $F=\{\gamma_0\}$. We make use of the general bound of \cref{thm:bound-abelian} to find an upper bound on the winning probability of $\ttt{G}_{N,\varepsilon}$:
	\begin{align}
		\mfk{w}(\ttt{G}_{N,\varepsilon})\,\leq\,\frac{1}{N}\sum_{j=1}^{N}\sup_{H\in\mc{S},g\in G}\sqrt{\frac{\abs*{H\cap (g+E+\pi_j(H))}~\abs*{F}}{|H|}}\;.
	\end{align}
	In this formula, $\pi_j:\mc{S}\rightarrow\mc{S}$ is a collection of orthogonal permutations --- bijections such that $\pi_j(H)=\pi_k(H)$ only if $j=k$. For any set, there always exists such a family, in our case the cycles $\pi_j(U_{p_k})=\pi_j(U_{p_{k+j}})$, which is the family we will use to get the bound on the winning probability of $\ttt{G}_{N,\varepsilon}$.

	\begin{theorem}
		Let $p_1<\ldots< p_N$ and $0<\varepsilon\leq\frac{\pi}{p_N^2}$. The winning probability of the $U(1)$ monogamy game satisfies
		\[\mfk{w}(\ttt{G}_{N,\varepsilon})\,\leq\,\frac{1}{N}+\frac{1}{\sqrt{p_1}}\;.\]
	\end{theorem}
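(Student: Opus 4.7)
The plan is to apply the general abelian bound (Theorem~\ref{thm:bound-abelian}) with $G=U(1)$, $\mc{S}=\{\Z_{p_1},\ldots,\Z_{p_N}\}$, $E=(-\varepsilon,\varepsilon)$, $F=\{\gamma_0\}$, and the orthogonal cyclic permutations $\pi_j(\Z_{p_k})=\Z_{p_{k+j\bmod N}}$ specified in the excerpt. Since $|F|=1$, the bound specializes to
\[
\mfk{w}(\ttt{G}_{N,\varepsilon})\,\leq\,\frac{1}{N}\sum_{j=1}^{N}\sup_{k,\,g\in U(1)}\sqrt{\frac{\bigl|\Z_{p_k}\cap(g+E+\Z_{p_{k+j}})\bigr|}{p_k}}\,.
\]
I would split the sum into the single term in which $\pi_j$ fixes every label and the remaining $N-1$ terms, and bound the two cases separately.

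The index $j=N$ gives the identity permutation, so $\pi_j(H)=H$ for every $H\in\mc{S}$. Here $g+E+H$ is the $\varepsilon$-fattening of the coset $g+H$, so it either contains $H$ entirely (when $g\in H+E$) or is disjoint from $H$; the corresponding supremum is therefore at most $1$, contributing $\frac{1}{N}$ to the overall bound.

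For every other $j$ and every $k$, the groups $\Z_{p_k}$ and $\Z_{p_{k+j}}$ correspond to distinct primes $p\neq q$ from the list. The crux of the argument is the claim that for any such pair and any $g\in U(1)$,
\[
\bigl|\Z_p\cap(g+E+\Z_q)\bigr|\,\leq\,1\,.
\]
To prove it I would suppose that two elements $2\pi k/p$ and $2\pi k'/p$ both lie in $g+E+\Z_q$, witnessed by integers $l,l'$ with $|2\pi k/p - g - 2\pi l/q|<\varepsilon$ and the analogous inequality for $k',l'$. Subtracting the two estimates yields $|(k-k')q-(l-l')p|<\varepsilon p q/\pi$. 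Because $\varepsilon\leq\pi/p_N^2$ and the distinct primes $p,q$ satisfy $pq\leq p_{N-1}p_N$, the right-hand side is at most $p_{N-1}/p_N<1$, so the non-negative integer on the left must vanish. Coprimality of $p$ and $q$ then forces $p\mid k-k'$, hence $k=k'$.

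Inserting the claim into the bound yields $\sup_k\sqrt{1/p_k}=1/\sqrt{p_1}$ for each of the $N-1$ non-identity terms, so combining with the identity contribution gives
\[
\mfk{w}(\ttt{G}_{N,\varepsilon})\,\leq\,\frac{1}{N}\Bigl(1+(N-1)\cdot\frac{1}{\sqrt{p_1}}\Bigr)\,\leq\,\frac{1}{N}+\frac{1}{\sqrt{p_1}}\,,
\]
which is the desired inequality. The only real obstacle is the counting claim, and it reduces to a short Bezout-style gap argument in which the hypothesis $\varepsilon\leq\pi/p_N^2$ is chosen precisely so that the integer $(k-k')q-(l-l')p$ is squeezed strictly below $1$ in absolute value; everything else is bookkeeping on top of the master bound.
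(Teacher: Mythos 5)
Your proposal is correct and follows essentially the same route as the paper: specialize \cref{thm:bound-abelian} with the cyclic permutations, note $|F|=1$, bound the identity-permutation term by $1$, and show $\abs{\Z_{p}\cap(g+E+\Z_{q})}\leq 1$ for distinct primes using $\varepsilon\leq\pi/p_N^2$ and coprimality. Your counting claim is proved by subtracting two putative witnesses and squeezing the integer $(k-k')q-(l-l')p$ strictly below $1$, which is just a repackaging of the paper's argument that the interval of half-width $\varepsilon p_jp_k/(2\pi)\leq\tfrac12$ around $rp_jp_k$ contains at most one integer, pinned down modulo $p_j$ by inverting $p_k$; both are valid.
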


	\begin{proof}
		First, for any $j=1,\ldots,N$ and $\gamma\in \hat{G}$, $\abs*{F}=\abs*{\{\gamma_0\}}=1$. Also, for any $j\neq k$ and $2\pi r\in U(1)$,
		\begin{align}
		\begin{split}
			\abs{\Z_{p_j}\cap(2\pi r+E+\Z_{p_k})}&=\abs*{\set*{n\in\Z_{p_j}}{\exists\;m\in\N.\abs*{\tfrac{n}{p_j}-r-\tfrac{m}{p_k}}<\frac{\varepsilon}{2\pi}}}\\
			&\leq\abs*{\set*{[n]\in\Z_{p_j}}{\exists\;m\in\N.\abs*{np_k-rp_jp_k-mp_j}<\tfrac{1}{2}}}\leq 1,
		\end{split}
		\end{align}
		as the interval $(rp_jp_k-\frac{1}{2},rp_jp_k+\frac{1}{2})$ contains at most one integer, and $np_k$ is equal to this integer modulo $p_j$ for at most one value of $n$ by inversion modulo $p_j$. Thus,  as $\pi_0$ is the identity,
		\begin{align}
			\mfk{w}(\ttt{G}_{N,\varepsilon})\leq\frac{1}{N}+\frac{1}{N}\sum_{i=2}^N\sup_{k}\frac{1}{\sqrt{p_k}}\leq\frac{1}{N}+\frac{1}{\sqrt{p_1}}.
		\end{align}
	\end{proof}

	\section{The coset monogamy game on $\C$}\label{sec:Cgame}

	We consider the group state space $G=\C$. This can represent the space of two independent oscillators, or the wavefront of a laser. The subgroups we consider are copies of $\R$, corresponding to rotated single modes embedded in the parent two-mode space $\C\cong \R^2$. As opposed to the planar rotor of the previous section, the spaces of \textit{both} the coset representative and the irreducible representation are continuous and non-compact, requiring appropriate regularization.

	\subsection{Coset states}

	The Hilbert space on this group is the space of square-integrable functions $L^2(\C)$, to which we can associate an unnormalizable basis $\set*{\ket{z}}{z\in\C}$. This basis satisfies orthogonality $\braket{z}{z'}=\delta_{\C}(z-z')=\delta(z_r-z'_r)\delta(z_i-z'_i)$, where $z_r$ and $z_i$ are the real and imaginary parts of $z$. The Haar integral is the usual Lebesgue integral over $\C\cong\R^2$, and the dual $\hat{G}\cong\C$ acts as $\gamma_w(z)=e^{2\pi i\latRe(\bar{w}z)}$ for $w\in\C$.

	We consider subgroups corresponding to lines in the plane, of the form $H=\zeta\R$ for $\abs{\zeta}=1$. 
	Then, via the restriction of the isomorphism above, $\hat{H}\cong\bar{\zeta}\R$. Also, we have $\C/\zeta\R\cong\R$ via $z+\zeta\R\mapsto\latIm(\bar{\zeta}z)$. As such, the coset states can be indexed by $(x,y)\in\R^2$, corresponding to coset $i\zeta x+\zeta\R$ and character $\gamma_{\bar{\zeta}y}$, and take the form
	\begin{align}\label{eq:damped-two-mode}
		\ket{\zeta,x,y}:=\ket{i\zeta x+\zeta\R^{\gamma_{\bar{\zeta}y}}}=\int_{\R} e^{2\pi iyr}\ket{\zeta(r+ix)}dr.
	\end{align}

	To work with normalized damped states, we can use Gaussian damping again. Fix $b>a>0$ and, writing $\tilde{a}=\frac{ab}{b-a}$, define $\Delta_{a,b}:L^2(\C)\rightarrow L^2(\C)$ as
	\begin{align}\label{eq:damping}
		\Delta_{a,b}\ket{z}=e^{-\tilde{a}|z|^2}\int_\C e^{-b|w-z|^2}\ket{w}dw.
	\end{align}
	Note that in order to have this work effectively, it must be a product of two Gaussians because there are two infinities that need to be damped: "position," since $\C$ and $\R$ are not compact, and "momentum," since the duals are not compact, giving rise to delta functions. In the notation of \cref{sec:damping}, this can be used to define a damping sequence $\parens*{\frac{\Delta_{a_n,b_n}^\dag}{\norm{\Delta_{a_n,b_n}}}}$ using sequences where $b_n\rightarrow\infty$ and $a_n\rightarrow 0$. The normalized damped states then have the form
	\begin{align}
		\ket{\zeta,x,y|_{a,b}}:=\frac{c\Delta_{a,b}\ket{\zeta,x,y}}{\norm{\Delta_{a,b}\ket{\zeta,x,y}}}=\sqrt{\frac{2\sqrt{ab}}{\pi}}\int_{\C}e^{-aw_r^2-b(w_i-x)^2-2\pi i\parens*{1-\frac{a}{b}}w_ry}\ket{\zeta w}dw,
	\end{align}
	where $c$ is the usual complex conjugate $(c\ket{\psi})(z)=\overline{\psi(z)}$. Seeing $\C\cong\R^2$ as the space of two oscillators, these states may be see as two-mode squeezed states, as illustrated in \cref{fig:c-state}. In the state-sending picture, the norms of the damped states give rise to the probability distribution parametrizing the choice of state,
	\begin{align}
		\pi^{\zeta}_{a,b}(x,y)=\frac{\norm*{\Delta_{a,b}\ket{\zeta,x,y}}^2}{\norm*{\Delta_{a,b}}_2^2}=2\sqrt{\tfrac{a}{b}}e^{-2\frac{ab}{b-a}x^2-2\pi^2\frac{b-a}{b^2}y^2}=2\sqrt{\tfrac{\tilde{a}}{\tilde{a}+b}}e^{-2\tilde{a}x^2-\frac{2\pi^2}{\tilde{a}+b}y^2},
	\end{align}
	consisting of independent normal distributions in $x$ and $y$ with means both $0$ and variances $\frac{b-a}{4ab}$ and $\frac{b^2}{4\pi^2(b-a)}$, respectively.

	\begin{figure}
		\centering
		\begin{tikzpicture}
			\node at (0,0) {\includegraphics[scale=0.75]{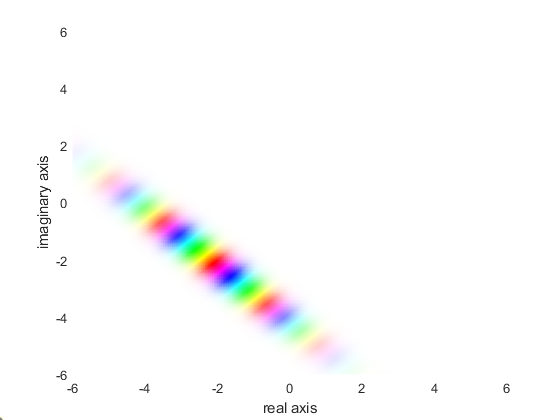}};
			\draw[opacity=0.5] (0.2,3.5) -- (0.2,-3.3);
			\draw[opacity=0.5] (4.5,0.15) -- (-4,0.15);
			\draw[-|] (0.2,0.15) -- (-1.2,-1.1) node[pos=0.5,below right]{$x$};
			\draw[|-|] (-2.2,-1) -- (-1.3,-1.85) node[pos=0.5,below left=-0.1cm]{$\frac{1}{(1-a/b)y}$};
			\draw[-Latex] (0.2,0.65) arc (90:223:0.5) node[pos=0.5,above left]{$\theta$};
			\draw[Latex-Latex] (2.2,1) node[below right=-0.2cm]{$\sim\frac{1}{a}$} -- (0.8,2.35);
			\draw[Latex-Latex] (1.2,1.375) -- (1.8,1.975) node[above right=-0.2cm]{$\sim\frac{1}{b}$};
		\end{tikzpicture}

		\caption{An example of the coset state $\ket{\zeta,x,y|_{a,b}}$ with $a=0.1$, $b=4$, $\zeta=e^{i\theta}=e^{i\frac{3\pi}{4}}$, $x=3$, and $y=0.5$.}
		\label{fig:c-state}
	\end{figure}

	\subsection{Monogamy game analysis}

	We can construct a monogamy-of-entangelement game based on the coset states above. The game is played by two cooperating players, Bob and Charlie, against an honest referee, Alice.  For fixed $\delta,\varepsilon>0$ and a finite collection $\zeta_1,\ldots,\zeta_n$ such that $|\zeta_i|=1$, the gameplay proceeds as follows.

	\begin{enumerate}[1.]
		\item Bob and Charlie prepare a shared state $\rho_{ABC}$ but then are no longer allowed to communicate.

		\item Alice chooses $i$ uniformly at random and measures her register in basis $\set*{\ket{\zeta_i,x,y}}$ to get measurements $x,y\in\R$.

		\item Alice sends $\zeta_i$ to Bob and Charlie. Bob answers with a guess $x_B$ for $x$ and Charlie answers with a guess $y_C$ for $y$.

		\item Bob and Charlie win if $|x-x_B|<\delta$ and $|y-y_C|<\varepsilon$.
	\end{enumerate}

	We bound the winning probability of this game using \cref{thm:bound-abelian}.

	\begin{theorem}\label{thm:c-bound}
		Fix $n\in\N$ divisible by $4$, and take the set of subgroups $\mc{S}=\set*{e^{2\pi i\frac{k}{n}}\R}{k=0,\ldots,n-1}$, and the sets describing the precision $E=\set*{z\in\C}{|z|<\delta}$ and $F=\set*{\gamma_z}{|z|<\varepsilon}$. Let the abelian coset monogamy game $\ttt{G}_{n,\delta,\varepsilon}=(G,\mc{S},E,F)$. Then, the winning probability satisfies
		\begin{align}
			\mfk{w}(\ttt{G}_{n,\delta,\varepsilon})\,\leq\,\frac{2}{n}+4\parens*{1+\frac{1}{n}}\sqrt{\delta\varepsilon}\;.
		\end{align}
	\end{theorem}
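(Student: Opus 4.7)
The plan is to apply the general abelian-group upper bound of \cref{thm:bound-abelian} to the game $(G,\mc{S},E,F)$ at hand. That bound, given a choice of an orthogonal family of permutations $\{\pi_j\}$ on $\mc{S}$, reduces the winning probability to an average of geometric intersection suprema of the form $\sup_{H,g}\sqrt{|H\cap(g+E+\pi_j(H))|\,|F|/|H|}$, exactly as in the $U(1)$ warmup.

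First I would identify the structure of $\mc{S}$: although indexed by $k=0,\ldots,n-1$, as unoriented lines through the origin in $\C$ one has $e^{2\pi ik/n}\R=e^{2\pi i(k+n/2)/n}\R$, so $\mc{S}$ really contains only $N:=n/2$ distinct subgroups. The hypothesis $4\mid n$ makes $N$ even, which ensures that the perpendicular-line shift sits inside the cyclic family. I would then take the orthogonal family to be the cyclic shifts $\pi_j(H_k)=H_{k+j\bmod N}$ for $j=0,\ldots,N-1$, with $\pi_0$ the identity, exactly analogously to the choice made in the $U(1)$ proof.

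Next I would estimate each term. For $j=0$ the supremum is trivially bounded by $1$ and contributes $1/N=2/n$ to the average, which accounts for the first term of the claimed bound. For $j\neq 0$ the two lines $H$ and $\pi_j(H)$ meet at a nonzero angle $\theta_j=\pi j/N\in(0,\pi)$. Geometrically, $g+E+\pi_j(H)$ is the strip of half-width $\delta$ around a translate of $\pi_j(H)$, and its intersection with the line $H$ is a segment of $H$-length $2\delta/\sin\theta_j$. Dually, the $\varepsilon$-disk $F\subset\hat\C$ restricted to $\hat H\subset\hat\C$ via the standard pairing becomes a segment on the line $\hat H$; combining this with the non-compact Haar normalisation arising in \cref{thm:bound-abelian} through the Plancherel decomposition $L^2(G)\cong L^2(H)\otimes L^2(G/H)$ produces a compensating $\sin\theta_j$ factor. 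The two angular factors then cancel, leaving each non-identity term bounded uniformly in $j$ by $c\sqrt{\delta\varepsilon}$ for an explicit constant $c$; summing the $N-1$ such terms and combining with the identity contribution should then be reorganized to match $4(1+1/n)\sqrt{\delta\varepsilon}$.

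The main obstacle will be the measure-theoretic bookkeeping in the non-compact case: the Haar measure of $H\cong\R$ is infinite, so the expression $|H\cap\cdots|/|H|$ in \cref{thm:bound-abelian} must be unpacked carefully via the operator-measure formalism of \cref{sec:POVM} and via Plancherel, so that the $1/\sin\theta_j$ dependence on the primal side cancels precisely against the $\sin\theta_j$ arising from the dual disk. This cancellation --- which relies on the rotationally symmetric choice of $E$ and $F$ --- is what makes each non-identity contribution uniform in $j$ and lets the final answer depend on $\delta,\varepsilon$ only through the geometric mean $\sqrt{\delta\varepsilon}$. Once this cancellation is established, the remaining arithmetic of summing the uniform non-identity bounds and adding the $2/n$ identity contribution yields the stated inequality.
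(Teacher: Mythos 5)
Your overall strategy (apply \cref{thm:bound-abelian} with cyclic-shift orthogonal permutations, isolate the identity contribution, bound the cross terms by the line--strip overlap) is the paper's route, but the key estimate is wrong: there is no cancellation of angular factors between the primal and dual sides. In \cref{lem:c-overlaps} the dual overlap is simply $\mu_{\widehat{\zeta\R}}(F)=2\varepsilon$, computed via the isometric identification $\widehat{\zeta\R}\cong\bar\zeta\R\cong\R$; the $\varepsilon$-disk restricted to the dual line is an interval of length $2\varepsilon$ no matter which pair of subgroups is being compared, and the normalisation in \cref{thm:bound-abelian} is already fixed (Haar/dual measures), so no compensating $\sin\theta_j$ appears from Plancherel or from ``non-compact Haar normalisation.'' Consequently each non-identity term is $\sqrt{\tfrac{2\delta}{|\latIm(\bar\zeta\xi)|}\cdot 2\varepsilon}=2\sqrt{\delta\varepsilon}/\sqrt{|\sin\theta_j|}$, which is \emph{not} uniform in $j$: it blows up as the two lines become nearly parallel ($j$ near $0$ or $n/2$). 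Your claim that every non-identity term is bounded by $c\sqrt{\delta\varepsilon}$ uniformly in $j$ is therefore false, and the step on which your constant rests would fail.

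The way the proof actually closes is by averaging these non-uniform terms: the identity-like shifts ($j=0$ and $j=n/2$, since $e^{2\pi i(k+n/2)/n}\R=e^{2\pi ik/n}\R$) contribute $2/n$, the perpendicular shifts $j=n/4,3n/4$ (this is where $4\mid n$ enters) contribute $4\sqrt{\delta\varepsilon}/n$, and the remaining sum $\tfrac{1}{n}\sum_j 1/\sqrt{\sin(2\pi j/n)}$ is controlled by comparison with the convergent integral $\int_0^{1/4}x^{-1/2}dx$, because the singularity $1/\sqrt{\sin}$ is integrable. That integral comparison, not a cancellation, is what produces the final bound $\tfrac{2}{n}+4\parens*{1+\tfrac{1}{n}}\sqrt{\delta\varepsilon}$. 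If you replace your cancellation step by this averaging argument (and either keep the paper's $n$-fold indexing with two identity-like shifts, or your $n/2$ distinct lines with one), the rest of your outline goes through.
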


	First, we need to bound the overlap measures.
	\begin{lemma}\label{lem:c-overlaps}
		Let $\zeta,\xi\in\C$ such that $|\zeta|=|\xi|=1$, and let $z\in\C$. Then, the measures
		\begin{itemize}
			\item $\mu_{\widehat{\zeta\R}}(F)=2\varepsilon$
			\item $\mu_{\zeta\R}(\zeta\R\cap(z+E+\xi\R))\leq\frac{2\delta}{|\latIm(\bar{\zeta}\xi)|}$
		\end{itemize}
	\end{lemma}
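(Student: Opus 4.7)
The plan is to reduce each bound to a one-dimensional Lebesgue computation via suitable parametrizations of the one-dimensional (sub)groups involved.

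For the first bound, I will determine the restriction map $\hat{G}=\hat{\C}\to \widehat{\zeta\R}$ explicitly. Using $\gamma_w(z)=e^{2\pi i\latRe(\bar w z)}$ from the introduction of this section, any $\zeta r\in \zeta\R$ with $r\in \R$ satisfies $\gamma_w(\zeta r)=e^{2\pi i\latRe(\bar w\zeta)\,r}$, so the restricted character depends only on $\latRe(\bar w\zeta)\in \R$. This yields the identification $\widehat{\zeta\R}\cong \R$ under which the dual Haar measure is standard Lebesgue. The image of $F=\{w:|w|<\varepsilon\}$ under $w\mapsto \latRe(\bar\zeta w)$ is the open interval $(-\varepsilon,\varepsilon)$---surjectivity follows by taking $w=t\zeta$ for $t\in(-\varepsilon,\varepsilon)$---so $\mu_{\widehat{\zeta\R}}(F)=2\varepsilon$.

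For the second bound, I will parametrize $\zeta\R$ isometrically as $\{\zeta t:t\in \R\}$, so that Haar measure on $\zeta\R$ pulls back to Lebesgue measure on $\R$. The condition $\zeta t\in z+E+\xi\R$ asserts the existence of $s\in \R$ with $|\zeta t-\xi s-z|<\delta$, i.e.\ that the distance from $\zeta t$ to the affine line $z+\xi\R$ is strictly less than $\delta$. Since $|\xi|=1$, this distance equals $|\latIm(\bar\xi(\zeta t-z))|$, reducing the problem to estimating the Lebesgue measure of
\[
\set*{t\in \R}{\abs*{t\,\latIm(\bar\xi\zeta)-\latIm(\bar\xi z)}<\delta}.
\]
When $\latIm(\bar\xi\zeta)\ne 0$ this is an open interval of length $2\delta/|\latIm(\bar\xi\zeta)|$, and the identity $|\latIm(\bar\xi\zeta)|=|\latIm(\bar\zeta\xi)|$ gives the claimed bound. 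The degenerate case $\latIm(\bar\zeta\xi)=0$ forces $\xi=\pm\zeta$, so $\zeta\R=\xi\R$ and the stated right-hand side is $+\infty$, making the inequality vacuous.

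The main point to verify is matching the normalization of $\mu_{\widehat{\zeta\R}}$ (and of $\mu_{\zeta\R}$) to the conventions used in \cref{thm:bound-abelian} and \cref{sec:abelian-coset-measure}: namely, that under the restriction isomorphism $\widehat{\zeta\R}\cong \R$ the dual Haar measure really is standard Lebesgue, and similarly that the isometric parametrization of $\zeta\R$ respects the Haar measure inherited from $\C$. Once this normalization is pinned down, both claims follow from the elementary planar-geometry arguments above.
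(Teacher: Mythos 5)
Your proof is correct and follows essentially the same route as the paper: identify $\widehat{\zeta\R}$ and $\zeta\R$ isometrically with $\R$, observe that $F$ restricts to the interval $(-\varepsilon,\varepsilon)$, and measure the strip $\zeta\R\cap(z+E+\xi\R)$ as an interval of length $2\delta/|\latIm(\bar\zeta\xi)|$, with the parallel case handled separately. The only difference is presentational: you compute the interval length algebraically via the distance-to-line formula $|\latIm(\bar\xi(\zeta t-z))|$, whereas the paper reads it off a right-triangle picture (\cref{fig:trig}); both give the same bound.
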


	\begin{proof}
		For the first point, we use the isometric isomorphisms $\widehat{\zeta\R}\cong\bar{\zeta}\R\cong\R$ to get
		\begin{align}
			\mu_{\widehat{\zeta\R}}(F)=\mu_{\bar{\zeta}\R}(\set*{\bar{\zeta}\latRe(\zeta z)}{|z|<\varepsilon})=\mu_{\R}((-\varepsilon,\varepsilon))=2\varepsilon.
		\end{align}
		For the second, we begin similarly and get $\mu_{\zeta\R}(\zeta\R\cap(z+E+\xi\R))=\mu_\R(\R\cap(\bar{\zeta}z+E+\bar{\zeta}\xi\R))$. Now we note that this set is in fact an interval, so its measure is again its length. If $\bar{\zeta}\xi$ is real, then the two lines are parallel, giving measure $0$ or $\infty=\frac{2\delta}{|\latIm(\bar{\zeta}\xi)|}$. Else, writing $\bar{\zeta}\xi=e^{i\theta}$, the overlap is always the hypotenuse of a right triangle with side length $2\delta$ and angle $\theta$ (see \cref{fig:trig}), giving measure $\frac{2\delta}{|\sin2\theta|}=\frac{2\delta}{|\latIm(\bar{\zeta}\xi)|}$.
	\end{proof}

	\begin{figure}[h!]\label{fig:trig}
		\centering
		\begin{tikzpicture}
			\fill[lightgray] (-1.5,-2) -- (0.5,2) -- (1.5,2) -- (-0.5,-2) -- cycle;
			\draw[thick] (-5,0) -- (5,0) node[pos=0.85,above]{$\R$};
			\draw[thick] (-1.5,-2) -- (0.5,2)  node[left,pos=0.8]{$E+\bar{\zeta}\xi\R$};
			\draw[thick] (-0.5,-2) -- (1.5,2);

			\draw[thick] (0.5,0) -- (-0.3,0.4) node[above, pos=0.35]{$2\delta$};
			\draw (-0.2,0.35) -- (-0.25,0.25) -- (-0.35,0.3);

			\draw[very thick,red] (-0.5,0) -- (0.5,0);
			\draw (0.8,0) arc (0:60:0.3) node[right, pos=0.6]{$\theta$};
		\end{tikzpicture}
		\caption{The geometry of the overlap of \cref{lem:c-overlaps}. The overlap is given in red.}
	\end{figure}
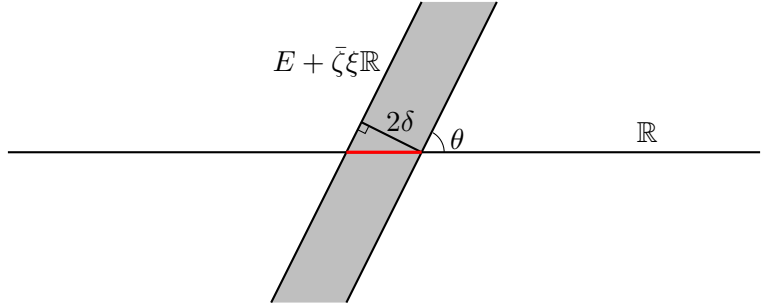

	Now, we can prove the main result.

	\begin{proof}[Proof of \cref{thm:c-bound}]
		To make use of \cref{thm:bound-abelian}, we choose a set of orthogonal permutations of $\mc{S}$. For $j=0,\ldots,n-1$, let $\pi_j(e^{2\pi i\frac{k}{n}}\R)=e^{2\pi i\frac{k+j}{n}}\R$. Then, using the theorem, we get the bound
		\begin{align}
		\begin{split}
			\mfk{w}(\ttt{G}_{n,\delta,\varepsilon})&\leq\expec{j}\sup_{H\in\mc{S},g\in G}\min\set*{1,\sqrt{\mu_H\parens[\big]{H\cap gE\pi_j(H)}\mu_{\hat{H}}\parens[\big]{F}}}\\
			&\leq\frac{2}{n}+\frac{1}{n}\sum_{j\neq0,\tfrac{n}{2}}\sup_{k;z,w\in\C}\sqrt{\mu_{e^{2\pi ik/n}\R}(\zeta\R\cap(z+E+e^{2\pi i(j+k)/n}\R))\mu_{\widehat{e^{2\pi ik/n}\R}}(F)}.
		\end{split}
		\end{align}
		Finally, using \cref{lem:c-overlaps} and some simple bounds, we get the wanted bound
		\begin{align}
		\begin{split}
			\mfk{w}(\ttt{G})&\leq\frac{2}{n}+\frac{1}{n}\sum_{j\neq0,\frac{n}{2}}\sup_{k,z,w}\sqrt{\frac{2\delta}{|\latIm(e^{-2\pi i\frac{k}{n}}e^{2\pi i\frac{k+j}{n}})|}2\varepsilon}\\
			&=\frac{2}{n}+2\frac{2\sqrt{\delta\varepsilon}}{n}\frac{1}{\sqrt{\sin\frac{\pi}{2}}}+4\frac{2\varepsilon}{n}\sum_{j=1}^{\frac{n}{4}-1}\frac{1}{\sqrt{\sin(2\pi\frac{j}{n})}}\\
			&\leq\frac{2}{n}+\frac{4\sqrt{\delta\varepsilon}}{n}+8\sqrt{\delta\varepsilon}\int_{\frac{1}{n}}^{\frac{1}{4}}\frac{1}{\sqrt{\sin(2\pi x)}}dx\leq\frac{2}{n}+\frac{4\sqrt{\delta\varepsilon}}{n}+4\sqrt{\delta\varepsilon}\int_{0}^{\frac{1}{4}}\frac{1}{\sqrt{x}}dx\\
			&=\frac{2}{n}+\frac{4\sqrt{\delta\varepsilon}}{n}+4\sqrt{\delta\varepsilon}.
		\end{split}
		\end{align}
	\end{proof}

	We close this section by presenting the state-sending version of the game. This allows for applications where Alice sends a damped version of the coset states rather than making a measurement, which would generally require the shared state to be entangled. The game proceeds as follows.

	\begin{enumerate}[1.]
		\item Alice chooses $i$ uniformly at random and samples $(x,y)$ according to the distribution $\pi^{\zeta_i}_{a,b}$. She prepares the state $\ket{\zeta_i,x,y|_{a,b}}$ and sends it to Bob and Charlie.

		\item Bob and Charlie attempt to split the state using an arbitrary channel $\Phi$, and then are no longer allowed to communicate.

		\item Alice sends $\zeta_i$ to Bob and Charlie. Bob answers with a guess $x_B$ for $x$ and Charlie answers with a guess $y_C$ for $y$.

		\item Bob and Charlie win if $|x-x_B|<\delta$ and $|y-y_C|<\varepsilon$.
	\end{enumerate}

	Due to \cref{thm:state-sending-game}, the winning probability of this game is also bounded as $\mfk{w}(\ttt{G}_{n,\delta,\varepsilon,a,b})\leq\frac{2}{n}+4\parens*{1+\frac{1}{n}}\sqrt{\delta\varepsilon}$.
\section{The coset monogamy game on $\R^n$}\label{sec:Rngame}

	We introduce the continuous-variable space of $n$ modes ($G=\R^n$) and its associated continuous-subgroup coset states and monogamy game. This section is in essence a generalization of the special case of $n=2$ from the previous section.

	\subsection{Optical quadrature coset states}

	With $G=\R^n$, the position states are the multimode quadratures $\ket{q}$ for $q\in\R^n$, where the inner products ${\langle x|q\rangle=\prod_{i=1}^n\delta(q_i-x_i)}$, for $x=\sum_ix_ie_i$ the expansion of $x\in\R^n$ in the canonical orthonormal basis $\{e_1,\ldots,e_n\}$.  Noting that the dual $\hat{\R}^n\cong\R^n$ with action given by the dot product $\gamma_x(y)=e^{2\pi ix\cdot y}$, the momentum states are $\ket{p}=\int_{\R^n}e^{2\pi i p\cdot q}\ket{q}dq$.

	We consider subgroups corresponding to linear subspaces in $\R^n$. Intuitively, this is the continuous case closest to the original finite case of subspaces of a finite vector space. Let $P\subseteq\R^n$ be a subspace. The quotient $\R^n/P\cong P^\perp$, the usual orthogonal subspace of the standard inner product; and the dual $\hat{P}\cong\R^n/P^\perp\cong P$ with the action inherited from the dual of $\R^n$. Due to the normalization of the dual action, the Haar measure on all of these is simply the usual Lebesgue measure. This gives, for $q\in P^\perp$ and $p\in P$, coset states
	\begin{align}
		\ket{P_{q,p}}=\ket{q+P^{\gamma_p}}=\int_Pe^{2\pi i p\cdot x}\ket{x+q}d_Px.
	\end{align}
	It is important to note that for register subspaces, \emph{i.e.}\ $P=\spn_\R\set*{e_i}{i\in I}$ for some subset $I\subseteq[n]$, the coset states expand as quadrature modes
	\begin{align}\label{eq:reg-cosets}
		\ket{P_{q,p}}=\bigotimes_{i=1}^n\begin{cases}\ket{q=q_i}&i\notin I\\\ket{p=p_i}&i\in I\end{cases}.
	\end{align}
	This naturally extends the expression of finite subspace coset states of register subspaces as conjugate-coding (BB84) states \cite{vidick2021classical}.

	Making contact with error correction, the subspace $P$ represents the code subspace of an analog stabilizer code, encoding $\mathrm{dim}(P)$ logical modes into $n$ physical modes. Register-subspace coset states $\ket{P_{0,p}}$ (\ref{eq:reg-cosets}) provide a basis of momentum states for this subspace and are eigenvalue-zero eigenstates of the position quadrature operators of the modes outside of $I$, which are called nullifiers in this context (see \cite[Appx. E]{Vuillot2018}\cite{eczoo_analog_stabilizer}).

	To deal with measurement in this basis rigorously, we also work with the coset operator measure. This is the operator measure on $\scr{B}(P^\perp)\otimes\scr{B}(P)\cong\scr{B}(\R^n)$,
	\begin{align}
		A^P(E)=\int_E\ketbra{P_{q,p}}d_{P^\perp\times P}(q,p),
	\end{align}
	or more formally,
	\begin{align}
		\braket{\phi}{A^P(E)}{\psi}=\int_E\overline{(\mc{F}_P\ket{\phi\circ q})(p)}(\mc{F}_P\ket{\psi\circ q})(p)d_{P^\perp\times P}(q,p),
	\end{align}
	where $\ket{\phi},\ket{\psi}\in L^2(\R^n)$, and $\mc{F}_P:L^2(\R^n)\rightarrow L^2(\R^n)$ is the Fourier transform with respect to $P$, defined on $\psi$ continuous with compact support as $(\mc{F}_P\ket{\psi})(p)=\int_Pe^{-2\pi ip\cdot x}\psi(x)d_Px$, and extended by continuity.

	A simple and natural damping operator on $\R^n$ is simply the $n$-fold tensor product of the damping operators on $\R$, $\Delta_{a,b}^{\otimes n}$. On each mode, this operator sends quadrature eigenstates to squeezed coherent states:
	\begin{align}
		&\ket{q}\mapsto e^{-aq^2}\int e^{-b(q-x)^2}\ket{q=x}dx=\parens*{\tfrac{\pi}{2b}}^{1/4}e^{-aq^2}\ket{b,q,0}\\
		&\ket{p}\mapsto \frac{\pi}{\sqrt{ab}}e^{-\frac{\pi^2}{a+b}p^2}\int e^{-\pi^2\frac{a+b}{ab}\parens*{y-\tfrac{1}{1+\frac{a}{b}}p}^2}\ket{p=y}dy=\parens*{\tfrac{\pi^3}{2ab(a+b)}}^{1/4}e^{-\frac{\pi^2}{a+b}p^2} \ket{\tfrac{ab}{a+b},0,\tfrac{bp}{a+b}},
	\end{align}
	where the general squeezed state
	\begin{equation}\label{eq:squeezed}
		\ket{a,x_0,p_0}=\parens*{\tfrac{2a}{\pi}}^{1/4}\int e^{-a(x-x_0)^2+2\pi i p_0x}\ket{x}dx~.
	\end{equation}
	In particular, for $a=\frac{\pi^2 b}{b^2-\pi^2}$ there is equal squeezing on both quadratures. With this operator, the damped coset state of a register subspace $P$ is
	\begin{align}\label{eq:squeezed-subspace}
		\ket{P_{q,p}|_{a,b}}=\frac{c\Delta_{a,b}^{\otimes n}\ket{P_{q,p}}}{\norm{\Delta_{a,b}^{\otimes n}\ket{P_{q,p}}}}=\bigotimes_{i=1}^n\begin{cases}\ket{b,q_i,0}&,\;i\notin I\\\ket{\tfrac{ab}{a+b},0,-\frac{bp_i}{a+b}}&,\;i\in I\end{cases}.
	\end{align}
	The distribution of damped states is
	\begin{align}\label{eq:distribution}
		\pi^P_{a,b}(q,p)=\frac{\norm{\Delta_{a,b}^{\otimes n}\ket{P_{q,p}}}^2}{\norm{\Delta_{a,b}}_2^{2n}}=\parens*{\tfrac{2a}{\pi}}^{(n-\dim P)/2}\parens*{\tfrac{2\pi}{a+b}}^{\dim P/2}e^{-2a\norm{q}^2_2-\tfrac{2\pi^2}{a+b}\norm{p}^2_2}.
	\end{align}

	\subsection{Monogamy game analysis}

	We need to use a slightly more involved analysis to study monogamy games constructed from multimode coset states, as the usual overlap lemma needs to be strengthened. This is due to the fact that the overlap of two linear subspaces may contain the neighborhood of a line which has infinite measure. However, we are able to remain in the context of the entanglement-based game throughout the analysis.

	This game proceeds as follows.
	\begin{enumerate}[1.]
		\item Bob and Charlie prepare a shared state $\rho_{ABC}$ but then are no longer allowed to communicate.

		\item Alice chooses a register subspace $P$ of dimension $n/2$ uniformly at random and measures her register in basis $\set*{\ket{P_{q,p}}}$ to get outcomes $q,p$.

		\item Alice sends a description of $P$ to Bob and Charlie. Bob answers with a guess $q_B$ for $q$ and Charlie answers with a guess $p_C$ for $p$.

		\item Bob and Charlie win if $\norm{q-q_B}_\infty<\delta$ and $\norm{p-p_C}_\infty<\varepsilon$.
	\end{enumerate}

	Thus, this corresponds to the abelian coset measure monogamy game~$\ttt{G}_{n,\delta,\varepsilon}$ with collection of subspaces $\mc{S}=\set*{\spn_\R\set{e_i}{i\in I}}{I\subseteq[n],|I|=n/2}$, and Bob and Charlie's error neighbourhoods $E=(-\delta,\delta)^n$ and $F=(-\varepsilon,\varepsilon)^n$. We are able to find a similar bound on the winning probability of this game as the bound in~\cite{CV22}.

	\begin{theorem}\label{thm:Rn-monogamy}
		The winning probability of the quadrature monogamy game satisfies
		\begin{align}
			\mfk{w}(\ttt{G}_{n,\delta,\varepsilon})\leq\frac{1}{\binom{n}{n/2}}\sum_{k=0}^{n/2}\binom{n/2}{k}^2(2\sqrt{\delta\varepsilon})^{k}\leq\sqrt{e}\parens*{\tfrac{1}{2}+\sqrt{\delta\varepsilon}}^{n/2}\;.
		\end{align}
	\end{theorem}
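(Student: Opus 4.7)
The plan is to use the entanglement-based bound of \cref{thm:bound-abelian}, but to replace the naive set-theoretic overlap $\mu_{P_I}(P_I\cap(g+E+\pi_j(P_I)))$---which is infinite whenever $I$ and $\pi_j(I)$ share any coordinate and hence useless on its own---by a direct operator-norm estimate that exploits the tensor factorization $L^2(\R^n)\cong\bigotimes_{i=1}^n L^2(\R)$ together with the corresponding mode-by-mode factorization of the coset measures $A^{P_I}$ for register subspaces.

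The key lemma I would establish is the following. For register subspaces $P_I,P_J$ with $|I|=|J|=n/2$ and $k:=|I\cap J|$, uniformly in the guess parameters,
\begin{equation*}
\bigl\|\sqrt{A^{P_I}(q+E)}\sqrt{A^{P_J}(p+F)}\bigr\|_\infty\,\le\,(2\sqrt{\delta\varepsilon})^{n/2-k}\;.
\end{equation*}
To prove this, I would split the modes into four groups: on the $k$ modes in $I\cap J$ and on the $k$ modes in $(I\cup J)^c$, both coset measures reduce to interval projectors in the same quadrature, so those factors commute and the square-root product has operator norm at most $1$; on each of the $2(n/2-k)$ conjugate modes in $I\triangle J$, one factor is a position-interval projector of width $2\delta$ and the other a momentum-interval projector of width $2\varepsilon$, and a standard Heisenberg-type bound on interval indicators and their Fourier transforms gives operator norm at most $\sqrt{2\delta\cdot 2\varepsilon}=2\sqrt{\delta\varepsilon}$. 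Taking the tensor product over all $n$ modes produces the stated bound.

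The rest is combinatorial bookkeeping. Plugging the strengthened overlap into the Cauchy--Schwarz symmetrization underlying \cref{thm:bound-abelian}, and choosing an orthogonal family $\{\pi_j\}$ of permutations of $\binom{[n]}{n/2}$ such that $\{\pi_j(I)\}_j$ exhausts all $n/2$-subsets exactly once for each fixed $I$ (such a family exists by transitivity of the $S_n$-action), I group the resulting terms by $k=|I\cap\pi_j(I)|$. Since the number of $J\in\binom{[n]}{n/2}$ with $|I\cap J|=k$ equals $\binom{n/2}{k}\binom{n/2}{n/2-k}=\binom{n/2}{k}^2$, relabelling $k\mapsto n/2-k$ yields the first claimed bound. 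The second inequality then follows from the uniform estimate $\binom{n/2}{k}\le\sqrt{e}\,\binom{n}{n/2}/2^{n/2}$ (a short Stirling-type computation, whose asymptotic ratio is $\sqrt{2}<\sqrt{e}$), after which the sum collapses to $\sqrt{e}\cdot(1+2\sqrt{\delta\varepsilon})^{n/2}/2^{n/2}=\sqrt{e}\bigl(\tfrac{1}{2}+\sqrt{\delta\varepsilon}\bigr)^{n/2}$.

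The main obstacle is the operator-norm lemma of the second paragraph: naively invoking the general abelian bound yields a divergent set-theoretic overlap as soon as two register subspaces share any coordinate, which is the case for almost every pair of $n/2$-subsets. Circumventing this forces one to step out of the measure-theoretic picture and track the operator product mode by mode, using the crucial fact that the offending infinities arise only on agreeing modes, where the two measurements commute and therefore contribute at most $1$ to the overall norm rather than penalizing distinguishability of guesses.
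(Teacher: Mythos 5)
Your overall route is the same as the paper's: a strengthened overlap bound for pairs of register subspaces with a penalty of $2\sqrt{\delta\varepsilon}$ per ``mismatched'' mode, fed into the permutation machinery behind \cref{thm:bound-abelian} and a binomial estimate. Your tensor-factorization proof of the overlap bound is a legitimate alternative to the computation in \cref{lem:Rn-overlap}, but your mode bookkeeping is inconsistent with the bound you state. Writing $k=|I\cap J|$, the operator encoding the momentum condition acts as a width-$2\varepsilon$ momentum projector on the modes of one index set and as the identity elsewhere, while the operator encoding the position condition acts as a width-$2\delta$ position projector on the complement of the other index set and as the identity elsewhere. Hence the conjugate position/momentum pair meets on exactly $n/2-k$ modes (one side of $I\triangle J$); on the other side of $I\triangle J$ both factors are the identity, and on $I\cap J$ and $(I\cup J)^c$ exactly one factor is a projector (norm $1$ either way), not two same-quadrature projectors. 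Your claim that all $2(n/2-k)$ modes of $I\triangle J$ carry the conjugate pair would give the exponent $n-2k$, not the exponent $n/2-k$ you assert (which is the correct one, matching \cref{lem:Rn-overlap}). This is a fixable slip, but as written the sketch does not prove the lemma it states.

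The genuine gap is in the combinatorial step. \cref{lem:sum-bound} bounds the winning probability by the average over permutations $j$ of $\sup_{I}\|\Pi^{P_I}\Pi^{P_{\pi_j(I)}}\|$, with the supremum over subspaces \emph{inside} the average over permutations. Grouping terms by $k=|I\cap\pi_j(I)|$ is therefore only valid if, for each fixed $j$, the intersection size is the same for every $I$ (or at least the per-$j$ supremum has the right distribution over $j$). A general complete orthogonal family --- which is all that injectivity of $j\mapsto\pi_j(I)$, or ``transitivity of the $S_n$-action'', provides --- has no such property: generically, for most $j$ there exists some $I$ with $|I\cap\pi_j(I)|=n/2-1$, so $\sup_I\|\Pi^{P_I}\Pi^{P_{\pi_j(I)}}\|$ is of order $2\sqrt{\delta\varepsilon}$ for almost every $j$ and the averaged bound degrades to roughly $2\sqrt{\delta\varepsilon}$ rather than $\sqrt{e}\,(\tfrac12+\sqrt{\delta\varepsilon})^{n/2}$. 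The paper avoids this by invoking the specific orthogonal permutations constructed in \cite{CV22}, for which $|I\cap\pi_j(I)|$ depends only on $j$ and exactly $\binom{n/2}{k}^2$ of the permutations realize intersection $n/2-k$ simultaneously for every $I$; that construction (or an equivalent uniformity statement) is the one nontrivial combinatorial ingredient your write-up omits. The final step, $\max_k\binom{n/2}{k}\le\sqrt{e}\,2^{-n/2}\binom{n}{n/2}$, is fine.
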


	Theorem~\ref{thm:Rn-monogamy} also provides a bound for the state-sending version of the game, $\ttt{G}_{n,\delta,\varepsilon,a,b}$, in which Alice sends squeezed states (\ref{eq:squeezed}).

	To show the claimed bound we use the following strengthening of the overlap bound \cref{lem:abelian-overlap}.

	\begin{lemma}\label{lem:Rn-overlap}
		Let $P=\spn_\R\set{e_i}{i\in I}$ and $Q=\spn_\R\set{e_i}{i\in J}$ be register subspaces,
		$p\in P$, and $q\in Q^\perp$. Then,
		\begin{align}
			\norm{A^P(P^\perp\times(F\cap P+p))A^Q((E\cap Q^\perp+q)\times Q)}\leq(2\sqrt{\delta\varepsilon})^{n/2-|I\cap J|}.
		\end{align}
	\end{lemma}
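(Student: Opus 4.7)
The plan is to exploit the tensor factorization of $A^P$ and $A^Q$ on register subspaces, reducing the claim to a product of single-mode bounds. Since $P = \spn_\R\{e_i : i \in I\}$, equation~\eqref{eq:reg-cosets} expresses $\ket{P_{q',p'}}$ as a tensor product of position eigenstates on modes $i \notin I$ and momentum eigenstates on modes $i \in I$. Integrating $q'$ over all of $P^\perp$ and $p'$ over $F \cap P + p = \prod_{i \in I}(p_i - \varepsilon, p_i + \varepsilon)$ then yields
\begin{align}
A^P\bigl(P^\perp \times (F \cap P + p)\bigr) \,=\, \bigotimes_{i \in I} \Pi^{\mathrm{mom}}_{(p_i - \varepsilon,\, p_i + \varepsilon)} \otimes \bigotimes_{i \notin I} \Id_i\,,
\end{align}
where $\Pi^{\mathrm{mom}}_{(\alpha,\beta)}$ is the spectral projection of the single-mode momentum onto $(\alpha,\beta)$. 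Symmetrically, $A^Q((E \cap Q^\perp + q) \times Q)$ is the identity on modes $i \in J$ and the position projection $\Pi^{\mathrm{pos}}_{(q_i - \delta,\, q_i + \delta)}$ on modes $i \notin J$. Justifying these decompositions amounts to checking that the Fourier transform $\mc{F}_P$ appearing in the definition of $A^P$ factors as $\bigotimes_{i \in I} \mc{F}_i \otimes \bigotimes_{i \notin I} \Id$, which follows from the definitions in \cref{sec:POVM}.

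Multiplying the two operators mode by mode, the $n$ modes split into four classes according to membership in $I \cap J$, $I \setminus J$, $J \setminus I$, and $(I \cup J)^c$. On the modes in $I \cap J$, $J \setminus I$, and $(I \cup J)^c$, the resulting factor is a single projection or the identity, contributing operator norm at most $1$. The nontrivial factors appear on the $|I| - |I \cap J|$ modes in $I \setminus J$, where the operator is $\Pi^{\mathrm{mom}}_{(p_i - \varepsilon, p_i + \varepsilon)}\,\Pi^{\mathrm{pos}}_{(q_i - \delta, q_i + \delta)}$; under the game's assumption $|I| = n/2$, this count equals $n/2 - |I \cap J|$.

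The remaining analytic step, and the one real obstacle, is the single-mode bound $\|\Pi^{\mathrm{mom}}_{(-\varepsilon, \varepsilon)}\,\Pi^{\mathrm{pos}}_{(-\delta, \delta)}\| \leq 2\sqrt{\delta\varepsilon}$, a version of the essential uncertainty principle. Translations preserve the operator norm, so both intervals may be centered at the origin. With the paper's Fourier normalization $(\mc{F}\psi)(p) = \int e^{-2\pi i p x}\psi(x)\,dx$, one has $(\Pi^{\mathrm{mom}}_{(-\varepsilon, \varepsilon)} \psi)(x) = \int_{-\varepsilon}^{\varepsilon} \hat\psi(p)\, e^{2\pi i p x}\, dp$, so Cauchy-Schwarz gives the pointwise estimate $|(\Pi^{\mathrm{mom}}_{(-\varepsilon, \varepsilon)} \psi)(x)|^2 \leq 2\varepsilon\,\|\psi\|^2$, and integrating over $x \in (-\delta, \delta)$ yields $\|\Pi^{\mathrm{pos}}_{(-\delta,\delta)} \Pi^{\mathrm{mom}}_{(-\varepsilon,\varepsilon)} \psi\|^2 \leq 4\delta\varepsilon\,\|\psi\|^2$. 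The desired norm bound follows by adjointness, and multiplying contributions across the $n/2 - |I \cap J|$ nontrivial modes produces the claimed $(2\sqrt{\delta\varepsilon})^{n/2 - |I \cap J|}$.
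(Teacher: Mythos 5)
Your proof is correct, but it takes a genuinely different route from the paper's. The paper does not use the tensor factorization at all: it first identifies $A^Q((E\cap Q^\perp+q)\times Q)$ with the multiplication projector $\Pi_{q+E+Q}$, then bounds the action of $A^P(P^\perp\times(F\cap P+p))$ on $\Pi_{q+E+Q}\ket{\psi}$ directly from the Fourier-analytic definition of the coset measure, which requires an explicit computation of the set $P\cap(q-q'+E+Q)$ followed by a Cauchy--Schwarz step inside the defining integral; the result is the intermediate bound $\sqrt{\mu_{P\cap Q^\perp}(E\cap(P\cap Q^\perp))\,\mu_{P\cap Q^\perp}(F\cap(P\cap Q^\perp))}$, i.e.\ a strengthened form of \cref{lem:abelian-overlap} in which only the common ``conflicting'' directions $P\cap Q^\perp$ contribute. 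Your argument instead exploits that $P$ and $Q$ are register subspaces, so via \eqref{eq:reg-cosets} both operator measures factor mode by mode, and the whole estimate collapses to the single-mode uncertainty bound $\|\Pi^{\mathrm{pos}}_{(-\delta,\delta)}\Pi^{\mathrm{mom}}_{(-\varepsilon,\varepsilon)}\|\leq 2\sqrt{\delta\varepsilon}$ raised to the power $|I\setminus J|=n/2-|I\cap J|$, with all other modes contributing norm at most $1$. Your route is more modular and makes the exponent's origin transparent (one factor per mode in $I\setminus J$), at the price of being tied to the product structure of register subspaces; the paper's route is heavier but stays within the general coset-measure formalism of \cref{sec:abelian-infinite}, so it parallels the proof of the general abelian overlap bound and is the natural template if one wants to treat subspaces or error sets without a per-mode product structure. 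Both yield exactly $(2\sqrt{\delta\varepsilon})^{n/2-|I\cap J|}$, and your single-mode estimate (translation invariance, Cauchy--Schwarz on the inverse Fourier integral over $(-\varepsilon,\varepsilon)$, then integration over $(-\delta,\delta)$, then adjointness) is sound.
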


	\begin{proof}
		First, we note that as in \cref{lem:abelian-overlap},
		\begin{align}
		\begin{split}
			\braket{\psi}{A^Q((E\cap Q^\perp+q)\times Q)}{\psi}&=\int_{E\cap Q^\perp+q}\int_Q\abs*{(\mc{F}_P\ket{\psi\circ q'})(p')}^2 dp'dq'\\
			&=\int_{E\cap Q^\perp+q}\int_Q\abs*{\psi(p'+q')}^2 dp'dq'=\int_{q+E+Q}\abs*{\psi(q')}^2dq',
		\end{split}
		\end{align}
		that is $A^Q((E\cap Q^\perp+q)\times Q)=\Pi_{q+E+Q}$, the projector onto $q+E+Q$. Now, fixing $\ket{\psi}\in L^2(\R^n)$ continuous with compact support,
		\begin{align}
		\begin{split}
			\norm{A^P(P^\perp\times(F\cap P+p))\Pi_{q+E+Q}\ket{\psi}}^2&=\int_{P^\perp}\int_{F\cap P+p}\abs*{(\mc{F}_P(\Pi_{q+E+Q}\ket{\psi}\circ q'))(p')}^2d_Pp'd_{P^\perp}q'\\
			&=\int_{P^\perp}\int_{F\cap P+p}\abs*{\int_{P\cap (q-q'+E+Q)}e^{-2\pi i p'\cdot x}\psi(x+q')d_Px}^2d_Pp'd_{P^\perp}q'.
		\end{split}
		\end{align}
		To simplify this, we first study the set $P\cap (q-q'+E+Q)$. By definition,
		\begin{align}
			P\cap (q-q'+E+Q)=\set{x\in P}{\exists y\in Q~\text{s.t.}~\;\norm{x-y+q'-q}_\infty<\varepsilon}.
		\end{align}
		Thus, $x=\sum_{i\in I}x_ie_i\in P\cap (q-q'+E+Q)$ if and only if there exists $y=\sum_{j\in J}y_je_j$ such that $|x_i-y_i|<\varepsilon$ for all $i\in I\cap J$, $|x_i-q_i|<\varepsilon$ for all $i\in I\cap J^c$, $|y_i-q_i'|<\varepsilon$ for all $i\in I^c\cap J$, and $|q_i'-q_i|<\varepsilon$ for all $i\in I^c\cap J^c$. Since we can choose $y=\sum_{i\in I\cap J}x_ie_i+\sum_{i\in I^c\cap J}q_i'e_i$, the set simplifies to
		\begin{align}
		\begin{split}
			P\cap (q-q'+E+Q)&=\begin{cases}\set*{x\in P}{|x_i-q_i|<\varepsilon~\forall i\in I\cap J^c}&\text{ if }|q_i'-q_i|<\varepsilon\forall i\in I^c\cap J^c\\\varnothing&\text{ else}\end{cases}\\
			&=\begin{cases}q|_P+E\cap P+P\cap Q&\text{ if }q'\in q|_{P^\perp}+E\cap Q+P^\perp\cap Q\\\varnothing&\text{ else}\end{cases}.
		\end{split}
		\end{align}
		Hence, writing $Q'=q|_{P^\perp}+E\cap Q+P^\perp\cap Q$,
		\begin{align}
		\begin{split}
			&\norm{A^P(P^\perp\times(F\cap P+p))\Pi_{q+E+Q}\ket{\psi}}^2=\int_{Q'}\int_{F\cap P+p}\abs*{\int_{q|_{P}+E+P\cap Q}e^{-2\pi i p'\cdot x}\psi(x+q')d_Px}^2d_Pp'd_{P^\perp}q'\\
			&=\int_{Q'}\int_{F\cap P+p}\abs*{\int_{q|_{P\cap Q^\perp}+E\cap (P\cap Q^\perp)}\int_{P\cap Q}e^{-2\pi i p'\cdot (x+y)}\psi(x+y+q')d_{P\cap Q}xd_{P\cap Q^\perp}y}^2d_Pp'd_{P^\perp}q'\\
			&=\int_{Q'}\int_{F\cap P+p}\abs*{\int_{q|_{P\cap Q^\perp}+E\cap (P\cap Q^\perp)}e^{-2\pi i p'\cdot y}(\mc{F}_{P\cap Q}\ket{\psi\circ(y+q')})(p')d_{P\cap Q^\perp}y}^2d_Pp'd_{P^\perp}q'\\
			&\leq\int_{Q'}\int_{F\cap P+p}\mu_{P\cap Q^\perp}(q|_{P\cap Q^\perp}+E\cap (P\cap Q^\perp))\int_{P\cap Q^\perp}|(\mc{F}_{P\cap Q}\ket{\psi\circ(y+q')})(p')|^2d_{P\cap Q^\perp}yd_Pp'd_{P^\perp}q'\\
			&\leq\mu_{P\cap Q^\perp}(E\cap (P\cap Q^\perp))\int_{Q'}\int_{F\cap P\cap Q^\perp+p|_{Q^\perp}}\int_{P\cap Q}\int_{P\cap Q^\perp}|\psi(y+q'+p'')|^2d_{P\cap Q^\perp}yd_{P\cap Q}p''d_{P\cap Q^\perp}p'd_{P^\perp}q'\\
			&\leq\mu_{P\cap Q^\perp}(E\cap (P\cap Q^\perp))\mu_{P\cap Q^\perp}(F\cap (P\cap Q^\perp))\norm{\ket{\psi}}^2.
		\end{split}
		\end{align}
		Therefore, we get the bound
		\begin{align}
			\norm{A^P(P^\perp\times(F\cap P+p))A^Q((E\cap Q^\perp+q)\times Q)}\leq\sqrt{\mu_{P\cap Q^\perp}(E\cap (P\cap Q^\perp))\mu_{P\cap Q^\perp}(F\cap (P\cap Q^\perp))}.
		\end{align}
		The final bound is found by noting that
		\begin{align}
		\begin{split}
			\mu_{P\cap Q^\perp}(E\cap (P\cap Q^\perp))&=\mu_{\R^{\dim P\cap Q^\perp}}((-\delta,\delta)^{\dim P\cap Q^\perp})\\
			&=(2\delta)^{\dim P\cap Q^\perp}=(2\delta)^{n/2-|I\cap J|},
		\end{split}
		\end{align}
		and identically $\mu_{P\cap Q^\perp}(F\cap (P\cap Q^\perp))=(2\varepsilon)^{n/2-|I\cap J|}$.
	\end{proof}

	\begin{proof}[Proof of \cref{thm:Rn-monogamy}]
		The proof procedes exactly as \cref{thm:bound-abelian}. As in the finite case, there are $\binom{n}{n/2}$ register subspaces, and we can use the set of orthogonal permutations of \cite{CV22}. Using those, there are $\binom{n/2}{k}^2$ permutations such that $|I\cap\pi_i(I)|=n/2-k$, and therefore
		\begin{align}
			\mfk{w}(\ttt{G}_{n,\delta,\varepsilon})\leq\frac{1}{\binom{n}{n/2}}\sum_i(2\sqrt{\delta\varepsilon})^{n/2-|I\cap \pi_i(I)|}=\frac{1}{\binom{n}{n/2}}\sum_{k=0}^{n/2}\binom{n/2}{k}^2(2\sqrt{\delta\varepsilon})^{k}.
		\end{align}
		Finally, using the bound $\frac{1}{\binom{n}{n/2}}\sum_{k=0}^{n/2}\binom{n/2}{k}^2x^k\leq\frac{\sqrt{e}}{2^{n/2}}\parens*{1+x}^{n/2}$ gives the final result.
	\end{proof}

	Before describing the related QKD protocol, we modify the above game by first accounting for failures in some number of modes, and then by discretizing the continuous measurement values.

	\subsubsection{Mode failure}
	We can use the bound from Theorem~\ref{thm:Rn-monogamy} to work out a version of the game that accounts for measurement failures on a small number of the modes. With an additional error parameter $\gamma$, the game $\ttt{G}_{n,\delta,\varepsilon,\gamma,a,b}$ proceeds as follows.

	\begin{enumerate}[1.]
		\item Alice chooses a register subspace $P=\spn_\R\set*{e_i}{i\in\mc{I}}$ of dimension $n/2$ uniformly at random and samples $(q,p)$ according to the distribution $\pi^{P}_{a,b}$. She prepares the squeezed state $\ket{P_{q,p}|_{a,b}}$ (\ref{eq:squeezed-subspace}) and sends it to Bob and Charlie.

		\item Bob and Charlie attempt to split the state using an arbitrary channel $\Phi$, and then are no longer allowed to communicate.

		\item Alice sends $P$ to Bob and Charlie. Bob answers with a guess $q_B$ for $q$ and Charlie answers with a guess $p_C$ for $p$.

		\item Bob and Charlie win if $\abs{q_i-(q_B)_i}\geq\delta$ for at most $\gamma n/2$ values of $i\in\mc{I}^c$, and $\norm{p-p_C}_\infty<\varepsilon$.
	\end{enumerate}

	\begin{corollary}\label{cor:sq-approx-monogamy}
		For $n$ even and $\gamma$ such that $\gamma n/2$ is an integer, the winning probability of $\ttt{G}_{n,\delta,\varepsilon,\gamma}$ satisfies
		\begin{align}
			\mfk{w}(\ttt{G}_{n,\delta,\varepsilon,\gamma})\,\leq\, 2^{\squ*{(1-\gamma)\lg\parens*{\tfrac{1}{2}+\sqrt{\delta\varepsilon}}+h(\gamma)+\tfrac{1}{(\ln 2)n}}\tfrac{n}{2}}\;.
		\end{align}
	\end{corollary}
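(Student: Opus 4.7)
\textit{Proof plan.} The approach is to reduce the approximate game to the strict game of \cref{thm:Rn-monogamy} via a union bound over the choice of ``allowed failure'' modes. The condition that $|q_i - (q_B)_i| \geq \delta$ for at most $\gamma n/2$ values of $i \in \mathcal{I}^c$ is equivalent to the existence of a subset $S \subseteq \mathcal{I}^c$ with $|S| = \gamma n/2$ such that Bob is strictly correct on $\mathcal{I}^c \setminus S$. A union bound over the $\binom{n/2}{\gamma n/2}$ such subsets, applied inside the expectation over Alice's choice $P$, yields
\begin{align*}
\mfk{w}(\ttt{G}_{n,\delta,\varepsilon,\gamma}) \,\leq\, \binom{n/2}{\gamma n/2} \cdot \mfk{w}(\ttt{G}'),
\end{align*}
where $\ttt{G}'$ is the game in which Alice samples $(P, S)$ uniformly, and Bob and Charlie must be strictly correct on $\mathcal{I}^c \setminus S$ and $\mathcal{I}$ respectively.

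Next, I would bound $\mfk{w}(\ttt{G}')$ by exploiting the product structure of Alice's register-subspace coset measurement~\eqref{eq:reg-cosets}: for each fixed $(P, S)$, the outcomes on modes in $S$ play no role in either Bob's or Charlie's winning condition, so one may trace out these modes, leaving a $(1 - \gamma/2)n$-mode strict game in which the subspace $P$ has dimension $n/2$ and Bob's target register has dimension $(1-\gamma)n/2$. Running the proof of \cref{thm:Rn-monogamy} in this reduced setting, with the overlap estimate from \cref{lem:Rn-overlap} now restricted to the $(1-\gamma)n/2$ coordinates of $\mathcal{I}^c \setminus S$, the analogous sum over orthogonal permutations collapses via the same binomial identity to
\begin{align*}
\mfk{w}(\ttt{G}') \,\leq\, \sqrt{e}\bigl(\tfrac{1}{2} + \sqrt{\delta\varepsilon}\bigr)^{(1-\gamma)n/2}.
\end{align*}

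Combining the two bounds and using the standard entropy estimate $\binom{n/2}{\gamma n/2} \leq 2^{h(\gamma) n/2}$ gives
\begin{align*}
\mfk{w}(\ttt{G}_{n,\delta,\varepsilon,\gamma}) \,\leq\, 2^{h(\gamma) n/2} \cdot \sqrt{e} \cdot \bigl(\tfrac{1}{2} + \sqrt{\delta\varepsilon}\bigr)^{(1-\gamma)n/2},
\end{align*}
which matches the corollary after rewriting $\sqrt{e} = 2^{1/(2\ln 2)} = 2^{[1/((\ln 2) n)] \cdot n/2}$ and collecting terms in a single exponent. The main obstacle is the middle step: adapting the proof of \cref{thm:Rn-monogamy} to the ``unbalanced'' reduced game in which $\dim P = n/2$ exceeds half the ambient dimension $(1-\gamma/2)n$. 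A careful accounting of how the orthogonal permutation family interacts with the restricted overlap set $(P \cap Q^\perp) \cap \spn\{e_i : i \in \mathcal{I}^c \setminus S\}$ is needed to confirm that the $(1-\gamma)n/2$ exponent — and not the full $n/2$ — truly emerges from Lemma \ref{lem:Rn-overlap}.
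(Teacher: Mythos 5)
Your overall skeleton --- the union bound over the $\binom{n/2}{\gamma n/2}$ allowed-failure sets, the entropy bound $\binom{n/2}{\gamma n/2}\le 2^{h(\gamma)n/2}$, and the rewriting $\sqrt{e}=2^{1/(2\ln 2)}$ to collect the exponent --- matches the paper, and those steps are fine. The gap is the middle step, and you flag it yourself: after tracing out the modes in $S$ you are left with a game on $(1-\gamma/2)n$ modes in which Alice's measured subspace still has dimension $n/2$, and this is \emph{not} an instance of $\ttt{G}_{m,\delta,\varepsilon}$, so \cref{thm:Rn-monogamy} cannot be invoked as a black box; redoing its proof for this unbalanced game is not routine. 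Concretely, the family of Alice's settings is no longer the uniform family of half-dimensional register subspaces of a single fixed ambient space (the ambient space itself varies with $S$), so the orthogonal-permutation family of \cite{CV22} and the count $\binom{n/2}{k}^2$ of permutations with a given intersection size --- which is exactly what collapses the sum into the binomial identity --- do not carry over as stated. One would have to build a new permutation family over pairs $(\mc{I},S)$ and re-derive the averaged overlap bound, and your proposal does not do this; the claimed inequality $\mfk{w}(\ttt{G}')\le\sqrt{e}\,(\tfrac12+\sqrt{\delta\varepsilon})^{(1-\gamma)n/2}$ is asserted rather than proved.

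The paper sidesteps the unbalanced game entirely with one extra monotonicity step that you did not take: besides restricting Bob's correctness to the $(1-\gamma)n/2$ position modes indexed by $I$, it also \emph{weakens Charlie's winning condition} to the $(1-\gamma)n/2$ momentum modes indexed by the same $I$ (correctness on all of $\mc{I}$ implies correctness on any subset, so each term only increases). After this, each term in the union bound is literally the winning probability of a strategy for the balanced game $\ttt{G}_{(1-\gamma)n,\delta,\varepsilon}$, and \cref{thm:Rn-monogamy} applies verbatim, giving $\mfk{w}(\ttt{G}_{n,\delta,\varepsilon,\gamma})\le\binom{n/2}{\gamma n/2}\,\mfk{w}(\ttt{G}_{(1-\gamma)n,\delta,\varepsilon})$ and hence the stated bound. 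Inserting this single observation into your middle step closes the gap and turns your argument into the paper's proof.
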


	Here $\lg$ is the base-two logarithm and  $h(\gamma)=-\gamma\lg\gamma-(1-\gamma)\lg(1-\gamma)$ is the binary entropy function.

	\begin{proof}
		The winning probability of a strategy $\ttt{S}$ can be expressed as
		\begin{align}
			\mfk{w}_{\ttt{G}_{n,\delta,\varepsilon,\gamma}}(\ttt{S})=\Pr\squ[\Bigg]{\biglor_{\substack{I\subseteq\mc{I}^c\\|I|=(1-\gamma)n/2}}(|Q_i-(Q_B)_i|<\delta\forall i\in I)\land\norm{P-P_C}_\infty<\varepsilon}.
		\end{align}
		Using a union bound,
		\begin{align}
		\begin{split}
			\mfk{w}_{\ttt{G}_{n,\delta,\varepsilon,\gamma}}(\ttt{S})&\leq\expec{|\mc{I}|=n/2}\sum_{\substack{I\subseteq\mc{I}^c\\|I|=(1-\gamma)n/2}}\Pr\squ{(|Q_i-(Q_B)_i|<\delta\forall i\in I)\land\norm{P-P_C}_\infty<\varepsilon}\\
			&\leq\sum_{\substack{I\subseteq[n/2]\\|I|=(1-\gamma)n/2}}\Pr\squ{|Q_{\mc{I}^c_i}-(Q_B)_{\mc{I}^c_i}|<\delta\land\abs{P_{\mc{I}_i}-(P_C)_{\mc{I}_i}}<\varepsilon\forall i\in I}.
		\end{split}
		\end{align}
		Now, we note that since each term in the sum depends only on the elements in $I$, it is the winning probability of a strategy for the game $\ttt{G}_{(1-\gamma)n,\delta,\varepsilon}$ played on $I$. Thus, as there are $\binom{n/2}{\gamma n/2}$ sets of cardinality $(1-\gamma)n/2$, we get that $\mfk{w}(\ttt{G}_{n,\delta,\varepsilon,\gamma})\leq\binom{n/2}{\gamma n/2}\mfk{w}(\ttt{G}_{(1-\gamma)n,\delta,\varepsilon})$. Finally, bounding $\binom{n/2}{\gamma n/2}\leq 2^{\tfrac{n}{2}h(\gamma)}$ and $\mfk{w}(\ttt{G}_{(1-\gamma)n,\delta,\varepsilon})\leq\sqrt{e}\parens*{\tfrac{1}{2}+\sqrt{\delta\varepsilon}}^{(1-\gamma)n/2}$ by \cref{thm:Rn-monogamy} gives the result.
	\end{proof}

	\subsubsection{Quadrature measurement outcome discretization}
	For applications such as QKD it is advantageous to work with discretized versions of the outputs. Though working with error neighborhoods is more natural in the context of a monogamy game, this is not particularly amenable to discretization. Rather, it is better to partition the space into disjoint bins, as is commonly done in this context. For $m\in\Z$ and $\delta>0$, we take the bin of index $m$ and width $\delta$ as the interval $B_\delta(m)=[(m-\tfrac{1}{2})\delta,(m+\tfrac{1}{2})\delta)$. Over all integer indices, the bins of width $\delta$ are a disjoint cover of $\R$. Similarly, we can take bins in $\R^n$ indexed by the integer vectors.

	We work out a binned version of the monogamy game, $\ttt{G}^{\text{binned}}_{n,\delta,\varepsilon,\gamma,a,b}$, which proceeds as follows.

	\begin{enumerate}[1.]
		\item Alice chooses a register subspace $P=\spn_\R\set*{e_i}{i\in\mc{I}}$ of dimension $n/2$ uniformly at random and samples $(q,p)$ according to the distribution $\pi^{P}_{a,b}$. She prepares the squeezed state $\ket{P_{q,p}|_{a,b}}$ (\ref{eq:squeezed-subspace}) and sends it to Bob and Charlie.

		\item Bob and Charlie attempt to split the state using an arbitrary channel $\Phi$, and then are no longer allowed to communicate.

		\item Alice sends $P$ to Bob and Charlie. Bob answers with a bin index $k\in\Z^n$ and Charlie answers with a bin index $m\in\Z^n$.

		\item Bob and Charlie win if $q_i\notin B_\delta(k_i)$ for at most $\gamma n/2$ values of $i\in\mc{I}^c$, and $p_i\in B_\varepsilon(m_i)$ for all $i\in\mc{I}$.
	\end{enumerate}

	There is a simple transformation that allows this to be reduced to the game $\ttt{G}_{n,\delta,\varepsilon,\gamma,a,b}$.

	\begin{corollary}\label{cor:sq-bin-approx-monogamy}
		The winning probability
		\begin{align}
			\mfk{w}(\ttt{G}^{\text{binned}}_{n,\delta,\varepsilon,\gamma,a,b})\leq\mfk{w}(\ttt{G}_{n,\delta,\varepsilon,\gamma,a,b})\leq2^{\squ*{(1-\gamma)\lg\parens*{\tfrac{1}{2}+\sqrt{\delta\varepsilon}}+h(\gamma)+\tfrac{1}{(\ln 2)n}}\tfrac{n}{2}}.
		\end{align}
	\end{corollary}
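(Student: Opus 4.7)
The plan is to reduce any strategy $\ttt{S}^{\text{binned}}$ for the binned game to a strategy $\ttt{S}$ for the continuous game $\ttt{G}_{n,\delta,\varepsilon,\gamma,a,b}$ whose winning probability is at least as large; the second inequality is then immediate from \cref{cor:sq-approx-monogamy} (since the bound on the state-sending game follows from the entanglement-based bound in \cref{thm:Rn-monogamy} via the general reduction of \cref{thm:state-sending-game}).

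For the reduction, I would have $\ttt{S}$ use the same shared state and the same splitting channel $\Phi$ as $\ttt{S}^{\text{binned}}$, and copy the binned measurements for Bob and Charlie. Whenever the binned strategy produces outputs $k\in\Z^n$ (Bob) and $m\in\Z^n$ (Charlie), the continuous strategy outputs the bin centers: $(q_B)_i=k_i\delta$ and $(p_C)_i=m_i\varepsilon$.

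The essential observation is that bin membership implies small distance to the center. If $q_i\in B_\delta(k_i)=[(k_i-\tfrac{1}{2})\delta,(k_i+\tfrac{1}{2})\delta)$, then
\[
|q_i-(q_B)_i|\,=\,|q_i-k_i\delta|\,\leq\,\tfrac{\delta}{2}\,<\,\delta,
\]
and similarly $p_i\in B_\varepsilon(m_i)$ implies $|p_i-(p_C)_i|<\varepsilon$. Consequently, for every sample $(P,q,p)$ the winning event of the binned game (at most $\gamma n/2$ indices $i\in\mc{I}^c$ with $q_i\notin B_\delta(k_i)$, and $p_i\in B_\varepsilon(m_i)$ for all $i\in\mc{I}$) is contained in the winning event of the continuous game with the same parameters. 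Taking expectations gives $\mfk{w}(\ttt{G}^{\text{binned}}_{n,\delta,\varepsilon,\gamma,a,b})\leq\mfk{w}(\ttt{G}_{n,\delta,\varepsilon,\gamma,a,b})$, and the quantitative bound then follows from \cref{cor:sq-approx-monogamy}. There is no real obstacle: the bin width $\delta$ was chosen precisely so that containment in a bin is a strictly stronger condition than lying within $\delta$ of the bin's center.
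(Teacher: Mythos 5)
Your proposal is correct and follows essentially the same route as the paper: keep the splitting channel, have Bob and Charlie output the bin centers $\delta k$ and $\varepsilon m$, observe that bin membership implies the continuous win condition, and then invoke \cref{cor:sq-approx-monogamy} (with the state-sending bound coming from \cref{thm:state-sending-game}). No gaps.
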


	\begin{proof}
		Fix a strategy $\ttt{S}$ for $\ttt{G}^{\text{binned}}_{n,\delta,\varepsilon,\gamma,a,b}$. We construct a strategy $\ttt{S}'$ for $\ttt{G}_{n,\delta,\varepsilon,\gamma,a,b}$ that wins with at least $\mfk{w}_{\ttt{G}^{\text{binned}}_{n,\delta,\varepsilon,\gamma,a,b}}(\ttt{S})$. To construct $\ttt{S}'$, we keep the channel $\Phi$, but change Bob and Charlie's measurements: Bob measures a bin index $k$ and outputs $q_B=\delta k$, and similarly Charlie measures $m$ and outputs $p_C=\varepsilon m$. If they win at the binned game, then $q_i\in[(k_i-1/2)\delta,(k_i+1/2)\delta)$ for no less than $(1-\gamma)n/2$ values of $i\in\mc{I}^c$ and $p_i\in[(m-1/2)\varepsilon,(m_i+1/2)\varepsilon)$ for all $i\in\mc{I}$. As the bin of width $\delta$ is contained in the ball of radius $\delta$, this implies that $|q_i-(q_B)_i|<\delta$ for no less than $(1-\gamma)n/2$ values of $i\in\mc{I}^c$ and $|p_i-(p_B)_i|<\varepsilon$ for all $i\in\mc{I}$. Thus, the players win at $\ttt{G}_{n,\delta,\varepsilon,\gamma,a,b}$. Using this implication, $\mfk{w}_{\ttt{G}^{\text{binned}}_{n,\delta,\varepsilon,\gamma,a,b}}(\ttt{S})\leq\mfk{w}_{\ttt{G}_{n,\delta,\varepsilon,\gamma,a,b}}(\ttt{S}')\leq\mfk{w}(\ttt{G}_{n,\delta,\varepsilon,\gamma,a,b})$, giving the desired result.
	\end{proof}

	\subsection{Application: squeezed-state one-sided device-independent QKD}\label{subsec:qkd}

	As an application of our monogamy bound for coset states over $G=\R^n$ we give a proof of security for continuous-variable quantum key distribution (CVQKD) with squeezed states in the one-sided device-independent (one-sided DI) model. What this means is that security holds even in the case where the receiver's (Bob) quantum measurement device is untrusted. To the best of our knowledge, this is the first one-sided DI proof of security for CVQKD that is secure against the most general class of attacks, coherent attacks.

The idea of one-sided DI for QKD was first introduced in~\cite{tomamichel2011uncertainty}, where the authors show one-sided DI security of the BB'84 prepare-and-measure protocol for qubits based on the use of an entropic uncertainty relation. As later pointed out in \cite{tomamichel2013monogamy}, the proof of security from~\cite{tomamichel2011uncertainty} only holds under the assumption that Bob's measurement device is memoryless, and indeed removing this assumption is one of the motivations for the monogamy-of-entanglement game studied in \cite{tomamichel2013monogamy}. The approach to one-sided DI security via uncertainty relations was later extended to squeezed-state CV protocols in~\cite{furrer2012continuous}, and has been experimentally demonstrated~\cite{gehring2015implementation}.

Our extension of the monogamy game from~\cite{tomamichel2011uncertainty} to subgroups of $\R^n$ enables us to obtain the first one-sided DI proof of security for CVQKD. While for Gaussian CVQKD protocols there has recently been a full security proof against coherent attacks~\cite{leverrier2017security}, there is currently no one-sided DI security proof, even against collective attacks, for such protocols. As we will see, our analysis leads to an error tolerance which is comparable to the one obtained for DV protocols in~\cite{tomamichel2013monogamy}. While our protocol, employing squeezed states, remains more challenging than coherent-state CV protocols, the important benefits of one-sided device independence may outweigh the experimental challenges.

	\subsubsection{Preliminaries}
	\label{sec:cvqkd-prelim}

	First, we recall, following \cite{Ren05,MR22arxiv}, the security definition of QKD.

	\begin{definition}
	  A \emph{one-sided device-independent QKD protocol} is an interaction between Alice, who is trusted, and Bob, who has an untrusted quantum device, and on which an attacker Eve may eavesdrop. The interaction produces a state $\rho_{FK\hat{K}E}$ where  $F=\Z_2$ holds a flag set to $1$ if the protocol accepts and $0$ otherwise, $K=\Z_2^\ell$ holds Alice's output, $\hat{K}=\Z_2^\ell$ holds Bob's output, and $E$ is Eve's side information. The protocol is
	  \begin{itemize}
	      \item \emph{$\varepsilon_1$-correct} if $\Pr\squ{K\neq\hat{K}\land F=1}\leq\varepsilon_1$.

	      \item \emph{$\varepsilon_2$-secret} if $\norm{\rho_{KE\land(F=1)}-\mu_K\otimes\rho_{E\land(F=1)}}_{\Tr}\leq\varepsilon_2$.

	      \item \emph{$(\Phi,\varepsilon_3)$-complete} if, when Eve acts as the channel $\Phi$ and Bob's device works as intended, then $\Pr\squ*{F=0}\leq\varepsilon_3$.
	  \end{itemize}
	\end{definition}

	Note that we generally use lowercase letters to refer to the variables on the register whose name is the corresponding uppercase letter. To achieve privacy amplification, we make use of hash functions.

	\begin{definition}[Universal\textsubscript{2} hash functions \cite{tomamichel2017largely}]
		Let $\mc{F}$ be a family of functions $X\rightarrow Y$. $\mc{F}$ is \emph{universal\textsubscript{2}} if, for all $x,x'\in X$, $\Pr\squ*{F(x)=F(x')}=\frac{1}{|Y|}$, where $F$ is the uniform random variable on $\mc{F}$.
	\end{definition}

	Such a family of functions always exists if $|X|$ and $|Y|$ are powers of $2$. We fix $\mc{F}$ a universal\textsubscript{2} hash family of functions $\Z_2^{nn_N/2}\rightarrow\Z_2^\ell$. We use the following important property.

	\begin{lemma}[Universal hash lemma \cite{Ren05}]
		Let $\rho_{FXE}=\mu_{F}\otimes\rho_{XE}$ be a (sub-normalized) quantum state, where $X=\Z_2^n$ and $\mc{F}$ is a universal\textsubscript{2} family of functions $\Z_2^n\rightarrow\Z_2^\ell$. Then,
		\begin{align}
			\norm{\rho_{F(X)FE}-\mu_{Z}\otimes\rho_{FE}}_{\Tr}\leq 2^{-\tfrac{1}{2}(H_{\min}(X|E)_\rho-(\ell-2))}.
		\end{align}
	\end{lemma}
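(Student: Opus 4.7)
The plan is to adapt the standard proof of the leftover hashing lemma against quantum side information (as in Renner's thesis): reduce the trace distance to an expected Hilbert--Schmidt distance, evaluate the latter using the universal$_2$ property, and finally translate the collision-type quantity that falls out into a min-entropy statement.

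First, since $F$ is classical, uniform on $\mc{F}$, and independent of $(X,E)$, the operator $\rho_{F(X)FE} - \mu_Z \otimes \rho_{FE}$ is block-diagonal in the register $F$, so
\begin{align*}
\norm{\rho_{F(X)FE} - \mu_Z \otimes \rho_{FE}}_{\Tr} \;=\; \mathbb{E}_f \norm{\rho_{f(X)E} - \mu_Z \otimes \rho_E}_{\Tr},
\end{align*}
and Jensen's inequality (or Cauchy--Schwarz applied to $\mathbb{E}_f$) allows me to work with the square of the right-hand side instead.

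Second, I would invoke the standard operator Cauchy--Schwarz inequality to reduce each fixed-$f$ trace norm to a weighted Hilbert--Schmidt norm. Choose $\sigma_E$ to be a state witnessing $H_{\min}(X|E)_\rho$, so that $\rho_{XE} \leq 2^{-H_{\min}(X|E)_\rho}\,\Id_X \otimes \sigma_E$. Then, for any Hermitian $A_{ZE}$,
\begin{align*}
\norm{A_{ZE}}_{\Tr}^2 \;\leq\; |Z|\cdot \Tr\bigl[(\Id_Z \otimes \sigma_E^{-1/4})\, A_{ZE}^{2}\, (\Id_Z \otimes \sigma_E^{-1/4})\bigr].
\end{align*}
Setting $A_{ZE} = \rho_{f(X)E} - \mu_Z \otimes \rho_E$ and taking $\mathbb{E}_f$ reduces the problem to evaluating $\mathbb{E}_f \Tr[\sigma_E^{-1/2} \rho_{f(X)E}^{2}]$ together with cross-terms involving $\mu_Z \otimes \rho_E$.

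The heart of the proof is then to expand this quantity as a double sum over $(x,x')\in X^2$ and to invoke the universal$_2$ hypothesis, which gives $\Pr_f[f(x) = f(x')] = 1/|Z|$ whenever $x \neq x'$. The off-diagonal contributions then combine with and cancel exactly against the cross-terms coming from $\mu_Z \otimes \rho_E$, leaving only the diagonal term $\sum_x \Tr[\sigma_E^{-1/2}\rho_{xE}\sigma_E^{-1/2}\rho_{xE}]$, which by the min-entropy operator inequality is bounded by $2^{-H_{\min}(X|E)_\rho}$. Assembling everything and taking a square root yields the stated bound. The main obstacle, as usual with this kind of bound, is bookkeeping constants precisely enough to recover the ``$-2$'' in the exponent: it tracks a factor of $1/2$ arising from the precise combination of the universal$_2$ cancellation with the diagonal contribution, and must be threaded carefully through the Cauchy--Schwarz step. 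Apart from this accounting, the argument is a direct application of the standard leftover hashing framework, specialized to bundling the classical seed $F$ with the side-information register.
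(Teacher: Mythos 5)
The paper does not contain a proof of this lemma at all: it is imported as a known result from Renner's thesis \cite{Ren05} and only invoked inside the secrecy proof of the QKD theorem. Your proposal is a reconstruction of exactly the argument in that cited source --- decompose the trace norm over the classical seed $f$ (block-diagonality in $F$), pass via Jensen and a weighted operator Cauchy--Schwarz to a Hilbert--Schmidt-type quantity relative to the state $\sigma_E$ witnessing $H_{\min}(X|E)_\rho$, expand over pairs $(x,x')$, use $\Pr_f[f(x)=f(x')]=1/|Z|$ to cancel everything except the diagonal, and bound the diagonal by $2^{-H_{\min}(X|E)_\rho}$ using $\rho_{XE}\leq 2^{-H_{\min}(X|E)_\rho}\,\Id_X\otimes\sigma_E$. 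So, approach-wise, this is the intended (and only available) proof; there is no alternative route in the paper to compare against.

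Two corrections. First, your displayed Cauchy--Schwarz inequality is mis-stated: the correct weighted bound is
\begin{align}
\norm{A_{ZE}}_{\Tr}^2\;\leq\;\Tr\bigl[\Id_Z\otimes\sigma_E\bigr]\cdot\Tr\Bigl[\bigl((\Id_Z\otimes\sigma_E^{-1/4})\,A_{ZE}\,(\Id_Z\otimes\sigma_E^{-1/4})\bigr)^{2}\Bigr],
\end{align}
i.e.\ the weights must be interleaved between the two factors of $A_{ZE}$, giving $\Tr[A_{ZE}(\Id\otimes\sigma_E^{-1/2})A_{ZE}(\Id\otimes\sigma_E^{-1/2})]$, not $\Tr[A_{ZE}^{2}(\Id\otimes\sigma_E^{-1/2})]$ as you wrote; the two differ when $A_{ZE}$ and $\sigma_E$ do not commute, and only the interleaved form is what the Cauchy--Schwarz step actually yields and what the universal$_2$ expansion needs. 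Your final diagonal quantity $\sum_x\Tr[\sigma_E^{-1/2}\rho_{xE}\sigma_E^{-1/2}\rho_{xE}]$ is already in the correct interleaved form, so this looks like a slip rather than a conceptual error, but as written the chain of inequalities does not connect. Second, the ``$-2$'' is not produced by any subtle factor of $\tfrac12$ inside the hashing cancellation: the expansion gives $\mbb{E}_f\norm{\tilde\rho_{f(X)E}-\mu_Z\otimes\tilde\rho_E}_2^2\leq\sum_x\Tr[\tilde\rho_{xE}^2]$ with no loss, and the raw outcome is $\mbb{E}_f\norm{\rho_{f(X)E}-\mu_Z\otimes\rho_E}_1\leq 2^{-\frac12(H_{\min}(X|E)_\rho-\ell)}$. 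The exponent $\ell-2$ in the statement is exactly a factor $\tfrac12$ coming from the normalization convention for $\norm{\cdot}_{\Tr}$ (together with sub-normalization of $\rho$), i.e.\ it falls out of the definitions, not out of the combinatorics, so there is no delicate constant to thread through the universal$_2$ step.
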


	Now, we describe the parameters of the protocol. Fix $n\in\N$ even, $\delta,\varepsilon>0$, and $M=2^{n_M-1}$ and $N=2^{n_N-1}$ for $n_M,n_N\in\N$.
	Note that $n_M$ bits are needed to represent an integer $-M\leq m<M$ and similarly for $N$. To translate between continuous values and binned ones, for any $x\in\R$, we write $x^\varepsilon$ for the integer representing the index of the bin of width $\varepsilon$, that is $x^{\varepsilon}=\floor*{\tfrac{x}{\varepsilon}+\tfrac{1}{2}}$. And, if $-N\varepsilon\leq x_i<N\varepsilon$ for each $i\in[n]$, we write $x^{\varepsilon,N}$ for the representation as a binary string of length $n_N$. In order to preserve the metric properties of integers, we use Gray codes for the binary representation. This is a binary representation of the integers in an interval such that the Hamming distance $\varDelta((a+1)^N,a^N)=1$, which implies $\varDelta(a^N,b^N)\leq |a-b|$. For vectors, the bin indices are the corresponding integer vectors and their binary representations are the concatenation of the binary representations of the vector components. For the position and momentum measurements, only $n/2$ of the components are important in defining the coset representative, so we see the representations of their bins as bit strings as the corresponding strings of length $nn_N/2$.

	We require a classical error-correcting code family that will be used to correct the measurements of the $n/2$ position or momentum modes with $n_N$ bits per mode. Let $C\subseteq\Z_2^{nn_N/2}$ denote an infinite family of asymptotically good $[nn_N/2,nn_N/2-s,d]$ binary linear error-correcting codes with syndrome function $\mathrm{syn}:\Z_2^{nn_N/2}\rightarrow\Z_2^s$, and with $s$ such that $s=\tfrac{nn_N}{2}h(\gamma)$ asymptotically for error parameter $\gamma$, where $h$ is the binary entropy function. In other words, we need a code family that achieves the Gilbert-Varshamov (GV) bound. Explicit algebraic-geometric code constructions of binary linear codes achieving (or even beating) the GV bound are given in \cite{VNT07} (see also \cite{eczoo_ag}). However, to give an idea what is possible, it suffices for our purposes to use the GV bound.

	\begin{table}
	\begin{tabular}{c p{0.85\textwidth}}
	\toprule
	Parameter & Description\tabularnewline
	\midrule
	$P$ & $n/2$-dimensional subspace of $\R^{n}$ picked randomly by
	Alice\tabularnewline
	${\cal I},{\cal I}^{c}$ & subset of $\R^{n}$-basis vectors used to define momentum
	quadrature register subspace $P$ and its complementary set\tabularnewline
	$q,p$ & $n/2$ position and momentum values parameterizing ideal quadrature
	eigenstates picked by Alice\tabularnewline
	$a,b$ & damping parameters parameterizing normalizable quadrature eigenstates\tabularnewline
	$\delta,\varepsilon$ & Bin widths of the respective discretizations of continuous position
	and momentum quadratures\tabularnewline
	$M,N$ & Number of bins of the respective discretizations\tabularnewline
	\midrule
	$I$ & subset of ${\cal I}^{c}$ position quadratures picked for parameter
	estimation\tabularnewline
	$\theta$ & fraction of quadratures defined by ${\cal I}^{c}$ that are picked
	for parameter estimation; $|I|=\theta n/2$\tabularnewline
	$q_{i}^{\delta},p_{i}^{\varepsilon}$ & Integer bin index of the $i$th position and momentum quadrature with
	$i\in{\cal I}^{c},{\cal I}$, respectively\tabularnewline
	$q_{i}^{\delta,M},p_{i}^{\varepsilon,N}$ & Binary representation of the integer bin index of the $i$th position
	and momentum quadrature\tabularnewline
	$n_{M},n_{N}$ & Length of the binary representation of the integer position and momentum
	bin index\tabularnewline
	$\hat{q},\hat{p}$ & Bob's measured values of position and momentum quadratures\tabularnewline
	$x,y$ & additive Gaussian white-noise channel (\ref{eq:displacement}) parameters\tabularnewline
	\midrule
	$s=n-k$ & Length of the syndrome of a classical binary linear $[n,k,d]$ code
	used for privacy amplification.\tabularnewline
	$\bar{p}$ & Bob's error-corrected values of the momentum quadratures, corresponding to his raw key\tabularnewline
	\midrule
	$\eta$ & Fraction of bits of position-quadrature binary strings used for information
	reconciliation\tabularnewline
	$J$ & Subset of size $\eta n_{N}n/2$ determining $n_{N}n/2$ bit locations
	of position-quadrature binary strings used for information reconciliation\tabularnewline
	\midrule
	$f$ & function sampled from a family ${\cal F}$ of universal$_{2}$
	hash functions\tabularnewline
	$k,\hat{k}$ & Alice's and Bob's final keys\tabularnewline
	$\ell$ & Length of Alice's and Bob's keys\tabularnewline
	\bottomrule
	\end{tabular}

	\caption{Table of parameters used in the CVQKD \cref{prot:sq-cont-qkd}. Horizontal lines divide
	rows associated with the five steps of the protocol.}
	\end{table}

		\subsubsection{Protocol and completeness}
		We use the following protocol, based on standard tools: the preparation of squeezed states by Alice and the use of homodyne detection for Bob.

		\begin{mdframed}
		\begin{protocol}[Squeezed-state device-independent continuous-variable QKD]\label{prot:sq-cont-qkd}\hphantom{}
		\begin{description}
			\item[State preparation] Alice samples a register subspace $P=\spn_\R\set*{e_i}{i\in\mc{I}}\subseteq\R^n$ of dimension $n/2$ uniformly at random. She samples $(q,p)$ according to $\pi^{P}_{a,b}$ (\ref{eq:distribution}) until $-M\delta\leq q_i< M\delta$ and $-N\varepsilon\leq p_i< N\varepsilon$ for all $i$, discarding and resampling as many times as necessary. Then, she prepares $\ket{P_{q,p}|_{a,b}}$ (\ref{eq:squeezed-subspace}) and sends it to Bob.

			\item[Parameter estimation] Bob acknowledges receipt of the state, then Alice sends $P$ and $q^{\delta,M}_I$, where $I\subseteq\mc{I}^c$ is a subset of size $\theta n/2$. Bob measures the modes of his received states with homodyne detection to get guesses $(\hat{q},\hat{p})$, measuring in position for $i\notin\mc{I}$ and momentum for $i\in\mc{I}$.
			If any of the terms are outside the expected ranges, he replaces them with $0$. If $\hat{q}^\delta_i\neq q^\delta_i$ for more than $\gamma\theta n/2$ values of $i\in I$, Bob aborts.

			\item[Error correction] Alice sends $\mathrm{syn}(p^{\varepsilon,N})$. Bob corrects $\parens*{-(1+a/b)\hat{p}}^{\varepsilon,N}$ using this to get $\bar{p}^{\varepsilon,N}$.

			\item[Information reconciliation] Alice chooses a random subset $J\subseteq[n_Nn/2]$ of size $\eta n_Nn/2$ and sends $J,p^{\varepsilon,N}_J$ to Bob; if $\bar{p}^{\varepsilon,N}_J\neq p^{\varepsilon,N}_J$, Bob aborts.

			\item[Privacy amplification] Alice chooses $f\in\mc{F}$ uniformly random and sends it to Bob. Alice computes the key $k=f(p^{\varepsilon,N})$ and Bob computes $\hat{k}=f(\bar{p}^{\varepsilon,N})$.
		\end{description}
		\end{protocol}
		\end{mdframed}

		\begin{proposition}
			\cref{prot:sq-cont-qkd} is $\parens*{1-\frac{2d}{n_Nn}}^{\eta n_Nn/2}$-correct.
		\end{proposition}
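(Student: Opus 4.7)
The plan is to chain together the error-correction guarantee with a combinatorial bound over the random subset $J$ used for information reconciliation. The key observation is that the event $\{K \neq \hat{K} \land F = 1\}$ is contained in the event that $p^{\varepsilon,N}$ and $\bar{p}^{\varepsilon,N}$ differ (since $K$ and $\hat{K}$ are obtained by applying the same hash $f$ to these strings) while simultaneously agreeing on all coordinates in $J$ (since the protocol accepting requires that the information reconciliation check succeeds). Hence
\begin{align*}
\Pr[K \neq \hat{K} \land F = 1] \;\leq\; \Pr\bigl[p^{\varepsilon,N} \neq \bar{p}^{\varepsilon,N} \,\land\, \bar{p}^{\varepsilon,N}_J = p^{\varepsilon,N}_J\bigr]\;.
\end{align*}

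Next I would invoke the minimum-distance property of the code $C$. By construction, Bob produces $\bar{p}^{\varepsilon,N}$ as the decoding of his noisy string using $\mathrm{syn}(p^{\varepsilon,N})$, so $\bar{p}^{\varepsilon,N}$ and $p^{\varepsilon,N}$ lie in the same coset of $C$. Thus their XOR is a codeword of $C$, and whenever $\bar{p}^{\varepsilon,N} \neq p^{\varepsilon,N}$ this codeword is nonzero, giving Hamming distance $\varDelta(\bar{p}^{\varepsilon,N}, p^{\varepsilon,N}) \geq d$.

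The final step is to bound the probability, over the uniform random subset $J \subseteq [n_N n/2]$ of size $\eta n_N n/2$, that $J$ avoids all $\geq d$ differing positions. Conditioning on any fixed set of at least $d$ differing positions (chosen adversarially) this probability is at most
\begin{align*}
\frac{\binom{n_N n/2 - d}{\eta n_N n/2}}{\binom{n_N n/2}{\eta n_N n/2}} \;=\; \prod_{i=0}^{\eta n_N n/2 - 1} \frac{n_N n/2 - d - i}{n_N n/2 - i} \;\leq\; \Bigl(1 - \frac{2d}{n_N n}\Bigr)^{\eta n_N n/2}\;,
\end{align*}
where the inequality uses $\tfrac{N-d-i}{N-i} \leq \tfrac{N-d}{N}$ for $i \geq 0$. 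Combining yields the claimed correctness. There is no real obstacle here: the argument is purely classical and the only nontrivial ingredient is the minimum-distance lower bound, which is where the information reconciliation step earns its keep.
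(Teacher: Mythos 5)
Your proof is correct and follows essentially the same route as the paper: bound the failure event by the joint event that the strings differ yet agree on $J$, use the code's minimum distance (via the coset/syndrome argument) to force at least $d$ differing positions, and then bound the probability that the random subset $J$ misses all of them by the binomial ratio $\binom{n_Nn/2-d}{\eta n_Nn/2}/\binom{n_Nn/2}{\eta n_Nn/2}\leq\parens*{1-\frac{2d}{n_Nn}}^{\eta n_Nn/2}$. Your explicit justification of the distance step through cosets of $C$ just spells out what the paper asserts in one line.
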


		As long as $1/\eta\in o(n)$, this provides a protocol with strong correctness for large enough $n$.

		\begin{proof}
				First, we have that
				\begin{align*}
				\begin{split}
					\Pr\squ{K\neq\hat{K}\land F=1}&=\Pr\squ*{F(P^{\varepsilon,N})\neq F(\bar{P}^{\varepsilon,N})\land\abs{\set{i\in I}{Q_i^\delta\neq\hat{Q}_i^\delta}}\leq\gamma\theta n/2\land\bar{P}^{\varepsilon,N}_J=P^{\varepsilon,N}_J}\\
					&\leq\Pr\squ{P^{\varepsilon,N}\neq\bar{P}^{\varepsilon,N}\land\bar{P}^{\varepsilon,N}_J=P^{\varepsilon,N}_J}
				\end{split}
				\end{align*}
				Note that $P$ here is the register/random variable corresponding to the momentum value, not subspace. By the error-correcting code, if $(-(1+a/b)\hat{p})^{\varepsilon,N}$ is corrected to something different from $p^{\varepsilon,N}$, we must have that the Hamming distance $\varDelta(\bar{p}^{\varepsilon,N},p^{\varepsilon,N}
				)\geq d$. Hence,
				\begin{align}
				\begin{split}
					\Pr\squ{K\neq\hat{K}\land F=1}&\leq\Pr\squ{\varDelta(P^{\varepsilon,N},\bar{P}^{\varepsilon,N})\geq d\land\bar{P}^{\varepsilon,N}_J=P^{\varepsilon,N}_J}\\
					&\leq\frac{\binom{n_Nn/2-d}{\eta n_Nn/2}}{\binom{n_Nn/2}{\eta n_Nn/2}}\leq\parens*{1-\frac{2d}{n_Nn}}^{\eta n_Nn/2}.
				\end{split}
				\end{align}
			\end{proof}

			We first verify completeness in the protocol when there is no noise on the quantum communication channel.

		\begin{lemma}\label{lem:inte-great}
			Let $\alpha>0$ and let $\pi:\R\rightarrow[0,\infty)$ be a probability density function that is decreasing in the sense that if $|x|\geq|y|$, then $\pi(x)\leq\pi(y)$. Then,
			\begin{align}
				\iint\abs*{\floor*{x+\tfrac{1}{2}}-\floor*{y+\tfrac{1}{2}}}e^{-\alpha(x-y)^2}\pi(x)dxdy\leq\frac{6}{\alpha}
			\end{align}
		\end{lemma}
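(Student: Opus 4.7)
The plan is to substitute $u=y-x$, rewriting the integral as
\[
\int \pi(x) \int \abs*{\floor*{x+\tfrac12}-\floor*{x+u+\tfrac12}}\,e^{-\alpha u^2}\,du\,dx.
\]
The absolute difference counts the number of half-integers $k+\tfrac12$ lying between $x$ and $x+u$, so it is at most $|u|+1$ in general and lies in $\{0,1\}$ when $|u|<1$. I would split the $u$-integral at $|u|=1$. The tail $|u|\geq 1$ contributes at most $2\int_{|u|\geq 1}|u|e^{-\alpha u^2}\,du = 2e^{-\alpha}/\alpha$, which is negligible compared to the target $6/\alpha$.

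For the bulk region $|u|<1$ I would swap the order of integration. Since the count equals $1$ precisely when $x$ lies in $S_{|u|}:=\bigsqcup_{k\in\Z}[k+\tfrac12,k+\tfrac12+|u|]$, the inner integral equals $F(|u|):=\int_{S_{|u|}}\pi$, and the bulk contribution becomes $2\int_0^1 e^{-\alpha t^2}F(t)\,dt$. Using the symmetry $\pi(x)=\pi(-x)$ one can fold the negative-$k$ contributions onto the positive side to obtain $F(t)=\sum_{k\geq 0}\int_{k+\frac12-t}^{k+\frac12+t}\pi$. Each $k\geq 1$ term is bounded by $2t\,\pi(k+\tfrac12-t)$ using the monotonicity of $\pi$, and the further estimate $\pi(k+\tfrac12-t)\leq\int_{k-\frac12-t}^{k+\frac12-t}\pi$ (valid on the positive side) telescopes the sum into $\sum_{k\geq 1}T_k\leq 2t$.

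The main obstacle is the $k=0$ term $T_0(t):=\int_{1/2-t}^{1/2+t}\pi$, where the shift/telescope strategy fails because $\pi$ may have an unboundedly tall peak at the origin. The key input is the pointwise inequality $\pi(x)\leq \tfrac{1}{2|x|}$, which follows immediately from $\pi$ being symmetric-decreasing with unit mass: on $[-|x|,|x|]$ one has $\pi\geq\pi(x)$, so $2|x|\pi(x)\leq\int_{|y|\leq|x|}\pi\leq 1$. Applied at $x=\tfrac12-t$ this yields $T_0\leq 2t\,\pi(\tfrac12-t)\leq \tfrac{2t}{1-2t}$ for $t<\tfrac12$, which combined with the trivial $T_0\leq 1$ gives $T_0\leq 4t$ on all of $[0,1]$. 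Putting the pieces together, $F(t)\leq 6t$, so the bulk integral is at most $12\int_0^1 t\,e^{-\alpha t^2}\,dt=6(1-e^{-\alpha})/\alpha$; adding the tail contribution $2e^{-\alpha}/\alpha$ produces the claimed upper bound $6/\alpha$.
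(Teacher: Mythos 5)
Your proof is correct, but it takes a genuinely different route from the paper's. The paper stays with the double integral in $(x,y)$ and argues geometrically: the kernel $|\lfloor x+\tfrac12\rfloor-\lfloor y+\tfrac12\rfloor|$ is bounded by $3|x-y|$ except on small diamond-shaped regions around the diagonal points $(n+\tfrac12,n+\tfrac12)$, and the contribution of those exceptional regions is absorbed by comparing them with copies translated toward the origin, which is legitimate because $\pi$ decreases in $|x|$; this yields the integrated bound $\iint 6|x-y|e^{-\alpha(x-y)^2}\pi(x)\,dx\,dy=6/\alpha$. You instead substitute $u=y-x$, split at $|u|=1$, and reduce the bulk to the one-dimensional quantity $F(t)=\sum_{k\ge0}\int_{k+1/2-t}^{k+1/2+t}\pi$, which you bound by $6t$ via the telescoping estimate $\sum_{k\ge1}T_k\le 2t$ together with the pointwise bound $\pi(x)\le 1/(2|x|)$ for the cell nearest the origin — exactly the place where the paper needs its region-translation trick, since that is where $\pi$ may have a tall peak. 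Both arguments hinge on the same structural fact (the symmetric-decreasing hypothesis limits how much mass $\pi$ can put near any half-integer), but your version makes every step explicit and quantitative where the paper's translation step is only sketched, and your bookkeeping even yields the marginally sharper constant $(6-4e^{-\alpha})/\alpha$. Two minor points: the hypothesis does force $\pi(-x)=\pi(x)$ (apply it with $|x|\ge|y|$ and $|y|\ge|x|$), so your use of symmetry is justified; and the identification of the count-one set with $S_{|u|}$ holds literally only for one sign of $u$ (for the other it is the mirrored set), but the folding you perform via that symmetry makes this immaterial.
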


		\begin{proof}
			Naively, we can try to approximate $\abs*{\floor*{x+\tfrac{1}{2}}-\floor*{y+\tfrac{1}{2}}}$ by $|x-y|$. However, this is not an upper bound. Even if we were to take a scalar multiple of $|x-y|$ as the candidate upper bound, there are values of $x$ and $y$ where the bound does not hold. To remedy that, we use the property that $\pi(x)$ is decreasing. First, note that $\abs*{\floor*{x+\tfrac{1}{2}}-\floor*{y+\tfrac{1}{2}}}\leq 3|x-y|$ except for on an infinite sequence of regions contained in $\abs*{|x-n-1/2|+|y-n-1/2|}\leq 1/3$ for $n\in\Z$. To cover those regions as well, we upper bound them by the integrals on $\abs*{|x-n-1/3|+|y-n-1/3|}\leq 1/3$ and $\abs*{|x-n+1/3|+|y-n+1/3|}\leq 1/3$, which is a bound as $\pi(x)$ decreases with $|x|$. As such,
			\begin{align}
				\iint\abs*{\floor*{x+\tfrac{1}{2}}-\floor*{y+\tfrac{1}{2}}}e^{-\alpha(x-y)^2}\pi(x)dxdy\leq\iint6|x-y|e^{-\alpha(x-y)^2}\pi(x)dxdy.
			\end{align}
			To finish,
			\begin{align}
				\iint6|x-y|e^{-\alpha(x-y)^2}\pi(x)dxdy=12\int_{-\infty}^\infty\int_0^\infty ue^{-\alpha u^2}du\pi(x)dx=\frac{6}{\alpha}.
			\end{align}
		\end{proof}

		\begin{theorem}\label{thm:sq-completeness} Suppose that the error parameter $\gamma>\frac{6}{\sqrt{2\pi b}\delta}$, and the error-correcting code distance is ${d>\frac{n}{2}\frac{3\sqrt{a(1+a/b)}}{\pi^{3/2}\varepsilon}}$. Then, \cref{prot:sq-cont-qkd} is $(\mathrm{id},p)$-complete,
			where
			\begin{align}
				p=\exp\parens*{-\parens*{\gamma-\frac{6}{\sqrt{2\pi b}\delta}}^2}^{\theta n}+\exp\parens*{-\parens*{\tfrac{2d}{nn_N}-\tfrac{3\sqrt{a(1+a/b)}}{n_N\pi^{3/2}\varepsilon}}}^n\;.
			\end{align}
		\end{theorem}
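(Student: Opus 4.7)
The plan is to reduce $\Pr[F=0]$ to the probability of two disjoint abort events — the parameter-estimation (PE) check $|\{i\in I:\hat q_i^\delta\neq q_i^\delta\}|>\gamma\theta n/2$ and the information-reconciliation (IR) check $\bar p^{\varepsilon,N}_J\neq p^{\varepsilon,N}_J$ — and then to control each by a concentration argument. A union bound will yield the two exponential summands of $p$.

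For the PE step, I would first use the squeezed-state form (\ref{eq:squeezed-subspace}) to conclude that under honest execution Bob's position outcome on mode $i\in\mathcal{I}^c$ is distributed as $\hat q_i\mid q_i\sim\mathcal N(q_i,1/(4b))$, independently across modes. Alice's marginal on $q_i$, obtained from (\ref{eq:distribution}) by restriction to $[-M\delta,M\delta)$, is a truncated zero-mean Gaussian — even and decreasing in $|q_i|$ — so \cref{lem:inte-great} applies after the rescaling $x=q_i/\delta$, $y=\hat q_i/\delta$. With $\alpha=2b\delta^2$, including the Gaussian normalisation gives $\mathbb{E}[|\hat q_i^\delta-q_i^\delta|]\leq 6/(\delta\sqrt{2\pi b})$, and hence $\mathbb{E}\bigl[\mathbb{1}[\hat q_i^\delta\neq q_i^\delta]\bigr]\leq 6/(\delta\sqrt{2\pi b})$. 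Hoeffding's inequality for the $\theta n/2$ independent $\{0,1\}$-valued indicators for $i\in I$ then produces the first exponential term, with the hypothesis $\gamma>6/(\sqrt{2\pi b}\delta)$ ensuring a positive gap. A small contribution from Bob's rare out-of-range outcomes (which he replaces by $0$) will be absorbable into the same estimate.

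For the IR step, a failure forces $\bar p^{\varepsilon,N}\neq p^{\varepsilon,N}$; since both words share the syndrome transmitted by Alice, their difference is a nonzero codeword of weight at least $d$, so the nearest-coset-leader decoder only makes such an error once the input Hamming distance $\varDelta_{\mathrm{tot}}:=\sum_{i\in\mathcal I}\varDelta\bigl((-(1+a/b)\hat p_i)^{\varepsilon,N},\,p_i^{\varepsilon,N}\bigr)$ exceeds a threshold of order $d$ (the relative rate $\varDelta_{\mathrm{tot}}/(nn_N/2)$ against the code's relative distance $2d/(nn_N)$ appears directly in the theorem). Computing the momentum wave function of $\ket{ab/(a+b),0,-bp_i/(a+b)}$ yields $-(1+a/b)\hat p_i\mid p_i\sim\mathcal N(p_i,a(1+a/b)/(4\pi^2))$, independently across $i\in\mathcal I$. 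Combining the Gray-code bound $\varDelta_i\leq|(-(1+a/b)\hat p_i)^\varepsilon-p_i^\varepsilon|$ with \cref{lem:inte-great} applied with $\alpha=2\pi^2\varepsilon^2/(a(1+a/b))$ controls $\mathbb{E}[\varDelta_i]$ per mode by a constant multiple of $n_N\mu'$, where $\mu'=3\sqrt{a(1+a/b)}/(n_N\pi^{3/2}\varepsilon)$ is the per-bit expected error rate appearing in $p$. Since $\varDelta_{\mathrm{tot}}$ is a sum of $n/2$ independent $[0,n_N]$-valued variables, an exponential-moment Chernoff-type bound on this sum will deliver the second exponential.

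The main obstacle will be matching the precise functional form of the IR bound. A naive Markov argument on $\mathbb{E}[\varDelta_{\mathrm{tot}}]$ gives only polynomial decay, while the theorem's claimed $\exp(-n\cdot(\cdot))$ form — linear rather than quadratic in the gap $2d/(nn_N)-\mu'$ — points to an exponential-moment Chernoff with a carefully chosen rate parameter applied to the independent $\varDelta_i$'s rather than vanilla Hoeffding. Calibrating the exact constants (in particular the factor $3$ vs.\ $3\sqrt 2$ inside $\mu'$, and the precise correction-radius constant in the threshold) will likely require either a sharpening of the $6|x-y|$ estimate inside the proof of \cref{lem:inte-great} in our parameter regime, or a delicate optimisation of the Chernoff parameter; this calibration is the main technical step of the argument.
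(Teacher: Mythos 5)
Your proposal follows essentially the same route as the paper: a union bound over the parameter-estimation and error-correction abort events, per-mode expectations bounded via \cref{lem:inte-great} after the rescalings you describe (giving $6/(\sqrt{2\pi b}\delta)$ for the position indicators and the $\mu'$-type bound for the momentum bin distances), and then concentration over the independent modes. The one place you diverge---reaching for an exponential-moment Chernoff bound to reproduce the linear-in-gap exponent of the second term---is unnecessary: the paper simply applies Hoeffding's inequality to the independent $[0,n_N]$-valued variables $\Delta_i$, obtaining $e^{-\frac{n}{n_N^2}(2d/n-\mbb{E}\Delta_i)^2}$, i.e.\ a squared gap exactly as in the subsequent noisy-channel corollary; the missing square in the theorem statement is best read as a typo, so no calibration of a Chernoff rate (nor any sharpening of the $6|x-y|$ estimate) is actually needed.
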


		If the conditions in the theorem are satisfied, then we get exponentially good completeness in the case of no errors.

		\begin{proof}
			The probability of aborting is given by
			\begin{align}
			\begin{split}
				\Pr\squ*{F=0}&=\Pr\squ*{\abs{\set{i\in I}{Q_i^\delta\neq\hat{Q}_i^\delta}}>\gamma\theta n/2\lor P_J^{\varepsilon,N}\neq \bar{P}_J^{\varepsilon,N}}\\
				&\leq\Pr\squ*{\abs{\set{i\in I}{Q_i^\delta\neq\hat{Q}_i^\delta}}>\gamma\theta n/2}+\Pr\squ*{P^{\varepsilon,N}\neq \bar{P}^{\varepsilon,N}}
			\end{split}
			\end{align}
			For each $i$, let $\Gamma_i$ be the random variable that is $0$ if $Q_i^\delta=\hat{Q}_i^\delta$ and $1$ otherwise. Writing $\pi(q)=e^{-aq^2}/\int_{-M\delta}^{M\delta}e^{-at^2}dt$ for the distribution of position quadratures, the expectation value
			\begin{align}
			\begin{split}
				\mbb{E}\Gamma_i&\leq\mbb{E}|Q_i^\delta-\hat{Q}_i^\delta|=\mbb{E}\abs*{\floor*{\tfrac{Q_i}{\delta}+\tfrac{1}{2}}-\floor*{\tfrac{\hat{Q}_i}{\delta}+\tfrac{1}{2}}}\\
				&=\iint\abs*{\floor*{\tfrac{q}{\delta}+\tfrac{1}{2}}-\floor*{\tfrac{\hat{q}}{\delta}+\tfrac{1}{2}}}\abs*{\braket{\hat{q}}{b,q,0}}^2\pi(q)dqd\hat{q}\\
				&=\sqrt{\tfrac{2b}{\pi}}\iint\abs*{\floor*{\tfrac{q}{\delta}+\tfrac{1}{2}}-\floor*{\tfrac{\hat{q}}{\delta}+\tfrac{1}{2}}}e^{-2b(q-\hat{q})^2}\pi(q)dqd\hat{q}\\
				&=\sqrt{\tfrac{2b}{\pi}}\delta\iint\abs*{\floor*{x+\tfrac{1}{2}}-\floor*{y+\tfrac{1}{2}}}e^{-2b\delta^2(x-y)^2}\delta\pi(\delta x)dxdy,
			\end{split}
			\end{align}
			using the change of variables $x=q/\delta$, $y=\hat{q}/\delta$. 
			So, we can apply \cref{lem:inte-great} and get $\mbb{E}\Gamma_i\leq\frac{6}{\sqrt{2\pi b}\delta}$. As the random variables are independent, we can invoke Hoeffding's inequality in the case that $\gamma>\frac{6}{\sqrt{2\pi b}\delta}$ and get
			\begin{align}
			\begin{split}
				\Pr\squ*{\abs{\set{i\in I}{Q_i^\delta\neq\hat{Q}_i^\delta}}>\gamma\theta \tfrac{n}{2}}&=\Pr\squ[\Big]{\sum_i\Gamma_i-\theta n/2\mbb{E}\Gamma_i>(\gamma-\mbb{E}\Gamma_i)\theta\tfrac{n}{2}}\leq e^{-\tfrac{2\parens*{(\gamma-\mbb{E}\Gamma_i)\theta n/2}^2}{\theta n/2}}\\
				&=e^{-(\gamma-\mbb{E}\Gamma_i)^2\theta n}\leq\exp\parens*{-\parens*{\gamma-\frac{6}{\sqrt{2\pi b}\delta}}^2}^{\theta n}.
			\end{split}
			\end{align}

			Now, we bound the other term. Due to the error-correcting code, the event $P^{\varepsilon,N}\neq \bar{P}^{\varepsilon,N}$ implies $\varDelta(P^{\varepsilon,N},(-(1+\tfrac{a}{b})\hat{P})^{\varepsilon,N})\geq d$. Let the random variable $\Delta_i=\varDelta(P_i^{\varepsilon,N},(-(1+\tfrac{a}{b})\hat{P}_i)^{\varepsilon,N})$. This is a random variable supported on the interval $[0,n_N]$. Then, if $\tfrac{2d}{n}>\mbb{E}\Delta_i$, Hoeffding's inequality gives that $\Pr\squ*{\sum_i\Delta_i\geq d-\tfrac{n}{2}\mbb{E}\Delta_i}\leq e^{-\tfrac{n}{n_N^2}(2d/n-\mbb{E}\Delta_i)^2}$. It remains to compute $\mbb{E}\Delta_i$. We have, writing $\pi(p)=e^{-\tfrac{2\pi^2}{a+b}p^2}/\int_{-N\varepsilon}^{N\varepsilon}e^{-\tfrac{2\pi^2}{a+b}t^2}dt$ for the distribution of the momentum quadratures,
			\begin{align}
			\begin{split}
				\mbb{E}\Delta_i&=\mbb{E}\varDelta((P_i)^{\varepsilon,N},(-(1+\tfrac{a}{b})\hat{P}_i)^{\varepsilon,N})\\
				&\leq\mbb{E}\abs*{(P_i)^{\varepsilon}-(-(1+\tfrac{a}{b})\hat{P}_i)^{\varepsilon}}=\mbb{E}\abs*{\floor*{\tfrac{P_i}{\varepsilon}+\tfrac{1}{2}}-\floor*{\tfrac{-(1+a/b)\hat{P}_i}{\varepsilon}+\tfrac{1}{2}}}\\
				&=\iint \abs*{\floor*{\tfrac{p}{\varepsilon}+\tfrac{1}{2}}-\floor*{\tfrac{-(1+a/b)\hat{p}}{\varepsilon}+\tfrac{1}{2}}}\abs*{\braket{\hat{p}}{\tfrac{ab}{a+b},0,-\tfrac{bp}{a+b}}}^2\pi(p)dpd\hat{p}\\
				&=\sqrt{2\pi\parens*{\tfrac{1}{a}+\tfrac{1}{b}}}\iint \abs*{\floor*{\tfrac{p}{\varepsilon}+\tfrac{1}{2}}-\floor*{\tfrac{-(1+a/b)\hat{p}}{\varepsilon}+\tfrac{1}{2}}}e^{-2\pi^2\parens*{\tfrac{1}{a}+\tfrac{1}{b}}\parens*{\hat{p}+\tfrac{1}{1+a/b}p}^2}\pi(p)dpd\hat{p}\\
				&=\sqrt{\tfrac{2\pi}{a(1+a/n)}}\varepsilon\iint\abs*{\floor*{x+\tfrac{1}{2}}-\floor*{y+\tfrac{1}{2}}}\varepsilon\pi(\varepsilon x)e^{-\tfrac{2\pi^2\varepsilon^2}{a(1+a/b)}(x-y)^2}dxdy,
			\end{split}
			\end{align}
			using a change of variables $x=p/\varepsilon$, $y=-\parens*{1+a/b}\hat{p}/\varepsilon$. Then, as $x\mapsto\varepsilon\pi(\varepsilon x)$ is a probability density function that satisfies the hypothesis of \cref{lem:inte-great}, we can apply that and get that $\mbb{E}\Delta_i\leq\frac{3\sqrt{a(1+a/b)}}{\pi^{3/2}\varepsilon}$.
		\end{proof}

		Next, we study completeness of the protocol with respect to a simple noise model: iid classical Gaussian noise (\textit{a.k.a.}\ additive Gaussian white noise \cite{Banaszek:20,PhysRevLett.125.080503}) on the modes. On one mode, this channel acts as
		\begin{align}\label{eq:displacement}
			\Phi_{x,y}(\rho)=\frac{1}{\pi xy}\iint D(\xi,\phi)\rho D(\xi,\phi)^\dag e^{-(\xi/x)^2-(\phi/y)^2}d\xi d\phi,
		\end{align}
		where the displacement operator $D(\xi,\phi)\ket{q}=e^{\pi i\phi\xi}e^{2\pi i\phi q}\ket{q+\xi}$. On squeezed states (\ref{eq:squeezed}), the displacement operator acts simply as $D(\xi,\phi)\ket{a,q_0,p_0}=e^{-\pi i(\phi+2p_0)}\ket{a,q_0+\xi,p_0+\phi}$ on the position modes. We consider $n$-mode channels of the form $\Phi_{x,y}^{\otimes n}$.

		\begin{corollary}
			Suppose that $\gamma>\frac{6\sqrt{1+2bx^2}}{\sqrt{2\pi b}\delta}$ and $d>\frac{n}{2}\frac{3\sqrt{a(1+a/b)}}{\pi^{3/2}\varepsilon}\sqrt{1+2\pi^2\parens*{1/a+1/b}y^2}$. Then, \cref{prot:sq-cont-qkd} is $(\Phi_{x,y}^{\otimes n},p')$-complete, where $p'$ is
			\begin{align}
				\exp\parens*{-\parens*{\gamma-\tfrac{6\sqrt{1+2bx^2}}{\sqrt{2\pi b}\delta}}^2}^{\theta n}\hspace{-0.3cm}+\exp\parens*{-\parens*{\tfrac{2d}{nn_N}-\tfrac{3\sqrt{a(1+a/b)}}{n_N\pi^{3/2}\varepsilon}\sqrt{1+2\pi^2\parens*{1/a+1/b}y^2}}^2}^n
			\end{align}
		\end{corollary}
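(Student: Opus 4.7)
The plan is to repeat the proof of \cref{thm:sq-completeness}, incorporating the way in which the additive Gaussian noise channel $\Phi_{x,y}^{\otimes n}$ modifies the measurement distributions. First, since $D(\xi,\phi)\ket{a,q_0,p_0}$ is again a squeezed state centered at $(q_0+\xi,p_0+\phi)$ up to global phase, averaging a squeezed-state measurement distribution over $\Phi_{x,y}$ amounts to convolving the outcome density with classical Gaussian noise of variance $x^2/2$ in position and $y^2/2$ in momentum. Hence the position-measurement density $\sqrt{2b/\pi}\,e^{-2b(q_i-\hat{q})^2}$ becomes a Gaussian in $\hat{q}-q_i$ of variance $1/(4b)+x^2/2=(1+2bx^2)/(4b)$, and the momentum-measurement density $\sqrt{2\pi(1/a+1/b)}\,e^{-2\pi^2(1/a+1/b)(\hat{p}+bp/(a+b))^2}$ becomes a Gaussian whose variance is multiplied by the factor $f:=1+2\pi^2(1/a+1/b)y^2$. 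The Alice-side marginal $\pi(\cdot)$ is unchanged by the channel.

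Next, I would rerun the two expectation computations of \cref{thm:sq-completeness} with these noisy densities in place. The integrals take exactly the same form after the same changes of variables, and \cref{lem:inte-great} still applies because $\pi(\cdot)$ is still the same decreasing truncated Gaussian and the noisy conditional density remains a translation-invariant function of the relevant difference variable. The upshot is that $\mbb{E}\Gamma_i$ picks up an overall multiplicative factor $\sqrt{1+2bx^2}$ to give $\frac{6\sqrt{1+2bx^2}}{\sqrt{2\pi b}\delta}$, and $\mbb{E}\Delta_i$ picks up $\sqrt{f}$ to give $\frac{3\sqrt{a(1+a/b)}}{\pi^{3/2}\varepsilon}\sqrt{1+2\pi^2(1/a+1/b)y^2}$. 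These are precisely the quantities subtracted from $\gamma$ and $2d/(nn_N)$ in the hypotheses of the corollary.

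Applying Hoeffding's inequality to the independent sums $\sum_{i\in I}\Gamma_i$ and $\sum_i\Delta_i$ exactly as in the no-noise proof, under the hypotheses $\gamma>\mbb{E}\Gamma_i$ and $2d/(nn_N)>\mbb{E}\Delta_i$, then yields the two exponentially small bounds whose sum is the stated $p'$. The only place where care is needed is tracking how the factor $f$ enters multiplicatively: after the change of variables, the exponent coefficient in the integrand scales as $1/f$, so the bound $6/\alpha$ from \cref{lem:inte-great} is enlarged by $f$, while the Gaussian normalization prefactor scales as $1/\sqrt{f}$; the two combine to a net $\sqrt{f}$, and analogously $\sqrt{1+2bx^2}$ for the position bound. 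Once this bookkeeping is verified, no additional obstacle arises beyond what was already handled in the noise-free case.
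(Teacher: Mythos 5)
Your proposal is correct and follows essentially the same route as the paper: the paper's proof likewise reruns the argument of \cref{thm:sq-completeness} with the random variables $\Gamma_i,\Delta_i$, inlines the displacement-operator integrals for $\abs{\braket{\hat{q}}{D(\xi,\phi)}{b,q,0}}^2$ and its momentum analogue, applies the $6\abs{w-z}$ bound of \cref{lem:inte-great} (valid since the shifted kernel still depends only on the difference variable), integrates out the Gaussian noise to obtain exactly the factors $\sqrt{1+2bx^2}$ and $\sqrt{1+2\pi^2(1/a+1/b)y^2}$ on $\mbb{E}\Gamma_i$ and $\mbb{E}\Delta_i$, and concludes with Hoeffding as before. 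Your variance-addition bookkeeping reproduces the paper's explicit computation, so no gap remains.
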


		\begin{proof}
			Following the same method as \cref{thm:sq-completeness}, we can bound the probability of aborting as $\Pr\squ*{F=1}\leq e^{-(\gamma-\mbb{E}\Gamma_i)^2\theta n}+e^{-(2d/(n_Nn)-\mbb{E}\Delta_i/n_N)^2n}$, where the random variables $\Gamma_i$ and $\Delta_i$ are defined as above. It remains to bound those. First,
			\begin{align}
			\begin{split}
				\mbb{E}\Gamma_i&\leq\frac{1}{\pi xy}\iint\iint\abs*{\floor*{\tfrac{q}{\delta}+\tfrac{1}{2}}-\floor*{\tfrac{\hat{q}}{\delta}+\tfrac{1}{2}}}\abs*{\braket{\hat{q}}{D(\xi,\phi)}{b,q,0}}^2\pi(q)dqd\hat{q}e^{-(\xi/x)^2-(\phi/y)^2}d\xi d\phi\\
				&=\frac{1}{\pi xy}\iint\iint\abs*{\floor*{\tfrac{q}{\delta}+\tfrac{1}{2}}-\floor*{\tfrac{\hat{q}}{\delta}+\tfrac{1}{2}}}\abs*{\braket{\hat{q}}{b,q+\xi,\phi}}^2\pi(q)dqd\hat{q}e^{-(\xi/x)^2-(\phi/y)^2}d\xi d\phi\\
				&=\frac{1}{\pi xy}\iint\iint\abs*{\floor*{\tfrac{q}{\delta}+\tfrac{1}{2}}-\floor*{\tfrac{\hat{q}}{\delta}+\tfrac{1}{2}}}\sqrt{\frac{2b}{\pi}}e^{-2b(\hat{q}-q-\xi)^2}\pi(q)dqd\hat{q}e^{-(\xi/x)^2-(\phi/y)^2}d\xi d\phi\\
				&=\frac{1}{\pi xy}\sqrt{\frac{2b}{\pi}}\delta\iint\iint\abs*{\floor*{w+\tfrac{1}{2}}-\floor*{z+\tfrac{1}{2}}}e^{-2b\delta^2(z-w-\xi/\delta)^2}\delta\pi(\delta w)dwdze^{-(\xi/x)^2-(\phi/y)^2}d\xi d\phi\\
				&\leq\frac{6}{\pi xy}\sqrt{\frac{2b}{\pi}}\delta\iint\iint|w-z|e^{-2b\delta^2(z-w-\xi/\delta)^2}\delta\pi(\delta w)dwdze^{-(\xi/x)^2-(\phi/y)^2}d\xi d\phi\\
				&=\frac{6\sqrt{2b}}{\pi x}\delta\int\int|u|e^{-2b\delta^2(u-\xi/\delta)^2}due^{-(\xi/x)^2}d\xi=\frac{6\sqrt{2b}}{\pi x}\delta\sqrt{\frac{\pi}{2b+1/x^2}}\int|u|e^{-2b\delta^2\squ*{1-\tfrac{2b}{2b+1/x^2}}u^2}du\\
				&=\frac{6\sqrt{1+2bx^2}}{\sqrt{2\pi b}\delta}
			\end{split}
			\end{align}
			Proceeding similarly for $\Delta_i$, we get
			\begin{align}
			\begin{split}
				\mbb{E}\Delta_i&\leq\iint\iint \abs*{\floor*{\tfrac{p}{\varepsilon}+\tfrac{1}{2}}-\floor*{\tfrac{-(1+a/b)\hat{p}}{\varepsilon}+\tfrac{1}{2}}}\abs*{\braket{\hat{p}}{D(\xi,\phi)}{\tfrac{ab}{a+b},0,-\tfrac{bp}{a+b}}}^2\pi(p)dpd\hat{p}\tfrac{1}{\pi xy}e^{-(\xi/x)^2-(\phi/y)^2}d\xi d\phi\\
				&=\tfrac{1}{ xy}\sqrt{\tfrac{a+b}{\pi ab}}\iint\iint \abs*{\floor*{\tfrac{p}{\varepsilon}+\tfrac{1}{2}}-\floor*{\tfrac{-(1+a/b)\hat{p}}{\varepsilon}+\tfrac{1}{2}}}e^{-\tfrac{2\pi^2(a+b)}{ab}\parens*{\hat{p}+\tfrac{1}{1+a/b}p-\phi}^2-(\xi/x)^2-(\phi/y)^2}\pi(p)dpd\hat{p}d\xi d\phi\\
				&=\tfrac{1}{y}\sqrt{\tfrac{1}{a(1+a/b)}}\varepsilon\int\iint \abs*{\floor*{w+\tfrac{1}{2}}-\floor*{z+\tfrac{1}{2}}}e^{-\tfrac{2\pi^2(a+b)}{ab}\parens*{-\frac{\varepsilon }{1+a/b}z+\tfrac{\varepsilon}{1+a/b}w-\phi}^2-(\phi/y)^2}\varepsilon\pi(\varepsilon w)dwdzd\phi\\
				&\leq\tfrac{6}{y}\sqrt{\tfrac{1}{a(1+a/b)}}\varepsilon\int\iint \abs*{w-z}e^{-\tfrac{2\pi^2}{a(1+a/b)}\varepsilon^2\parens*{w-z-\frac{1+a/b}{\varepsilon}\phi}^2-(\phi/y)^2}\varepsilon\pi(\varepsilon w)dwdzd\phi\\
				&=\tfrac{6}{y}\sqrt{\tfrac{1}{a(1+a/b)}}\varepsilon\int\int \abs*{u}e^{-\tfrac{2\pi^2}{a(1+a/b)}\varepsilon^2\parens*{u-\frac{1+a/b}{\varepsilon}\phi}^2-(\phi/y)^2}dud\phi\\
				&=\frac{3\sqrt{a(1+a/b)}}{\pi^{3/2}\varepsilon}\sqrt{1+2\pi^2\parens*{1/a+1/b}y^2}
			\end{split}
			\end{align}
		\end{proof}

		\subsubsection{Secrecy}

		We now show secrecy of the protocol.

		\begin{theorem}\label{thm:qkd-security}
			Let $\tau>0$ and let $\delta,\varepsilon,M,N>0$ be defined as in Section~\ref{sec:cvqkd-prelim}. Then, \cref{prot:sq-cont-qkd} is $\varepsilon'$-secret, where
			\begin{align}
			\begin{split}
				\varepsilon'&=2^{\tfrac{n}{4}\squ*{(1-\gamma-\tau)\lg\parens*{\tfrac{1}{2}+\sqrt{\delta\varepsilon}}+h(\gamma+\tau)+\theta n_M+\tfrac{2 s}{n}+\eta n_N-\lg\parens*{1-\tfrac{e^{-2aM^2\delta^2}}{\sqrt{2\pi aM^2\delta^2}}}-\lg\parens*{1-\sqrt{\tfrac{a+b}{\pi^3N^2\varepsilon^2}}e^{-\frac{2\pi^2}{a+b}N^2\varepsilon^2}}+\tfrac{2(\ell-2)}{n}+\tfrac{1}{(\ln 2)n}}}\\
				&\quad+4e^{-\tau^2\theta n}
			\end{split}
			\end{align}
		\end{theorem}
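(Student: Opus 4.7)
The plan is to follow the template of QKD security proofs based on a monogamy-of-entanglement game, pioneered in \cite{tomamichel2013monogamy} and adapted here to the continuous-variable setting. The argument proceeds in four stages: (i) a reduction from the prepare-and-measure protocol to its entanglement-based version; (ii) a Serfling-type sampling argument that transfers the parameter-estimation verdict to the unmeasured modes; (iii) an application of the binned monogamy bound of \cref{cor:sq-bin-approx-monogamy} to obtain a guessing-probability bound for Eve; and (iv) the universal hash lemma to convert the resulting min-entropy estimate into a bound on trace distance from uniform.

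For stage (i), the state-sending/entanglement-based equivalence developed in \cref{sec:damping} replaces Alice's preparation by her holding a purification of the squeezed coset state and measuring her half \emph{after} Eve's splitting channel has acted, without affecting Eve's marginal. For stage (ii), I use a Serfling-type sampling inequality: conditional on acceptance (at most $\gamma\theta n/2$ position-bin mismatches in the random subset $I\subseteq\mc{I}^c$ of size $\theta n/2$), with probability at least $1-2e^{-\tau^2\theta n}$ the total number of position mismatches on all of $\mc{I}^c$ is at most $(\gamma+\tau)n/2$. Accounting for two such sampling events (one for Bob, one for Eve's virtual measurement) produces the additive term $4e^{-\tau^2\theta n}$ in $\varepsilon'$.

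For stage (iii), I construct a virtual strategy for $\ttt{G}^{\text{binned}}_{n,\delta,\varepsilon,\gamma+\tau,a,b}$ in which Bob performs his protocol measurement and a hypothetical adversary, using $E$, guesses $p^{\varepsilon,N}$. Conditional on acceptance and the sampling inequality, this is a valid strategy, so \cref{cor:sq-bin-approx-monogamy} bounds the joint probability of acceptance and a correct guess by $\mfk{w}(\ttt{G}^{\text{binned}}_{n,\delta,\varepsilon,\gamma+\tau,a,b})\leq 2^{\tfrac{n}{2}\bigl[(1-\gamma-\tau)\lg(\tfrac{1}{2}+\sqrt{\delta\varepsilon})+h(\gamma+\tau)+\tfrac{1}{(\ln 2)n}\bigr]}$. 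Converting this into a min-entropy bound $H_{\min}(P^{\varepsilon,N}|SE)_{\rho_{|F=1}}$ incurs two corrections. First, the Gaussian rejection-sampling truncation to $[-M\delta,M\delta)^{n/2}\times[-N\varepsilon,N\varepsilon)^{n/2}$ conditions on a high-probability event and therefore reduces the min-entropy, on a per-mode basis, by the Gaussian tail factors $-\lg\bigl(1-\tfrac{e^{-2aM^2\delta^2}}{\sqrt{2\pi aM^2\delta^2}}\bigr)$ and $-\lg\bigl(1-\sqrt{\tfrac{a+b}{\pi^3N^2\varepsilon^2}}e^{-\tfrac{2\pi^2}{a+b}N^2\varepsilon^2}\bigr)$; summing over $n/2$ modes each and dividing by $2$ (from the hash lemma below) yields precisely those terms in the exponent. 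Second, the classical side-information $q^{\delta,M}_I$, $\mathrm{syn}(p^{\varepsilon,N})$, and $(J,p^{\varepsilon,N}_J)$ must be subtracted via the chain rule $H_{\min}(X|YE)\geq H_{\min}(X|E)-\lg|Y|$, contributing $\theta n_Mn/2$, $s$, and $\eta n_Nn/2$ bits of leakage respectively (the random subset $J$ is independent of $P^{\varepsilon,N}$ and contributes nothing).

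Finally, for stage (iv), the universal hash lemma gives $\norm{\rho_{KE''|F=1}-\mu_K\otimes\rho_{E''|F=1}}_{\Tr}\leq 2^{-\tfrac{1}{2}(H_{\min}-(\ell-2))}$; inserting the min-entropy bound from stage (iii) reproduces exactly the exponent $\tfrac{n}{4}[\cdots]$ in the theorem statement, and adding the sampling-failure term $4e^{-\tau^2\theta n}$ yields $\varepsilon'$. The main obstacle will be making stage (ii) rigorous in the quantum setting: Bob's homodyne outcomes on $\mc{I}^c$ and Alice's position outcomes are outputs of commuting POVMs on the joint post-channel state, so once the measurement strategy is fixed the induced classical joint distribution on bin indices is well-defined and the classical Serfling inequality applies. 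A secondary subtlety is the careful handling of sub-normalised states under the conditioning on $F=1$, which requires the sub-normalised variant of the universal hash lemma as stated in \cref{sec:cvqkd-prelim}, avoiding any additional $1/\Pr[F=1]$ factor and allowing the bound to be stated directly in terms of the un-normalised trace distance.
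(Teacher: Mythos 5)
Your proposal follows essentially the same route as the paper's proof: the binned state-sending monogamy bound applied with slackened error parameter $\gamma+\tau$, a Hoeffding/Serfling-type sampling step linking the tested subset $I$ to all position modes, the truncation normalization handled via the per-mode Gaussian tail factors, the chain rule subtracting the $\theta n_M n/2+s+\eta n_N n/2$ leaked bits, and the universal hash lemma (in its sub-normalized form) for privacy amplification. The only cosmetic difference is the bookkeeping of the $4e^{-\tau^2\theta n}$ term: in the paper it comes from a single sampling event $\Omega$ whose removal costs $2e^{-\tau^2\theta n}$ in trace distance for each of the two states being compared, rather than from two separate sampling events as you suggest, but the resulting bound is identical.
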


		The key rate is the number of secure key bits returned by the protocol per quantum signal transmitted, \emph{i.e.}\ it is the ratio $r=\frac{\ell}{n}$.
		To get the asymptotic performance of the protocol, we seek to express the key rate as a function of the error tolerance for reasonable choices of parameters, in the limit $n\rightarrow\infty$. We need that the secrecy parameter $\varepsilon'\rightarrow 0$.
		To do so, we can choose $\tau$, $\theta$, and $\eta$ so that $\tau$, $e^{-\tau^2\theta n}$, $\theta n_M$, $\eta n_N$, $\tfrac{-4}{n}$, and $\tfrac{1}{(\ln 2)n}$ arbitrarily small, \emph{e.g.} by taking and $\theta=\tau=\eta=n^{-1/4}$.
		The error-correcting code asymptotically achieves the Gilbert-Varshamov bound, so $s=\tfrac{nn_N}{2}h(\gamma)$ \cite{tomamichel2013monogamy}.
		Thus, the condition for key generation reduces to
		\begin{align}
			-(1-\gamma)\lg\parens*{\tfrac{1}{2}+\sqrt{\delta\varepsilon}}-h(\gamma)-\tfrac{2s}{n}+\lg\parens*{1-\tfrac{e^{-2aM^2\delta^2}}{\sqrt{2\pi aM^2\delta^2}}}+\lg\parens*{1-\sqrt{\tfrac{a+b}{\pi^3N^2\varepsilon^2}}e^{-\frac{2\pi^2}{a+b}N^2\varepsilon^2}}>2r.
		\end{align}
		Now, we can introduce values of the constants. Generally, the damping coefficients $a,b$ are functions of a squeezing parameter $\Delta$: $a=\tfrac{\Delta^2}{2}$, $b=\tfrac{1}{2\Delta^2}$. We choose $\Delta=0.001$. Note that only squeezing with parameter $\Delta=\sqrt{0.1}$ has been achieved \cite{MWV22,FRMDH22} experimentally, so this is choice is for the moment unrealistic. We choose the remaining parameters so that completeness is also achieved: $n_N=n_M=16$, $\delta=4$, and $\varepsilon=1/64$. Under these conditions, we may plot the asymptotic error tolerance as a function of the rate based on the relation
		\begin{align}
			(0.4150)-(0.4150)\gamma-17h(\gamma)\,>\,2r\;.
		\end{align}
		This gives an asymptotic error tolerance $\gamma\approx0.24\%$ and an optimal rate $r\approx0.21$, or if completeness is satisfied with respect to the identity channel then $r\approx0.07$. Note that by increasing the Gaussian error, completeness remains satisfied up to error parameters $x\approx0.0027$ and $y\approx0.0002$. The plot of this relation is given in \cref{fig:qkds-favourite}.

		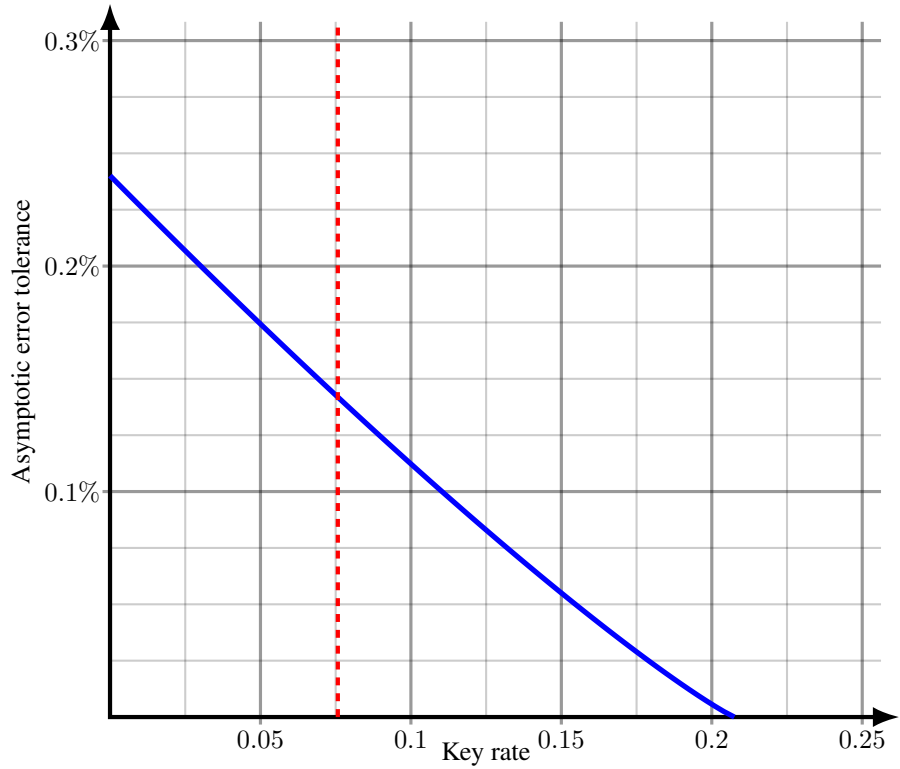
\begin{figure}
			\centering
			\begin{tikzpicture}[scale=0.5]


				\draw[very thick, opacity=0.4] (-0.2,6) -- (20.5,6);
				\node at (-1,6){\small$0.1\%$};
				\draw[very thick, opacity=0.4] (-0.2,12) -- (20.5,12);
				\node at (-1,12){\small$0.2\%$};
				\draw[very thick, opacity=0.4] (-0.2,18) -- (20.5,18);
				\node at (-1,18){\small$0.3\%$};

				\draw[very thick, opacity=0.4] (4,-0.2) -- (4,18.5);
				\node at (4,-0.6){\small$0.05$};
				\draw[very thick, opacity=0.4] (8,-0.2) -- (8,18.5);
				\node at (8,-0.6){\small$0.1$};
				\draw[very thick, opacity=0.4] (12,-0.2) -- (12,18.5);
				\node at (12,-0.6){\small$0.15$};
				\draw[very thick, opacity=0.4] (16,-0.2) -- (16,18.5);
				\node at (16,-0.6){\small$0.2$};
				\draw[very thick, opacity=0.4] (20,-0.2) -- (20,18.5);
				\node at (20,-0.6){\small$0.25$};

				\draw[thick, opacity=0.2] (0,1.5) -- (20.5,1.5);
				\draw[thick, opacity=0.2] (0,3) -- (20.5,3);
				\draw[thick, opacity=0.2] (0,4.5) -- (20.5,4.5);

				\draw[thick, opacity=0.2] (0,7.5) -- (20.5,7.5);
				\draw[thick, opacity=0.2] (0,9) -- (20.5,9);
				\draw[thick, opacity=0.2] (0,10.5) -- (20.5,10.5);

				\draw[thick, opacity=0.2] (0,13.5) -- (20.5,13.5);
				\draw[thick, opacity=0.2] (0,15) -- (20.5,15);
				\draw[thick, opacity=0.2] (0,16.5) -- (20.5,16.5);

				\draw[thick, opacity=0.2] (2,0) -- (2,18.5);
				\draw[thick, opacity=0.2] (6,0) -- (6,18.5);
				\draw[thick, opacity=0.2] (10,0) -- (10,18.5);
				\draw[thick, opacity=0.2] (14,0) -- (14,18.5);
				\draw[thick, opacity=0.2] (18,0) -- (18,18.5);

				\draw[ultra thick, Latex-Latex] (21,0) -- (0,0) -- (0,19);

				\node at (10,-1) {{\small Key rate}};
				\node[rotate=90] at (-2.3,10) {{\small Asymptotic error tolerance}};

				\draw[line width=0.07cm,blue] (-0.0003471976782644295,14.406600000000001) --(0.14363747812098498,14.261078787878791) --(0.28786552749876454,14.115557575757578) --(0.4323394535718924,13.970036363636364) --(0.577061811608448,13.824515151515154) --(0.7220352106748631,13.678993939393942) --(0.8672623153531439,13.533472727272729) --(1.012745847531383,13.387951515151519) --(1.1584885882732578,13.242430303030305) --(1.304493379768569,13.096909090909092) --(1.450763127371057,12.951387878787878) --(1.5973008017278145,12.805866666666669) --(1.7441094410049442,12.660345454545455) --(1.8911921532166724,12.514824242424243) --(2.038552118661655,12.369303030303033) --(2.1861925924745726,12.22378181818182) --(2.3341169072991788,12.078260606060606) --(2.482328476089661,11.932739393939396) --(2.630830795049801,11.787218181818183) --(2.7796274467161086,11.641696969696971) --(2.9287221031958004,11.49617575757576) --(3.0781185295686324,11.350654545454548) --(3.227820587462743,11.205133333333336) --(3.377832238817354,11.059612121212123) --(3.528157549842403,10.91409090909091) --(3.678800695189994,10.7685696969697) --(3.829765962351319,10.623048484848486) --(3.981057756294009,10.477527272727274) --(4.1326806043582796,10.332006060606062) --(4.284639161427818,10.18648484848485) --(4.436938215396769,10.040963636363639) --(4.589582692953717,9.895442424242425) --(4.74257766570536,9.749921212121214) --(4.895928356667039,9.604400000000002) --(5.049640147145489,9.458878787878788) --(5.203718584045851,9.313357575757577) --(5.358169387635453,9.167836363636365) --(5.512998459800041,9.022315151515153) --(5.668211892833944,8.876793939393941) --(5.823815978805622,8.73127272727273) --(5.979817219548269,8.585751515151516) --(6.136222337327939,8.440230303030305) --(6.293038286247008,8.294709090909091) --(6.450272264449181,8.14918787878788) --(6.607931727195404,8.003666666666668) --(6.766024400891571,7.858145454545455) --(6.924558298155573,7.712624242424243) --(7.083541734021304,7.5671030303030316) --(7.242983343390571,7.421581818181819) --(7.402892099853102,7.276060606060607) --(7.563277336013202,7.1305393939393955) --(7.724148765476154,6.985018181818182) --(7.885516506667585,6.839496969696971) --(8.047391108680184,6.6939757575757595) --(8.209783579369937,6.548454545454546) --(8.372705415949637,6.402933333333334) --(8.536168638365368,6.257412121212122) --(8.700185825779101,6.11189090909091) --(8.864770156526824,5.966369696969698) --(9.029935451978025,5.820848484848486) --(9.195696224782136,5.675327272727274) --(9.362067732066508,5.529806060606061) --(9.529066034237221,5.38428484848485) --(9.696708060139903,5.238763636363637) --(9.86501167946614,5.093242424242425) --(10.033995783439133,4.947721212121213) --(10.203680374998807,4.802200000000001) --(10.3740866699264,4.656678787878788) --(10.545237210618726,4.511157575757577) --(10.717155994556148,4.365636363636365) --(10.889868619913363,4.220115151515152) --(11.063402451273703,4.07459393939394) --(11.23778680904046,3.9290727272727275) --(11.413053186938576,3.7835515151515158) --(11.58923550301775,3.6380303030303036) --(11.766370390864624,3.492509090909091) --(11.944497539416115,3.3469878787878797) --(12.123660091956763,3.201466666666667) --(12.303905117775631,3.055945454545455) --(12.485284173819753,2.910424242424243) --(12.667853978891209,2.7649030303030306) --(12.851677230079703,2.6193818181818185) --(13.036823601047992,2.4738606060606063) --(13.223370975818918,2.328339393939394) --(13.411406991911988,2.1828181818181824) --(13.601030996341878,2.03729696969697) --(13.792356562580343,1.8917757575757579) --(13.98551478532658,1.7462545454545455) --(14.180658679074774,1.6007333333333336) --(14.377969185669564,1.4552121212121214) --(14.577663601965982,1.3096909090909092) --(14.780007785547166,1.164169696969697) --(14.985334529107604,1.018648484848485) --(15.194072579527987,0.8731272727272728) --(15.406795358298822,0.7276060606060607) --(15.624309659550244,0.5820848484848485) --(15.847836480608503,0.4365636363636364) --(16.079448867484935,0.29104242424242427) --(16.3235117411567,0.14552121212121213) --(16.60056,0.0);

				\draw[red, dashed, ultra thick] (6.057,0) -- (6.057,18.5);

			\end{tikzpicture}
			\caption{Asymptotic error tolerance as a function of key rate for our choice of parameters. The blue curve is the error tolerance and red line is the optimal rate where the completeness is preserved. This curve does not correspond to an optimal outcome, and is rather intended to illustrate the fact that there is a non-trivial choice of parameters.}
			\label{fig:qkds-favourite}
		\end{figure}

		\begin{proof}[Proof of \cref{thm:qkd-security}]
			During the protocol, Alice, Bob and Eve construct the following state. First, Alice prepares
			\begin{align}
			\begin{split}
				&\sigma_{SQ^{\delta,M}P^{\varepsilon,N}\R^n}=\frac{1}{\binom{n}{n/2}}\sum_{P}\int\ketbra{P}\otimes\ketbra{q^{\delta,M}}\otimes\ketbra{p^{\varepsilon,N}}\otimes\ketbra{P_{q,p}|_{a,b}}\pi^{P}_{a,b}(q,p)d(q,p).
			\end{split}
			\end{align}
			Restricted to the case $\norm{q}_\infty\leq M\delta$ and $\norm{p}_\infty\leq N\varepsilon$ (which is what Alice's choice of truncation does up to a measure $0$ set), the state becomes
			\begin{align}
			\begin{split}
				&\rho_{SQ^{\delta,M}P^{\varepsilon,N}\R^n}=\sigma_{SQ^{\delta,M}P^{\varepsilon,N}\R^n|(\norm{Q}_\infty\leq M\delta\land \norm{P}_\infty\leq N\varepsilon)}\\
				&\quad=\frac{C}{\binom{n}{n/2}}\sum_{P}\int_{\norm{q}_\infty\leq M\delta,\norm{p}_\infty\leq N\varepsilon}\ketbra{P}\otimes\ketbra{q^{\delta,M}}\otimes\ketbra{p^{\varepsilon,N}}\otimes\ketbra{P_{q,p}|_{a,b}}\pi^{P}_{a,b}(q,p)d(q,p),
			\end{split}
			\end{align}
			where $C=\parens*{\frac{1}{\binom{n}{n/2}}\sum_P\int_{\norm{q}_\infty\leq M\delta,\norm{p}_\infty\leq N\varepsilon}\pi^P_{a,b}(q,p)d(q,p)}^{-1}$ is the normalization coefficient. Then, sending the coset state to Bob, Eve acts on it by some splitting operation $\Phi$ to get
			\begin{align}
			\begin{split}
				&\rho_{SQ^{\delta,M}P^{\varepsilon,N}BE}\\
				&\quad=\frac{C}{\binom{n}{n/2}}\sum_{P}\int_{\norm{q}_\infty\leq M\delta,\norm{p}_\infty\leq N\varepsilon}\ketbra{P}\otimes\ketbra{q^{\delta,M}}\otimes\ketbra{p^{\varepsilon,N}}\otimes\Phi(\ketbra{P_{q,p}|_{a,b}})\pi^{P}_{a,b}(q,p)d(q,p),
			\end{split}
			\end{align}
			Alice shares $P$ publicly. Bob then makes guesses $\hat{q}^\delta$ and $\hat{p}^\varepsilon$, extending the state to $\rho_{SQ^{\delta,M}P^{\varepsilon,N}\hat{Q}^\delta\hat{P}^\varepsilon E}$. Suppose that, at this point, Eve attempts to make a guess $p_C^\varepsilon$ for $p^\varepsilon$. Let $\tau>0$ and write
			\begin{align}
				p_0=2^{\squ*{(1-\gamma-\tau)\lg\parens*{\tfrac{1}{2}+\sqrt{\delta\varepsilon}}+h(\gamma+\tau)+\tfrac{1}{(\ln 2)n}}\tfrac{n}{2}}.
			\end{align}
			 We know from the binned monogamy relation \cref{cor:sq-bin-approx-monogamy} that
			\begin{align}
				\Pr\squ*{\abs{\set{i\in[n]}{Q_i^\delta\neq\hat{Q}_i^\delta}}\leq(\gamma+\tau)n/2\land P^\varepsilon=P_C^\varepsilon}_{\sigma}\leq p_0.
			\end{align}
			Due to the bounds we have imposed, this gives that
			\begin{align}
				\Pr\squ*{\abs{\set{i\in[n]}{Q_i^\delta=\hat{Q}_i^\delta}}\leq(\gamma+\tau)n/2\land P^{\varepsilon,N}=P_C^{\varepsilon,N}}_{\rho}\leq Cp_0.
			\end{align}
			Fix the subset $I\subseteq[n]$ that Bob checks and let $\Omega$ be the event \begin{align}
				\tfrac{2}{n}\abs{\set{i\in[n]}{Q_i^\delta\neq\hat{Q}_i^\delta}}\leq\tfrac{2}{\theta n}\abs{\set{i\in I}{Q_i^\delta\neq\hat{Q}_i^\delta}}+\tau.
			\end{align}
			Using Hoeffding's inequality, we can see that $\Pr\squ*{\lnot\Omega}_\rho\leq e^{-\tau^2\theta n}$, giving
			\begin{align}
				\norm*{\rho-\rho_{|\Omega}}_{\Tr}=\norm*{\parens*{1-\tfrac{1}{\Pr[\Omega]_\rho}}\rho_{\land\Omega}+\rho_{\land\lnot\Omega}}=\parens*{\tfrac{1}{\Pr[\Omega]_\rho}-1}\Pr[\Omega]_{\rho}+\Pr[\lnot\Omega]_\rho=2\Pr[\lnot\Omega]_\rho\leq 2e^{\tau^2\theta n}.
			\end{align}
			On $\rho_{\land\Omega}$, we get
			\begin{align}
				\Pr\squ*{\abs{\set{i\in I}{Q_i^\delta\neq\hat{Q}_i^\delta}}\leq\gamma\theta n/2\land P^{\varepsilon,N}=P_C^{\varepsilon,N}}_{\rho_{\land\Omega}}\leq Cp_0,
			\end{align}
			Now, let $\Omega_0$ be the event $\abs{\set{i\in I}{Q_i^\delta\neq\hat{Q}_i^\delta}}\leq\gamma\theta n/2$. Fixing the measurement Bob makes to guess $q$, write $p_1=\Pr\squ*{\Omega_0}_{\rho_{\land\Omega}}$ so $\frac{Cp_0}{p_1}\geq\Pr\squ*{P^{\varepsilon,N}=P_C^{\varepsilon,N}\Big|\Omega_0}_{\rho_{\land\Omega}}$. Therefore, we have the min-entropy relation
			\begin{align}
				H_{\min}(P^{\varepsilon,N}|SE)_{\rho_{|\Omega_0\land\Omega}}\geq-\lg\tfrac{p_0}{p_1}=-\lg Cp_0+\lg p_1.
			\end{align}
			Until the privacy amplification, Eve also receives $(q_I)^{\delta,M}$ ($\theta n_Mn/2$ bits), $\mathrm{syn}(p^{\varepsilon,N})$ ($s$ bits), $p^{\varepsilon,N}_J$ ($\eta n_Nn/2$ bits). Giving Eve's register $E'=SI(Q_I)^{\delta,M}\mathrm{syn}(P^{\varepsilon,N})JP^{\varepsilon,N}_JE$ and hence by chain rule,
			\begin{align}
			\begin{split}
				H_{\min}(P^{\varepsilon,N}|E')_{\rho_{|\Omega_0\land\Omega}}&\geq -\lg Cp_0+\lg p_1-\squ*{\theta n_M+\tfrac{2s}{n}+\eta n_N}\tfrac{n}{2}.
			\end{split}
			\end{align}
			If $P^{\varepsilon,N}_J\neq \bar{P}^{\varepsilon,N}_J$, Bob aborts, but as $\rho_{|\Omega_0\land(\Omega\land P^{\varepsilon,N}_J=\bar{P}_J^{\varepsilon,N})}\leq\rho_{|\Omega_0\land\Omega}$, this does not change the entropy relation. Let $\Omega_1$ be the event $P^{\varepsilon,N}_J=\bar{P}_J^{\varepsilon,N}$.
			Now, using the universal hash lemma,
			\begin{align}
				\norm{\rho_{e(Q^{\varepsilon,N},R)RE'|\Omega_0\land(\Omega_1\land\Omega)}-\mu_{K}\otimes\mu_R\otimes\rho_{E'|\Omega_0\land(\Omega_1\land\Omega)}}_{\Tr}\leq 2^{-\tfrac{1}{2}\parens*{-\lg Cp_0+\lg p_1-\squ*{\theta n_M+\tfrac{2s}{n}+\eta n_N}\tfrac{n}{2}-(\ell-2)}}.
			\end{align}
			Combining this with the other case and noting that Eve's final register is $E''=RE'$ gives
			\begin{align}
			\begin{split}
				\norm{\rho_{KE''\land(F=1\land\Omega)}-\mu_R\otimes\rho_{E''\land(F=1\land\Omega)}}_{\Tr}&\leq 2^{-\tfrac{1}{2}\parens*{-\lg Cp_0-\squ*{\theta n_M+\tfrac{2s}{n}+\eta n_N}\tfrac{n}{2}-(\ell-2)}}\\
				&=\sqrt{C}2^{\tfrac{n}{4}\squ*{(1-\gamma-\tau)\lg\parens*{\tfrac{1}{2}+\sqrt{\delta\varepsilon}}+h(\gamma+\tau)+\theta n_M+\tfrac{2 s}{n}+\eta n_N+\tfrac{2(\ell-2)}{n}+\tfrac{1}{(\ln 2)n}}}
			\end{split}
			\end{align}
			Now, we can remove the event $\Omega$ using the trace norm bound: $\norm{\rho_{KE''\land(F=1)}-\mu_R\otimes\rho_{E''\land(F=1)}}_{\Tr}\leq \sqrt{C}2^{\tfrac{n}{4}\squ*{(1-\gamma-\tau)\lg\parens*{\tfrac{1}{2}+\sqrt{\delta\varepsilon}}+h(\gamma+\tau)+\theta M+\tfrac{2 s}{n}+\eta N+\tfrac{2(\ell-2)}{n}+\tfrac{1}{(\ln 2)n}}}+4e^{-\tau^2\theta n}$. Finally, it remains to bound $C$. First, we have that
			\begin{align}
			\begin{split}
				\frac{1}{C}&=\Pr[\norm{Q}_\infty\leq M\delta\land\norm{P}_\infty\leq N\varepsilon]=\prod_{i\in\mc{I}^c}\Pr[|Q_i|\leq M\delta]\prod_{i\in\mc{I}}\Pr[|P_i|\leq N\varepsilon]\\
				&=\parens*{1-2\sqrt{\tfrac{2a}{\pi}}\int_{M\delta}^\infty e^{-2aq^2}dq}^{n/2}\parens*{1-2\sqrt{\tfrac{a+b}{2\pi}}\int_{N\varepsilon}^\infty e^{-\frac{2\pi^2}{a+b}p^2}dp}^{n/2}\\
				&\geq \parens*{1-2\sqrt{\tfrac{2a}{\pi}}\int_{M\delta}^\infty \frac{q}{M\delta}e^{-2aq^2}dq}^{n/2}\parens*{1-2\sqrt{\tfrac{a+b}{2\pi}}\int_{N\varepsilon}^\infty \frac{p}{N\varepsilon}e^{-\frac{2\pi^2}{a+b}p^2}dp}^{n/2}\\
				&=\parens*{1-\tfrac{e^{-2aM^2\delta^2}}{\sqrt{2\pi aM^2\delta^2}}}^{n/2}\parens*{1-\sqrt{\tfrac{a+b}{\pi^3N^2\varepsilon^2}}e^{-\frac{2\pi^2}{a+b}N^2\varepsilon^2}}^{n/2}.
			\end{split}
			\end{align}
			Thus, we may bound
			\begin{align}
				\sqrt{C}=2^{\tfrac{1}{2}\lg C}\leq2^{-\tfrac{n}{4}\squ*{\lg\parens*{1-\tfrac{e^{-2aM^2\delta^2}}{\sqrt{2\pi aM^2\delta^2}}}+\lg\parens*{1-\sqrt{\tfrac{a+b}{\pi^3N^2\varepsilon^2}}e^{-\frac{2\pi^2}{a+b}N^2\varepsilon^2}}}},
			\end{align}
			giving the result.
		\end{proof}

	\section{The coset monogamy game on $\R$}\label{sec:Rgame}

	As in the previous two sections, we study oscillators, concentrating this time on a single mode, that corresponds to the real group $\R$. The only proper non-trivial closed subgroups in this case are discrete and infinite, corresponding to the integers $\Z$. This yields a connection to the GKP code \cite{GKP01}, whose coset states correspond to lattices in the CV phase space.

	\subsection{GKP states and coset states}

	Quantum states of the oscillator are normalized elements of the square-integrable functions of the real line $L^2(\R)$. Nevertheless, as before, it often useful to consider ``bases'' of this space composed of non-normalizable states. The two canonical choices are the position states $q(x)=\delta(q-x)$ with normalization $\braket{q}{q'}=\delta(q-q')$ and the momentum states $p(x)=e^{2\pi i px}$ with $\braket{p}{p'}=\delta(p-p')$. The position states correspond to points in $\R$ and the momentum states correspond to elements of the dual $\hat{\R}\cong\R$. The isomorphism takes the natural form $x\rightarrow\gamma_x$ for $x\in\R$, where $\gamma_x(y)=e^{2\pi i xy}$.

	An important family of states on the oscillator are the GKP code states \cite{GKP01}. They take the form
	\begin{align}\label{eq:gkp-states}
		\ket{\alpha,d,k}\propto\sum_{n=-\infty}^\infty\ket{q=(k+dn)\alpha}=\sum_{n=-\infty}^\infty e^{2\pi i\frac{k}{d\alpha}n}\ket{p=\tfrac{n}{d\alpha}},
	\end{align}
	for some $\alpha\in\R$, $d\in\N$ and $k=0,\ldots,d-1$. Again, this is non-normalizable: in fact these states are infinite both in position and in momentum, as they are an equal superposition of infinitely many states in either basis. The standard way to handle this is to turn this into a normalizable state in $L^2(\R)$ by damping the Dirac deltas $\ket{q}$, replacing them with Gaussians $q_{a,b}(x)=e^{-aq^2}e^{-b(q-x)^2}$. As in the case of $G=\R^n$, let $\Delta_{a,b}$ be the operator that effects this transformation.

	In the same way as subspace states are generalised to subspace coset states \cite{coladangelo2021hidden}, GKP states can be generalised to subgroup coset states. The relevant subgroup is $\alpha\Z\leq\R$, as the GKP state is a countable superposition of position states. Then, the cosets $\R/\alpha\Z\cong U(1)$ can be indexed by $[0,\alpha)$. Also, the dual group of $\hat{\alpha\Z}\cong\hat{\R}/(\alpha\Z)^\perp\cong U(1)$, as $(\alpha\Z)^\perp\cong\tfrac{1}{\alpha}\Z$ under the isomorphism $\hat{\R}\cong\R$, so the characters are indexed by $[0,1/\alpha)$. Thus, for $x\in[0,\alpha)$ and $y\in[0,1/\alpha)$, the subspace coset states take the form \begin{align}
		\ket{\alpha,x,y}=\ket{x+\alpha\Z^{\gamma_y}}=\sum_{n\in\Z}e^{2\pi i y\alpha n}\ket{q=x+\alpha n}.
	\end{align}
	These satisfy orthogonality as well, in the form $\braket{\alpha,x,y}{\alpha,x',y'}=\delta(x-x')\delta(y-y')$, and can be transformed into normalizable states via the same damping operation. However, it is useful to again consider this basis as an operator measure, which will allow us to rigorously interact with these unnormalizable objects. As the basis is indexed by $[0,\alpha)\times[0,1/\alpha)$, the measurable sets are the Borel sets of this space $\scr{B}([0,\alpha)\times[0,1/\alpha))$, and the operator measure of a set $E\in\scr{B}([0,\alpha)\times[0,1/\alpha))$ is the projector onto the "span" of the states $\ket{\alpha,x,y}$ such that $(x,y)\in E$:
	\begin{align}
		A^\alpha(E)=\int_E\ketbra{\alpha,x,y}d(x,y).
	\end{align}
	This is in fact a well-defined bounded operator. We go through the rigorous definition of this operator measure and the proof of this for a general abelian group in \cref{sec:abelian-infinite}.

	Alternately, we can again work with damped states $\ket{\alpha,x,y|_{a,b}}:=\frac{c\Delta_{a,b}\ket{\alpha,x,y}}{\norm{\Delta_{a,b}\ket{\alpha,x,y}}}$, distributed according to $\pi^{\alpha}_{a,b}(x,y)=\frac{\norm{\Delta_{a,b}\ket{\alpha,x,y}}^2}{\norm{\Delta_{a,b}}_2^2}$.

	\subsection{Monogamy game analysis}

	The definition and analysis of the GKP state monogamy game is a bit more involved, as the general overlap bound we work out is not directly useful in analysing these states. First, we demonstrate where the difficulty arises, and then adapt the game and bound to make it work.

	Naively, the basic GKP state monogamy game should take the following form. Fix $\alpha_1,...,\alpha_N>0$ real numbers. The GKP state monogamy game, played between a referee Alice and two cooperating players Bob and Charlie, proceeds as follows.

	\begin{enumerate}[1.]
		\item Bob and Charlie prepare a shared state $\rho_{ABC}$ but then are no longer allowed to communicate.

		\item Alice chooses $i=1,\ldots,N$ uniformly at random and measures her register in basis $\set*{\ket{\alpha_i,x,y}}$ to get measurements $x,y$.

		\item Alice sends $i$ to Bob and Charlie. Bob answers with a guess $x_B$ for $x$ and Charlie answers with a guess $y_C$ for $y$.

		\item Bob and Charlie win if $|x-x_B|<\delta$ and $|y-y_C|<\varepsilon$ (modulo $\alpha_i$ and $1/\alpha_i$, respectively).
	\end{enumerate}

	Then, we might make use of the bound of \cref{thm:bound-abelian} to upper bound the game. Crucially, this bound relies on the overlap $c(\alpha,\beta):=\sup_{x\in\R}\sqrt{\mu_{\alpha\Z}(\alpha\Z\cap(x+(-\delta,\delta)+\beta\Z))\mu_{\hat{\alpha\Z}}((-\varepsilon,\varepsilon))}$. However, for any $\delta>0$ the set $\alpha\Z\cap((-\delta,\delta)+\beta\Z)$ is infinite. The cardinality, and hence the measure, of this set is equal to the number of $n\in\Z$ such that $\abs*{\frac{\alpha}{\beta}-\frac{m}{n}}<\frac{\delta}{\beta n}$: for rational $\frac{\alpha}{\beta}$, there are infinitely many $n$ such that for some $m$, $\frac{\alpha}{\beta}=\frac{m}{n}$; and for irrational $\frac{\alpha}{\beta}$, Dirichlet's approximation theorem gives that there are infinitely many fractions $\frac{m}{n}$ such that $\abs*{\frac{\alpha}{\beta}-\frac{m}{n}}<\frac{1}{n^2}$, and thus since there must be infinitely many $n\geq\frac{\beta}{\delta}$ such that this is satisfied, $\abs*{\frac{\alpha}{\beta}-\frac{m}{n}}<\frac{1}{n^2}\leq\frac{\delta}{\beta n}$ for these $n$. Hence, for any reasonable instantiation of the game, the bound is trivial.

	Nevertheless, it is possible to get a nontrivial bound by working more directly with the state-sending version of the game. Most importantly, the damped GKP states only have significant support in a finite interval. Then, by translating the strategy for a state-sending version of the game to the original version using \cref{thm:state-sending-game}, we need only work with states that have significant support but in a particular finite interval. By projecting onto this interval, the overlap becomes finite and manageable, with a small perturbation of the winning probability. First, we present the state-sending game, then formalise the preceding argument.

	\begin{enumerate}[1.]
		\item Alice chooses $i$ uniformly at random and samples $(x,y)$ according to the distribution $\pi^{\alpha_i}_{a,b}$. She prepares the state $\ket{\alpha_i,x,y|_{a,b}}$ (\ref{eq:gkp-states}) and sends it to Bob and Charlie.

		\item Bob and Charlie attempt to split the state using an arbitrary channel $\Phi$, and then are no longer allowed to communicate.

		\item Alice sends $\alpha_i$ to Bob and Charlie. Bob answers with a guess $x_B$ for $x$ and Charlie answers with a guess $y_C$ for $y$.

		\item Bob and Charlie win if $|x-x_B|<\delta$ and $|y-y_C|<\varepsilon$.
	\end{enumerate}

	\begin{theorem}\label{thm:GKP-monogamy}
		Let $\alpha_1,\ldots,\alpha_N$ be an ascending sequence of prime numbers. For any $M>0$, the winning probability of the GKP state-sending monogamy game
		\begin{align}
			\mfk{w}(\ttt{G}_{GKP})\leq\frac{1}{N}+2\sqrt{\parens*{\alpha_N+\frac{2M}{\alpha_1}}\varepsilon}+\sqrt{\frac{1}{M}\sqrt{\frac{2}{\pi a}}}e^{-aM^2}
		\end{align}
	\end{theorem}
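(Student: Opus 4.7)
The core obstacle has already been flagged in the excerpt: a naive application of \cref{thm:bound-abelian} fails because for any two rationally- or irrationally-commensurable $\alpha_i, \alpha_j$, the set $\alpha_i\Z \cap (x + (-\delta,\delta) + \alpha_j\Z)$ is infinite. My plan is to sidestep this by first passing to the state-sending formulation, truncating the (already concentrated) damped GKP states to a compact interval $[-M,M]$, and only then invoking the abelian monogamy bound, which is now well-defined because every overlap is automatically restricted to a finite range.

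More concretely, I would proceed in three steps. First, use \cref{thm:state-sending-game} to translate the state-sending game into the entanglement-based game with Alice's share being the purification of $\Delta_{a,b}$, so that Bob and Charlie's joint strategy can be expressed through the abelian coset operator measure $A^{\alpha_i}$ introduced in the preceding section. Second, replace the damping operator $\Delta_{a,b}$ by its truncation $\Pi_M \Delta_{a,b}$, where $\Pi_M$ is the projector onto $L^2([-M,M])$. The Gaussian envelope $e^{-ax^2}$ in $\Delta_{a,b}$ together with a standard tail estimate $\int_M^\infty e^{-ax^2}dx \leq \frac{1}{2M}\sqrt{\pi/a}\,e^{-aM^2}$ gives, after taking square roots to convert a norm-squared perturbation into a trace-norm (hence winning-probability) perturbation, exactly the $\sqrt{\frac{1}{M}\sqrt{\tfrac{2}{\pi a}}}\,e^{-aM^2}$ remainder term. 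This yields a bound of the form $\mfk{w}(\ttt{G}_{GKP}) \le \mfk{w}(\ttt{G}_{GKP}^{(M)}) + \sqrt{\frac{1}{M}\sqrt{\tfrac{2}{\pi a}}}\,e^{-aM^2}$, where $\ttt{G}_{GKP}^{(M)}$ is the truncated version of the game.

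Third, apply \cref{thm:bound-abelian} to $\ttt{G}_{GKP}^{(M)}$ with $\mc{S} = \{\alpha_1\Z,\ldots,\alpha_N\Z\}$ and the cyclic orthogonal permutations $\pi_j(\alpha_i\Z) = \alpha_{(i+j \bmod N)+1}\Z$. For the identity permutation $\pi_0$, bound the sup by $1$, producing the $\tfrac{1}{N}$ contribution. For each non-identity $\pi_j$, the overlap $\alpha_i\Z \cap (g + (-\delta,\delta) + \alpha_{i'}\Z)\cap[-M,M]$ is controlled by two features: (i) the total number of points of $\alpha_i\Z$ lying in $[-M,M]$ is at most $\tfrac{2M}{\alpha_1}+O(1)$, and (ii) since $\alpha_i,\alpha_{i'}$ are distinct primes and $\delta < \alpha_{i'}/2$, for each such lattice point there is at most one nearby element of $g + \alpha_{i'}\Z$. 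Combined with the dual-measure computation $\mu_{\widehat{\alpha_i\Z}}((-\varepsilon,\varepsilon)) = 2\varepsilon\alpha_i \leq 2\varepsilon\alpha_N$ (using the convention from \cref{sec:U1game} where the dual normalization inserts the group factor), the overlap entering the square root is bounded by $\big(\alpha_N + \tfrac{2M}{\alpha_1}\big)\cdot 2\varepsilon$, up to constants that give the leading factor of $2$ after averaging over $\pi_1,\ldots,\pi_{N-1}$.

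The main technical obstacle I anticipate is step two: I have to check that truncating the damping inside the purification used by \cref{thm:state-sending-game} really does translate into an $L^1$-type perturbation of the winning probability (not just an $L^2$ perturbation of Alice's state), which is what justifies taking the square root of the Gaussian tail. The overlap bookkeeping in step three is routine once the restriction to $[-M,M]$ is in place, but the normalization of the dual Haar measure on $\widehat{\alpha_i\Z}$ must be matched carefully to the conventions established in \cref{sec:abelian-coset-measure}, since an off-by-$\alpha_i$ factor there would change the balance between the two summands inside the square root.
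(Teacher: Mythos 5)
Your first two steps are essentially the paper's proof: the paper also passes through \cref{thm:state-sending-game}, inserts the projector $\Pi_{[-M,M]}$ next to Alice's measurement (rather than truncating $\Delta_{a,b}$ itself), controls the cross terms by Cauchy--Schwarz, and computes the Gaussian tail $2\norm{(\Id-\Pi)\ket{\Psi_{a,b}}}\leq\sqrt{\tfrac{1}{M}\sqrt{\tfrac{2}{\pi a}}}e^{-aM^2}$, which is exactly your remainder term; the $L^2\to$ probability conversion you flag is handled there and is not the problem. Note, however, that you cannot invoke \cref{thm:bound-abelian} as a black box after truncation: its bound is a supremum over all strategies and involves $\mu_H(H\cap gE\pi_i(H))$ with $H=\alpha\Z$, which stays infinite no matter what state Alice holds. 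One must re-run the proof with $\Pi$ sandwiched into the overlap operator, which is what the paper's \cref{lem:GKP-overlap} does.

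The genuine gap is in your step three. Your counting (i)+(ii) bounds the overlap $|(x+\alpha_i\Z)\cap(g+(-\delta,\delta)+\alpha_{i'}\Z)\cap[-M,M]|$ only by the total number of points of $\alpha_i\Z$ in the window, roughly $2M/\alpha_i$; statement (ii) merely says each such point has at most one nearby partner, which does not reduce the count. Multiplying by the dual measure $2\varepsilon\alpha_i$ then gives roughly $4M\varepsilon$, not the sum $\parens*{\alpha_N+\tfrac{2M}{\alpha_1}}2\varepsilon$ you assert --- a product cannot turn into that sum, and the resulting bound $\approx 2\sqrt{(M+\alpha_N)\varepsilon}$ is weaker than the claimed $2\sqrt{(\alpha_N+\tfrac{2M}{\alpha_1})\varepsilon}$ by a factor $\sqrt{\alpha_1}$ on the dominant term. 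The missing ingredient is the number-theoretic step of \cref{lem:GKP-overlap}: because the $\alpha_i$ are integers and $\delta$ is small enough that $|\alpha_i n-\alpha_{i'}m|<2\delta$ forces $\alpha_i n=\alpha_{i'}m$, coincidences between the two arithmetic progressions occur only on the lattice $\mathrm{lcm}(\alpha_i,\alpha_{i'})\Z=\alpha_i\alpha_{i'}\Z$ (this is where coprimality/primality is actually used, not in your (ii)), so the count in $[-M,M]$ is at most $2+4M/(\alpha_i\alpha_{i'})$. The factor $\alpha_i$ from the dual measure then cancels against the $\alpha_i$ in that denominator, producing $4\varepsilon\parens*{\alpha_i+\tfrac{2M}{\alpha_{i'}}}$ and hence the claimed bound; your condition $\delta<\alpha_{i'}/2$ is also too weak for this rigidity step, which needs $2\delta\leq 1$.
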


	This gives a good bound on the winning probability if $N\gg 1$, $\alpha_i\sim\sqrt{M}$, $\varepsilon^2\ll 1/M\ll a^{1/2}$. First, we prove a variant of the overlap lemma involving the projection onto the interval $[-M,M]$.

	\begin{lemma}\label{lem:GKP-overlap}
		Let $\alpha,\beta\in\R$, and $\varepsilon,\delta,M>0$. Let $E=(-\delta,\delta)$, $F=(-\varepsilon,\varepsilon)$ (which we might consider modulo $\alpha$, $\beta$, $1/\alpha$, or $1/\beta$ implicitly depending on the context), and $\Pi\in\mc{P}(L^2(\R))$ be the projector $(\Pi\ket{\psi})(x)=\psi(x)$ if $|x|\leq M$ and $0$ otherwise. Then, for any $r\in[0,1/\alpha)$, $s\in[0,\beta)$,
		\begin{align}
			\norm{\Pi A^{\alpha}([0,\alpha)\times(r+F))\Pi A^{\beta}((s+E)\times[0,1/\beta))\Pi}\leq\sup_{x\in\R}\sqrt{2\alpha\varepsilon|(x+\alpha\Z)\cap(s+E+\beta\Z)\cap[-M,M]|}.
		\end{align}
		Also, if $\alpha,\beta$ are integers and $2\varepsilon\leq1$, the bound simplifies to
		\begin{align}
			\norm{\Pi A^{\alpha}([0,\alpha)\times(r+F))\Pi A^{\beta}((s+E)\times[0,1/\beta))\Pi}\leq\sqrt{4\alpha\parens*{1+\frac{2M}{\mathrm{lcm}(\alpha,\beta)}}\varepsilon}.
		\end{align}
	\end{lemma}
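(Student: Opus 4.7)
The plan is to reduce the operator norm to a variational problem over the bounded-support subspace $L^{2}(T)$, where $T:=(s+E+\beta\Z)\cap[-M,M]$ has finite Lebesgue measure, and then expand in the coset basis. This mirrors the strategy of \cref{lem:Rn-overlap}, with the projector $\Pi$ playing the essential role of replacing the infinite lattice intersections --- which trivialize the naive overlap bound, as the paragraph preceding the lemma explains --- by finite counts.

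First, observe that $A^{\beta}((s+E)\times[0,1/\beta))$ equals the position-space multiplication operator by the indicator of $s+E+\beta\Z$: integrating $\ketbra{\beta,x,y}$ over the full dual period $y\in[0,1/\beta)$ collapses all off-diagonal contributions in the lattice expansion, by the same Fourier-orthogonality computation that gave $A^{Q}((E\cap Q^\perp+q)\times Q)=\Pi_{q+E+Q}$ inside the proof of \cref{lem:Rn-overlap}. Hence $\Pi A^{\beta}((s+E)\times[0,1/\beta))\Pi$ equals the orthogonal projector $P_{T}$ onto $L^{2}(T)$, so the operator to bound is $\Pi A\,P_{T}$ with $A:=A^{\alpha}([0,\alpha)\times(r+F))$. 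Using $0\leq A\leq\Id$ under the conventional Haar normalization absorbed into $d(x,y)$, together with $\norm{\Pi}\leq 1$ and $\norm{BB^{*}}=\norm{B}^{2}$, we obtain
\begin{align*}
\norm{\Pi A P_{T}}^{2}\,\leq\,\norm{P_{T}A^{2}P_{T}}\,\leq\,\norm{P_{T}AP_{T}}\,=\,\sup_{\ket{\psi}\in L^{2}(T),\,\norm{\psi}=1}\braket{\psi}{A}{\psi}\,.
\end{align*}

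To bound this supremum, expand for such $\ket{\psi}$
\begin{align*}
\braket{\alpha,x,y}{\psi}\,=\,\sum_{n\in N_{x}}e^{-2\pi i y\alpha n}\,\overline{\psi(x+\alpha n)},\qquad N_{x}\,:=\,\{n\in\Z:x+\alpha n\in T\}\,.
\end{align*}
Since $\psi$ is supported on $T$, the sum has at most $K:=\sup_{x'\in\R}|(x'+\alpha\Z)\cap T|$ nonzero terms. Cauchy-Schwarz gives $|\braket{\alpha,x,y}{\psi}|^{2}\leq K\sum_{n\in N_{x}}|\psi(x+\alpha n)|^{2}$; integrating $y$ over $r+F$ of measure $2\varepsilon$ produces (after absorbing the factor $\alpha$ from the dual-Haar normalization on $\hat{\alpha\Z}$) a coefficient of $2\alpha\varepsilon$, and integrating $x$ over $[0,\alpha)$ uses the tiling identity $\int_{0}^{\alpha}\sum_{n\in\Z}|\psi(x+\alpha n)|^{2}\,dx=\norm{\psi}^{2}$. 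This establishes $\braket{\psi}{A}{\psi}\leq 2\alpha\varepsilon K$ and hence the first claim.

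For the simplified integer case, estimate $K$ combinatorially. When $\alpha,\beta\in\Z$ one has $\alpha\Z+\beta\Z=\gcd(\alpha,\beta)\Z$, and the smallness hypothesis ensures that the interval $s-x+(-\delta,\delta)$ contains at most one point of this sublattice; solving the resulting linear congruence then shows that $(x+\alpha\Z)\cap(s+E+\beta\Z)$ is either empty or a single coset of $\mathrm{lcm}(\alpha,\beta)\Z$, whose intersection with $[-M,M]$ has cardinality at most $1+2M/\mathrm{lcm}(\alpha,\beta)$. An additional factor of $2$ in the stated bound is fence-post slack absorbed into the constant. The principal technical subtlety throughout is normalization bookkeeping: the factor of $\alpha$ appearing in the conclusion, which a direct Cauchy-Schwarz estimate would miss, tracks the Plancherel normalization of the dual Haar measure on $\hat{\alpha\Z}\cong\R/(1/\alpha)\Z$; once this convention is fixed, the remainder is a finite-support adaptation of \cref{lem:Rn-overlap} rendered nontrivial by the projection onto $[-M,M]$.
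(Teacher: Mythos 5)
Your proof is correct and follows essentially the same route as the paper's: identify $A^{\beta}((s+E)\times[0,1/\beta))$ as the projector onto $s+E+\beta\Z$, use $\Pi$ to truncate to $T=(s+E+\beta\Z)\cap[-M,M]$, and apply Cauchy--Schwarz in the coset expansion with the dual-Haar factor $\alpha$ to obtain the bound $2\alpha\varepsilon\,\sup_{x}|(x+\alpha\Z)\cap T|$. In the integer case your single-coset-of-$\mathrm{lcm}(\alpha,\beta)\Z$ count gives the slightly sharper $1+2M/\mathrm{lcm}(\alpha,\beta)$ where the paper's difference-set argument gives $2+4M/\mathrm{lcm}(\alpha,\beta)$ (both within the stated constant); note that, exactly as in the paper's own proof, your counting step really uses $2\delta\le 1$ rather than the stated hypothesis $2\varepsilon\le 1$.
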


	\begin{proof}
		We follow the method of \cref{lem:abelian-overlap}, and additionally keep track of the projector $\Pi$. First, we have that $A^{\beta}((s+E)\times[0,1/\beta))=\Pi_{s+E+\beta\Z}$, so
		\begin{align}
		\begin{split}
			&\norm{\Pi A^{\alpha}([0,\alpha)\times(r+F))\Pi A^{\beta}((s+E)\times[0,1/\beta))\Pi}\\
			&\qquad=\norm{\Pi A^{\alpha}([0,\alpha)\times(r+F))\Pi_{(s+E+\beta\Z)\cap[-M,M]}}\\
			&\qquad\leq\norm{A^{\alpha}([0,\alpha)\times(r+F))\Pi_{(s+E+\beta\Z)\cap[-M,M]}}
		\end{split}
		\end{align}
		Thus, for any $\ket{\psi}\in L^2(\R)$,
		\begin{align}
		\begin{split}
			&\norm{A^{\alpha}([0,\alpha)\times(r+F))\Pi_{(s+E+\beta\Z)\cap[-M,M]}\ket{\psi}}^2\\
			&\qquad=\int_{[0,\alpha)\times(r+F)}\abs*{\int_{\alpha\Z\cap(s-x+E+\beta\Z)\cap[-M-x,M-x]}\psi(x+h)\overline{\gamma_y(h)}d_{\alpha\Z}h}^2d_{\R/\alpha\Z\times\hat{\alpha\Z}}(x,y)\\
			&\qquad=\int_0^\alpha\alpha\int_{-\varepsilon}^{\varepsilon}\abs*{\sum_{n\in\Z}\chi_{(s-x+E+\beta\Z)\cap[-M-x,M-x]}(\alpha n)\psi(x+\alpha n))e^{-2\pi i\alpha n(r+y)}}^2dydx\\
			&\qquad\leq\alpha\int_0^\alpha\int_{-\varepsilon}^{\varepsilon}\sum_{n\in\Z}|\psi(x+\alpha n))|^2\sup_x|\alpha\Z\cap(s-x+E+\beta\Z)\cap[-M-x,M-x]|dydx\\
			&\qquad=\norm{\ket{\psi}}^22\alpha\varepsilon\sup_x|(x+\alpha\Z)\cap(s+E+\beta\Z)\cap[-M,M]|.
		\end{split}
		\end{align}
		Hence, we get the bound $\sup_x\sqrt{2\alpha\varepsilon|(x+\alpha\Z)\cap(s+E+\beta\Z)\cap[-M,M]|}$ on the overlap.

		It remains to bound $|(x+\alpha\Z)\cap(s+E+\beta\Z)\cap[-M,M]|$ in the case where $\alpha,\beta$ are integers. First, note that
		\begin{align}
		\begin{split}
			|(x+\alpha\Z)\cap(s+E+\beta\Z)\cap[-M,M]|&=\abs*{\set{n\in\Z}{|x+\alpha n|\leq M;\;\exists m\in\Z.\;|(x+\alpha n)-(s+\beta m)|<\delta}}\\
			&\leq 1+\abs*{\set{n\in\Z}{|n|\leq \tfrac{2M}{\alpha};\;\exists m\in\Z.\;|\alpha n-\beta m|<2\delta}}\\
			&\leq 2+2\abs*{\set{n\in\N}{n\leq \tfrac{2M}{\alpha};\;\exists m\in\Z.\;|\alpha n-\beta m|<2\delta}},
		\end{split}
		\end{align}
		where the inequality following from noting that the difference of any two elements of the left hand set is an element of the right hand set. Now, using the hypothesis, $|\alpha n-\beta m|<2\delta$ if and only if $\alpha n-\beta m=0$, that is if $\frac{m}{n}=\frac{\alpha}{\beta}$. Thus, the elements of the set take the form $n=\frac{k\beta}{\mathrm{gcd}(\alpha,\beta)}$, so
		\begin{align}
		\begin{split}
			|(x+\alpha\Z)\cap(s+E+\beta\Z)\cap[-M,M]|&\leq 2+2\abs*{\set{k\in\N}{k\leq \tfrac{2M\mathrm{gcd}(\alpha,\beta)}{\alpha\beta}}}\leq2+\frac{4M}{\mathrm{lcm}(\alpha,\beta)}.
		\end{split}
		\end{align}
	\end{proof}

	Now, we can proceed to the proof of the theorem.

	\begin{proof}[Proof of \cref{thm:GKP-monogamy}]
		Fixing a strategy for the state-sending game, the construction of \cref{thm:state-sending-game} gives a reexpression as a strategy for the original monogamy game:
		\begin{align}
			\mfk{w}_{\ttt{G}_{GKP}}(\ttt{S})=\expec{i}\Tr\squ[\big]{(A^{\alpha_i}\otimes B^{\alpha_i\Z}\otimes C^{\alpha_i\Z})(E_{\alpha_i\Z})(\Id\otimes\Phi)(\ketbra{\Psi_{a,b}})},
		\end{align}
		where $\ket{\Psi_{a,b}}$ is the maximally-entangled state corresponding to the damping operator $\Delta_{a,b}$. Now, using the projector $\Pi=\Pi_{[-M,M]}$ and writing $P_i=(\mathrm{id}\otimes\Phi^\dag)\parens*{(A^{\alpha_i}\otimes B^{\alpha_i\Z}\otimes C^{\alpha_i\Z})(E_{\alpha_i\Z})}$,
		\begin{align}
		\begin{split}
			\mfk{w}_{\ttt{G}_{GKP}}&(\ttt{S})=\expec{i}\parens*{\braket{\Psi_{a,b}}{\Pi P_i\Pi}{\Psi_{a,b}}+\braket{\Psi_{a,b}}{(\Id-\Pi)P_i\Pi}{\Psi_{a,b}}+\braket{\Psi_{a,b}}{P_i(\Id-\Pi)}{\Psi_{a,b}}}\\
			&\leq\expec{i}\parens*{\braket{\Psi_{a,b}}{\Pi P_i\Pi}{\Psi_{a,b}}+\norm*{(\Id-\Pi)\ket{\Psi_{a,b}}}\norm*{P_i\Pi\ket{\Psi_{a,b}}}+\norm*{P_i\ket{\Psi_{a,b}}}\norm*{(\Id-\Pi)\ket{\Psi_{a,b}}}}\\
			&\leq\expec{i}\Tr\squ[\big]{(\Pi A^{\alpha_i}\Pi\otimes B^{\alpha_i\Z}\otimes C^{\alpha_i\Z})(E_{\alpha_i\Z})(\Id\otimes\Phi)(\ketbra{\Psi_{a,b}})}+2\norm*{(\Id-\Pi)\ket{\Psi_{a,b}}}.
		\end{split}
		\end{align}
		First, we bound the second term. Using \cref{sec:damping}, the singular-value decomposition $\Delta_{a,b}=\sum_is_i\ketbra{\phi_i}{\chi_i}$ gives $\ket{\Psi_{a,b}}=\frac{1}{\norm{\Delta_{a,b}}_2}\sum_is_{n,i}\ket{\phi_i}\otimes c\ket{\chi_i}$. Hence,
		\begin{align}
			\norm{(\Id-\Pi)\ket{\Psi_{a,b}}}^2=\braket{\Psi_{a,b}}{\Id-\Pi}{\Psi_{a,b}}=\frac{1}{\norm{\Delta_{a,b}}_2^2}\sum_is_i^2\braket{\phi_i}{\Id-\Pi}{\phi_i}=\frac{\Tr\squ{\Delta_{a,b}^\dag(\Id-\Pi)\Delta_{a,b}}}{\Tr\squ{\Delta_{a,b}^\dag\Delta_{a,b}}}.
		\end{align}
		So, we have $\Tr\squ{\Delta_{a,b}^\dag\Delta_{a,b}}=\int_{-\infty}^\infty\norm{\Delta_{a,b}\ket{x}}^2dx=\int_{-\infty}^\infty\sqrt{\frac{\pi}{2b}}e^{-2\bar{a}x^2}dx=\frac{\pi}{2\sqrt{\bar{a}b}},$ and similarly
		\begin{align}
		\begin{split}
			\Tr\squ{\Delta_{a,b}^\dag(\Id-\Pi)\Delta_{a,b}}&=\int_{-\infty}^\infty\norm{(\Id-\Pi)\Delta_{a,b}\ket{x}}^2dx=2\int_{-\infty}^\infty\int_{M}^\infty e^{-2\tilde{a}x^2-2b(x-y)^2}dydx\\
			&=2\sqrt{\frac{\pi}{2(\tilde{a}+b)}}\int_M^\infty e^{-2\frac{\tilde{a}b}{\tilde{a}+b}y^2}dy\leq 2\sqrt{\frac{\pi}{2(\tilde{a}+b)}}\int_M^\infty \frac{y}{M}e^{-2\frac{\tilde{a}b}{\tilde{a}+b}y^2}dy\\
			&=\frac{\sqrt{\pi(\tilde{a}+b)/2}}{4\tilde{a}bM}e^{-2\frac{\tilde{a}b}{\tilde{a}+b}M^2},
		\end{split}
		\end{align}
		which gives $2\norm{(\Id-\Pi)\ket{\Psi_{a,b}}}\leq\sqrt{\frac{1}{M}\sqrt{\frac{2}{\pi a}}}e^{-aM^2}$.

		For the first term, we follow the template of \cref{thm:bound-abelian}, replacing the use of \cref{lem:abelian-overlap} with \cref{lem:GKP-overlap}. We take the trivial set of orthogonal permutations $\pi_i(\alpha_j)=\alpha_{i+j}$, where the addition is modulo $N$. Then,
		\begin{align}
		\begin{split}
			&\expec{i}\Tr\squ[\big]{(\Pi A^{\alpha_i}\Pi\otimes B^{\alpha_i\Z}\otimes C^{\alpha_i\Z})(E_{\alpha_i\Z})(\Id\otimes\Phi)(\ketbra{\Psi_{a,b}})}\\
			&\qquad\leq\expec{i}\sup_{j=1,\ldots,N;r,s\in\R}\norm{\Pi A^{\alpha_i}([0,\alpha_i)\times(r+F))\Pi\Pi A^{\alpha_{i+j}}((s+E)\times[0,1/\alpha_{i+j}))\Pi}\\
			&\qquad\leq\frac{1}{N}+\frac{1}{N}\sum_{i\neq 0}\sup_{j}\sqrt{4\parens*{\alpha_i+\frac{2M}{\alpha_{i+j}}}\varepsilon}\leq\frac{1}{N}+2\sqrt{\parens*{\alpha_N+\frac{2M}{\alpha_1}}\varepsilon},
		\end{split}
		\end{align}
		giving the wanted result.
	\end{proof}

\section{The coset monogamy game on $SO(3)$}\label{sec:SO3game}

	We move on to study the case $G=SO(3)$ --- a compact Lie group whose elements label the distinct orientations of a rigid body. This case is akin to the planar rotor (see \cref{sec:U1game}) in that the position states are labelled by elements of a compact space, meaning that the dual irreducible representation indices are discrete. However, there is an extra complication of the group being non-abelian, meaning that irreducible representations become matrix-valued.

	\subsection{Rigid rotor states}
	We consider a molecule with no symmetries as a rigid body in 3D. Then, the configuration space of rotational states of the molecule corresponds to the rotation group $G=SO(3)$. As before, the associated Hilbert space $L^2(SO(3))$ can be spanned by the unnormalizable position eigenstates $\ket{R}$, $R\in SO(3)$. The dual basis consists of the angular momentum eigenstates, given by the Wigner $D$-matrices extending the spherical harmonic basis for $S^2\cong SO(3)/U(1)$. For $\ell\geq 0$, $-\ell\leq m,n\leq\ell$,
	\begin{align}
		\ket{^\ell_{m,n}}=\sqrt{\tfrac{2\ell+1}{8\pi^2}}\int D^\ell_{m,n}(R)\ket{R}dR.
	\end{align}
	These form a bone fide orthonormal basis.

	As before, we can consider coset states of an arbitrary closed subgroup $H\leq G$. Since there are multiple types of interesting subgroups, corresponding to the proper point groups, we will not for the moment specialise to a particular group structure. As for finite non-abelian groups, the characters of the continuous representations no longer span $L^2(H)$; however the matrix elements of the representations do. Write $\tsf{IB}(H)=\set{\gamma_{m,n}}$ for the set of matrix elements of a full collection of inequivalent continuous finite-dimensional irreducible representations of $H$. By the Peter-Weyl theorem, these matrix elements are complete and orthogonal in the sense that
	\begin{align}
		\int_H\overline{\gamma_{m,n}(h)}\gamma'_{m',n'}(h)dh=d_\gamma\delta_{\gamma,\gamma'}\delta_{m,m'}\delta_{n,n'},
	\end{align}
	where $d_\gamma$ is the dimension of the representation $\gamma$. Note that, unlike the abelian case, if $G$ were not compact, $H$ may not be compact, and hence these properties may not hold. On the other hand, $H$ is not necessarily normal, so $G/H$ will not be a group. Since we cannot appeal to the dual group structure as in the abelian case, we fix a set of coset representatives $\tsf{CS}(H)$. We can find simple fundamental domains in all the relevant cases \cite{ACP20}.

	The coset states, indexed by $R\in\tsf{CS}(H)$ and $\gamma_{m,n}\in\tsf{IB}(H)$, generalise as
	\begin{align}
		\ket{RH^\gamma_{m,n}}=\sqrt{d_\gamma}\int_H\gamma_{m,n}(h)\ket{Rh}dh.
	\end{align}
	However, other than the $G=H$ case corresponding to the angular momentum basis, these are not normalizable states: we have $\braket{RH^\gamma_{m,n}}{R'H^{\gamma'}_{m',n'}}=\delta_{G/H}(R^{-1}R'H)\delta_{\gamma_{m,n},\gamma'_{m',n'}}$. Hence, as in the case of infinite abelian groups, we make use of the coset measure. Here, we can endow the coset representatives $\tsf{CS}(H)$ with a continuous measure induced by the quotient measure on the symmetric space $SO(3)/H$; and $\tsf{IB}(H)$ can simply be given a discrete structure via the counting measure. For a measurable set $E=\bigcup_{\gamma_{m,n}}E_{\gamma_{m,n}}\times\{\gamma_{m,n}\}\subseteq\tsf{CS}(H)\times\tsf{IB}(H)$, the operator measure is
	\begin{align}
		A^H(E)=\sum_{\gamma_{m,n}}\int_{E_{\gamma_{m,n}}}\ketbra{RH^\gamma_{m,n}}dR.
	\end{align}

	\subsection{Monogamy game analysis}

	To construct a coset monogamy game for $SO(3)$, we first need to choose a suitable collection of subgroups. One way to do this is to consider subgroups isomorphic to $U(1)$. The canonical example of this is the subgroup of rotations around the $z$ axis, $H_0=\set*{Z(\theta)}{\theta\in[0,2\pi)}$, where
	\begin{align}
		Z(\theta)=\begin{bmatrix}\cos\theta&-\sin\theta&0\\\sin\theta&\cos\theta&0\\0&0&1\end{bmatrix}.
	\end{align}
	We can choose the collection of subgroups to be rotations of this group around the $x$ axis, $H_\theta=X(\theta)H_0X(-\theta)$, where
	\begin{align}
		X(\theta)=\begin{bmatrix}1&0&0\\0&\cos\theta&-\sin\theta\\0&\sin\theta&\cos\theta\end{bmatrix}.
	\end{align}
	Since these groups are all copies of the abelian group $U(1)$, we have that cosets form the symmetric space isomorphic to the sphere, $SO(3)/H_\theta\cong S^2$, and the irreducible representations are simply the characters, indexed by $n\in\Z$ and acting as $\gamma_n(X(\theta)Z(\varphi)X(\theta))=e^{i n\varphi}$. Using Euler angles $\alpha,\beta,\gamma$ to parameterize arbitrary 3D rotations, we can write
	$R=RX(\theta)X(-\theta)=Z(\alpha)X(\beta)Z(\gamma)X(-\theta)$ for $R\in SO(3)$.
	This yields the coset $RH_\theta=\set*{Z(\alpha)X(\beta)Z(\varphi)X(-\theta)}{\varphi\in[0,2\pi)}$, parametrized by the two angles $\alpha,\beta$. This gives coset states of the form
	\begin{align}
		\ket{\theta,\alpha,\beta,n}=\ket{Z(\alpha)X(\beta-\theta)H_\theta^{\gamma_n}}=\frac{1}{2\pi}\int_0^{2\pi}e^{in\varphi}\ket{Z(\alpha)X(\beta)Z(\varphi)X(-\theta)}d\varphi.
	\end{align}
	For some $N\in\N$, fix the set of subgroups $\mc{S}_N=\set*{H_{\theta_k}}{k=0,\ldots,N-1}$, where $\theta_k=\frac{2\pi k}{N}$. It remains to fix the neighborhood of unity corresponding to the correct answers: we take it to be the ball of radius $\varepsilon>0$ in an appropriate metric $E_{\varepsilon}=\set*{R\in SO(3)}{d(R,I)<\varepsilon}$. There are a variety of options for this metric; however, since $SO(3)$ is compact, they are all equivalent. We choose a metric that arises naturally from the quaternion representation of rotations. There is a surjective group homomorphism from the unit quaternions to $SO(3)$ that acts as $\cos\theta+i\sin\theta\mapsto Z(2\theta)$, $\cos\theta+k\sin\theta\mapsto X(2\theta)$. Using the Euclidean norm on the quaternions, we take the metric to be the distance between the preimages, \emph{i.e.}\ for $R,S\in SO(3)$, if $p\mapsto R$ and $q\mapsto S$, $d(R,S):=\min\{\norm{p-q}_2,\norm{p+q}_2\}$. A nice property of this metric is that it is translation-invariant by construction.

	The game then proceeds as usual.

	\begin{enumerate}[1.]
		\item Bob and Charlie prepare a shared state $\rho_{ABC}$ but then are no longer allowed to communicate.

		\item Alice chooses $k=1,\ldots,N$ uniformly at random and measures her register in basis $\set*{\ket{\theta_k,\alpha,\beta,n}}$ to get measurements $\alpha,\beta, n$.

		\item Alice sends $k$ to Bob and Charlie. Bob answers with a guess $(\alpha_B,\beta_B)$ for $(\alpha,\beta)$ and Charlie answers with a guess $n_C$ for $n$.

		\item Bob and Charlie win if there exists $\varphi$ such that $d\parens*{Z(\alpha_B)X(\beta_B),Z(\alpha)X(\beta)Z(\varphi)}<\varepsilon$ and $n=n_C$
	\end{enumerate}

	We bound the winning probability of this game using \cref{thm:compact-coset-bound}.

	\begin{theorem}\label{thm:so3-monogamy}
		Let $N\in\N$ be even and $0<\varepsilon<2\sin\tfrac{\pi}{2N}$. Let the compact coset monogamy game $\ttt{G}_{N,\varepsilon}=(SO(3),\mc{S}_N,E_{\varepsilon})$. Then, the winning probability
		\begin{align}
			\mfk{w}(\ttt{G}_{N,\varepsilon})\leq\frac{2}{N}+2\sqrt{\pi\varepsilon}.
		\end{align}
	\end{theorem}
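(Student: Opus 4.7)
The plan is to apply the general compact-group coset monogamy bound \cref{thm:compact-coset-bound} with a cyclic family of orthogonal permutations of $\mc{S}_N$, and then to estimate the resulting overlap measures geometrically using the quaternion model of $SO(3)$. Since every irreducible representation of the abelian subgroup $H_\theta\cong U(1)$ is one-dimensional and Charlie must guess the irrep index $n\in\Z$ exactly, the representation-side factor $\mu_{\widehat{H}}(F)$ in the bound equals $1$, and only the position-side overlap contributes nontrivially.

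Concretely, I would take $\pi_j(H_{\theta_k})=H_{\theta_{(k+j)\bmod N}}$ for $j=0,\dots,N-1$. The crucial observation is that $X(\pi)Z(\varphi)X(-\pi)=Z(-\varphi)$, so $H_{\theta+\pi}=H_\theta$; since $N$ is even, both $\pi_0$ and $\pi_{N/2}$ act as the identity on $\mc{S}_N$. For these two shifts the overlap factor is bounded trivially by $\mu_{H_{\theta_k}}(H_{\theta_k})=1$, producing the leading $\tfrac{2}{N}$ term in the stated bound.

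For $j\notin\{0,N/2\}$, I would estimate
\[
\mu_{H_{\theta_k}}\!\bigl(H_{\theta_k}\cap g E_\varepsilon H_{\theta_{k+j}}\bigr).
\]
Using the double cover $\mathrm{Sp}(1)\to SO(3)$, each $H_\theta$ lifts to a great circle on $S^3$ and, by construction, $d$ is the restriction of the ambient Euclidean quaternion norm, so $E_\varepsilon$ is essentially an $\varepsilon$-ball at $I$ in $S^3$. Because the axes of $H_{\theta_k}$ and $H_{\theta_{k+j}}$ meet at angle $2\pi j/N\in(0,\pi)$, their lifted great circles meet transversally (at two antipodal pairs of points), so an elementary planar estimate analogous to \cref{lem:c-overlaps} on each transverse crossing yields an overlap measure of the form $C\varepsilon/\sin(\pi j/N)$ for an absolute constant $C$. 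The hypothesis $\varepsilon<2\sin(\pi/(2N))$ is precisely what is needed to ensure that the $\varepsilon$-tubes do not wrap around or fuse the two crossings, so the total length decomposes cleanly into two disjoint arcs and the estimate is sharp to within constants.

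Summing and approximating by an integral,
\[
\frac{1}{N}\sum_{\substack{j=1\\ j\neq N/2}}^{N-1}\sqrt{\frac{C\varepsilon}{\sin(\pi j/N)}}\,\leq\,2\int_0^{1/2}\!\sqrt{\frac{C\varepsilon}{\sin(\pi x)}}\,dx\,\leq\,2\sqrt{\pi\varepsilon},
\]
after collecting $C$ together with the finite value of $\int_0^{1/2}(\sin\pi x)^{-1/2}dx$. Combined with the $\tfrac{2}{N}$ from the trivial shifts, this yields the stated bound. The main obstacle is the explicit geometric estimate on $S^3$: one must carefully identify the $\varepsilon$-tube around a great circle, compute its intersection with a transverse great circle (accounting for both crossings), and verify the passage between the quaternion Euclidean metric and the Haar arc-length measure on $H_\theta$. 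Everything else is routine bookkeeping with the compact-case bound.
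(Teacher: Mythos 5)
Your overall skeleton matches the paper's proof: apply \cref{thm:compact-coset-bound} with the cyclic shifts $\pi_j(H_{\theta_k})=H_{\theta_{k+j}}$, note that $H_{\theta+\pi}=H_\theta$ makes both $j=0$ and $j=N/2$ trivial (giving the $\tfrac2N$ term, since here $d_\gamma=1$ — the compact bound has a dimension factor $d_\gamma$, not a $\mu_{\widehat H}(F)$ factor, but the conclusion is the same), and then control the remaining overlaps and compare with an integral. The gap is in the overlap estimate itself, which is the actual content of the theorem. The quantity you must bound is $\sup_{R\in SO(3)}\mu_{H_0}\bigl(H_0\cap R E_\varepsilon H_{\theta_j}\bigr)$, i.e.\ a supremum over \emph{all translates} of the second circle, whereas your $S^3$ picture fixes both lifted circles through $\pm1$ ($R=I$) and estimates the contribution of the transverse crossings there. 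The worst-case translate is not $R=I$: in the paper's \cref{lem:so3-overlap} the maximization over the tilt angle $\beta$ produces a strictly larger overlap than the crossing-at-identity configuration (asymptotically larger by a factor $\sqrt2$, and equal to the full measure $1$ whenever $|\cos\theta_j|>\cos\eta$), and it is exactly this optimization that makes the lemma nontrivial. A correct argument along your lines would have to quantify, over all relative positions of two great circles on $S^3$ (equivalently over $\beta$), how long an arc of the first circle can stay inside the $\varepsilon$-tube of the second; nothing in your sketch does this.

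Two further points would also block the stated conclusion. First, your overlap form $C\varepsilon/\sin(\pi j/N)$ has the wrong angle: the lifted circles of $H_0$ and $H_{\theta_j}$ meet at the full angle $\theta_j=2\pi j/N$ (conjugating $i$ by the lift of $X(\theta)$ rotates the axis by $\theta$, not $\theta/2$), so the correct scaling is $\varepsilon/\sin(2\pi j/N)$; this matters for $j$ near $N/2$, where $H_{\theta_j}$ is close to $H_0$ and the true overlap blows up like $\varepsilon N$ while your expression stays of order $\varepsilon$. Second, the theorem asserts the specific constant $2\sqrt{\pi\varepsilon}$, so an argument "up to an absolute constant $C$" cannot finish: you cannot absorb an unspecified $C$ into that number. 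The paper gets the constant by computing the worst-case overlap exactly in \cref{lem:so3-overlap} ($\tfrac2\pi\arcsin\sqrt{1-\sqrt{1-\sin^2\eta/\sin^2\theta}}$ with $\cos\tfrac\eta2=1-\tfrac{\varepsilon^2}2$) and then applying explicit elementary inequalities ($\arcsin x\le\tfrac\pi2 x$, $\sqrt{1-x^2}\ge1-x^2$, $\sin\eta\le2\varepsilon$, $\sin x\ge\tfrac2\pi x$) before the integral comparison; you would need to carry out an analysis of at least this precision (your reading of the hypothesis $\varepsilon<2\sin\tfrac{\pi}{2N}$ — that it rules out the degenerate "measure $1$" case for $j\neq0,N/2$ — is the one part of the geometric discussion that does match the paper).
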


	First, we compute the overlaps.

	\begin{lemma}\label{lem:so3-overlap}
		For $\theta\in[0,2\pi)$ and $0<\varepsilon<1$,
		\begin{align}
			\sup_{R\in SO(3)}\mu_{H_0}(H_0\cap RE_\varepsilon H_\theta)=\begin{cases}1&\text{if }|\cos\theta|>\cos\eta\\\frac{2}{\pi}\arcsin\sqrt{1-\sqrt{1-\frac{\sin^2\eta}{\sin^2\theta}}}&\text{else}\end{cases},
		\end{align}
		where $\cos\tfrac{\eta}{2}=1-\frac{\varepsilon^2}{2}$
	\end{lemma}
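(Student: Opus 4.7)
} I would recast everything in the quaternion picture $SO(3)\cong S^3/\{\pm 1\}$. There $H_0$ is the unit circle in the $2$-plane $V_1:=\mathrm{span}_{\mathbb R}(1,\hat k)$, the coset $RH_\theta$ is the unit circle in $V_2:=rV_{1,\theta}$ with $V_{1,\theta}:=\mathrm{span}_{\mathbb R}(1,-\sin\theta\,\hat j+\cos\theta\,\hat k)$ (here $r$ is a unit-quaternion lift of $R$), and the condition $d(R,I)<\varepsilon$ translates to the rotation angle of $R$ being less than $\eta$, compatibly with $\cos(\eta/2)=1-\varepsilon^2/2$ via $\|1-q\|^2=2-2\,\mathrm{Re}(q)$. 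Under these identifications the containment $Z(\varphi)\in RE_\varepsilon H_\theta$ becomes the projection inequality $\|P_{V_2}q_\varphi\|>\cos(\eta/2)$, where $q_\varphi$ is the quaternion of $Z(\varphi)$.

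Next, introducing the principal angles $\beta_1\le\beta_2$ between $V_1$ and $V_2$ and working in an orthonormal basis $\{e_1,e_2\}$ of $V_1$ adapted to the singular-value decomposition of $P_{V_2}|_{V_1}$, the projection inequality becomes $\cos^2\psi\cos^2\beta_1+\sin^2\psi\cos^2\beta_2>\cos^2(\eta/2)$, with $\psi$ an arc-length parameter on $V_1\cap S^3$ related linearly to $\varphi$. The Haar measure of those $\psi$ satisfying this splits naturally into three cases: it equals $1$ when $\beta_2\le\eta/2$, vanishes when $\beta_1\ge\eta/2$, and otherwise equals $\frac{2}{\pi}\arccos\sqrt{K}$, where $K=\frac{\cos^2(\eta/2)-\cos^2\beta_2}{\cos^2\beta_1-\cos^2\beta_2}$.

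Optimising over $R$ means optimising over $r=a+b\hat i+c\hat j+d\hat k\in S^3$. A direct computation of the Gram matrix $G=(\langle u_i,v_j\rangle)$, combined with the identity $(ac-bd)^2+(ab+cd)^2=(a^2+d^2)(b^2+c^2)$ that follows from $|r|=1$, shows that $\det G$ and $\mathrm{tr}(G^TG)$ both depend on $r$ only through the single parameter $m:=(a^2+d^2-b^2-c^2)\cos\theta-2(ab+cd)\sin\theta$, and $m$ ranges freely over $[-1,1]$ as $r$ varies over $S^3$. Setting $m=\cos\phi$ with $\phi\in[0,\pi]$, the eigenvalues of $G^TG$ collapse neatly to $\lambda_\pm=\cos^2\frac{\phi\mp\theta}{2}$, giving $\sin\beta_i=|\sin\frac{\phi\mp\theta}{2}|$ and reducing the problem to a one-variable optimisation in $\phi$. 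The critical equation $K'(\phi)=0$ simplifies to $\cos\phi=\cos\theta/\cos\eta$, which has a solution in $[0,\pi]$ precisely when $|\cos\theta|\le\cos\eta$; substituting this critical $\phi$ produces the ``else'' expression of the lemma, while $|\cos\theta|>\cos\eta$ places us in the $\beta_2<\eta/2$ regime and yields measure $1$.

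The main obstacle I expect is establishing the algebraic collapse of $\lambda_\pm$ to a one-parameter family in $m$; this is what converts an a priori $3$-dimensional optimisation over $S^3$ into a tractable one-variable problem, and it hinges on the $|r|=1$ identity above. Once that reduction is in place, verifying that the interior critical point is a minimum of $K$ (hence a maximum of the measure) is a routine second-derivative check, and matching the two cases at the boundary $|\cos\theta|=\cos\eta$ is a short limit computation.
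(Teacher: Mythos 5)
Your plan is sound, and it is a genuinely different route from the paper's. The paper writes $R=Z(\alpha)X(\beta)Z(\gamma)X(-\theta)$ in Euler angles, computes the quaternionic distance from $Z(\varphi+\alpha)$ to $RH_\theta$ by direct trigonometry, observes that the resulting measure depends on $R$ only through $\beta$, and maximises a one-variable expression in $\beta$; no dimension-reduction step is needed. You instead view $H_0$ and $RH_\theta$ as great circles in two-planes of $\R^4$ (the quaternions) and must collapse the supremum over $r\in S^3$ to one variable. I checked your key claim: with your bases one finds $\det G=\tfrac12(m+\cos\theta)$ and $\mathrm{tr}(G^TG)=1+m\cos\theta$, so both invariants depend on $r$ only through $m$, the eigenvalues are $\lambda_\pm=\cos^2\tfrac{\phi\mp\theta}{2}$ with $m=\cos\phi$, and $m$ fills $[-1,1]$ because $(a^2+d^2-b^2-c^2,\,2(ab+cd))$ fills the closed unit disk. (Your quoted identity is a polynomial identity valid for all $a,b,c,d$; the constraint $|r|=1$ is what you need for the disk statement, not for the identity.) Your $\phi$ then plays exactly the role of the paper's Euler angle $\beta$: both approaches arrive at the same one-variable problem, the same critical equation $\cos\phi=\cos\theta/\cos\eta$, and the same criterion $|\cos\theta|>\cos\eta$ for measure one. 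The degenerate case $\sin\theta=0$ should be handled separately, as the paper does by conjugating with $Z(\pi)$.

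The one step you assert without computing --- that substituting the critical $\phi$ ``produces the else expression of the lemma'' --- does not hold as written. Substituting $\cos\phi=\cos\theta/\cos\eta$ into $K=\frac{\cos\eta-\cos(\phi+\theta)}{2\sin\phi\sin\theta}$ gives $K=\tfrac12\bigl(1+\sqrt{1-\sin^2\eta/\sin^2\theta}\bigr)$, hence a maximal measure $\tfrac{2}{\pi}\arccos\sqrt{K}=\tfrac{2}{\pi}\arcsin\sqrt{\tfrac12\bigl(1-\sqrt{1-\sin^2\eta/\sin^2\theta}\bigr)}$, which carries an extra factor $\tfrac12$ under the square root relative to Lemma~\ref{lem:so3-overlap}. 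This is not a flaw in your method: the discrepancy traces to the paper's own derivation, where $\cos(\theta-\beta)-\cos(\theta+\beta)=2\sin\beta\sin\theta$ loses its factor $2$, so the displayed condition should read $\cos(\theta-\beta)-\cos\eta>2\sin^2\tfrac{\varphi}{2}\sin\beta\sin\theta$, and your value is the correct one. A spot check confirms it: for $\theta=\pi/2$ and $\cos\tfrac{\eta}{2}=\tfrac{\sqrt{3}}{2}$ the optimum is attained at $R=I$ (planes sharing the vector $1$, principal angles $0$ and $\pi/2$), where the winning set $\{\,|\cos\psi|>\tfrac{\sqrt{3}}{2}\,\}$ has measure $\tfrac13=\tfrac{2}{\pi}\arcsin\tfrac12$, not the lemma's $\tfrac{2}{\pi}\arcsin\sqrt{1/2}=\tfrac12$; likewise, just above the threshold $\theta=\eta$ a small-angle argument gives supremum $\tfrac12$, whereas the stated formula would approach $1$. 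Since the lemma enters Theorem~\ref{thm:so3-monogamy} only as an upper bound, your smaller corrected value still yields (indeed slightly strengthens) the theorem, but a faithful execution of your plan proves the corrected equality rather than the displayed one, so carry out the substitution explicitly instead of matching it to the lemma's right-hand side.
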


	\begin{proof}
		First, we can assume that $\sin\theta>0$: if $\sin\theta=0$, we have $H_\theta=H_0$ and $|\cos\theta|=1$, so we have that the measure is $1$; if $\sin\theta<0$, we can conjugate by $Z(\pi)$, which does not change $H_0$ and sends $H_\theta\mapsto H_{-\theta}$. To get a description of $H_0\cap RE_\varepsilon H_\theta$, we note that
		\begin{align}
			RE_\varepsilon H_\theta=\{g\in SO(3)|d(g,RX(\theta)Z(\varphi)X(-\theta))<\varepsilon\text{ for some }\varphi\},
		\end{align}
		so  $H_0\cap RE_\varepsilon H_\theta$ is the set of elements of $H_0$ that are less than $\varepsilon$ away from some element of  $RH_\theta$. First, we find the distance between arbitrary elements of $H_0$ and $RH_\theta$. There exist $\alpha,\beta,\gamma\in[0,2\pi)$ such that $R=Z(\alpha)X(\beta)Z(\gamma)X(-\theta)$; therefore as every element of $H_\theta$ is of the form $X(\theta)Z(\chi-\gamma)X(-\theta)$ for some $\chi\in[0,2\pi)$, $Z(\alpha)X(\beta)Z(\chi)X(-\theta)$ parametrizes all the elements of $RH_\theta$. Similarly, every element of $H_0$ may be expressed $Z(\varphi+\alpha)$ for some $\varphi\in[0,2\pi)$, so the distance between those elements is
		\begin{align}
		\begin{split}
			d\parens*{Z(\varphi+\alpha),Z(\alpha)X(\beta)Z(\chi)X(-\theta)}&=d\parens*{Z(\varphi)X(\theta),X(\beta)Z(\chi)}\\
			&\hspace{-1cm}=\min_{\pm}\sqrt{2(1\pm\cos\tfrac{\chi}{2}\cos\tfrac{\varphi}{2}\cos\tfrac{\theta-\beta}{2}\pm\sin\tfrac{\chi}{2}\sin\tfrac{\varphi}{2}\cos\tfrac{\theta+\beta}{2})}.
		\end{split}
		\end{align}
		To find how close $g=Z(\varphi+\alpha)$ is to any element of $RH_\theta$, we need to minimize with respect to $\chi$, yielding
		\begin{align}
			d(g,RH_\theta)=\sqrt{2\parens*{1-\sqrt{\cos^2\tfrac{\varphi}{2}\cos^2\tfrac{\theta-\beta}{2}+\sin^2\tfrac{\varphi}{2}\cos^2\tfrac{\theta+\beta}{2}}}}~.
		\end{align}
		Since $g\in RE_\varepsilon H_\theta$ iff $d(g,RH_\theta)<\varepsilon$, which is iff
		$\cos(\theta-\beta)-\cos(\eta)>\sin^2\tfrac{\varphi}{2}\sin(\beta)\sin(\theta).$ Integrating with normalization $\frac{1}{2\pi}$ gives that the set of $\varphi\in[0,2\pi)$ that satisfy this has measure
		\begin{align}
			\frac{2}{\pi}\arcsin\sqrt{\frac{\cos(\theta-\beta)-\cos\eta}{\sin\theta\sin\beta}}.
		\end{align}
		To maximise the measure with respect to $g$, it remains to maximise this measure with respect to $\beta$. If $|\cos\theta|>\cos\eta$, then taking $\cos\beta=\frac{\cos\eta}{\cos\theta}$ gives that the set is measure $1$. Else, value of $\beta$ that does this is $\beta=\arccos\parens*{\frac{\cos\theta}{\cos\eta}}$. Using this value, the measure becomes $\frac{2}{\pi}\arcsin\sqrt{1-\sqrt{1-\frac{\sin^2\eta}{\sin^2\theta}}}$, giving the result.
	\end{proof}

	Now, we can pass to the proof of the theorem.

	\begin{proof}[Proof of \cref{thm:so3-monogamy}]
		We make use of the bound of \cref{thm:compact-coset-bound}. To do so, we need to choose a collection of orthogonal permutations: we make use of the trivial option $\pi_i(H_{\theta_k})=H_{\theta_{k+i}}$. Then, it is direct to note by conjugation that $\mu_{H_{\theta_k}}(H_{\theta_k}\cap RE_\varepsilon H_{\theta_{k+i}})=\mu_{H_{0}}(H_{0}\cap RE_\varepsilon H_{\theta_{i}})$, giving the bound
		\begin{align}
			\mfk{w}(\ttt{G}_{N,\varepsilon})\leq\expec{i}\sup_{R\in SO(3)}\sqrt{\mu_{H_{0}}(H_{0}\cap RE_\varepsilon H_{\theta_{i}})}=\frac{1}{N}\sum_{i=0}^{N-1}\sup_{R\in SO(3)}\sqrt{\mu_{H_{0}}(H_{0}\cap RE_\varepsilon H_{\theta_{i}})}.
		\end{align}
		Now, using \cref{lem:so3-overlap},
		\begin{align}
			\mfk{w}(\ttt{G}_{N,\varepsilon})\leq\frac{2}{N}+\frac{2}{N}\sum_{i=1}^{N/2-1}\sqrt{\frac{2}{\pi}\arcsin\sqrt{1-\sqrt{1-\frac{\sin^2\eta}{\sin^2\theta_i}}}}.
		\end{align}
		It remains to simplify this upper bound. First, using $\arcsin x\leq\frac{\pi}{2}x$, $\mfk{w}(\ttt{G}_{N,\varepsilon})\leq\frac{2}{N}+\frac{2}{N}\sum_{i=1}^{N/2-1}\parens*{1-\sqrt{1-\frac{\sin^2\eta}{\sin^2\theta_i}}}^{1/4}$. Next, using $\sqrt{1-x^2}\geq 1-x^2$,
		\begin{align}
			\mfk{w}(\ttt{G}_{N,\varepsilon})\leq\frac{2}{N}+\frac{2}{N}\sum_{i=1}^{N/2-1}\parens*{1-\parens*{1-\frac{\sin^2\eta}{\sin^2\theta_i}}}^{1/4}=\frac{2}{N}+\frac{2}{N}\sum_{i=1}^{N/2-1}\sqrt{\frac{\sin\eta}{\sin\theta_i}}.
		\end{align}
		First, we see that $\sin\eta=2\varepsilon\sqrt{1-\frac{\varepsilon^2}{4}}\parens*{1-\frac{\varepsilon^2}{2}}\leq2\varepsilon$. Also, $\sin x\geq\frac{2}{\pi}x$, so
		\begin{align}
			\mfk{w}(\ttt{G}_{N,\varepsilon})\leq\frac{2}{N}+\frac{2\sqrt{\pi\varepsilon}}{N}\sum_{i=1}^{N/2-1}\frac{1}{\sqrt{\theta_i}}.
		\end{align}
		Approximating the sum by an integral,
		\begin{align}
			\mfk{w}(\ttt{G}_{N,\varepsilon})\leq\frac{2}{N}+\sqrt{\pi\varepsilon}\int_0^{1}\frac{1}{\sqrt{x}}dx=\frac{2}{N}+2\sqrt{\pi\varepsilon}.
		\end{align}
	\end{proof}

\section{Monogamy games for finite groups}\label{sec:finite}

	In this section, we introduce the coset monogamy game on non-abelian, but finite, groups. This serves to bridge the gap between the monogamy game on~$\Z_2^n$ of \cite{coladangelo2021hidden} and the infinite-dimensional monogamy games of the following sections.

	\subsection{Non-abelian coset states}

		Let $G$ be an arbitrary finite group. The corresponding Hilbert space $\mc{H}_G=\spn_{\C}\set*{\ket{g}}{g\in G}$, where the $\ket{g}$ form an orthonormal basis, is finite-dimensional. Hence, we can, as in the original $\Z_2^n$ case, work with a basis of coset states rather than a measure.

		Let $H\leq G$ be a subgroup and fix $g\in G$. In order to have a basis of coset states, the states on the coset $gH$ need to be an orthonormal basis of $\spn_{\C}\set*{\ket{gh}}{h\in H}$. For an abelian group, it is sufficient to use the characters, giving coset states of the form
		\begin{align}
			\ket{gH^\chi}=\frac{1}{\sqrt{|H|}}\sum_{h\in H}\chi(h)\ket{gh}.
		\end{align}
		However, if $H$ is non-abelian, then the irreducible representations are not one-dimensional and the characters only span the space of class functions, not the space of all functions on $H$. In this case, the natural generalization of the irreducible characters are the matrix elements of the irreducible representations. Let $\gamma:H\rightarrow\mc{U}(d_\gamma)$ be an irreducible representation, and write the $(m,n)$ matrix element, for $1\leq m,n\leq d_\gamma$ as $\gamma_{m,n}:H\rightarrow\C$.  The associated coset state is
		\begin{align}
			\ket{gH^\gamma_{m,n}}=\sqrt{\frac{d_\gamma}{|H|}}\sum_{h\in H}\gamma_{m,n}(h)\ket{gh}.
		\end{align}
		The orthonormality of these states is given by the Schur orthogonality relations: if $\gamma,\gamma'$ are either equal or inequivalent irreducible representations, and $1\leq m,n\leq d_\gamma$, $1\leq m',n'\leq d_{\gamma'}$, then
		\begin{align}
			\sum_{h\in H}\overline{\gamma_{m,n}(h)}\gamma_{m',n'}(h)=\frac{|H|}{d_\gamma}\delta_{\gamma,\gamma'}\delta_{m,m'}\delta_{n,n'}.
		\end{align}
		However, unlike the characters, the matrix elements of a representation are not unique in the sense that the equivalent but unequal representations may have different matrix elements. This implies that the coset states for different choices of irreducible representations and coset representatives are not equal up to global phase, as they are in the abelian case. To remedy that we need to fix choices of representatives. Let $\tsf{CS}(H)$ be a set of coset representatives of $G/H$, let $\tsf{Irr}(H)$ be a full set of inequivalent irreducible representations (irreps) of $H$, and define $\tsf{IB}(H)=\set{\gamma_{m,n}}{\gamma\in\tsf{Irr}(H);1\leq m,n\leq d_\gamma}$. With this, we take the basis of coset states to be
		\begin{align}
			\set*{\ket{gH^\gamma_{m,n}}}{g\in\tsf{CS}(H),\gamma_{m,n}\in\tsf{IB}(H)}.
		\end{align}

		An interesting property of non-abelian coset states is that, whereas abelian coset states are equal superpositions of the elements of $gH$, non-abelian states are no longer, as we do not have always that $|\gamma_{m,n}(h)|=1$. We give a simple example to illustrate that.

		Consider $G=D_N$, the dihedral group defined with generators and relations $D_N=\gen{r,t}{r^N=t^2=(tr)^2=1}$, for $N$ odd and its subgroup $H=\gen{r^{N/p},t}\cong D_p$ for $p$ a divisor of $N$. There are two one-dimensional irreps of $H$: the trival $\gamma_0(r^{N/p})=\gamma_0(t)=1$, and the representation $\gamma_{-1}(r^{N/p})=1$, $\gamma_{-1}(t)=-1$. The remaining irreps are two-dimensional, indexed by $k=1,\ldots,\frac{1}{2}(p-1)$: $\gamma_k(r^{N/p})=\exp(\frac{2\pi i k}{p}Z)$, $\gamma_k(t)=X$. Further, we may take $\tsf{CS}(H)=\set*{1,r,\ldots,r^{N/p-1}}$. Then, the coset states are
		\begin{align}
		\begin{split}
			&\ket{r^qH^0}=\frac{1}{\sqrt{2p}}\sum_{j=0}^{p-1}\parens*{\ket{r^{Nj/p+q}}+\ket{tr^{Nj/p-q}}}\\
			&\ket{r^qH^{-1}}=\frac{1}{\sqrt{2p}}\sum_{j=0}^{p-1}\parens*{\ket{r^{Nj/p+q}}-\ket{tr^{Nj/p-q}}}\\
			&\ket{r^qH^k_{m,n}}=\frac{1}{\sqrt{p}}\sum_{j=0}^{p-1}e^{(-1)^{m}2\pi i\frac{jk}{p_I}}\ket{t^{m+n}r^{Nj/p+(-1)^{m+n}q}}.
		\end{split}
		\end{align}
		Note in particular that each of the states $\ket{r^qH^k_{m,n}}$ is a superposition of only half the elements of $r^qH$.

	\subsection{Non-abelian coset monogamy game}

		\begin{definition}
			A \emph{non-abelian coset monogamy game} is a pair $\ttt{G}=\parens*{G,\mc{S}}$, where $G$ is a finite group and $\mc{S}$ is a finite set of subgroups of $G$.

			A \emph{(quantum) strategy} for a coset measure game $\ttt{G}$ is a tuple $\ttt{S}=\parens*{\mc{B},\mc{C},B,C,\rho}$, where $\mc{B}$ and $\mc{C}$ are Hilbert spaces, $B=\set{B^H:\tsf{CS}(H)\rightarrow\mc{B}(\mc{B})}{H\in\mc{S}}$ and $C=\set{C^H:\tsf{IB}(H)\rightarrow\mc{B}(\mc{C})}{H\in\mc{S}}$ are collections of POVMs, and $\rho\in\mc{D}(\mc{H}_G\otimes\mc{B}\otimes\mc{C})$ is a shared density operator.

			Let $\ttt{G}$ be a coset game and $\ttt{S}$ be a strategy for it. The \emph{winning probability} of $\ttt{S}$ is
			\begin{align}
				\mfk{w}_{\ttt{G}}(\ttt{S})=\expec{H\in\mc{S}}\!\!\!\sum_{\substack{g\in\tsf{CS}(H)\\\gamma_{m,n}\in\tsf{IB}(H)}}\!\!\!\Tr\squ{(\ketbra{gH^\gamma_{m,n}}\otimes B^H_g\otimes C^H_{\gamma_{m,n}})\rho},
			\end{align}
			where the expection with respect to $H\in\mc{S}$ is uniform.

			The winning probability of $\ttt{G}$ is $\mfk{w}(\ttt{G})=\sup_{\ttt{S}}\mfk{w}_{\ttt{G}}(\ttt{S})$.
		\end{definition}

		Our main result in this section is an analogous bound to the one of \cite{CV22} on the winning probability of such a general game.

		\begin{theorem}\label{thm:coset-finite}
			Let $\ttt{G}=(G,\mc{S})$ be a coset monogamy game, and take a set of mutually orthogonal permutations $\pi_i:\mathcal{S}\rightarrow\mathcal{S}$ for $i=1,...,|\mc{S}|$, permutations such that $\pi_i\circ\pi_j^{-1}$ has no fixed points unless $i=j$. Then,
			\begin{align}
				\mfk{w}(\ttt{G})\leq\expec{i}\max_{\substack{H\in\mathcal{S}\\\gamma\in\textsf{Irr}(H)}}\sqrt{d_\gamma\frac{|H\cap\pi_i(H)|}{|H|}}\,.
			\end{align}
		\end{theorem}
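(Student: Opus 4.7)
The plan is to follow the TFKW-style argument used in \cite{CV22}, adapted to the non-abelian coset basis. For any strategy $\ttt{S} = (\mc{B}, \mc{C}, B, C, \rho)$, define the acceptance operator
\[
P^H \,=\, \sum_{g \in \tsf{CS}(H),\, \gamma_{m,n} \in \tsf{IB}(H)} \ket{gH^\gamma_{m,n}}\!\bra{gH^\gamma_{m,n}}_A \,\otimes\, B^H_g \,\otimes\, C^H_{\gamma_{m,n}}\,,
\]
so that $\mfk{w}_{\ttt{G}}(\ttt{S}) = \expec{H} \Tr[P^H \rho]$. Orthonormality of the coset basis, a consequence of Schur orthogonality for $H$, combined with $\sum_g B^H_g = I$ and $\sum_{\gamma_{m,n}} C^H_{\gamma_{m,n}} = I$, shows $0 \leq P^H \leq I$, and hence $\mfk{w}_{\ttt{G}}(\ttt{S}) \leq \|\expec{H} P^H\|$. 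By Naimark dilation one may assume $B^H$ and $C^H$ are projective, whereupon each $P^H$ is itself a projection on $\mc{H}_G \otimes \mc{B} \otimes \mc{C}$.

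Next, apply the orthogonal-permutation averaging trick. The property that $\pi_i \pi_j^{-1}$ has no fixed points unless $i = j$ enumerates each ordered pair in $\mc{S}^2$ exactly once as $(H, \pi_i(H))$, and then a refined form of the TFKW bound (parallel to its abelian counterpart established elsewhere in the paper) gives
\[
\|\expec{H} P^H\| \,\leq\, \expec{i} \max_{H \in \mc{S}} \sqrt{\|P^H P^{\pi_i(H)}\|}\,,
\]
with the sharp square root inside the expectation obtained by pairing the pointwise Cauchy--Schwarz estimate $|\langle P^H\psi|P^{\pi_i(H)}\psi\rangle| \leq \|P^H\psi\|\|P^{\pi_i(H)}\psi\|$ with the operator bound $|\langle \psi|P^HP^{\pi_i(H)}|\psi\rangle| \leq \|P^HP^{\pi_i(H)}\|$ and applying AM--GM before averaging over $H$.

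The main technical step is the overlap bound
\[
\|P^H P^{H'}\| \,\leq\, \max_{\gamma \in \tsf{Irr}(H)} d_\gamma \, \frac{|H \cap H'|}{|H|}\,.
\]
Since $P^H$ is a projection, $\|P^H P^{H'}\|$ equals the maximum $|\langle u|v\rangle|$ over unit vectors $\ket{u} \in \mathrm{Im}(P^H)$ and $\ket{v} \in \mathrm{Im}(P^{H'})$. Writing $\ket{u} = \sum_{g, \gamma_{m,n}} \ket{gH^\gamma_{m,n}} \otimes \ket{\eta_{g, \gamma_{m,n}}}$ with $\ket{\eta_{g, \gamma_{m,n}}} \in \mathrm{Im}(B^H_g) \otimes \mathrm{Im}(C^H_{\gamma_{m,n}})$, and similarly for $\ket{v}$ in the $H'$-basis, one expands $\langle u | v\rangle$ using the coset-state inner product
\[
\langle gH^\gamma_{m,n} \,|\, g'(H')^{\gamma'}_{m',n'}\rangle \,=\, \sqrt{\tfrac{d_\gamma d_{\gamma'}}{|H||H'|}} \sum_{h \in H \cap g^{-1}g'H'} \overline{\gamma_{m,n}(h)}\, \gamma'_{m',n'}(g'^{-1}gh)\,.
\]
Matrix-element Schur orthogonality $\sum_{m,n} \gamma_{m,n}(x)\overline{\gamma_{m,n}(y)} = \chi_\gamma(xy^{-1})$ collapses the sums over $(m,n)$ and $(m',n')$, while a Cauchy--Schwarz on the $\ket{\eta}$ factors using $\sum_g B^H_g = I$ and $\sum_{\gamma_{m,n}} C^H_{\gamma_{m,n}} = I$ controls the Bob/Charlie contribution. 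The factor $d_\gamma = \chi_\gamma(e)$ emerges from the identity contribution to the resulting character sum, and the constraint $h \in H \cap g^{-1}g'H'$ (whose cardinality is $|H\cap H'|$ when nonempty) gives the remaining factor. Combining this with the previous display and using $\sqrt{\max_\gamma d_\gamma\cdots} = \max_\gamma \sqrt{d_\gamma\cdots}$ yields the theorem.

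\textbf{Main obstacle.} The primary difficulty lies in the overlap computation. In the abelian case of \cite{CV22}, the one-dimensional characters make the coset-state overlap factor cleanly into $|H \cap g^{-1}g'H'|/\sqrt{|H||H'|}$ independent of the characters. In the non-abelian setting, the matrix-element structure of irreps requires tracking both the $(m,n)$ index sums (handled by Schur's first orthogonality relation) and the irrep dimensions $d_\gamma$ (appearing through $\chi_\gamma(e)$ after character collapse). Decoupling this from Bob's and Charlie's POVM structure and isolating a clean $d_\gamma |H\cap H'|/|H|$ factor, without introducing extraneous dimension factors, is the main technical task.
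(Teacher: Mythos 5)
Your overall architecture (acceptance projectors $P^H$, Naimark dilation, bounding the value by $\norm{\expec{H}P^H}$, orthogonal-permutation averaging, then an overlap estimate) matches the paper's, but your central technical claim is false as stated, and this is a genuine gap rather than a bookkeeping issue. You claim $\norm{P^HP^{H'}}\leq\max_{\gamma\in\tsf{Irr}(H)}d_\gamma\,\frac{|H\cap H'|}{|H|}$ (no square root on the overlap) and compensate by weakening the averaging step to $\expec{i}\max_H\sqrt{\norm{P^HP^{\pi_i(H)}}}$. The un-square-rooted overlap bound already fails for $G=\Z_2$: take $H=\Z_2$, $H'=\{0\}$, let Bob's and Charlie's spaces be $\C$ with deterministic projective answers, so $P^H=\ketbra{+}\otimes1\otimes1$ and $P^{H'}=\ketbra{0}\otimes1\otimes1$. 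Then $\norm{P^HP^{H'}}=\abs{\braket{+}{0}}=1/\sqrt{2}$, while your bound would give $d_\gamma|H\cap H'|/|H|=1/2$. Since, as you yourself note, $\norm{P^HP^{H'}}$ is exactly the maximal overlap $\abs{\braket{u}{v}}$ between unit vectors in the two ranges, no Cauchy--Schwarz/AM--GM manipulation of the $\ket{\eta}$ vectors can bring it below $1/2$ here; the expansion you sketch can only ever establish the bound \emph{with} the square root. (Your modified averaging inequality is true, but only because it is a weakening of the standard lemma; the sharpness you need cannot be recovered there.)

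The fix is to put the square root where the paper does. What your computation naturally controls is $\norm{P^HP^{H'}P^H}=\norm{P^HP^{H'}}^2$, and the correct lemma is $\norm{P^HP^{H'}}\leq\max_\gamma\sqrt{d_\gamma|H\cap H'|/|H|}$; combined with the standard orthogonal-permutation bound $\norm{\expec{H}P^H}\leq\expec{i}\max_H\norm{P^HP^{\pi_i(H)}}$ (Lemma 2 of TFKW, no extra square root) this gives the theorem. Concretely, the paper drops Bob's operators from $P^H$ and Charlie's from $P^K$, uses orthogonality of the surviving measurement projectors to reduce to a fixed pair $(q,\gamma_{m,n})$, observes that $\sum_{\varrho_{i,j}\in\tsf{IB}(K)}\ketbra{qK^\varrho_{i,j}}$ is the diagonal projector $\Pi_{qK}$, and then, since the operators $\Pi_{qK}\ketbra{gH^\gamma_{m,n}}\Pi_{qK}$ have mutually orthogonal ranges over $g$, the norm reduces to $\max_g\braket{gH^\gamma_{m,n}}{\Pi_{qK}}{gH^\gamma_{m,n}}=\frac{d_\gamma}{|H|}\sum_{h\in H\cap g^{-1}qK}\abs{\gamma_{m,n}(h)}^2\leq d_\gamma\frac{|H\cap K|}{|H|}$, using only $\abs{\gamma_{m,n}(h)}\leq1$ (no character collapse of the $(m,n)$ sums is needed, since Charlie's side has already been replaced by $\Pi_{qK}$). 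The square root then appears when converting $\norm{\Pi_{qK}\parens{\sum_g\ketbra{gH^\gamma_{m,n}}}\Pi_{qK}}$ back into the norm of the product of the two projectors.
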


		First, we will make use of a lemma of \cite{tomamichel2013monogamy} to bound the winning probability by overlaps of the measurement operators.

		\begin{lemma}[Lemma 2 in \cite{tomamichel2013monogamy}]\label{lem:sum-bound}
			Let $P_1,\ldots,P_n$ be positive semidefinite operators on a Hilbert space. Then
			\begin{align*}
			\Big\|\sum_{i=1}^nP_i\Big\| &\leq\sum_{i=1}^n\,\max_{j=1,\ldots,n}\,\Big\|\sqrt{\vphantom{P_{\pi_i(j)}}P_j}\sqrt{P_{\pi_i(j)}}\Big\|\;,
			\end{align*}
			where $\pi_1,\ldots,\pi_n$ is any set of mutually orthogonal permutations of $\{1,\ldots,n\}$.
		\end{lemma}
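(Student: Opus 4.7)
The plan is to exploit the factorization $\sum_i P_i = V^*V$, where $V \colon \mathcal{H} \to \mathbb{C}^n \otimes \mathcal{H}$ is the ``column'' operator $V = \sum_{i=1}^n |i\rangle \otimes \sqrt{P_i}$. Since $\|V^*V\| = \|VV^*\|$ for any bounded operator $V$, the quantity I want to bound equals $\|M\|$, where
\begin{align*}
M \,=\, VV^* \,=\, \sum_{i,j=1}^n |i\rangle\langle j| \otimes \sqrt{P_i}\sqrt{P_j}
\end{align*}
is a block matrix on $\mathbb{C}^n \otimes \mathcal{H}$ whose $(i,j)$ entry is the operator $\sqrt{P_i}\sqrt{P_j}$. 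This reformulation is the key conceptual step: bounding the norm of a sum of positive operators becomes bounding the norm of an operator with $n^2$ entries that is amenable to decomposition.

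Next I would use the mutually orthogonal permutations to decompose $M$ into $n$ ``generalized diagonal'' pieces. For each $k$, set
\begin{align*}
D_k \,=\, \sum_{i=1}^n |i\rangle\langle \pi_k(i)| \otimes \sqrt{P_i}\sqrt{P_{\pi_k(i)}}.
\end{align*}
The orthogonality of the $\pi_k$ --- that $\pi_k \circ \pi_{k'}^{-1}$ has no fixed points when $k \neq k'$ --- guarantees that as $(k,i)$ ranges over $\{1,\ldots,n\}^2$, the ordered pairs $(i,\pi_k(i))$ traverse every element of $\{1,\ldots,n\}^2$ exactly once. Therefore $M = \sum_{k=1}^n D_k$, and the triangle inequality gives $\|M\| \leq \sum_k \|D_k\|$.

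To finish I would compute $\|D_k\|$ by pulling out a unitary shift on the $\mathbb{C}^n$ factor. Let $U_k = \sum_i |i\rangle\langle \pi_k(i)| \otimes I_{\mathcal{H}}$; this is unitary because $\pi_k$ is a bijection of $\{1,\ldots,n\}$. A short direct computation then shows $D_k = U_k B_k$, with
\begin{align*}
B_k \,=\, \sum_{i=1}^n |\pi_k(i)\rangle\langle \pi_k(i)| \otimes \sqrt{P_i}\sqrt{P_{\pi_k(i)}},
\end{align*}
which is block-diagonal in the $\mathbb{C}^n$ register (the projectors $|\pi_k(i)\rangle\langle\pi_k(i)|$ are mutually orthogonal as $i$ varies). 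Hence $\|D_k\| = \|B_k\| = \max_i \|\sqrt{P_i}\sqrt{P_{\pi_k(i)}}\|$, and substituting into the triangle-inequality bound from the previous paragraph yields the claim.

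The step requiring the most care is verifying the exactness of the decomposition $M = \sum_k D_k$, i.e.\ that the ``generalized diagonals'' $\{(i,\pi_k(i)) : i,k\}$ cover each position in $\{1,\ldots,n\}^2$ once and only once. By a counting argument it suffices to show that for each fixed $i$ the values $\pi_1(i),\ldots,\pi_n(i)$ are pairwise distinct; but if $\pi_k(i) = \pi_{k'}(i)$ then $\pi_k \circ \pi_{k'}^{-1}$ fixes $\pi_{k'}(i)$, forcing $k = k'$ by orthogonality. Everything else is either the routine identity $\|V^*V\| = \|VV^*\|$ or the standard computation of the norm of a block-diagonal operator.
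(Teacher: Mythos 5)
Your proof is correct, and it is essentially the argument given for Lemma~2 in the cited reference \cite{tomamichel2013monogamy} (the present paper only quotes the lemma without reproving it): factor $\sum_i P_i = V^*V$, pass to $VV^*$, split into generalized diagonals indexed by the orthogonal permutations, and bound each diagonal by a unitary times a block-diagonal operator. The one step that genuinely needs checking --- that the pairs $(i,\pi_k(i))$ tile $\{1,\ldots,n\}^2$ exactly once --- is handled correctly by your fixed-point argument.
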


		Next, we need to bound the overlaps of particular projectors related to the coset states.

		\begin{lemma}\label{lem:overlaps-finite}
			Let $H,K\leq G$. Then, for any $\gamma_{m,n}\in\tsf{IB}(H)$ and $q\in\tsf{CS}(K)$,
			\begin{align}
				\norm[\Big]{\sum_{g\in\tsf{CS}(H)}\ketbra{gH^\gamma_{m,n}}\sum_{\varrho_{i,j}\in\tsf{IB}(K)}\ketbra{qK^\varrho_{i,j}}}\leq\sqrt{d_\gamma\frac{|H\cap K|}{|H|}}\,.
			\end{align}
		\end{lemma}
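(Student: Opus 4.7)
My strategy is to recognize the two sums as orthogonal projections and bound the operator norm of their product via the identity $\|PQ\|^2 = \|PQP\|$. Write $P := \sum_{g\in\tsf{CS}(H)}\ketbra{gH^\gamma_{m,n}}$ and $Q := \sum_{\varrho_{i,j}\in\tsf{IB}(K)}\ketbra{qK^\varrho_{i,j}}$. The states appearing in $P$ have pairwise disjoint supports (different cosets $gH$) and each is normalized by Schur orthogonality, so $P$ is a projection. For $Q$, Schur orthogonality again gives pairwise orthonormality, and the dimension count $\sum_{\varrho\in\tsf{Irr}(K)}d_\varrho^2 = |K|$ identifies $Q$ with the orthogonal projection onto $\spn\set*{\ket{qk}}{k\in K}$.

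Next I would expand
\begin{align*}
PQP \,=\, \sum_{g,g'\in\tsf{CS}(H)}\braket{gH^\gamma_{m,n}}{Q}{g'H^\gamma_{m,n}}\,\ketbra{gH^\gamma_{m,n}}{g'H^\gamma_{m,n}}
\end{align*}
and show the off-diagonal terms vanish: since $Q$ fixes $\ket{g'h'}$ when $g'h'\in qK$ and annihilates it otherwise, each cross-term is a sum of $\braket{gh}{g'h'}$ over pairs with $gh\in qK$ and $g'h'\in qK$, which is zero whenever $g,g'$ are distinct representatives of different $H$-cosets. Thus $PQP = \sum_g \alpha_g\, \ketbra{gH^\gamma_{m,n}}$ with $\alpha_g = \frac{d_\gamma}{|H|}\sum_{h\in H\,:\,gh\in qK}|\gamma_{m,n}(h)|^2$, and since the $\ket{gH^\gamma_{m,n}}$ are orthonormal, $\|PQP\| = \max_g \alpha_g$.

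The last step is to bound each $\alpha_g$. The set $\set*{h\in H}{gh\in qK}$ is either empty or a left coset of $H\cap K$ in $H$: if $h_1,h_2$ both lie in it, then $h_1^{-1}h_2 = (gh_1)^{-1}(gh_2)$ is in both $H$ and $K$. In particular its cardinality is at most $|H\cap K|$. Combined with the elementary observation that a unitary representation has all matrix entries bounded by $1$, i.e.\ $|\gamma_{m,n}(h)|\leq 1$, this yields $\alpha_g \leq \frac{d_\gamma\,|H\cap K|}{|H|}$, hence $\|PQ\| = \sqrt{\|PQP\|} \leq \sqrt{d_\gamma\,|H\cap K|/|H|}$. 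I expect no serious obstacle: the main conceptual step is the vanishing of the off-diagonal terms via the coset structure, while the entrywise unitarity bound is what converts the non-abelian complication (matrix-valued representations with $|\gamma_{m,n}(h)|<1$) into the same qualitative bound one gets in the abelian case, up to the dimension factor $d_\gamma$.
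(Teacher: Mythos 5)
Your proposal is correct and follows essentially the same route as the paper's proof: identify the second sum with the projector $\Pi_{qK}$ onto $\spn_\C\set*{\ket{qk}}{k\in K}$, reduce to the norm of the sandwiched operator, observe that distinct coset states remain orthogonal after projection so only the diagonal terms survive, and bound the diagonal entries using $|\gamma_{m,n}(h)|\leq 1$ together with the fact that $H$ meets a coset of $K$ in at most $|H\cap K|$ elements. The only cosmetic difference is that you work with $\|PQP\|$ and make the coset-of-$H\cap K$ argument explicit, whereas the paper sandwiches in the other order and invokes a norm bound for Hermitian operators with orthogonal ranges.
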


		\begin{proof}
			Note that, since the coset states in each of the sums are orthogonal, we're dealing with the overlap of two projectors. Next, since $\set*{\ket{qK^\varrho_{i,j}}}{\varrho_{i,j}\in\tsf{IB}(K)}$ is an orthonormal basis of $\spn_\C\set*{\ket{qk}}{k\in K}$, we get
			\begin{align}
				\sum_{\varrho_{i,j}\in\tsf{IB}(K)}\ketbra{qK^\varrho_{i,j}}=\sum_{k\in K}\ketbra{qk}=:\Pi_{qK},
			\end{align}
			a diagonal projector. Then, the overlap is
			\begin{align}
				\norm[\Big]{\sum_{g\in\tsf{CS}(H)}\ketbra{gH^\gamma_{m,n}}\sum_{\varrho_{i,j}\in\tsf{IB}(K)}\ketbra{qK^\varrho_{i,j}}}=\norm[\Big]{\Pi_{qK}\sum_{g\in\tsf{CS}(H)}\ketbra{gH^\gamma_{m,n}}\Pi_{qK}}^{1/2}.
			\end{align}
			Now, since $\Pi_{qK}\ket{gH^\gamma_{m,n}}$ is a superposition of basis vectors from $qK\cap gH$, these states are orthogonal, giving that the operators $\Pi_{qK}\ketbra{gH^\gamma_{m,n}}\Pi_{qK}$ over $g\in\tsf{CS}(H)$ are Hermitian with orthogonal ranges. Since $\|\sum_s X_s\|\leq \max_s \|X_s\|$ for any $X_i$ Hermitian with orthogonal ranges,
			\begin{align}
			\begin{split}
				\norm[\Big]{\sum_{g\in\tsf{CS}(H)}\Pi_{qK}\ketbra{gH^\gamma_{m,n}}\Pi_{qK}}&\leq\max_{g\in\tsf{CS}(H)}\norm*{\Pi_{qK}\ketbra{gH^\gamma_{m,n}}\Pi_{qK}}\\
				&=\max_{g\in\tsf{CS}(H)}\braket{gH^\gamma_{m,n}}{\Pi_{qK}}{gH^\gamma_{m,n}}.
			\end{split}
			\end{align}
			We get the result by bounding the inner product
			\begin{align}
				\braket{gH^\gamma_{m,n}}{\Pi_{qK}}{gH^\gamma_{m,n}}=\frac{d_\gamma}{|H|}\sum_{h\in H\cap g^{-1}qK}|\gamma_{m,n}(h)|^2\leq d_\gamma\frac{|H\cap g^{-1}qK|}{|H|}\leq d_\gamma\frac{|H\cap K|}{|H|}.
			\end{align}
		\end{proof}

		Now, we can pass to the proof of the theorem.

		\begin{proof}[Proof of \cref{thm:coset-finite}]
				Let $\ttt{S}=\parens*{\mc{B},\mc{C},B,C,\rho}$ be a strategy for $\ttt{G}$. We may assume that $B$ and $C$ are projective. Writing, for each $H\in\mc{S}$, $\Pi^H=\sum_{g,\gamma_{m,n}}\ketbra{gH^\gamma_{m,n}}\otimes B^H_g\otimes C^H_{\gamma_{m,n}}$, the winning probability may be expressed as $\mfk{w}_{\ttt{G}}(\ttt{S})=\expec{H\in\mc{S}}\Tr(\Pi^H\rho)\leq\norm[\big]{\expec{H\in\mc{S}}\Pi^H}$. Now, we can use \cref{lem:sum-bound}:
				\begin{align}
					\mfk{w}_{\ttt{G}}(\ttt{S})\leq\expec{i}\sup_{H\in\mc{S}}\norm[\big]{\Pi^H\Pi^{\pi_i(H)}},
				\end{align}
				since $(\Pi^H)^2=\Pi^H$. Fixing $H,K\in\mc{S}$, it remains to simplify $\norm*{\Pi^H\Pi^K}$. We upper bound
				\begin{align}
				\begin{split}
					&\Pi^H\leq\sum_{\substack{g\in\tsf{CS}(H)\\\gamma_{m,n}\in\tsf{IB}(H)}}\ketbra{gH^\gamma_{m,n}}\otimes \Id_B\otimes C^H_{\gamma_{m,n}}\\
					&\Pi^K\leq\sum_{\substack{q\in\tsf{CS}(K)\\\varrho_{i,j}\in\tsf{IB}(K)}}\ketbra{qK^\varrho_{i,j}}\otimes B^K_q\otimes\Id_K,
				\end{split}
				\end{align}
				and so we can bound the product
				\begin{align}
				\begin{split}
					\norm{\Pi^H\Pi^K}&\leq\norm[\Big]{\sum_{\substack{g,\gamma_{m,n}\\q,\varrho_{i,j}}}\ket{gH^\gamma_{m,n}}\!\braket{gH^\gamma_{m,n}}{qK^\varrho_{i,j}}\!\bra{qK^\varrho_{i,j}}\otimes B^K_q\otimes C^H_{\gamma_{m,n}}}\\
					&=\max_{q,\gamma_{m,n}}\norm[\Big]{\sum_{\substack{g,\varrho_{i,j}}}\ket{gH^\gamma_{m,n}}\!\braket{gH^\gamma_{m,n}}{qK^\varrho_{i,j}}\!\bra{qK^\varrho_{i,j}}},
				\end{split}
				\end{align}
				since Bob and Charlie's measurement projectors have orthogonal supports. Now, we can use \cref{lem:overlaps-finite} and get $\norm{\Pi^H\Pi^K}\leq\max_{q,\gamma_{m,n}}\sqrt{d_\gamma\frac{|H\cap K|}{|H|}}=\max_{\gamma}\sqrt{d_\gamma\frac{|H\cap K|}{|H|}}$. Hence we get the wanted bound
				\begin{align}
					\mfk{w}_{\ttt{G}}(\ttt{S})\leq\expec{i}\sup_{H,\gamma}\sqrt{d_\gamma\frac{|H\cap \pi_i(H)|}{|H|}}.
				\end{align}
			\end{proof}

	\section{Monogamy games for abelian topological groups }\label{sec:abelian-infinite}

	Here, we generalise the coset monogamy game to arbitrary locally compact Abelian groups, on which Pontryagin duality provides a generalization of the Fourier transform. By replacing the coset states with the appropriate operator-valued measure, we find a family of continuous-variable versions of this game, and via a similar analysis we can bound the winning probability, but as a function of not the number of outcomes but the measurement precision. We also study the version of the game where Alice sends damped coset states and find that the same winning probability bound applies.

	\subsection{Infinite-outcome measurements and coset measures}\label{sec:abelian-coset-measure}

		In the context of the original coset monogamy game, we take the space of classical states to be $\Z_2^n$, the space of $n$-bit strings. We are interested in extending the space of classical states to an arbitrary locally compact (Hausdorff) abelian group $G$, for example $U(1)$, $\Z$, or $\R$. In this way, we will be able to work with games where the answer sets are non-trivially infinite. The Hilbert space corresponding to such a set of classical states is $L^2(G)$, the space of square-integrable functions $G\rightarrow\C$, modulo the subspace of almost-everywhere zero functions. We will write $\ket{\psi}\in L^2(G)$ for a class, and the $\psi:G\rightarrow\C$ for a representative of this class. Also, we write the Haar measure $\mu_G:\scr{B}(G)\rightarrow[0,\infty]$, where $\scr{B}(G)$ is the $\sigma$-algebra of Borel sets, and write the Haar integral as $\int f(g)d_Gg$.

		The subgroups $H\leq G$ we consider will always be closed, since this is necessary and sufficient for the quotient space $G/H$ to be Hausdorff as well. Endowing $H$ with the subspace topology and $G/H$ with the quotient topology, both $H$ and $G/H$ are locally compact abelian groups, so they have Haar measures. In fact, fixing Haar measures on $G$ and $H$, there is a unique Haar measure on $G/H$ such that $\int f(g)d_Gg=\iint f(gh)d_Hhd_{G/H}(gH)$. This holds for a much larger class of topological groups, such as compact groups \cite{Nac76}. As much as possible, we will assume the Haar measures to be normalized such that if $G$ is compact $\mu_G(G)=1$ and else if $G$ is discrete $\mu_G(\{1\})=1$.

		In order to work with games with infinitely many answers, we need to be able to handle quantum measurements with infinitely many outcomes. If the outcomes are discrete, then it is possible to handle the measurement as a POVM in the usual way. However, the set of measurement outcomes will not in general be discrete, in which case problems arise. One way to handle this is to use a basis of unnormalizable states to represent the measurement basis. For example, we can take and unnormalizable basis of $L^2(G)$ as $\set*{\ket{g}}{g\in G}$, where each $\ket{g}$ is a Dirac delta function such that $\langle h|g\rangle=0$ when $g\neq h$ and $\braket{g}{h}=\delta_G(g^{-1}h)$. Of course, none of these are elements of $L^2(G)$, but $\ket{\psi}\in L^2(G)$ can nonetheless be represented as $\ket{\psi}=\int\psi(g)\ket{g}d_Gg$. The drawback of this approach is the difficulty of handling convergence of such expressions, which can make the handling of the probability of a measurement outcome ambiguous. To remedy this, we generalise the measurement with respect to a basis not by a generalised basis, but by an operator-valued generalization of a probability measure. For instance, in the previous example, for a measurable subset $E\subseteq G$, we see the probability of measuring some $g\in E$ on some state $\ket{\psi}\in L^2(G)$ is $\int_E\abs*{\psi(g)}^2d_Gg=\braket{\psi}{\int_E\ketbra{g}d_Gg}{\psi}$.

		\begin{definition}
			Let $X$ be a measurable space, $\mc{H}$ be a Hilbert space, and $\scr{S}_X$ be a $\sigma$-algebra on $X$. An \emph{operator-valued measure} is a map $P:\scr{S}_X\rightarrow\mc{B}(\mc{H})$ that is weakly countably additive, \textit{i.e.}, for all countable collections of disjoint measurable sets $\{E_i\}_{i=1}^\infty\subseteq\scr{S}_X$ and $\ket{\phi},\ket{\psi}\in \mc{H}$, \begin{align}
				\braket{\phi}{P\parens[\Big]{\bigcup_{i=1}^\infty E_i}}{\psi}=\sum_{i=1}^\infty\braket{\phi}{P(E_i)}{\psi}.
			\end{align}
			Additionally, $P$ is
			\begin{itemize}
				\item a \emph{POVM} if $P(E)$ is positive for all $E\in\scr{S}_X$ and $P(X)=\Id$;

				\item \emph{projective} (or spectral) if $P(E\cap F)=P(E)P(F)$ for all $E,F\in \scr{S}_X$;

				\item a \emph{PVM} if it is a projective POVM.
			\end{itemize}
		\end{definition}

		Let $P$ be a POVM measure. For any $\rho\in\mc{D}(L^2(G))$, $E\mapsto\Tr(P(E)\rho)$ is a probability measure, representing the probability of measuring an outcome in $E$. POVM measures constitute a physically natural generalization of usual finite POVMs because of the role they play in the spectral theorem for general Hermitian operators. For any Hermitian operator, $A\in\mc{B}(\mc{H})$, there exists a spectral measure $P:\R\rightarrow\mc{B}(\mc{H})$ such that
		$A=\int\lambda dP(\lambda)$ \cite{BB03}; as such, since any observable in quantum mechanics is represented by a Hermitian operator, the distribution of measurement outcomes is naturally represented by a POVM measure. The role of operator-valued measures in quantum theory has been studied from a variety of perspectives \cite{Hol11, Mor17}.

		Before moving on to the particular measure of interest here, we state a general result on POVM measures that allows us to dilate any POVM measure to a PVM measure.
		\begin{theorem}[Naimark \cite{Pau02}]\label{thm:naimark}
			Let $X$ be a measurable space, $\mc{H}$ be a Hilbert space, and $Q:\scr{S}_X\rightarrow\mc{B}(\mc{H})$ be a POVM measure. Then, there exists a Hilbert space $\mc{K}$, an isometry $V:\mc{H}\rightarrow\mc{K}$, and a PVM measure $P:\scr{S}_X\rightarrow\mc{B}(\mc{H})$ such that, for all measurable $E\subseteq X$,
			\begin{align}
				Q(E)=V^\dag P(E) V.
			\end{align}
		\end{theorem}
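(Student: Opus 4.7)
The plan is to carry out a Kolmogorov-style (or minimal Stinespring-type) dilation, building $\mathcal{K}$ explicitly from $Q$, $\mathcal{H}$, and the $\sigma$-algebra $\mathcal{S}_X$. Concretely, I would begin with the vector space $\mathcal{V}$ of $\mathcal{H}$-valued simple functions on $X$, that is, formal finite linear combinations $f = \sum_i \chi_{E_i}\otimes h_i$ with $E_i\in\mathcal{S}_X$ and $h_i\in\mathcal{H}$, and define a sesquilinear form on $\mathcal{V}$ by
\[
\bigl\langle \chi_E\otimes h,\ \chi_F\otimes h'\bigr\rangle_{\mathcal{V}} \;=\; \bigl\langle h,\, Q(E\cap F)\,h'\bigr\rangle_{\mathcal{H}},
\]
extended bilinearly. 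Positive semi-definiteness of this form follows from positivity of $Q$ together with a standard refinement argument: any finite collection of measurable sets can be split into a common disjoint refinement, so $\langle f,f\rangle_{\mathcal{V}}$ reduces to a sum of non-negative terms $\langle k_j, Q(F_j) k_j\rangle$ over the refinement.

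Next, I would quotient $\mathcal{V}$ by its null subspace $\mathcal{N}=\{f : \langle f,f\rangle_{\mathcal{V}}=0\}$ and complete to obtain a Hilbert space $\mathcal{K}$. Define the candidate isometry by $V h = [\chi_X\otimes h]$; the normalization $Q(X)=\Id$ gives $\|Vh\|^2 = \langle h, Q(X) h\rangle = \|h\|^2$, so $V$ is well-defined and isometric. The dilated measure is defined on simple tensors by $P(E)[\chi_F\otimes h] = [\chi_{E\cap F}\otimes h]$ and extended by linearity. One checks directly that $P(E)$ is bounded (of norm at most $1$), that $P(E)^2 = P(E)$ and $P(E)^* = P(E)$ using $\chi_{E\cap F}\cap\chi_{E\cap F'} = \chi_{E\cap F\cap F'}$, and that $P(E)P(F)=P(E\cap F)$, yielding projectivity. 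The intertwining relation $Q(E) = V^\dag P(E) V$ is then immediate: $\langle h_1, V^\dag P(E) V h_2\rangle = \langle \chi_X\otimes h_1, \chi_{E}\otimes h_2\rangle_{\mathcal{V}} = \langle h_1, Q(E) h_2\rangle$.

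The main obstacle I expect is verifying weak countable additivity of $P$, since the construction only directly gives finite additivity on simple functions. For a disjoint sequence $\{E_n\}\subseteq\mathcal{S}_X$ with union $E$, I would reduce the weak additivity $\langle \xi, P(E)\eta\rangle = \sum_n \langle \xi, P(E_n)\eta\rangle$ on general $\xi,\eta\in\mathcal{K}$ to the case of simple tensors $\xi=[\chi_F\otimes h]$, $\eta=[\chi_{F'}\otimes h']$ by a density argument, and there it follows from weak countable additivity of $Q$ applied to $\langle h, Q((E\cap F\cap F')\cdot)h'\rangle$. A small additional subtlety is that the form only sees $Q$ evaluated on intersections, so one must check that two representations of the same simple function yield equal inner products; this is again a routine refinement argument. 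Finally, the image $V\mathcal{H}\subseteq\mathcal{K}$ is closed, and the whole construction is intrinsic to $Q$, so one obtains a bona fide PVM $P$ with the desired compression property.
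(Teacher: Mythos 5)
Your construction is correct, but note that the paper does not prove this statement at all: it is quoted as Naimark's theorem with a citation to Paulsen, where the standard route (for regular POVMs on a compact Hausdorff space) is to pass to the induced unital positive map $\phi:C(X)\to\mc{B}(\mc{H})$, observe it is automatically completely positive because the domain is commutative, and invoke Stinespring's dilation, recovering the spectral measure from the representation of $C(X)$. Your Kolmogorov/GNS-style construction on $\mc{H}$-valued simple functions is a genuinely different and arguably better-adapted route here, since the paper states the theorem for an abstract measurable space $(X,\scr{S}_X)$ with weak countable additivity, with no topology in sight; the refinement argument for positivity of the form, the definition $P(E)[\chi_F\otimes h]=[\chi_{E\cap F}\otimes h]$, and the intertwining computation are all sound. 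Two points you gesture at but should flesh out: well-definedness of $P(E)$ on the quotient requires showing it is a contraction for the semi-inner product, which follows from $\langle P(E)f,P(E)f\rangle+\langle P(E^{c})f,P(E^{c})f\rangle=\langle f,f\rangle$ with both terms nonnegative by the same refinement argument; and your density argument for weak countable additivity needs the uniform bound on the partial sums, which you get for free because $\sum_{n\le N}P(E_n)=P\bigl(\bigcup_{n\le N}E_n\bigr)$ is a projection of norm at most $1$. With those details written out, the proposal is a complete and self-contained proof, and it buys you the general measurable-space setting directly, at the cost of redoing by hand the dilation machinery that the Stinespring route imports wholesale.
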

		We go into detail about integration with respect to POVM measures in \cref{sec:POVM}.

		The original coset monogamy game, given a subspaces $H\leq\Z_2^n$, utilizes bases indexed by $\Z_2^n/H\times \Z_2^n/H^\perp$. $H^\perp$ is defined using the dot product, which can be seen as the family of homomorphisms $\Z_2^n\rightarrow S^1=\set{z\in\C}{\abs{z}=1}$ indexed by $a\in\Z_2^n$ defined $x\mapsto(-1)^{a\cdot x}$. For a general abelian topoloogical group, this natural family of homomorphisms is given not by $G$ itself but by the dual group $\hat{G}$, which the group of continuous homomorphisms $G\rightarrow S^1$, with product $(\gamma\eta)(g)=\gamma(g)\eta(g)$. $G$ is isomorphic to $\hat{G}$ if $G$ is finite. If $G$ is locally compact abelian, then so is $\hat{G}$. Then, the natural generalization of
		\begin{align}
			H^\perp=\set*{\gamma\in\hat{G}}{\gamma(h)=1\forall h\in H},
		\end{align}
		and we have that $\hat{G}/H^\perp\cong \hat{H}$ homeomorphically by Pontryagin duality. Thus, the basis of coset states should generalise to an operator-valued measure on $G/H\times\hat{H}$. An important tool for working with the function spaces on these groups is the Fourier transform.

		\begin{definition}
			Let $G$ be a locally compact abelian group. The \emph{Fourier transform} is the operator $\mc{F}_G\in\mc{B}(L^2(G),L^2(\hat{G}))$ defined on $\ket{\psi}\in L^2(G)$ continuous with compact support as
			\begin{align}
				(\mc{F}_G\ket{\psi})(\gamma)=\int\psi(g)\overline{\gamma(g)}d_Gg,
			\end{align}
			and extended uniquely to all $L^2(G)$ by continuity.
		\end{definition}

		There exists a unique scaling of the Haar measure on $\hat{G}$, called the dual measure, that makes the Fourier transform an isometry. Using that measure, the inverse Fourier transform on $\ket{\psi}\in L^2(\hat{G})$ continuous with compact support is $(\mc{F}_G^{-1}\ket{\psi})(g)=\int\psi(\gamma)\gamma(g)d_{\hat{G}}g$, which again extends to the whole space by continuity.

		Finally, we need the concept of a Fourier transform with respect to a closed subgroup $H\leq G$. For $\ket{\psi}\in L^2(G)$, we can simply take $\mc{F}_H\ket{\psi}=\mc{F}_H\ket{\psi|_H}:\hat{H}\rightarrow\C$. In general, $\ket{\psi|_H}$ is not always in $L^2(H)$, but in the cases we consider, this will be true almost everywhere.

		\begin{definition}\label{def:coset-measure-abelian}
			Let $G$ be a locally compact abelian group and $H\leq G$ be a closed subgroup. The \emph{coset operator measure} is the map $A^H:\scr{B}(G/H)\otimes\scr{B}(\hat{H})\rightarrow\mathcal{B}(L^2(G))$ defined, for $E\subseteq G/H\times\hat{H}$ measurable and $\ket{\phi},\ket{\psi}\in L^2(G)$ as
			\begin{align}
				\braket{\phi}{A^H(E)}{\psi}=\int_E\overline{(\mc{F}_H\ket{\phi\circ g})(\gamma)}(\mc{F}_H\ket{\psi\circ g})(\gamma)d_{_{G/H\times\hat{H}}}(gH,\gamma),
			\end{align}
			where $\ket{\psi\circ g}(h)=\psi(gh)$.
		\end{definition}

		Note that the integral defining $\braket{\phi}{A^H(E)}{\psi}$ is well-defined, since we have that $\mc{F}_H(\ket{\psi\circ gh})(\gamma)=\gamma(h)\mc{F}_H(\ket{\psi\circ g})(\gamma)$ for all $h\in H$, so $\overline{\mc{F}_H(\ket{\phi\circ g})(\gamma)}\mc{F}_H(\ket{\psi\circ g})(\gamma)$ does not depend on the choice of coset representative. Also, as $\int f(g)d_Gg=\iint f(gh)d_Hhd_{G/H}(gH)$, $\ket{\psi\circ g|_H}$ is in $L^2(H)$ $\mu_{G/H}$-almost everywhere, so the integrand is in fact integrable. From the definition, it is direct that $\braket{\phi}{A^H(E)}{\psi}$ is in fact linear in $\ket{\psi}$ and antilinear in $\ket{\phi}$. We can show that it is bounded and that it is a PVM simultaneously.

		\begin{theorem}
			$A^H$ is a PVM measure.
		\end{theorem}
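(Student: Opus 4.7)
The plan is to realize $A^H$ as the pullback of the canonical ``multiplication by indicator'' PVM on $L^2(G/H\times\hat H)$ through a Zak-type unitary, after which the PVM axioms transport for free. From the definition we already have that $\langle\phi|A^H(E)|\psi\rangle$ is conjugate-symmetric in $(\phi,\psi)$ and that the diagonal
\[
\braket{\psi}{A^H(E)}{\psi}\,=\,\int_E\abs*{(\mc{F}_H\ket{\psi\circ g})(\gamma)}^2\,d_{G/H\times\hat H}(gH,\gamma)
\]
is a non-negative integral that is well defined (the phase ambiguity in the choice of representative $g$ cancels in $|\cdot|^2$). Hence once boundedness is in hand, $A^H(E)$ will automatically be a positive self-adjoint operator, and only boundedness, completeness ($A^H(G/H\times\hat H)=\Id$), projectivity, and weak countable additivity need to be checked.

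First, I would fix a Borel cross-section $s\colon G/H\to G$ of the quotient map (available by Mackey's Borel section theorem in the second-countable setting) and define the Zak-type map
\[
Z_H\colon L^2(G)\to L^2(G/H\times\hat H)\,,\qquad (Z_H\ket{\psi})(gH,\gamma)\,=\,(\mc{F}_H\ket{\psi\circ s(gH)})(\gamma)\;.
\]
By Weil's integration formula on $G/H$ composed with Plancherel on $H$ (which is exactly the pair of identities the dual measure on $\hat H$ was normalized to satisfy),
\[
\int_G|\psi(g)|^2\,d_Gg\,=\,\int_{G/H}\!\!\int_H|\psi(s(gH)h)|^2\,d_Hh\,d_{G/H}(gH)\,=\,\int_{G/H}\!\!\int_{\hat H}|(Z_H\ket{\psi})(gH,\gamma)|^2\,d_{\hat H}\gamma\,d_{G/H}(gH)\,,
\]
so $Z_H$ is an isometry. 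Surjectivity follows by reversing the construction fiberwise via the inverse Fourier transform on $\hat H$, so $Z_H$ is a unitary.

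Then, expanding the definition of $A^H(E)$ using polarization yields directly
\[
\braket{\phi}{A^H(E)}{\psi}\,=\,\int_{G/H\times\hat H}\chi_E(gH,\gamma)\,\overline{(Z_H\ket{\phi})(gH,\gamma)}\,(Z_H\ket{\psi})(gH,\gamma)\,d_{G/H\times\hat H}(gH,\gamma)\,=\,\braket{Z_H\phi}{M_{\chi_E}Z_H\psi}\,,
\]
i.e.\ $A^H(E)=Z_H^\dag M_{\chi_E}Z_H$, where $M_{\chi_E}$ is multiplication by $\chi_E$ on $L^2(G/H\times\hat H)$. The family $\set*{M_{\chi_E}}{E\in\scr{B}(G/H\times\hat H)}$ is manifestly a PVM: each $M_{\chi_E}$ is an orthogonal projection, $M_{\chi_E}M_{\chi_F}=M_{\chi_{E\cap F}}$, $M_{\chi_{G/H\times\hat H}}=\Id$, and weak countable additivity follows from dominated convergence on the integrals. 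Conjugating by the unitary $Z_H$ transports all these properties to $A^H$, proving the theorem.

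The main technical obstacle is the justification of the cross-section and the iterated Plancherel identity in the generality claimed. Both are classical: a Borel cross-section exists by Mackey's theorem under the standard second-countability assumption, and the two-step identity is Weil's formula combined with Plancherel for $H$ with its dual measure. The cross-section ambiguity is benign because, as already noted, $|(Z_H\ket{\psi})(gH,\gamma)|$ and hence $A^H$ itself do not depend on the choice of $s$; a different section merely multiplies $Z_H$ by a unitary phase in each fiber, which cancels in the expression $Z_H^\dag M_{\chi_E}Z_H$.
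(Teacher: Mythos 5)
Your proposal is correct in substance, but it takes a genuinely different route from the paper. The paper verifies the PVM axioms directly from the defining sesquilinear form: positivity is read off the diagonal, boundedness and the identity $A^H(G/H\times\hat H)=\Id$ come from Weil's quotient integration formula plus Plancherel on $H$, weak countable additivity follows from monotone convergence, and projectivity is proved by an explicit computation, $\bigl(A^H(E)\ket{\psi}\bigr)(g)=\int_{(E)_{gH}}(\mc{F}_H\ket{\psi\circ g})(\gamma)\,d_{\hat H}\gamma$, combined with Fourier inversion on a dense subspace --- crucially, no choice of coset representatives beyond the pointwise one (whose phase cancels) is ever needed. You instead diagonalize: you build a Zak-type map $Z_H:L^2(G)\to L^2(G/H\times\hat H)$ from a Borel cross-section and identify $A^H(E)=Z_H^\dag M_{\chi_E}Z_H$, so that all axioms transport from the multiplication PVM at once; this is more conceptual, makes completeness and projectivity simultaneous and transparent, and essentially exhibits the direct-integral decomposition $L^2(G)\cong L^2(G/H\times\hat H)$ underlying the coset basis. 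The price is two ingredients the paper's argument avoids. First, the Borel section (Mackey) requires second countability or a comparable hypothesis, which the paper only imposes later (in the game bound), not in this theorem; your proof is therefore slightly less general as stated, though adequate for every group used in the paper. Second, projectivity genuinely requires $Z_H$ to be unitary, not merely isometric: with an isometry alone, $A^H(E)A^H(F)=Z_H^\dag M_{\chi_E}(Z_HZ_H^\dag)M_{\chi_F}Z_H$ and the interposed projection $Z_HZ_H^\dag$ need not commute with the multiplications, so your one-line surjectivity claim (``reverse the construction fiberwise'') carries real weight and should be fleshed out --- e.g.\ by exhibiting the inverse $\psi(g)=(\mc{F}_H^{-1}f(gH,\cdot))(s(gH)^{-1}g)$ and checking joint measurability and norm preservation via the Borel isomorphism $G\cong G/H\times H$ induced by the section together with Weil's formula. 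With those two points made precise, your argument is a valid and arguably cleaner alternative to the paper's hands-on verification.
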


		\begin{proof}
			First, we have that, for any measurable $E\subseteq G/H\times\hat{H}$ and $\ket{\psi}\in L^2(G)$,
			\begin{align}
				\braket{\psi}{A^H(E)}{\psi}=\int_E \abs*{(\mc{F}_H\ket{\psi\circ g})(\gamma)}^2d_{G/H\times\hat{H}}(gH,\gamma)\geq 0,
			\end{align}
			so $A^H(E)$ is positive. Next, if $E\subseteq E'$, $\braket{\psi}{A^H(E)}{\psi}\leq \braket{\psi}{A^H(E')}{\psi}$. In particular, $\braket{\psi}{A^H(E)}{\psi}\leq \braket{\psi}{A^H(G/H\times\hat{H})}{\psi}$. We have, by Plancherel's theorem, that
			\begin{align}
			\begin{split}
				\braket{\psi}{A^H(G/H\times\hat{H})}{\psi}&=\int_{G/H}\int_{\hat{H}} \abs*{(\mc{F}_H\ket{\psi\circ g})(\gamma)}^2d_{\hat{H}}\gamma d_{G/H}(gH)\\
				&=\int_{G/H}\int_H\abs*{\ket{\psi\circ g}(h)}^2d_Hhd_{G/H}(gH)\\
				&=\int_{G/H}\int_H\abs*{\psi(gh)}^2d_Hhd_{G/H}(gH)\\
				&=\int_G|\psi(g)|^2d_Gg=\braket{\psi}.
			\end{split}
			\end{align}
			Thus, $A^H(G/H\times\hat{H})$ is bounded, and therefore $A^H(E)$ is bounded.

			To get that $A^H$ is a POVM, we need next that it is weakly countably additive. Let $\{E_i\}_{i=1}^\infty$ be a countable collection of disjoint measurable sets in $G/H\times\hat{H}$. Then, by monotone convergence
			\begin{align}
			\begin{split}
				\braket{\psi}{A^H\parens[\Big]{\bigcup_{i=1}^\infty E_i}}{\psi}&=\int_{\bigcup_{i=1}^\infty E_i}\abs*{(\mc{F}_H\ket{\psi\circ g})(\gamma)}^2d_{G/H\times\hat{H}}(gH,\gamma)\\
				&=\int\sum_{i=1}^\infty\chi_{E_i}(gH,\gamma)\abs*{(\mc{F}_H\ket{\psi\circ g})(\gamma)}^2d_{G/H\times\hat{H}}(gH,\gamma)\\
				&=\sum_{i=1}^\infty\int\chi_{E_i}(gH,\gamma)\abs*{(\mc{F}_H\ket{\psi\circ g})(\gamma)}^2d_{G/H\times\hat{H}}(gH,\gamma)\\
				&=\sum_{i=1}^\infty\braket{\psi}{A^H(E_i)}{\psi}.
			\end{split}
			\end{align}
			Using the polarization identity gives the full form of weak countable additivity.

			Finally, we want to show that $A^H$ is projective. To see that, note $(A^H\ket{\psi})(g)=\int_{(E)_{gH}}\parens*{\mc{F}_H\ket{\psi\circ g}}(\gamma)d_{\hat{H}}\gamma$ almost everywhere, where $(E)_{gH}=\set{\gamma\in\hat{H}}{(gH,\gamma)\in E}$ is the cross-section of $E$ at $gH$. In fact, for any $\ket{\phi}\in L^2(G)$ continuous with compact support, it follows that
			\begin{align}
			\begin{split}
				\braket{\phi}{A^H(E)}{\psi}&=\int_E\int_H\overline{\phi(gh)}\gamma(h)d_Hh(\mc{F}_H\ket{\psi\circ g})(\gamma)d_{_{G/H\times\hat{H}}}(gH,\gamma)\\
				&=\int_{\hat{H}}\chi_{(E)_{gH}}(\gamma)\int_{G/H}\int_H\overline{\phi(gh)}(\mc{F}_H\ket{\psi\circ gh})(\gamma)d_Hhd_{G/H}gHd_{\hat{H}}\gamma\\
				&=\int_{(E)_{gH}}\int_G\overline{\phi(g)}(\mc{F}_H\ket{\psi\circ g})(\gamma)d_Ggd_{\hat{H}}\gamma\\
				&=\int_G\overline{\phi(g)}\int_{(E)_{gH}}\parens*{\mc{F}_H\ket{\psi\circ g}}(\gamma)d_{\hat{H}}\gamma d_Gg.
			\end{split}
			\end{align}
			Thus, for any measurable $E,F$, we get that $(A^H(E)A^H(F)\ket{\psi})(g)=\int_{(E)_{gH}}\parens*{\mc{F}_H(A^H(F)\ket{\psi}\circ g)}(\gamma)d_{\hat{H}}\gamma$ and since, for $h\in H$,
			\begin{align}
			\begin{split}
				(A^H(F)\ket{\psi}\circ g)(h)&=(A^H(F)\ket{\psi})(gh)=\int_{(F)_{ghH}}(\mc{F}_H\ket{\psi\circ gh})(\gamma)d_{\hat{H}}\gamma\\
				&=\int_{(F)_{gH}}\gamma(h)(\mc{F}_H\ket{\psi\circ g})(\gamma)d_{\hat{H}}\gamma\\
				&=\mc{F}_H^{-1}\parens{\chi_{(F)_{gH}}\mc{F}_H\ket{\psi\circ g}}(h),
			\end{split}
			\end{align}
			if $\mc{F}_H\ket{\psi\circ g}$ is continuous with compact support. By Fourier inversion, this set is dense, so this identity holds for all $\ket{\psi}$. As such,
			\begin{align}
			\begin{split}
				(A^H(E)A^H(F)\ket{\psi})(g)&=\int_{(E)_{gH}}\parens*{\mc{F}_H\parens*{\mc{F}_H^{-1}\parens{\chi_{(F)_{gH}}\mc{F}_H\ket{\psi\circ g}}}}(\gamma)d_{\hat{H}}\gamma\\
				&=\int_{(E)_{gH}}\chi_{(F)_{gH}}(\gamma)\parens{\mc{F}_H\ket{\psi\circ g}}(\gamma)d_{\hat{H}}\gamma\\
				&=\int_{(E\cap F)_{gH}}\parens{\mc{F}_H\ket{\psi\circ g}}(\gamma)d_{\hat{H}}\gamma=(A^H(E\cap F)\ket{\psi})(g).
			\end{split}
			\end{align}
		\end{proof}

	\subsection{The coset measure game}\label{sec:abelian-coset-game}

	\begin{definition}
		An \emph{abelian coset measure monogamy game} is a tuple $\ttt{G}=\parens*{G,\mc{S},E,F}$, where $G$ is a locally compact Hausdorff abelian group, $\mc{S}$ is a finite set of closed subgroups of $G$, and $E\subseteq G$ and $F\subseteq\hat{G}$ are symmetric neighborhoods of identity, \emph{i.e.}\ $E=E^{-1}$ and $1\in E$ and the same for $F$.

		A \emph{(quantum) strategy} for a coset measure game $\ttt{G}$ is a tuple $\ttt{S}=\parens*{\mc{B},\mc{C},B,C,\rho}$, where $\mc{B}$ and $\mc{C}$ are Hilbert spaces, $B=\set{B^H:\scr{B}(G/H)\rightarrow\mc{B}(\mc{B})}{H\in\mc{S}}$ and $C=\set{C^H:\scr{B}(\hat{H})\rightarrow\mc{B}(\mc{C})}{H\in\mc{S}}$ are collections of POVM measures, and $\rho\in\mc{D}(L^2(G)\otimes\mc{B}\otimes\mc{C})$.

		Let $\ttt{G}$ be a coset measure game and $\ttt{S}$ be a strategy for it. The \emph{winning probability} of $\ttt{S}$ is
		\begin{align}
			\mfk{w}_{\ttt{G}}(\ttt{S})=\expec{H\in\mc{S}}\Tr\squ*{(A^H\otimes B^H\otimes C^H)(E_H)\rho},
		\end{align}
		where $E_H=\set{(gH,\gamma|_H,egH,(\varphi\gamma)|_H)}{g\in G,\gamma\in\hat{G},e\in E,\varphi\in F}\in G/H\times\hat{H}\times G/H\times\hat{H}$ and the expection with respect to $H\in\mc{S}$ is uniform.

		The winning probability of $\ttt{G}$ is $\mfk{w}(\ttt{G})=\sup_{\ttt{S}}\mfk{w}_{\ttt{G}}(\ttt{S})$.
	\end{definition}

	\begin{theorem}\label{thm:bound-abelian}
		Let $\ttt{G}=(G,\mc{S},E,F)$ be a coset measure game for second-countable $G$, and let $\pi^i:\mc{S}\rightarrow\mc{S}$ for $i=1,\ldots,|\mc{S}|$ be a set of orthogonal permutations, \emph{i.e.}\ $\pi_i\circ\pi_j^{-1}$ has no fixed points unless $i=j$. Then, the winning probability of $\ttt{G}$ is bounded above as
		\begin{align}
			\mfk{w}(\ttt{G})\leq\expec{i}\sup_{H\in\mc{S},g\in G}\sqrt{\mu_H\parens[\big]{H\cap gE\pi_i(H)}\mu_{\hat{H}}\parens[\big]{F}}.
		\end{align}
	\end{theorem}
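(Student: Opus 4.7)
The plan is to lift the template of \cref{thm:coset-finite} from the finite setting to the measure-theoretic one. Fix a strategy $\ttt{S}=(\mc{B},\mc{C},B,C,\rho)$. By applying Naimark's dilation (\cref{thm:naimark}) to Bob's and Charlie's POVM measures, we may assume $B^H$ and $C^H$ are themselves PVMs on enlarged Hilbert spaces, after replacing $\rho$ by its isometric dilation. Since $A^H$ is projective and the three measures act on distinct tensor factors, the operator $P^H := (A^H\otimes B^H\otimes C^H)(E_H)$ is a projection, so
\[\mfk{w}_{\ttt{G}}(\ttt{S}) = \expec{H}\Tr(P^H\rho) \,\leq\, \norm[\big]{\expec{H} P^H}\;.\]
Invoking the sum-bound \cref{lem:sum-bound} with the family of orthogonal permutations $\pi_i$ then reduces the problem to bounding $\norm{P^H P^{\pi_i(H)}}$ uniformly in $i$ and $H$.

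For distinct $H, K\in\mc{S}$, I would upper-bound $P^H$ by relaxing the constraint on Bob's outcome and $P^K$ by relaxing the constraint on Charlie's. Since Bob's and Charlie's PVMs project onto orthogonal ranges for distinct outcomes, the resulting overlap collapses to a supremum over fixed outcome points of a coset-measure overlap of the form $\norm{A^H(G/H\times F_{\gamma})\,A^K(E_{q}\times \hat K)}$, where $F_{\gamma}\subseteq\hat H$ is a translated $F$-neighborhood and $E_q\subseteq G/K$ a translated $E$-neighborhood. The crux is then an abelian overlap lemma: noting that $A^K(E_q\times\hat K)$ reduces to multiplication by the indicator of $qEK\subseteq G$ (the $\hat K$-integration collapses the Fourier factor via Plancherel, just as in the proof of \cref{lem:Rn-overlap}), one expands $\norm{A^H(G/H\times F_\gamma)\,A^K(E_q\times\hat K)\ket\psi}^2$ using \cref{def:coset-measure-abelian}. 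The inner layer becomes the Fourier transform over $H$ of $\psi$ restricted to $gH\cap qEK$; Cauchy--Schwarz on this restricted Fourier integral yields a factor $\mu_H(H\cap g^{-1}qEK)$, while restricting the outer $\hat H$-integration to $F_\gamma$ and applying Plancherel on the free $\hat H$-factor yields $\mu_{\hat H}(F)$. Taking the supremum over $g\in G$ gives the desired bound on $\norm{P^H P^K}$.

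The main obstacle is justifying this overlap manipulation rigorously: both the Fubini-type interchange between the outer $G/H\times\hat H$ integration and the inner Fourier integration over $H$, and the identification of $A^K(E_q\times\hat K)$ as a multiplication projector, rely on the integration theory for operator-valued measures developed in \cref{sec:POVM}. The second-countability hypothesis on $G$ enters here precisely to ensure that the relevant product $\sigma$-algebras are countably generated, so that Fubini applies cleanly to PVM-valued integrals. Once the overlap estimate is in place, plugging it back into the Tomamichel reduction and taking the expectation over $i$ directly yields the stated inequality.
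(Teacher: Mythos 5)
Your proposal is correct and follows essentially the same route as the paper's proof: Naimark dilation to make Bob's and Charlie's measures projective, the Tomamichel sum-bound (\cref{lem:sum-bound}) with orthogonal permutations, relaxation of Bob's constraint in $\Pi^H$ and Charlie's in $\Pi^K$ combined with the PVM-integral norm bound (\cref{lem:int-bound}), and then the overlap estimate of \cref{lem:abelian-overlap} via the identification of $A^K(EqK/K\times\hat K)$ as a multiplication projector and Cauchy--Schwarz, yielding the factors $\mu_H(H\cap gEK)$ and $\mu_{\hat H}(F)$. The only cosmetic difference is your attribution of the second-countability hypothesis to Fubini; in the paper it is used to make $L^2(G)$ separable, which is what \cref{lem:int-bound} requires, but this does not affect the validity of your argument.
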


	\begin{lemma}\label{lem:int-bound}
		Let $\mc{H}$ be a Hilbert space and $\mc{K}$ be a separable Hilbert space, $X$ be a measurable space, $P:\scr{S}_X\rightarrow\mc{B}(\mc{H})$ be a POVM measure, and $F:X\rightarrow\mc{B}(\mc{K})$ be a bounded weakly measurable function. Then,
		\begin{align}
			\norm[\Big]{\int F\otimes dP}\leq\sup_{x\in X}\norm{F(x)}.
		\end{align}
	\end{lemma}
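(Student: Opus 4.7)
The plan is to reduce to the case where $P$ is a PVM via Naimark's dilation (\cref{thm:naimark}), and then invoke the spectral-theoretic picture of PVMs as multiplication operators on a direct-integral Hilbert space, where the integral $\int F \otimes dP$ becomes literal fiber-wise multiplication by $F$.

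First, by \cref{thm:naimark}, write $P(E) = V^\dag \tilde P(E) V$ for some Hilbert space $\mc{H}'$, isometry $V : \mc{H} \to \mc{H}'$, and PVM $\tilde P : \scr S_X \to \mc{B}(\mc{H}')$. Using the definition of integration against an operator-valued measure from \cref{sec:POVM} (which is bilinear in $F$ and $P$, and transforms under one-sided compressions of $P$ by $V^\dag \cdot V$), one has the identity
\begin{equation*}
\int F \otimes dP \;=\; (I_{\mc{K}} \otimes V^\dag)\Bigl(\int F \otimes d\tilde P\Bigr)(I_{\mc{K}} \otimes V).
\end{equation*}
Since $V$ is an isometry, both outer factors have operator norm $1$, so it suffices to prove the claimed bound for the PVM $\tilde P$ in place of $P$.

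Next, for the PVM $\tilde P$ I would apply the spectral multiplicity theorem to identify $\mc{H}'$ unitarily with a direct integral $\int_X^{\oplus} \mc{H}_x \, d\mu(x)$ for some measure $\mu$ and fiber Hilbert spaces $\mc{H}_x$, in such a way that $\tilde P(E)$ is multiplication by $\chi_E$. Separability of $\mc{K}$ then gives a canonical identification $\mc{K} \otimes \int_X^{\oplus} \mc{H}_x \, d\mu(x) \cong \int_X^{\oplus} (\mc{K} \otimes \mc{H}_x) \, d\mu(x)$. Under these identifications, the integral $\int F \otimes d\tilde P$ is the decomposable operator whose fiber at $x$ is $F(x) \otimes I_{\mc{H}_x}$. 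The operator norm of a decomposable operator is the essential supremum of its fiber norms, so
\begin{equation*}
\Bigl\|\int F \otimes d\tilde P\Bigr\| \;=\; \operatorname*{ess\,sup}_{x \in X} \|F(x) \otimes I_{\mc{H}_x}\| \;=\; \operatorname*{ess\,sup}_{x \in X} \|F(x)\| \;\leq\; \sup_{x \in X} \|F(x)\|,
\end{equation*}
which combined with the Naimark step gives the claimed inequality.

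The main obstacle is bookkeeping around the direct-integral decomposition. The spectral multiplicity theorem is cleanest when $\mc{H}'$ is separable; in our setting $\mc{H}$ is not assumed separable, so one must either reduce to the $\tilde P$-invariant subspace generated by a chosen pair of test vectors $\xi, \eta \in \mc{K} \otimes \mc{H}'$ (which is separable because $\mc K$ is) and bound matrix elements $\langle \eta | (\int F \otimes d\tilde P) | \xi \rangle$ there, or else sidestep direct integrals entirely. A variant of the latter that I would fall back on if needed: expand a test vector $\psi \in \mc{K} \otimes \mc{H}$ in an orthonormal basis $\{e_i\}$ of $\mc{K}$ as $\psi = \sum_i e_i \otimes \psi_i$, use the scalar complex measures $\mu_{i,j}(E) = \langle \psi_i | P(E) | \psi_j \rangle$ to write $\langle \psi | (\int F \otimes dP) | \psi \rangle = \sum_{i,j} \int \langle e_i | F(x) | e_j \rangle \, d\mu_{i,j}(x)$, and estimate this by $\sup_x \|F(x)\|$ times $\sum_{i,j} |\mu_{i,j}|_{\mathrm{tv}}(X)$ controlled via Cauchy--Schwarz and positivity of $P$, thereby recovering the bound without invoking any multiplicity theory.
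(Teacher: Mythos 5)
Your Naimark step is sound: since $F$ is bounded and weakly measurable it is weakly operator integrable against the dilated PVM as well, and matching matrix elements on product vectors gives $\int F\otimes dP=(\Id_{\mc{K}}\otimes V^\dag)\bigl(\int F\otimes d\tilde P\bigr)(\Id_{\mc{K}}\otimes V)$. The gap is in what follows. The spectral multiplicity/direct-integral theorem you invoke requires a separable Hilbert space and, in its usual formulations, a standard Borel (or at least countably generated) measure structure; here $\mc{H}$ is an arbitrary Hilbert space, $X$ is an arbitrary measurable space, and the dilation space inherits no separability. Your proposed repair --- restrict to the $\tilde P$-cyclic subspace generated by a pair of test vectors, ``separable because $\mc{K}$ is'' --- is justified incorrectly: separability of $\mc{K}$ is irrelevant to this subspace, which is of the form $L^2(X,\mu_\xi)$ for the scalar measure $\mu_\xi(E)=\langle\xi,\tilde P(E)\xi\rangle$, and this need not be separable when the $\sigma$-algebra $\scr{S}_X$ is not countably generated modulo null sets. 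The reduction can be repaired (pass to the countably generated sub-$\sigma$-algebra making the countably many functions $x\mapsto\langle e_i, F(x)e_j\rangle$ measurable, use a monotone-class argument to control the cyclic subspace, then diagonalize the restricted PVM on the measure algebra), but none of that is in your write-up, and it is far heavier machinery than the statement needs.

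The fallback you offer does not close the gap, because it is quantitatively wrong. Term-by-term one only gets $\bigl|\int\langle e_i,F(x)e_j\rangle\,d\mu_{i,j}\bigr|\leq\sup_x\|F(x)\|\,|\mu_{i,j}|(X)$ with $|\mu_{i,j}|(X)\leq\|\psi_i\|\,\|\psi_j\|$, so summing over $i,j$ yields at best $\sup_x\|F(x)\|\bigl(\sum_i\|\psi_i\|\bigr)^2$; the $\ell^1$ quantity $\sum_i\|\psi_i\|$ is not controlled by $\|\psi\|$ and is typically infinite. Taking absolute values entrywise destroys exactly the cancellation carried by the operator structure of $F$: for $F\equiv U$ a unitary on an infinite-dimensional $\mc{K}$, $\int F\otimes dP=U\otimes\Id$ has norm $1$ while the sum of total variations diverges. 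Note that the paper's own proof avoids both issues and is elementary: approximate $F$ by simple functions with $\|F_n(x)\|\leq\sup_y\|F(y)\|+\varepsilon$ (\cref{lem:simp-limit}), pass to the limit weakly (\cref{thm:int-simp-limit}), and for simple functions use that disjointness of the $E_l$, positivity, and $\sum_l P(E_l)\leq\Id$ give $\|\sum_l M_l\otimes P(E_l)\|\leq\sup_l\|M_l\|$ --- no dilation, no multiplicity theory, and valid for arbitrary $\mc{H}$ and arbitrary measurable $X$. If you want to salvage your strategy, the honest version is essentially this simple-function argument applied after the (unnecessary) Naimark step.
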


	\begin{proof}
		Let $\varepsilon>0$. Using \cref{lem:simp-limit}, there exists a sequence of simple functions $\parens{F_n=\sum_lM^n_l\chi_{E^n_l}}$ that converges strongly pointwise to $F$ and $\norm{F_n(x)}\leq\sup_y\norm{F(y)}+\varepsilon$. Since this implies $\sup_n\norm*{\int F_n\otimes dP}<\infty$, \cref{thm:int-simp-limit} gives that $\int F_n\otimes dP\rightarrow\int F\otimes dP$ weakly. We know that there exist $\ket{u},\ket{v}\in\mc{K}\otimes\mc{H}$ with $\norm{\ket{u}},\norm{\ket{v}}\leq 1$ such that $\abs*{\braket{u}{\int F\otimes dP}{v}}>\norm*{\int F\otimes dP}-\varepsilon$. Then, using the weak convergence, there exists $N\in\N$ such that $\abs*{\braket{u}{\int (F_n-F)\otimes dP}{v}}<\varepsilon$ $\forall n\geq N$, so
		\begin{align}
		\begin{split}
			\norm[\Big]{\int F\otimes dP}&<\abs[\Big]{\braket{u}{\int F\otimes dP}{v}}+\varepsilon\\
			&<\abs[\Big]{\braket{u}{\int F_n\otimes dP}{v}}+2\varepsilon\\
			&\leq\norm[\Big]{\sum_l M^n_l\otimes P(E^n_l)}+2\varepsilon\\
			&\leq\sup_l\norm{M^n_l}+2\varepsilon\\
			&<\sup_{x\in X}\norm{F(x)}+3\varepsilon.
		\end{split}
		\end{align}
		As this holds for all $\varepsilon>0$, we get $\norm[\Big]{\int F\otimes dP}\leq \sup_{x\in X}\norm{F(x)}$ as wanted.
	\end{proof}

	\begin{lemma}\label{lem:abelian-overlap}
		Let $H,K\leq G$ be closed subgroups, and let $E\subseteq G$ and $F\subseteq\hat{G}$ be symmetric neighborhoods of the identity. Then, for any $\eta\in\hat{G}$ and $qK\in G/K$
		\begin{align}
			\norm*{A^H\parens[\big]{G/H\times(F\eta)|_H}A^K\parens[\big]{EqK/K\times\hat{K}}}\leq\sup_{g\in G}\sqrt{\mu_H\parens[\big]{H\cap gEK}\mu_{\hat{H}}\parens[\big]{F}}
		\end{align}
	\end{lemma}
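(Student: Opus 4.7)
The plan is to reduce the bound to a two-step calculation: first identify the $A^K$ factor with a spatial projector, then control the $A^H$ factor on its range by a Cauchy--Schwarz estimate applied to the subgroup Fourier integral.

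First I would show that $A^K(EqK/K\times\hat{K})$ collapses to the spatial projector $\Pi_{EqK}$ onto $L^2(EqK)\subseteq L^2(G)$. Plugging $EqK/K\times\hat{K}$ into \cref{def:coset-measure-abelian} and using Plancherel for $L^2(K)$ on the inner integral over $\hat{K}$ should give
\[
\braket{\phi}{A^K(EqK/K\times\hat{K})}{\psi}=\int_{EqK/K}\int_K\overline{\phi(g'k)}\psi(g'k)\,d_Kk\,d_{G/K}(g'K)=\braket{\phi}{\Pi_{EqK}}{\psi}.
\]
Bounding the operator norm in the lemma therefore reduces to estimating $\|A^H(G/H\times(F\eta)|_H)\phi\|$ for unit vectors $\phi\in L^2(EqK)$.

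Next, for such a $\phi$, the integrand $\phi(gh)$ viewed as a function of $h\in H$ at fixed $g$ is supported on $H\cap g^{-1}EqK$, so Cauchy--Schwarz applied to the definition of the partial Fourier transform gives
\[
|(\mc{F}_H\ket{\phi\circ g})(\gamma)|^2\leq\mu_H\bigl(H\cap g^{-1}EqK\bigr)\int_H|\phi(gh)|^2\,d_Hh.
\]
Integrating this pointwise bound against $d_{\hat{H}}\gamma\,d_{G/H}(gH)$ over $(F\eta)|_H\times G/H$, and invoking the Weil decomposition $\int_{G/H}\int_H|\phi(gh)|^2 d_Hh\,d_{G/H}(gH)=\|\phi\|^2$, should yield
\[
\|A^H(G/H\times(F\eta)|_H)\phi\|^2\leq\mu_{\hat{H}}\bigl((F\eta)|_H\bigr)\cdot\sup_{g\in G}\mu_H\bigl(H\cap g^{-1}EqK\bigr).
\]
Translation invariance of Haar measure on $\hat{H}$ eliminates the $\eta$, and the abelianness of $G$ together with the symmetry $E=E^{-1}$ lets me rewrite $g^{-1}EqK=g'EK$ with $g'=g^{-1}q$, matching the sup in the statement after taking the square root.

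The main obstacle I expect is measure-theoretic hygiene: the restriction $\phi\circ g|_H$ lies in $L^2(H)$ only for $\mu_{G/H}$-almost every coset representative $g$, so the inner Fourier transform is defined pointwise only almost everywhere; this is exactly what the Weil formula guarantees via Fubini, so the estimate passes through once one is careful to quantify everywhere in terms of a.e.~representatives. A further point worth flagging is that $\sup_g\mu_H(H\cap gEK)$ can be infinite in non-discrete settings, rendering the bound vacuous --- this is precisely what forces the more delicate projection-and-truncation argument used for the GKP game in \cref{sec:Rgame}.
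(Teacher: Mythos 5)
Your proposal is correct and follows essentially the same route as the paper's proof: identify $A^K(EqK/K\times\hat{K})$ with the spatial projector $\Pi_{EqK}$ via Fourier inversion/Plancherel on $K$, then bound $\|A^H(G/H\times(F\eta)|_H)\Pi_{EqK}\psi\|$ by Cauchy--Schwarz on the partial Fourier integral (using that $h\mapsto\psi(gh)\chi_{EqK}(gh)$ is supported on $H\cap g^{-1}EqK$), integrate with the Weil quotient formula, and remove $\eta$ by Haar invariance of $\mu_{\hat H}$. The only cosmetic difference is that you make the reparametrization $g^{-1}EqK=g'EK$ explicit (which needs only commutativity, not $E=E^{-1}$), whereas the paper leaves the supremum in the form $\sup_g\mu_H(H\cap g^{-1}EqK)$.
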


	\begin{proof}
		First, we use Fourier inversion as a continuous-valued version of orthogonality of the characters to simplify $A^K\parens[\big]{EqK/K\times\hat{K}}$. For any $\ket{\psi}\in L^2(G)$ continuous with compact support and $\norm{\ket{\psi}}=1$, and $g\in G$, since $(EqK/K\times \hat{K})_{gK}=\hat{K}$ if $g\in EqK$ and $\varnothing$ otherwise,
		\begin{align}
		\begin{split}
			\parens*{A^K\parens[\big]{EqK/K\times\hat{K}}\ket{\psi}}(g)&=\chi_{EqK}(g)\int_{\hat{K}}(\mc{F}_K\ket{\psi\circ g})(\gamma)d_{\hat{K}}\gamma\\
			&=\chi_{EqK}(g)(\mc{F}_{K}^{-1}\mc{F}_K\ket{\psi\circ g})(1)\\
			&=\chi_{EqK}(g)\ket{\psi\circ g}(1)=\chi_{EqK}(g)\psi(g).
		\end{split}
		\end{align}
		Write $A^K\parens[\big]{EqK/K\times\hat{K}}=\Pi_{EqK}$, the projector onto the subspace of $L^2(G)$ where $\psi(g)=0$ if $g\notin EqK$. Then, using the Cauchy-Schwarz inequality,
		\begin{align}
		\begin{split}
			\norm{A^H(G/H\times(F\eta)|_H)\Pi_{EqK}\ket{\psi}}^2&=\int_{G/H\times(F\eta)|_H}\abs*{(\mc{F}_H(\Pi_{EqK}\ket{\psi})\circ g)(\gamma)}^2d_{_{G/H\times\hat{H}}}(gH,\gamma)\\
			&=\int_{G/H\times(F\eta)|_H}\abs*{\int_H\chi_{EqK}(gh)\psi(gh)\overline{\gamma(h)}d_Hh}^2d_{_{G/H\times\hat{H}}}(gH,\gamma)\\
			&\leq\int_{G/H\times(F\eta)|_H}\int_H\abs*{\psi(gh)}^2d_Hh\int_H\abs*{\chi_{EqK}(gh)\overline{\gamma(h)}}^2d_Hhd_{_{G/H\times\hat{H}}}(gH,\gamma)\\
			&=\int_{G/H\times(F\eta)|_H}\int_H\abs*{\psi(gh)}^2d_Hh\mu_H(H\cap g^{-1}EqK)d_{_{G/H\times\hat{H}}}(gH,\gamma)\\
			&\leq\int_{G/H}\int_H\abs*{\psi(gh)}^2d_Hhd_{G/H}gH\sup_{g\in G}\mu_H(H\cap g^{-1}EqK)\mu_{\hat{H}}{(F\eta)|_H}\\
			&\leq\norm{\ket{\psi}}^2\sup_{g\in G}\mu_H(H\cap g^{-1}EqK)\mu_{\hat{H}}\parens*{(F\eta)|_H}.
		\end{split}
		\end{align}
		Finally, we note by Haar invariance that $\mu_{\hat{H}}\parens*{(F\eta)|_H}=\mu_{\hat{H}}\parens*{F(\eta|_H)}=\mu_{\hat{H}}\parens*{F}$, giving the result.
	\end{proof}

	\begin{proof}[Proof of \cref{thm:bound-abelian}]
		Let $\ttt{S}=\parens*{\mc{B},\mc{C},B,C,\rho}$ be a strategy for $\ttt{G}=\parens*{G,\mc{S},E,F}$. Due to Naimark's theorem \cref{thm:naimark}, we may assume that $B$ and $C$ are PVMs. Writing, for each $H\in\mc{S}$, $\Pi^H=(A^H\otimes B^H\otimes C^H)(E_H)$, the winning probability may be expressed as $\mfk{w}_{\ttt{G}}(\ttt{S})=\expec{H\in\mc{S}}\Tr(\Pi^H\rho)$. Since the overlap result \cref{lem:sum-bound} doesn't depend on the dimension of the underlying space, we apply it to upper bound the winning probability of the game, getting
		\begin{align}
			\mfk{w}_{\ttt{G}}(\ttt{S})\leq\norm[\Big]{\expec{S\in\mc{S}}\Pi^H}\leq\expec{i}\sup_{H\in\mc{S}}\norm*{\Pi^H\Pi^{\pi_i(H)}},
		\end{align}
		since $(\Pi^H)^2=\Pi^H$. Fixing $H,K\in\mc{S}$, it remains to simplify $\norm*{\Pi^H\Pi^K}$. Given that
		\begin{align}
		\begin{split}
			E_H&\subseteq \set*{(gH,\gamma|_H,g'H,(\varphi\gamma)|_H)}{g,g'\in G;\gamma\in\hat{G},\varphi\in F}\\
			&=\bigcup_{\gamma\in\hat{G}}G/H\times\{\gamma|_H\}\times G/H\times (F\gamma)|_H,
		\end{split}
		\end{align}
		we get that
		\begin{align}
		\begin{split}
			\Pi^H&\leq\int_{\hat{H}}(A^H\otimes B^H)(G/H\times (F\eta)|_H\times G/H)\otimes dC^H(\eta)\\
			&=\int_{\hat{H}}A^H(G/H\times (F\eta)|_H)\otimes\Id_B\otimes dC^H(\eta).
		\end{split}
		\end{align}
		Similarly, $E_K\subseteq \bigcup_{g\in G}\{gK\}\times\hat{K}\times EgK/K\times\hat{K}$, so $\Pi^K\leq\int_{G/H}A^K(EgK/K\times\hat{K})\otimes dB^K(gK)\otimes\Id_C$. Thus, we may bound the norms
		\begin{align}
		\begin{split}
			\norm*{\Pi^H\Pi^K}&\leq\norm[\Big]{\int \!A^H(G/H\times (F\eta)|_H)\otimes\Id_B\otimes dC^H(\eta)\int\! A^K(EgK/K\times\hat{K})\otimes dB^K(gK)\otimes\Id_C}\\
			&=\norm[\Big]{\int A^H(G/H\times (F\eta)|_H)A^K(EgK/K\times\hat{K}) \otimes d(B^K\otimes C^H)(gK,\eta)}.
		\end{split}
		\end{align}
		Using \cref{lem:int-bound},  $\norm*{\Pi^H\Pi^K}\leq \sup_{gK\in G/K,\eta\in\hat{H}}\norm*{A^H(G/H\times (F\eta)|_H)A^K(EgK/K\times\hat{K})}$; and then using \cref{lem:abelian-overlap}. $\norm*{\Pi^H\Pi^K}\leq\sup_{g\in G}\sqrt{\mu_H(H\cap gEK)\mu_{\hat{H}}(F)}$. Thus, putting it together,
		\begin{align}
			\mfk{w}_{\ttt{G}}(\ttt{S})\leq\expec{i}\sup_{H\in\mc{S},g\in G}\sqrt{\mu_H(H\cap gE\pi_i(H))\mu_{\hat{H}}(F)},
		\end{align}
		and since this holds for every strategy, we have the wanted result.
	\end{proof}

	Note that it follows directly from the proof that we also have the bound
	\begin{align}
		\mfk{w}(\ttt{G})\leq\expec{i}\sup_{H\in\mc{S},g\in G}\min\set*{1,\sqrt{\mu_H\parens[\big]{H\cap gE\pi_i(H)}\mu_{\hat{H}}\parens[\big]{F}}},
	\end{align}
	which may be better in the case that the Haar measure of one of the groups is not $1$.

	\subsection{State-sending version of the game}

	In this section, we additionally assume that, for the subgroups $H\leq G$ we consider, the Haar measures on $G/H$ and $\hat{H}$ are $\sigma$-finite, that is the sets may be written as the countable union of sets of finite measure. This holds for all our groups of interest here, in particular $\R^n$, $\C$, and $U(1)$, which appear above.

	We use approximate maximally entangled states to be able to interpret the coset monogamy game as a game where Alice samples a random pair $(gH,\gamma)$, prepares a damped version of the associated unnormalizable coset states, and then sends to it to Bob and Charlie through an adversarially-chosen channel. This will allow us to generalise the original interpretation of the coset monogamy game as a quantum encoding of classical messages as in \cite{coladangelo2021hidden}.

	\begin{definition}
		Let $H\leq G$ be a closed subgroup, let $c:L^2(G)\rightarrow L^2(G)$ be a complex conjugate, and $(\Delta_n)$ be a damping sequence in $\mc{B}(L^2(G))$ that $\mu_{G/H}\times\mu_{\hat{H}}$-damps $A^H$. By \cref{lemma:radon-nikodym}, there exists measurable $\rho_n:G/H\times\hat{H}\rightarrow\mc{D}(L^2(G))$ and integrable $\pi_n:G/H\times\hat{H}\rightarrow[0,\infty)$ such that $\Delta_n^\dag A^H(E)\Delta_n=\int\rho_n(gH,\gamma)\pi_n(gH,\gamma)d(gH,\gamma)$. The \emph{state associated to $H$} is the measurable function \begin{align}
		\begin{split}
			&\sigma^H_n:gH\times\hat{H}\rightarrow\mc{D}(L^2(G))\\
			&\sigma^H_n(gH,\gamma)=c\rho_n(gH,\gamma)c;
		\end{split}
		\end{align}
		and the \emph{probability measure associated to $H$} is
		\begin{align}
		\begin{split}
			&\mu^H_n:\scr{B}(G/H)\otimes\scr{B}(\hat{H})\rightarrow[0,1]\\
			&\mu^H_n(E)=\frac{1}{\norm{\Delta_n}_2^2}\int_E\pi_n(gH,\gamma)d(gH,\gamma).
		\end{split}
		\end{align}
	\end{definition}
	$\mu^H_n$ is in fact a probability measure, as
	\begin{align}
		\mu^H_n(G/H\times\hat{H})=\frac{1}{\norm{\Delta_n}_2^2}\int\Tr(\rho_n)\mu_nd\mu=\frac{\Tr(\Delta_n^\dag A^H(G/H\times\hat{H})\Delta_n)}{\norm{\Delta_n}_2^2}=\frac{\Tr(\Delta_n^\dag\Delta_n)}{\norm{\Delta_n}_2^2}=1,
	\end{align}
	and we have $\frac{1}{\norm{\Delta_n}_2^2}c\Delta_n^\dag A^H(E)\Delta_nc=\int_E\sigma^H_nd\mu^H_n$.

	It is also important to note that, due to the definition of $A^H$ in terms of the measure $\mu_{G/H}\times\mu_{\hat{H}}$, \emph{any} damping sequence $\mu_{G/H}\times\mu_{\hat{H}}$-damps it.

	\begin{definition}
		Let $\ttt{G}=\parens*{G,\mc{S},E,F}$ be a coset monogamy game, $c:L^2(G)\rightarrow L^2(G)$ be a complex conjugate, and $(\Delta_n)$ be a damping sequence in $\mc{B}(L^2(G))$ that $\mu_{G/H}\times\mu_{\hat{H}}$-damps $A^H$ for each $H\in\mc{S}$. The \emph{state-sending version} of $\ttt{G}$ is the sequence of tuples $\ttt{G}_n=(G,\mc{S},E,F,\Delta_n,c)$.

		A \emph{strategy} for $\ttt{G}_n$ is a tuple $\ttt{S}=(\mc{B},\mc{C},B,C,\Phi)$, where $\mc{B}$ and $\mc{C}$ are the Hilbert spaces held by Bob and Charlie, $B=\set{B^H:\scr{B}(G/H)\rightarrow\mc{B}(\mc{B})}{H\in\mc{S}}$ and $C=\set{C^H:\scr{B}(\hat{H})\rightarrow\mc{B}(\mc{C})}{H\in\mc{S}}$ are collections of POVM measures, and $\Phi:\mc{T}_1(L^2(G))\rightarrow\mc{T}_1(\mc{B}\otimes\mc{C})$ is a completely positive trace preserving map.

		The \emph{winning probability} of $\ttt{S}$ is
		\begin{align}
			\mfk{w}_{\ttt{G}_n}(\ttt{S})=\expec{H\in\mc{S}}\int\Tr\squ*{(B^H\otimes C^H)(EgH\times(F\gamma)|_H)\Phi(\sigma^H_n(gH,\gamma))}d\mu^H_n(gH,\gamma).
		\end{align}
		The winning probability of $\ttt{G}_n$ is $\mfk{w}(\ttt{G}_n)=\sup_{\ttt{S}}\mfk{w}_{\ttt{G}_n}(\ttt{S})$.
	\end{definition}
	Now, we bound the winning probability of the state-sending version by the winning probability monogamy version.

	\begin{lemma}\label{lemma:max-to-damp}
		Let $\mc{H}$ be a Hilbert space, $(\Delta_n)$ in $\mc{B}(\mc{H})$ be damping sequence, $c:\mc{H}\rightarrow\mc{H}$ be a complex conjugate, and $(\ket{\Psi_n})$ be the associated approximate maximally entangled state. Writing $\Tr_1:\mc{T}_1(\mc{H}\otimes\mc{H})\rightarrow\mc{T}_1(\mc{H})$ the partial trace on the first copy on $\mc{H}$, we have that for any $A\in\mc{B}(\mc{H})$,
		\begin{align}
			\Tr_1\squ*{(A\otimes\Id)\ketbra{\Psi_n}}=\frac{1}{\norm{\Delta_n}_2^2}c\Delta_n^\dag A^\dag\Delta_n c.
		\end{align}
	\end{lemma}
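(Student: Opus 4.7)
The plan is a direct computation using the singular-value decomposition $\Delta_n = \sum_i s_i\ketbra{\phi_i}{\chi_i}$, which as already recalled in the paper yields $\ket{\Psi_n} = \frac{1}{\norm{\Delta_n}_2}\sum_i s_i \ket{\phi_i}\otimes c\ket{\chi_i}$.

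First I would expand the left-hand side in this decomposition: applying $A\otimes\Id$ and using the rank-one partial trace identity $\Tr_1[\ket{\alpha}\bra{\alpha'}\otimes\ket{\beta}\bra{\beta'}] = \braket{\alpha'}{\alpha}\ket{\beta}\bra{\beta'}$, I obtain
\begin{align*}
\Tr_1\squ*{(A\otimes\Id)\ketbra{\Psi_n}} = \frac{1}{\norm{\Delta_n}_2^2}\sum_{i,j} s_i s_j\, \bra{\phi_j}A\ket{\phi_i}\, c\ket{\chi_i}\bra{c\chi_j}\;,
\end{align*}
where $\bra{c\chi_j}$ denotes the dual of the vector $c\ket{\chi_j}$.

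Second, I would evaluate $c\Delta_n^\dag A^\dag \Delta_n c$ on a test vector $\ket{u}$ by applying the operators in sequence (first $c$, then $\Delta_n$, then $A^\dag$, then $\Delta_n^\dag$, then $c$). The inner factor $\Delta_n^\dag A^\dag \Delta_n c\ket{u}$ straightforwardly produces $\sum_{k,l}s_k s_l \ket{\chi_l}\bra{\phi_l}A^\dag\ket{\phi_k}\braket{\chi_k}{cu}$, after which pushing the outer $c$ to the left conjugates each scalar and maps $\ket{\chi_l}\mapsto c\ket{\chi_l}$, giving $\sum_{k,l}s_k s_l \overline{\bra{\phi_l}A^\dag\ket{\phi_k}}\cdot \overline{\braket{\chi_k}{cu}}\cdot c\ket{\chi_l}$. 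I would then identify this with the partial trace by using $\overline{\bra{\phi_l}A^\dag\ket{\phi_k}} = \bra{\phi_k}A\ket{\phi_l}$ and, exploiting that $c$ is an involution, $\overline{\braket{\chi_k}{cu}} = \braket{cu}{\chi_k} = \braket{c\chi_k}{u}$; relabelling $(k,l)\to(j,i)$ then matches the operators term by term.

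The main obstacle is purely notational bookkeeping with the antiunitarity of $c$: each scalar crossed by $c$ must be conjugated, and the inner-product manipulation $\overline{\braket{\chi_k}{cu}} = \braket{c\chi_k}{u}$ relies on $c^2 = \Id$, which is the defining property of a complex conjugation. No analytic subtleties arise because $\Delta_n$ is Hilbert--Schmidt, so $\ket{\Psi_n}$ is a genuine unit vector in $\mc{H}\otimes\mc{H}$ and the sums converge absolutely. A conceptually cleaner alternative would be to write $\ket{\Psi_n} = \frac{1}{\norm{\Delta_n}_2}(\Delta_n\otimes\Id)\ket{\Omega}$ with $\ket{\Omega}=\sum_i \ket{\chi_i}\otimes c\ket{\chi_i}$ (which is basis-independent) and then invoke the ricochet identity $(X\otimes\Id)\ket{\Omega} = (\Id\otimes cX^\dag c)\ket{\Omega}$ before partial-tracing, reducing the statement to a Choi--Jamio{\l}kowski-style calculation; but the direct SVD computation outlined above is the shortest route.
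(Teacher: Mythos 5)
Your proof is correct and follows essentially the same route as the paper's: both expand $\ket{\Psi_n}$ via the singular-value decomposition of $\Delta_n$, compute the partial trace of the rank-one terms, and then match this with $c\Delta_n^\dag A^\dag\Delta_n c$ by tracking the conjugations introduced by the antilinear $c$ (the paper phrases this through the identities $\sum_i s_{n,i}\ketbra{c\chi_{n,i}}{c\phi_{n,i}}=(c\Delta_n c)^\dag$ and $(cTc)^\dag=cT^\dag c$, whereas you evaluate on a test vector, but the content is the same). No gaps; your handling of $\overline{\braket{\chi_k}{cu}}=\braket{c\chi_k}{u}$ and the convergence remark are exactly what is needed.
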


	\begin{proof}
		Using the singular-value decomposition $\Delta_n=\sum_{i=1}^\infty s_{n,i}\ketbra{\phi_{n,i}}{\chi_{n,i}}$, the associated approximate maximally entangled state $\ket{\Psi_n}=\frac{1}{\norm{\Delta_n}_2}\sum_{i=1}^\infty s_{n,i}\ket{\phi_{n,i}}\otimes c\ket{\chi_{n,i}}$. Then,
		\begin{align}
		\begin{split}
			\Tr_1\squ*{(A\otimes\Id)\ketbra{\Psi_n}}&=\frac{1}{\norm{\Delta_n}_2^2}\sum_{i,j} s_{n,i}s_{n,j}\braket{\phi_{n,j}}{A\phi_{n,i}}\ketbra{c\chi_{n,i}}{c\chi_{n,j}}\\
			&=\frac{1}{\norm{\Delta_n}_2^2}\sum_{i,j} s_{n,i}s_{n,j}\braket{cA\phi_{n,i}}{c\phi_{n,j}}\ketbra{c\chi_{n,i}}{c\chi_{n,j}}\\
			&=\frac{1}{\norm{\Delta_n}_2^2}\sum_{i,j} s_{n,i}s_{n,j}\ketbra{c\chi_{n,i}}{c\phi_{n,i}}(cAc)^\dag\ketbra{c\phi_{n,j}}{c\chi_{n,j}}.
		\end{split}
		\end{align}
		To simplify this, note that for any $\ket{\phi},\ket{\psi}\in\mc{H}$,
		\begin{align}
			\sum_{i} s_{n,i}\braket{\psi}{c\chi_{n,i}}\!\!\braket{c\phi_{n,i}}{\phi}=\sum_{i} s_{n,i}\braket{c\phi}{\phi_{n,i}}\!\!\braket{\chi_{n,i}}{c\psi}=\braket{c\phi}{\Delta_nc\psi}=\braket{c\Delta_nc\psi}{\phi}=\braket{\psi}{(c\Delta_nc)^\dag\phi},
		\end{align}
		so $\Tr_1\squ*{(A\otimes\Id)\ketbra{\Psi_n}}=\frac{1}{\norm{\Delta_n}_2^2}(c\Delta_n c)^\dag(cAc)^\dag(c\Delta_n c)$. Finally, note that conjugation by a complex conjugate commutes with the adjoint: for any $T\in\mc{B}(\mc{H})$,
		\begin{align}
			\braket{\psi}{(cTc)^\dag}{\phi}=\braket{cTc\psi}{\phi}=\braket{c\phi}{Tc\psi}=\braket{T^\dag c\phi}{c\psi}=\braket{\psi}{cT^\dag c\phi},
		\end{align}
		and therefore $\Tr_1\squ*{(A\otimes\Id)\ketbra{\Psi_n}}=\frac{1}{\norm{\Delta_n}_2^2}c\Delta_n^\dag ccA^\dag cc\Delta_n c=\frac{1}{\norm{\Delta_n}_2^2}c\Delta_n^\dag A^\dag\Delta_n c$.
	\end{proof}

	\begin{theorem}\label{thm:state-sending-game}
		Let $\ttt{G}_n=(G,\mc{S},E,F,\Delta_n,c)$ be the state-sending version of a coset monogamy game $\ttt{G}$, and $(\ket{\Psi_n})$ be the approximate maximally entangled state associated to $(\Delta_n)$. For any strategy $\ttt{S}=(\mc{B},\mc{C},B,C,\Phi)$ for $\ttt{G}_n$, write $\ttt{S}_n=\parens*{\mc{B},\mc{C},B,C,(\Id\otimes\Phi)(\ketbra{\Psi_n})}$, which is a strategy for $\ttt{G}$. We have that
		\begin{align}
			\mfk{w}_{\ttt{G}_n}(\ttt{S})=\mfk{w}_{\ttt{G}}(\ttt{S}_n).
		\end{align}
	\end{theorem}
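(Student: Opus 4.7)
The plan is to start from the expression
\begin{align*}
\mfk{w}_{\ttt{G}}(\ttt{S}_n)=\expec{H\in\mc{S}}\Tr\squ*{(A^H\otimes B^H\otimes C^H)(E_H)(\Id\otimes\Phi)(\ketbra{\Psi_n})}
\end{align*}
and massage it via three standard steps into the defining integral for $\mfk{w}_{\ttt{G}_n}(\ttt{S})$. The whole calculation will be a chain of equalities, so both directions of $\mfk{w}_{\ttt{G}_n}(\ttt{S})=\mfk{w}_{\ttt{G}}(\ttt{S}_n)$ will fall out at once.

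First, I would decompose $E_H$ over Alice's possible outcomes as $E_H=\bigcup_{(gH,\gamma)}\{(gH,\gamma)\}\times EgH\times(F\gamma)|_H$, which by the operator-measure integration formalism of \cref{sec:POVM} lets us write $(A^H\otimes B^H\otimes C^H)(E_H)$ as the integral over $(gH,\gamma)$ of $dA^H(gH,\gamma)\otimes B^H(EgH)\otimes C^H((F\gamma)|_H)$. Pushing the trace through this integral turns the winning probability into $\int \Tr\squ*{(dA^H\otimes M)(\Id\otimes\Phi)(\ketbra{\Psi_n})}$, writing $M=B^H(EgH)\otimes C^H((F\gamma)|_H)$.

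Second, I would Heisenberg-dualize $\Phi$: using a Kraus decomposition $\Phi(\sigma)=\sum_k K_k\sigma K_k^\dag$, a direct calculation gives the identity $\Tr\squ*{(X\otimes Y)(\Id\otimes\Phi)(\rho)}=\Tr\squ*{(X\otimes\Phi^\dag(Y))\rho}$. Applying this with $X=dA^H$ and $Y=M$, and then invoking \cref{lemma:max-to-damp} (which gives $\Tr_1\squ*{(A\otimes\Id)\ketbra{\Psi_n}}=\tfrac{1}{\norm{\Delta_n}_2^2}c\Delta_n^\dag A^\dag\Delta_n c$) together with the fact that $A^H$ is projective (so $dA^H$ is self-adjoint), we get
\begin{align*}
\Tr\squ*{(dA^H\otimes \Phi^\dag(M))\ketbra{\Psi_n}}=\Tr\squ[\Big]{\Phi^\dag(M)\cdot\tfrac{1}{\norm{\Delta_n}_2^2}c\Delta_n^\dag dA^H\Delta_n c}.
\end{align*}
Dualizing $\Phi^\dag$ back across the trace yields $\Tr\squ*{M\cdot\Phi\parens*{\tfrac{1}{\norm{\Delta_n}_2^2}c\Delta_n^\dag dA^H\Delta_n c}}$.

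Third, I would recognize from the definitions of $\sigma^H_n$ and $\mu^H_n$ given just before the statement that $\tfrac{1}{\norm{\Delta_n}_2^2}c\Delta_n^\dag dA^H(gH,\gamma)\Delta_n c=\sigma^H_n(gH,\gamma)\,d\mu^H_n(gH,\gamma)$; substituting this and unpacking $M$ produces exactly the defining integral of $\mfk{w}_{\ttt{G}_n}(\ttt{S})$. The main obstacle is bookkeeping rather than conceptual: each exchange of trace and integral against the operator-valued measure $A^H$ needs justification (together with the fact that $\Phi$, being CPTP with a Kraus representation, commutes with such integrals), which is where the $\sigma$-finiteness assumption on $\mu_{G/H}$ and $\mu_{\hat H}$ and the tools of \cref{sec:POVM} are used. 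Once those are in place, every step above is a strict equality, and the theorem follows.
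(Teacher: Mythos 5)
Your proposal is correct and follows essentially the same route as the paper: decompose $E_H$ over Alice's outcomes, move $\Phi$ to the other side of the trace, apply \cref{lemma:max-to-damp} to turn $\tfrac{1}{\norm{\Delta_n}_2^2}c\Delta_n^\dag A^H(\cdot)\Delta_n c$ into $\int\sigma^H_n\,d\mu^H_n$, and read off the state-sending winning probability. The ``bookkeeping'' you defer is exactly what the paper supplies: it approximates $(gH,\gamma)\mapsto(B^H\otimes C^H)(EgH\times(F\gamma)|_H)$ by simple functions (\cref{lem:simp-limit}) so each interchange is a finite sum, and then passes to the limit using the weak convergence of the operator-valued integrals (\cref{thm:int-simp-limit}).
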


	In particular, this implies that $\mfk{w}(\ttt{G}_n)\leq\mfk{w}(\ttt{G})$. However, we do not necessarily have that $\lim_{n\rightarrow\infty}\mfk{w}(\ttt{G}_n)=\mfk{w}(\ttt{G})$, as this does not even always hold in the case of finite information.

	\begin{proof}
		First, we may write
		\begin{align}
		\begin{split}
			\mfk{w}_{\ttt{G}}(\ttt{S}_n)&=\expec{H\in\mc{S}}\Tr\squ*{(A^H\otimes B^H\otimes C^H)(E_H)(\Id\otimes\Phi)(\ketbra{\Psi_n})}\\
			&=\expec{H\in\mc{S}}\Tr\squ*{\parens*{\int dA^H(gH,\gamma)\otimes(B^H\otimes C^H)(EgH\times (F\gamma)|_H)}(\Id\otimes\Phi)(\ketbra{\Psi_n})}.
		\end{split}
		\end{align}
		Fix some $H\in\mc{S}$ and let $(F_k=\sum_{i}M^k_i\chi_{E^k_i})$ be a sequence of simple functions $G/H\times\hat{H}\rightarrow\mc{B}(\mc{B}\otimes\mc{H})$ that converges strongly pointwise to $(gH,\gamma)\mapsto(B^H\otimes C^H)(EgH\times (F\gamma)|_H)$. Then.
		\begin{align}
		\begin{split}
			\Tr\squ*{\parens*{\int dA^H\otimes F_k}(\Id\otimes\Phi)(\ketbra{\Psi_n})}&=\sum_i\Tr\squ*{\parens*{A^H(E^k_i)\otimes M^k_i}(\Id\otimes\Phi)(\ketbra{\Psi_n})}\\
			&=\sum_i\Tr\squ*{M^k_i\Phi\parens*{\Tr_1\squ*{(A^H(E^k_i)\otimes\Id)\ketbra{\Psi_n}}}}\\
			&=\sum_i\frac{1}{\norm{\Delta_n}_2^2}\Tr\squ*{M^k_i\Phi(c\Delta_n^\dag A^H(E^k_i)\Delta_n c)}\\
			&=\sum_i\int_{E^k_i}\Tr\squ*{M^k_i\Phi(\sigma^H_n)}d\mu^H_n\\
			&=\int\Tr\squ*{F_k(gH,\gamma)\Phi(\sigma^H_n(gH,\gamma))}d\mu^H_n(gH,\gamma).
		\end{split}
		\end{align}
		By weak convergence of the integral, we get that
		\begin{align}
		\begin{split}
			\mfk{w}_{\ttt{G}}(\ttt{S}_n)&=\lim_{k\rightarrow\infty}\expec{H\in\mc{S}}\int\Tr\squ*{F_k(gH,\gamma)\Phi(\sigma^H_n(gH,\gamma))}d\mu^H_n(gH,\gamma)\\
			&=\expec{H\in\mc{S}}\int\Tr\squ*{(B^H\otimes C^H)(EgH\times (F\gamma)|_H)\Phi(\sigma^H_n(gH,\gamma))}d\mu^H_n(gH,\gamma)\\
			&=\mfk{w}_{\ttt{G}_n}(\ttt{S}).
		\end{split}
		\end{align}
	\end{proof}

	\section{Monogamy games for compact groups}\label{sec:compact}

	In this section, we extend the bound to games on compact groups -- groups that may be both non-abelian and infinite. However, the compactness is required here, unlike in the abelian case of \cref{sec:abelian-infinite}, in order to be able to make use of the Peter-Weyl theorem when considering the representations. This provides a class of groups that, although infinite, are conceptually more similar to the finite groups of \cref{sec:finite}.

	\subsection{Compact coset measure}

	Let $G$ be a compact Hausdorff group. The space of quantum states on $G$ is again $L^2(G)$ with inner product given by the Haar integral. We know, by the Peter-Weyl theorem, that the matrix elements of the finite-dimensional irreducible representations of $G$ are orthogonal, and span $L^2(G)$ as a Hilbert space. We can again choose a full set of finite-dimensional irreps $\tsf{Irr}(G)$, by making use of the axiom of choice, and their matrix elements $\tsf{IB}(G)$. Then, for $\gamma_{m,n},\gamma'_{m',n'}\in\tsf{IB}(G)$, we get the orthogonality relation extending Schur orthogonality
	\begin{align}
		\ang*{\gamma_{m,n},\gamma'_{m',n'}}_G:=\int\overline{\gamma_{m,n}(g)}\gamma'_{m',n'}(g)d_Gg=\frac{1}{d_\gamma}\delta_{\gamma,\gamma'}\delta_{m,m'}\delta_{n,n'}.
	\end{align}
	Also, since we will be summing over elements of $\tsf{IB}(G)$, we can trivially endow it with the $\sigma$-algebra of all sets $\scr{P}(\tsf{IB}(G))$ and the counting measure.

	Let $H\leq G$ be a closed subgroup. Unlike the abelian case, $G/H$ is not necessarily a group, so we will need to fix a set of coset representatives $\tsf{CS}(H)$, again using the axiom of choice. For $g\in G$, write $[g]_H\in\tsf{CS}(H)$ for its representative, dropping the subscript when $H$ is obvious. As the Haar measures of $G$ and $H$ induce a measure on $G/H$, they induce a measure on $\tsf{CS}(H)$. The $\sigma$-algebra is the Borel algebra of the quotient topology $\scr{B}(\tsf{CS}(H))=\set*{E\subseteq\tsf{CS}(H)}{EH\in\scr{B}(G)}$, and the corresponding measure $\mu_{\tsf{CS}(H)}(E)=\mu_G(EH)$. We denote $d[g]=d\mu_{\tsf{CS}(H)}([g])$. This measure interacts well with the Haar measures on $G$ and $H$.

	\begin{lemma}\hphantom{}
		\begin{itemize}
			\item The measurable functions $\tsf{CS}(H)\rightarrow\C$ may be identified with the measurable functions $G\rightarrow\C$ that are constant on the cosets of $H$. If $f:\tsf{CS}(H)\rightarrow\C$ is integrable
			$$\int_{\tsf{CS}(H)}f([g])d[g]=\int_{G}f([g])d_Gg.$$

			\item If $f:G\rightarrow\C$ is integrable, $$\int_G f(g)d_Gg=\int_{\tsf{CS}(H)}\int_Hf([g]h)d_Hhd[g].$$

			\item If $E\in\scr{B}(G)$, then $$\mu_G(E)=\int_{\tsf{CS}(H)}\mu_H([g]^{-1}E\cap H)d[g].$$
		\end{itemize}
	\end{lemma}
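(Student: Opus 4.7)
The plan is to prove the three parts in order, bootstrapping from the definition of $\mu_{\tsf{CS}(H)}$ up through the quotient integral formula. The main technical content lies in part (2); the other two parts are essentially bookkeeping around the definition $\mu_{\tsf{CS}(H)}(E) = \mu_G(EH)$.

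For part (1), I would first verify that $\scr{B}(\tsf{CS}(H))$ is indeed a $\sigma$-algebra and that $\mu_{\tsf{CS}(H)}$ is a measure: closure under countable unions is automatic from $(\bigcup E_n)H = \bigcup E_n H$, and closure under complementation follows since $E^c H = G \setminus EH$ (saturated-complement identity for sets of cosets). Countable additivity is inherited from $\mu_G$ because distinct $E_n \subseteq \tsf{CS}(H)$ give disjoint $E_n H$. Given a measurable $f:\tsf{CS}(H)\to\C$, extend it by $\tilde f(g)=f([g])$; then $\tilde f^{-1}(A)=f^{-1}(A)\cdot H$, which is Borel in $G$ exactly when $f^{-1}(A)$ lies in $\scr{B}(\tsf{CS}(H))$, giving the claimed identification with measurable $H$-invariant functions. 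The integral identity is then immediate for indicators ($\int_{\tsf{CS}(H)}\chi_E\,d[g] = \mu_G(EH) = \int_G \chi_{EH}\,d_Gg$) and extends to simple, nonnegative, and finally integrable $f$ by linearity and monotone convergence.

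For part (2), the key observation is that $g\mapsto F(g):=\int_H f(gh)\,d_Hh$ is constant on left cosets of $H$, by the left-invariance of $d_Hh$: substituting $g\mapsto gh_0$ and using $d_H(h_0 h)=d_Hh$ gives $F(gh_0)=F(g)$. Thus $F$ descends to a function on $\tsf{CS}(H)$, and by part (1)
\begin{align*}
\int_{\tsf{CS}(H)}\!\int_H f([g]h)\,d_Hh\,d[g] \;=\; \int_G F(g)\,d_Gg \;=\; \int_G\!\int_H f(gh)\,d_Hh\,d_Gg.
\end{align*}
Now apply Fubini, then right-invariance of $d_Gg$ (compact groups are unimodular), and then the normalization $\mu_H(H)=1$:
\begin{align*}
\int_G\!\int_H f(gh)\,d_Hh\,d_Gg
= \int_H\!\int_G f(gh)\,d_Gg\,d_Hh
= \int_H\!\int_G f(g)\,d_Gg\,d_Hh
= \int_G f(g)\,d_Gg.
\end{align*}
This is the one place the compactness of $H$ (via the normalization convention that makes $d_Hh$ a probability measure) is really used; for a merely closed subgroup of a noncompact group, one would have to invoke Weil's formula with a suitable quotient measure instead.

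Part (3) then falls out immediately by specializing part (2) to $f=\chi_E$: the inner integral evaluates to $\int_H \chi_E([g]h)\,d_Hh = \mu_H([g]^{-1}E\cap H)$, and the outer integral gives $\mu_G(E)$. The main obstacle is really just the Fubini-plus-unimodularity step in part (2); once that is in place the rest is formal.
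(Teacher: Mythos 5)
Your proposal is correct and follows essentially the same route as the paper: identify measurable functions on $\tsf{CS}(H)$ with coset-constant measurable functions on $G$ via the definition $\mu_{\tsf{CS}(H)}(E)=\mu_G(EH)$, extend the integral identity from indicators through simple functions, prove part (2) by combining Fubini with invariance of the Haar measures (left-invariance of $d_Hh$ for coset-constancy, invariance of $d_Gg$ under $g\mapsto gh$, and $\mu_H(H)=1$), and obtain part (3) by taking $f=\chi_E$. The only substantive thing the paper includes that you elide is the Tonelli-type check that $(g,h)\mapsto f(gh)$ is measurable and integrable on the product space, which is what licenses the Fubini step and guarantees that $g\mapsto\int_H f(gh)\,d_Hh$ exists almost everywhere and is measurable before you invoke part (1); this is routine but worth stating.
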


	\begin{proof}
		Let $f:\tsf{CS}(H)\rightarrow\C$ be measurable. So for each Borel set of $B\subseteq\C$, $f^{-1}(B)\in\scr{B}(\tsf{CS}(H))$ $\iff$ $f^{-1}(B)H\in\scr{B}(G)$. Letting $f':G\rightarrow\C$ be $f'(g)=f([g])$, $(f')^{-1}(B)=\{g\in G|f([g])\in B\}=f^{-1}(B)H$, so $f'$ is measurable. Conversely, if $f':G\rightarrow\C$ is a measurable function constant on the cosets, let $f''=f'|_{\tsf{CS}(H)}$. Then, $(f'')^{-1}(B)H=((f')^{-1}B\cap\tsf{CS}(H))H=(f')^{-1}(B)\in\scr{B}(G)$. So, $f\leftrightarrow f'$ provides the bijective correspondence we were looking for. Let $f:\tsf{CS}(H)\rightarrow\C$ is integrable. There exists a sequence of simple functions that $(f_n)$ that converges pointwise to $f$. Then, $(f_n')$ converges to $f'$. For some $n$, writing $f_n=\sum_ic_i\chi_{E_i}$, $$\int f_n([g])d[g]=\sum_ic_i\mu_{\tsf{CS}(H)}(E_i)=\sum_ic_i\mu_G(E_iH)=\int f_n'(g)d_Gg.$$ Therefore, $$\int f([g])d[g]=\lim_{n\rightarrow\infty}\int f_n([g])d[g]=\lim_{n\rightarrow\infty}\int f_n'(g)d_Gg=\int f'(g)d_Gg=\int f([g])d_Gg.$$

		Let $f:G\rightarrow\C$ be integrable. Consider the function $G\times H\rightarrow\C$, $(g,h)\mapsto f(gh)$. As the product is continuous, this is a composition of measurable functions, so measurable. Also, by invariance of the Haar measure on $G$, $$\int_H \int_G|f(gh)|d_Ggd_Hh=\int_H \int_G|f(g)|d_Ggd_Hh=\int_G|f(g)|d_Gg,$$
		which means by Tonelli's theorem that the map is integrable on the product measure space. So, the integral $\int f(gh)d_Hh$ exists for almost every $g\in G$ and the map $g\mapsto\int f(gh)d_Hh$ is measurable with
		$$\int_G\int_Hf(gh)d_Hhd_Gg=\int_H \int_Gf(gh)d_Ggd_Hh=\int_Gf(g)d_Gg$$
		by Fubini's theorem. Finally, by invariance of the Haar measure on $H$, $\int_H f(gh)d_Hh$ is constant on the cosets of $H$, so
		$$\int_Gf(g)d_Gg=\int_G\int_Hf(gh)d_Hhd_Gg=\int_{\tsf{CS}(H)}\int_Hf([g]h)d_Hhd[g].$$

		For the last point, take $f:G\rightarrow\C$ to be $f=\chi_E$. Then, $f$ is bounded and measurable, so integrable, giving
		$$\mu_G(E)=\int_G\chi_E(g)d_Gg=\int_{\tsf{CS}(H)}\int_H\chi_E([g]h)d_Hhd[g]=\int_{\tsf{CS}(H)}\mu_{H}([g]^{-1}E\cap H)d[g].$$
	\end{proof}

	Now, similarly to \cref{def:coset-measure-abelian}, we define the coset operator measure for $H$.

	\begin{definition}
		Let $G$ be a compact Hausdorff group and $H\leq G$ be a closed subgroup. The \emph{coset operator measure} is the map $A^H:\scr{B}(\tsf{CS}(H))\otimes\scr{P}(\tsf{IB}(H))\rightarrow\mc{B}(L^2(G))$ defined, for $E=\bigcup_{\gamma_{m,n}}E_{\gamma_{m,n}}\times\{\gamma_{m,n}\}\subseteq\tsf{CS}(H)\times\tsf{IB}(H)$ measurable and $\ket{\phi},\ket{\psi}\in L^2(G)$, as
		\begin{align}
			\braket{\phi}{A^H(E)}{\psi}=\sum_{\gamma_{m,n}}d_\gamma\int_{E_{\gamma_{m,n}}}\ang*{\phi\circ[g],\gamma_{m,n}}_H\ang*{\gamma_{m,n},\psi\circ[g]}_Hd[g],
		\end{align}
		where $[g]$ is seen as the left shift $[g](h)=[g]h$.
	\end{definition}

	It is direct to see that $A^H(E)$ is linear, where it is defined, by linearity of the integral. Next, it is well-defined as a function on $L^2(G)$ as it does not depend on the choice of representative of $\ket{\psi}$. In fact, if $\ket{\psi}=0$, then $\int_H |\psi([g]h)|^2d_Hh=0$ for almost all $[g]\in\tsf{CS}(H)$, so $\ang{\gamma_{m,n},\psi\circ[g]}_H=0$ almost everywhere, and therefore $A^H(E)\ket{\psi}=0$. Positivity is due to the fact that
	\begin{align}
		\braket{\psi}{A^H(E)}{\psi}=\sum_{\gamma_{m,n}}d_{\gamma}\int_{E_{\gamma_{m,n}}}\abs{\ang{\gamma_{m,n},\psi\circ[g]}_H}^2d[g]\geq 0,
	\end{align}
	and that $A^H(E)$ is Hermitian, given that
	\begin{align}
		\braket{\varphi}{A^H(E)}{\psi}=\sum_{\gamma_{m,n}}d_{\gamma}\int_{E_{\gamma_{m,n}}}\overline{\ang{\gamma_{m,n},\varphi\circ[g]}_H\ang{\psi\circ[g],\gamma_{m,n}}_H}d[g]=\overline{\braket{\psi}{A^H(E)}{\varphi}}.
	\end{align}
	To show boundedness, it follows from the polarization identity that we need only show that the collection of $\braket{\psi}{A^H(E)}{\psi}$ for all $\norm{\ket{\psi}}=1$ is bounded. First, note that by monotonicity of the integral, if $E\subseteq E'$, we have $E_{\gamma_{m,n}}\subseteq E'_{\gamma_{m,n}}$ for all $\gamma_{m,n}$, and therefore
	\begin{align}
	\begin{split}
	\braket{\psi}{A^H(E)}{\psi}&=\sum_{\gamma_{m,n}}d_{\gamma}\int_{E_{\gamma_{m,n}}}\abs{\ang{\gamma_{m,n},\psi\circ[g]}_H}^2d[g]\\
	&\leq\sum_{\gamma_{m,n}}d_{\gamma}\int_{E'_{\gamma_{m,n}}}\abs{\ang{\gamma_{m,n},\psi\circ[g]}_H}^2d[g]=\braket{\psi}{A^H(E')}{\psi}.
	\end{split}
	\end{align}
	So, $\braket{\psi}{A^H(E)}{\psi}\leq \braket{\psi}{A^H(\tsf{CS}(H)\times\tsf{IB}(H))}{\psi}$, and using Tonelli's theorem
	\begin{align}
	\begin{split}
	\braket{\psi}{A^H(\tsf{CS}(H)\times\tsf{IB}(H))}{\psi}&=\int_{\tsf{CS}(H)}\sum_{\gamma_{m,n}}d_{\gamma}\ang{\psi\circ[g],\gamma_{m,n}}_H\ang{\gamma_{m,n},\psi\circ[g]}_Hd[g]\\
	&=\int_{\tsf{CS}(H)}\ang{\psi\circ[g],\psi\circ[g]}_Hd[g]\\
	&=\int_{\tsf{CS}(H)}\int_H|\psi([g]h)|^2d_Hhd[g]\\
	&=\int_G|\psi(g)|^2d_Gg=1.
	\end{split}
	\end{align}
	Thus, $\braket{\psi}{A^H(E)}{\psi}\leq 1$ and the positivity implies $\norm{A^H(E)}\leq 1$.

	\begin{lemma}
		$A^H$ is a PVM measure.
	\end{lemma}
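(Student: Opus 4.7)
The plan is to mirror the strategy used in the abelian case (following the proof of the analogous statement for $A^H$ on locally compact abelian groups), replacing Plancherel for $\hat H$ with the Peter--Weyl orthogonality and completeness of matrix coefficients on the compact subgroup $H$. Positivity of each $A^H(E)$ and the normalization $A^H(\tsf{CS}(H)\times\tsf{IB}(H))=\Id$ have already been verified in the text above the statement, so it remains to prove weak countable additivity and projectivity.

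For weak countable additivity, I would take a countable disjoint collection $\{E_i\}$ in $\scr{B}(\tsf{CS}(H))\otimes\scr{P}(\tsf{IB}(H))$ and observe that for each $\gamma_{m,n}\in\tsf{IB}(H)$ the fibres $(E_i)_{\gamma_{m,n}}$ are disjoint and $(\bigcup_i E_i)_{\gamma_{m,n}}=\bigcup_i(E_i)_{\gamma_{m,n}}$. Then
\begin{align*}
\braket{\psi}{A^H\parens[\Big]{\bigcup_iE_i}}{\psi}=\sum_{\gamma_{m,n}}d_\gamma\int\sum_{i}\chi_{(E_i)_{\gamma_{m,n}}}([g])\abs*{\ang{\gamma_{m,n},\psi\circ[g]}_H}^2d[g]
\end{align*}
and an application of Tonelli (the summand is non-negative) exchanges the sum over $i$ with the integral and the sum over $\gamma_{m,n}$, yielding $\sum_i\braket{\psi}{A^H(E_i)}{\psi}$. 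Polarization promotes this to weak additivity against arbitrary $\ket{\phi},\ket{\psi}$.

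The substance of the proof lies in projectivity. I would first compute $A^H(E)$ explicitly as an operator by manipulating its matrix coefficients: using the decomposition $\int_G f\,d_Gg=\int_{\tsf{CS}(H)}\int_Hf([g]h)d_Hhd[g]$, one obtains for $\ket{\psi}\in L^2(G)$ and almost every $g\in G$, writing $g=[g]h$ with $h\in H$,
\begin{align*}
(A^H(E)\ket{\psi})([g]h)=\sum_{\gamma_{m,n}}d_\gamma\,\gamma_{m,n}(h)\,\chi_{E_{\gamma_{m,n}}}([g])\,\ang{\gamma_{m,n},\psi\circ[g]}_H,
\end{align*}
where the series converges in $L^2(H)$ for almost every $[g]$ by Peter--Weyl. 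This identifies $A^H(E)$ on each fibre over $[g]$ with the $L^2(H)$-projection onto the span of the matrix coefficients $\gamma_{m,n}$ with $(\gamma_{m,n},[g])\in E$, expressed in the Peter--Weyl basis.

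Projectivity is then almost immediate: applying the above to $A^H(F)\ket{\psi}$ and using Schur/Peter--Weyl orthogonality to read off
\begin{align*}
\ang{\gamma_{m,n},(A^H(F)\ket{\psi})\circ[g]}_H=\chi_{F_{\gamma_{m,n}}}([g])\,\ang{\gamma_{m,n},\psi\circ[g]}_H
\end{align*}
gives
\begin{align*}
(A^H(E)A^H(F)\ket{\psi})([g]h)=\sum_{\gamma_{m,n}}d_\gamma\,\gamma_{m,n}(h)\,\chi_{E_{\gamma_{m,n}}}([g])\chi_{F_{\gamma_{m,n}}}([g])\,\ang{\gamma_{m,n},\psi\circ[g]}_H,
\end{align*}
which equals $(A^H(E\cap F)\ket{\psi})([g]h)$ since $\chi_{E_{\gamma_{m,n}}}\chi_{F_{\gamma_{m,n}}}=\chi_{(E\cap F)_{\gamma_{m,n}}}$. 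The main technical obstacle is justifying the pointwise-in-$[g]$ manipulation of the Fourier-type series: the Peter--Weyl expansion only converges in $L^2(H)$, not pointwise, so I would either first prove the identity as an equality of $L^2$-functions (which is what is actually needed for the operator equality $A^H(E)A^H(F)=A^H(E\cap F)$), or restrict to a dense subspace where the convergence is cleaner (e.g. $\psi$ such that $\psi\circ[g]$ lies in the algebraic span of matrix coefficients for a.e. $[g]$) and extend by continuity using the boundedness of $A^H(E)$ already established.
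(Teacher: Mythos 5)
Your proposal is correct and takes essentially the same route as the paper: weak countable additivity by monotone convergence/Tonelli plus polarization, and projectivity from the fibrewise expansion of $A^H(E)\psi$ in matrix coefficients combined with Schur/Peter--Weyl orthogonality. The only cosmetic difference is that the paper proves projectivity by computing the form $\langle\psi, A^H(E)A^H(F)\psi\rangle$ directly as an integral over $G$ with Fubini (sidestepping the pointwise-convergence worry you flag, since all the relevant quantities are bounded), whereas you first extract the coefficient identity for $A^H(F)\psi$ on each fibre and substitute it into the definition of $A^H(E)$ --- both reduce to the same orthogonality computation.
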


	\begin{proof}
		Most importantly, we need that $A^H$ is weakly countably additive. Let $E^1,E^2,\ldots\subseteq \tsf{CS}(H)\times\tsf{IB}(H)$ be a countable disjoint collection of measurable sets. We have that, for each $\gamma_{m,n}\in\tsf{IB}(H)$, $E^1_{\gamma_{m,n}},E^2_{\gamma_{m,n}},\ldots$ are disjoint. So, using monotone convergence,
		\begin{align}
		\begin{split}
		\braket{\psi}{A^H\parens[\Big]{\bigcup_i E^i}}{\psi}&=\sum_{\gamma_{m,n}}d_{\gamma}\int\chi_{\bigcup_iE^i_{\gamma_{m,n}}}([g])\abs{\ang{\gamma_{m,n},\psi\circ[g]}_H}^2d[g]\\
		&=\sum_{\gamma_{m,n}}d_{\gamma}\int\sum_{i=1}^\infty\chi_{E^i_{\gamma_{m,n}}}([g])\abs{\ang{\gamma_{m,n},\psi\circ[g]}_H}^2d[g]\\
		&=\sum_{i=1}^\infty\sum_{\gamma_{m,n}}d_{\gamma}\int\chi_{E^i_{\gamma_{m,n}}}([g])\abs{\ang{\gamma_{m,n},\psi\circ[g]}_H}^2d[g]\\
		&=\sum_{i=1}^\infty\braket{\psi}{A^H(E^i)}{\psi},
		\end{split}
		\end{align}
		where $\chi_{E}$ is the characteristic function, that is $\chi_E(x)=1$ if $x\in E$ and $\chi_E(x)=0$ if $x\notin E$. Therefore, by the polarization identity, $A^H\parens*{\bigcup_i E^i}=\sum_{i=1}^\infty A^H(E^i)$ weakly.

		Next, we want that $A^H$ is projective. From the work before the lemma, we have that $A^H(\tsf{CS}(H)\otimes\tsf{IB}(H))=\Id_{L^2(G)}$, so it is a measurement. To show that $A^H$ is projective, note that the definition directly implies that
		\begin{align}
			(A^H(E)\ket{\psi})(g)=\sum_{\gamma_{m,n}}d_\gamma\chi_{E_{\gamma_{m,n}}}([g])\gamma_{m,n}([g]^{-1}g)\ang{\gamma_{m,n},\psi\circ[g]}_H.
		\end{align}
		So, as everything is nice and bounded, we can invoke Fubini's theorem a couple times to get that
		\begin{align}
		\begin{split}
		&\braket{\psi}{A^H(E)A^H(F)}{\psi}=\int_G\overline{(A^H(E)\ket{\psi})(g)}(A^H(F)\ket{\psi})(g)d_Gg\\
		&=\int_G\sum_{\gamma_{m,n},\gamma'_{m',n'}}d_\gamma d_{\gamma'}\chi_{E_{\gamma_{m,n}}\cap F_{\gamma'_{m',n'}}}([g])\overline{\gamma_{m,n}}([g]^{-1}g)\gamma'_{m',n'}([g]^{-1}g)\ang{\psi\circ[g],\gamma_{m,n}}_H\ang{\gamma'_{m',n'},\psi\circ[g]}_Hd_Gg\\
		&=\sum_{\gamma_{m,n},\gamma'_{m',n'}}d_\gamma d_{\gamma'}\int_{E_{\gamma_{m,n}}\cap F_{\gamma'_{m',n'}}}\ang{\psi\circ[g],\gamma_{m,n}}_H\ang{\gamma'_{m',n'},\psi\circ[g]}_H\ang{\gamma_{m,n},\gamma'_{m',n'}}_Hd_G[g]\\
		&=\sum_{\gamma_{m,n}}d_\gamma\int_{E_{\gamma_{m,n}}\cap F_{\gamma_{m,n}}}\ang{\psi\circ[g],\gamma_{m,n}}_H\ang{\gamma_{m,n},\psi\circ[g]}_Hd_G[g]\\
		&=\braket{\psi}{A^H(E\cap F)}{\psi}.
		\end{split}
		\end{align}
		Therefore, $A^H(E)A^H(F)=A^H(E\cap F)$.
	\end{proof}

	\subsection{Compact coset measure game}

	\begin{definition}
		An \emph{compact coset measure monogamy game} is a tuple $\ttt{G}=\parens*{G,\mc{S},E}$, where $G$ is a compact Hausdorff group, $\mc{S}$ is a finite set of closed subgroups of $G$, and $E\subseteq G$ and is a symmetric neighborhood of identity, \emph{i.e.}\ $E=E^{-1}$ and $1\in E$.

		A \emph{(quantum) strategy} for a coset measure game $\ttt{G}$ is a tuple $\ttt{S}=\parens*{\mc{B},\mc{C},B,C,\rho}$, where $\mc{B}$ and $\mc{C}$ are Hilbert spaces, $B=\set{B^H:\scr{B}(\tsf{CS}(H))\rightarrow\mc{B}(\mc{B})}{H\in\mc{S}}$ and $C=\set{C^H:\scr{P}(\tsf{IB}(H))\rightarrow\mc{B}(\mc{C})}{H\in\mc{S}}$ are collections of POVM measures, and $\rho\in\mc{D}(L^2(G)\otimes\mc{B}\otimes\mc{C})$.

		Let $\ttt{G}$ be a coset measure game and $\ttt{S}$ be a strategy for it. The \emph{winning probability} of $\ttt{S}$ is
		\begin{align}
			\mfk{w}_{\ttt{G}}(\ttt{S})=\expec{H\in\mc{S}}\Tr\squ*{(A^H\otimes B^H\otimes C^H)\parens{E_H}\rho},
		\end{align}
		where $E_H=\set*{([g]_H,\gamma_{m,n},[eg]_H,\gamma_{m,n})}{g\in G,\gamma_{m,n}\in\tsf{IB}(H),e\in E}$ and the expectation with respect to $H\in\mc{S}$ is uniform.

		The winning probability of $\ttt{G}$ is $\mfk{w}(\ttt{G})=\sup_{\ttt{S}}\mfk{w}_{\ttt{G}}(\ttt{S})$.
	\end{definition}

	\begin{theorem}\label{thm:compact-coset-bound}
		Let $\ttt{G}=(G,\mc{S},E)$ be a compact coset measure game. Then, for any complete set of orthogonal permutations $\pi_i:\mc{S}\rightarrow\mc{S}$, the winning probability is bounded by
		\begin{align}
			w(\ttt{G})\leq \expec{i}\sup_{\substack{H\in\mc{S}\\\gamma\in\tsf{Irr}(H)\\g\in G}}\sqrt{d_\gamma\mu_H(H\cap gE\pi_i(H))}.
		\end{align}
	\end{theorem}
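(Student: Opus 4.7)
The plan is to follow the same template as \cref{thm:coset-finite} and \cref{thm:bound-abelian}, combining the non-abelian matrix-element structure of \cref{sec:finite} with the measure-theoretic integral bounds developed in \cref{sec:abelian-infinite}. First, fix a strategy $\ttt{S}=(\mc{B},\mc{C},B,C,\rho)$ and use Naimark's theorem \cref{thm:naimark} to assume that $B^H$ and $C^H$ are PVMs for every $H\in\mc{S}$. Writing $\Pi^H=(A^H\otimes B^H\otimes C^H)(E_H)$, these are projectors, so $\mfk{w}_{\ttt{G}}(\ttt{S})\leq\|\expec{H}\Pi^H\|$, and \cref{lem:sum-bound} with the given set of orthogonal permutations reduces the task to bounding $\|\Pi^H\Pi^{\pi_i(H)}\|$ for each fixed pair $H,K=\pi_i(H)\in\mc{S}$.

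Next I would bound each factor by freeing one of the two players' registers. Since $E_H\subseteq\bigcup_{\gamma_{m,n}\in\tsf{IB}(H)}\tsf{CS}(H)\times\{\gamma_{m,n}\}\times\tsf{CS}(H)\times\{\gamma_{m,n}\}$, I obtain
\[
\Pi^H\,\leq\,\sum_{\gamma_{m,n}\in\tsf{IB}(H)}A^H(\tsf{CS}(H)\times\{\gamma_{m,n}\})\otimes\Id_{\mc{B}}\otimes C^H(\{\gamma_{m,n}\}),
\]
and similarly, using $E_K\subseteq\bigcup_{g\in G}\{[g]_K\}\times\tsf{IB}(K)\times E[g]_K\subseteq G$ projected onto cosets,
\[
\Pi^K\,\leq\,\int A^K(E[g]_K/K\times\tsf{IB}(K))\otimes dB^K([g]_K)\otimes\Id_{\mc{C}}.
\]
Then \cref{lem:int-bound}, together with the fact that the PVMs $B^K$ and $C^H$ act on disjoint tensor factors, reduces $\|\Pi^H\Pi^K\|$ to the supremum over $\gamma_{m,n}\in\tsf{IB}(H)$ and $[g]_K\in\tsf{CS}(K)$ of the pure coset-measure overlap $\|A^H(\tsf{CS}(H)\times\{\gamma_{m,n}\})A^K(E[g]_K/K\times\tsf{IB}(K))\|$.

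The heart of the argument, and the main obstacle, is a compact non-abelian analogue of the overlap bounds \cref{lem:overlaps-finite} and \cref{lem:abelian-overlap}: I want
\[
\bigl\|A^H(\tsf{CS}(H)\times\{\gamma_{m,n}\})\,A^K(E[g]_K/K\times\tsf{IB}(K))\bigr\|\,\leq\,\sqrt{d_\gamma\,\sup_{g'\in G}\mu_H(H\cap g'EK)}.
\]
The right-hand factor should collapse via Peter--Weyl completeness: summing $d_\rho\,\ang{\rho_{i,j},\psi\circ[g]_K}_K\,\ang{\psi\circ[g]_K,\rho_{i,j}}_K$ over all $\rho_{i,j}\in\tsf{IB}(K)$ reproduces $\int_K|\psi([g]_Kk)|^2 d_Kk$, so $A^K(E[g]_K/K\times\tsf{IB}(K))$ acts as the projector $\Pi_{E[g]_KK}$ onto functions supported in $E[g]_K K$, exactly as the sum $\sum_{\varrho_{i,j}}|qK^\varrho_{i,j}\rangle\langle qK^\varrho_{i,j}|=\Pi_{qK}$ did in the finite case. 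For the left-hand factor, I would apply Cauchy--Schwarz inside the $H$-integral defining $\braket{\psi}{A^H(\tsf{CS}(H)\times\{\gamma_{m,n}\})\Pi_{E[g]_KK}\psi}$, exactly as in the proof of \cref{lem:abelian-overlap}, splitting $|\int_H\overline{\gamma_{m,n}(h)}\chi_{E[g]_KK}([g']_H h)\psi([g']_H h)\,d_H h|^2$ into $(\int |\gamma_{m,n}|^2)(\int\chi\,|\psi|^2)$. The bound $|\gamma_{m,n}(h)|\leq 1$ (as matrix elements of a unitary representation) yields $\int_H|\gamma_{m,n}(h)|^2\chi_{\cdots}d_Hh\leq\mu_H(H\cap[g']_H^{-1}E[g]_K K)$, and the prefactor $d_\gamma$ in the definition of $A^H$ produces the single $\sqrt{d_\gamma}$ factor after taking the square root. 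Integrating out the remaining $[g']_H$ yields the suprema, completing the bound.

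The principal subtlety I anticipate is the legitimacy of the estimates $\Pi^H\leq\cdots$ and $\Pi^K\leq\cdots$ as operator inequalities involving infinite sums and integrals of operator-valued measures, which must be handled as weak inequalities together with the strong pointwise approximation framework of \cref{sec:POVM} so that \cref{lem:int-bound} applies cleanly; the bound $|\gamma_{m,n}(h)|\leq 1$ is where the $d_\gamma$ enhancement over the abelian case enters, matching the $\sqrt{d_\gamma |H\cap K|/|H|}$ of \cref{lem:overlaps-finite}.
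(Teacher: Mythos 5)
Your proposal is correct and follows essentially the same route as the paper's proof: Naimark dilation, \cref{lem:sum-bound} applied to the projectors $\Pi^H=(A^H\otimes B^H\otimes C^H)(E_H)$, the operator upper bounds on $\Pi^H$ and $\Pi^{\pi_i(H)}$ that free one player's register each, \cref{lem:int-bound}, and a compact overlap lemma in which $A^K([Eq]_K\times\tsf{IB}(K))$ is identified with the projector onto functions supported on $EqK$ via Peter--Weyl completeness, with the remaining factor bounded by Cauchy--Schwarz together with $|\gamma_{m,n}(h)|\leq 1$, yielding $\sup_{g}\sqrt{d_\gamma\,\mu_H(H\cap gEK)}$. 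The only difference is cosmetic and internal to the overlap lemma: you split the $H$-integral directly by Cauchy--Schwarz and integrate over $\tsf{CS}(H)$ at the end using the quotient-integral formula, whereas the paper first normalizes $\psi$ coset-by-coset, takes a supremum over $[g]\in\tsf{CS}(H)$, and then applies the $L^1$--$L^2$ estimate; both give the identical bound.
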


	\begin{lemma} \label{lem:compact-overlaps}
		Let $H,K\leq G$ be closed subgroups and let $E\subseteq G$ be a symmetric open neighborhood of the identity. Then, for any $\gamma_{m,n}\in\tsf{IB}(H)$ and $q'\in\tsf{CS}(K)$,
		\begin{align}
		\norm*{A^H\parens*{\tsf{CS}(H)\times\{\gamma_{m,n}\}}A^K\parens*{[Eq']_K\times\tsf{IB}(K)}}\leq\sup_{g\in G}\sqrt{d_\gamma\mu_H(H\cap gEK)}.
		\end{align}
	\end{lemma}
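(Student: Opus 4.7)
The plan is to follow the template of the abelian overlap bound \cref{lem:abelian-overlap}, replacing Fourier inversion with the Peter-Weyl theorem. The first step is to recognize that $A^K([Eq']_K\times\tsf{IB}(K))$ simplifies drastically: summing the defining formula of the coset measure over all matrix elements $\gamma_{m,n}\in\tsf{IB}(K)$ weighted by $d_\gamma$ is exactly the reconstruction formula $\sum_{\gamma_{m,n}} d_\gamma \gamma_{m,n}(k)\ang{\gamma_{m,n},\psi\circ[g]}_K = \psi([g]k)$ guaranteed by Peter-Weyl completeness inside each coset. This shows that the operator is the multiplication operator by the characteristic function of $[Eq']_K K = Eq'K$, i.e., the orthogonal projection $\Pi_{Eq'K}$ onto functions in $L^2(G)$ supported on $Eq'K$ (completely analogously to how $A^K(EqK/K\times\hat K)=\Pi_{EqK}$ in the abelian case).

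The second step is to bound $\norm{A^H(\tsf{CS}(H)\times\{\gamma_{m,n}\})\Pi_{Eq'K}\ket{\psi}}^2$ for normalized $\ket{\psi}$. I would unfold $A^H$ restricted to a single matrix element, factor the integral over $G$ via $\int_G f(g)d_Gg = \int_{\tsf{CS}(H)}\int_H f([g]h)d_Hh\,d[g]$, and use Schur orthogonality $\int_H |\gamma_{m,n}(h)|^2 d_Hh = 1/d_\gamma$ to absorb one factor of $d_\gamma$. What remains is to estimate $|\ang{\gamma_{m,n},\Pi_{Eq'K}\psi\circ[g]}_H|^2$: applying Cauchy-Schwarz on $L^2(H)$ to split the matrix element (restricted to $H\cap[g]^{-1}Eq'K$) from $\psi$, together with $|\gamma_{m,n}|\leq 1$ (unitarity of finite-dimensional irreducible representations of a compact group), yields the bound $d_\gamma\sup_{g\in G}\mu_H(H\cap g^{-1}Eq'K)\cdot\norm{\ket{\psi}}^2$.

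The last step is a change-of-variables to eliminate the $q'$. Substituting $g\mapsto g^{-1}q'^{-1}$ in the supremum converts $\sup_g \mu_H(H\cap g^{-1}Eq'K)$ into $\sup_g \mu_H(H\cap g\,(q'^{-1}Eq')K)$, which agrees with the stated right-hand side provided the set $E$ is conjugation invariant. This is the main subtlety compared with \cref{lem:abelian-overlap}: in the abelian case the translate by $q'$ commutes past $E$ and is absorbed trivially into the supremum over $g$, whereas in the non-abelian setting one genuinely needs $q'^{-1}Eq'=E$ to identify the two suprema. The $SO(3)$ ball defined from the bi-invariant quaternion metric of \cref{sec:SO3game}, and more generally any $E$ built from a bi-invariant metric on a compact $G$, satisfies this hypothesis, so the stated bound holds in the cases of interest.
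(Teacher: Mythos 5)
Your proposal is correct in its mechanics and follows essentially the same route as the paper's proof: identify $A^K\parens*{[Eq']_K\times\tsf{IB}(K)}$ with the multiplication projector $\Pi_{Eq'K}$ via Peter--Weyl completeness on each coset, then bound $\norm{A^H(\tsf{CS}(H)\times\{\gamma_{m,n}\})\Pi_{Eq'K}\ket{\psi}}^2$ by Cauchy--Schwarz on $L^2(H)$ together with $|\gamma_{m,n}(h)|\leq 1$ and the quotient integration formula, arriving at $d_\gamma\sup_{g}\mu_H(H\cap gEq'K)$. (One small wrinkle: your remark about using Schur orthogonality to ``absorb one factor of $d_\gamma$'' does not do what you want --- replacing $\int_{H\cap\cdot}|\gamma_{m,n}|^2$ by $1/d_\gamma$ just reproduces the trivial bound $1$; the bound you actually state, which keeps $d_\gamma$ and the measure of the intersection, is the right one and comes from $|\gamma_{m,n}|\leq 1$, exactly as in the paper, which organises the same estimate via the normalised states $\ket{\psi_{[g]}}$.)

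The interesting point is your final step. You are right that passing from $\sup_g\mu_H(H\cap gEq'K)$ to $\sup_g\mu_H(H\cap gEK)$ is not automatic in a non-abelian group, since $Eq'=q'(q'^{-1}Eq')$ only lets you absorb $q'$ into the supremum at the cost of replacing $E$ by its conjugate $q'^{-1}Eq'$. The paper's own proof makes precisely this jump without comment in its last displayed inequality, so your extra hypothesis (conjugation-invariance of $E$) is not a weakness of your argument relative to the paper --- it is a legitimate observation that the lemma, as stated for an arbitrary symmetric neighborhood $E$, is only justified by this argument when $E$ is invariant under conjugation (or one replaces $E$ in the bound by a conjugate, which would then propagate into \cref{thm:compact-coset-bound}). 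As you note, this is harmless in the intended application: the ball $E_\varepsilon$ in \cref{sec:SO3game} is defined from the bi-invariant quaternion metric, hence conjugation-invariant, so the $SO(3)$ bound is unaffected; and in the abelian setting of \cref{lem:abelian-overlap} the issue does not arise. So: same approach, correct execution, plus a genuine catch on the scope of the final inequality.
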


	\begin{proof}
		Since $\tsf{IB}(K)$ gives rise to an orthonormal basis, for any $\ket{\varphi},\ket{\psi}\in L^2(G)$,
		\begin{align}
		\begin{split}
		\braket{\varphi}{A^K\parens*{[Eq']_K\times\tsf{IB}(K)}}{\psi}&=\sum_{\varrho_{i,j}\in\tsf{IB}(K)}d_\varrho\int_{[Eq']_K}\ang*{\varphi\circ[q],\varrho_{i,j}}_K\ang*{\varrho_{i,j},\psi\circ[q]}_Kd[q]\\
		&=\int_{[Eq']_K}\ang*{\varphi\circ[q],\psi\circ[q]}_Kd[q]=\int_{[Eq']_K}\int_K\overline{\varphi}([q]k)\psi([q]k)d_Kkd[q]\\
		&=\int_{Eq'K}\overline{\varphi}(q)\psi(q)d_Gq.
		\end{split}
		\end{align}
		That is, $A^K\parens*{[Eq']_K\times\tsf{IB}(K)}=\Pi_{Eq'K}$, the projector onto the subspace of states $\ket{\psi}$ such that $\psi(g)=0$ for almost all $g\notin Eq'K$. To get the norm, we consider $\norm{A^H\parens*{\tsf{CS}(H)\times\{\gamma_{m,n}\}}\Pi_{Eq'K}\ket{\psi}}$ for some unit vector $\ket{\psi}\in L^2(G)$ and then take the supremum. First,
		\begin{align}
		\begin{split}
		\norm{A^H\parens*{\tsf{CS}(H)\times\{\gamma_{m,n}\}}\Pi_{Eq'K}\ket{\psi}}^2&=\braket{\psi}{\Pi_{Eq'H}A^H\parens*{\tsf{CS}(H)\times\{\gamma_{m,n}\}}\Pi_{Eq'K}}{\psi}\\
		&=d_{\gamma}\int_{\tsf{CS}(H)}\abs*{\ang*{\gamma_{m,n},(\chi_{Eq'K}\psi)\circ[g]}_H}^2d[g].
		\end{split}
		\end{align}
		Now, let $p_{[g]}=\int_H|\psi([g]h)|^2d_Hh$ -- this is finite for almost every $[g]\in\tsf{CS}(H)$ and the map $[g]\mapsto p_{[g]}$ is in $L^1(\tsf{CS}(H))$ -- and then $\ket{\psi_{[g]}}=\frac{\ket{\psi}}{\sqrt{p_{[g]}}}$. In particular, $\int_H|\psi_{[g]}([g]h)|^2d_Hh=1$ for almost all $[g]$, and we have
		\begin{align}
		\begin{split}
		\norm{A^H\parens*{\tsf{CS}(H)\times\{\gamma_{m,n}\}}\Pi_{Eq'K}\ket{\psi}}&=\sqrt{d_{\gamma}\int_{\tsf{CS}(H)}p_{[g]}\abs*{\ang*{\gamma_{m,n},(\chi_{Eq'K}\psi_{[g]})\circ[g]}_H}^2d[g]}\\
		&\leq\sqrt{d_{\gamma}}\sup_{[g]\in\tsf{CS}(H)}\abs*{\ang*{\gamma_{m,n},(\chi_{Eq'K}\psi_{[g]})\circ[g]}_H}\\
		&=\sqrt{d_{\gamma}}\sup_{[g]\in\tsf{CS}(H)}\abs[\Big]{\int_H\overline{\gamma_{m,n}}(h)\chi_{Eq'K}([g]h)\psi_{[g]}([g]h)d_Hh}\\
		&\leq\sqrt{d_{\gamma}}\sup_{[g]\in\tsf{CS}(H)}\int_{H\cap[g]^{-1}Eq'K}|\gamma_{m,n}(h)||\psi_{[g]}([g]h)|d_Hh\\
		&\leq\sqrt{d_{\gamma}}\sup_{[g]\in\tsf{CS}(H)}\int_{H\cap[g]^{-1}Eq'K}|\psi_{[g]}([g]h)|d_Hh.
		\end{split}
		\end{align}
		As $H\cap[g]^{-1}Eq'K$ is a finite measure space $\norm{f}_1\leq\sqrt{\mu_H(H\cap[g]^{-1}Eq'K)}\norm{f}_2$, giving
		\begin{align}
		\begin{split}
		\abs*{\ang*{\gamma_{m,n},(\chi_{Eq'K}\psi)\circ[g]}}&\leq\sqrt{\mu_H(H\cap[g]^{-1}Eq'K)}\int_{H\cap[g]^{-1}Eq'K}|\psi_{[g]}([g]h)|^2d_Hh\\
		&\leq\sqrt{\mu_H(H\cap[g]^{-1}Eq'K)}.
		\end{split}
		\end{align}
		Taking the supremum over $\ket{\psi}$ gives
		\begin{align}
		\norm*{A^H\parens*{\tsf{CS}(H)\times\{\gamma_{m,n}\}}\Pi_{Eq'K}}\leq \sup_{[g]\in\tsf{CS}(H)}\sqrt{d_\gamma\mu_H(H\cap[g]^{-1}Eq'K)}\leq\sup_{g\in G}\sqrt{d_\gamma\mu_H(H\cap gEK)}.
		\end{align}
	\end{proof}

	\begin{proof}[Proof of Theorem~\ref{thm:compact-coset-bound}]
		We proceed similarly to \cref{thm:bound-abelian}. Let $\ttt{S}=\parens*{\mc{B},\mc{C},B,C,\rho}$ be a strategy for $\ttt{G}=\parens*{G,\mc{S},E,F}$, and we assume that $B$ and $C$ are PVMs by Naimark's theorem \cref{thm:naimark}. Writing, for each $H\in\mc{S}$, $\Pi^H=(A^H\otimes B^H\otimes C^H)(E_H)$, we use the overlap result \cref{lem:sum-bound}, getting
		\begin{align}
			\mfk{w}_{\ttt{G}}(\ttt{S})\leq\norm[\Big]{\expec{S\in\mc{S}}\Pi^H}\leq\expec{i}\sup_{H\in\mc{S}}\norm*{\Pi^H\Pi^{\pi_i(H)}},
		\end{align}
		since $(\Pi^H)^2=\Pi^H$. Fixing $H,K\in\mc{S}$, it remains to simplify $\norm*{\Pi^H\Pi^K}$. We have that
		the set of correct answers $E_H\subseteq\{([g]_H,\gamma_{m,n},[g']_H,\gamma_{m,n})|g,g'\in G;\gamma_{m,n}\in\tsf{IB}(H);e\in E\},$ so
		\begin{align}
			(A^H\otimes B^H\otimes C^H)(E_H)\leq(A^H\otimes C^H)\parens*{\{([g]_H,\gamma_{m,n},\gamma_{m,n})|g\in G,\gamma_{m,n}\in\tsf{IB}(H),e\in E}\}\otimes\Id_{B};
		\end{align}
		and $E_K\subseteq\{([eq]_K,\varrho_{i,j},[q]_K,\varrho'_{i',j'})|q\in G;\varrho_{i,j},\varrho'_{i',j'}\in\tsf{IB}(K);e\in E\},$ so
		\begin{align}
			(A^K\otimes B^K\otimes C^K)(E_K)\leq(A^K\otimes B^K)\parens*{\{([eq]_K,\varrho_{i,j},[q]_K)|q\in G,\varrho_{i,j}\in\tsf{IB}(K),e\in E\}}\otimes\Id_{C}.
		\end{align}
		Using the definition of the tensor product operator measure, $\Pi_H\leq\int_{\tsf{IB}(H)}A^H(\tsf{CS}(H)\times\{\gamma_{m,n}\})\otimes\Id_B\otimes dC^H(\gamma_{m,n})$ and $\Pi_K\leq\int_{\tsf{CS}(K)}A^K([Eq]_K\times\tsf{IB}(K))\otimes dB^K([q]_K)\otimes\Id_C$, so
		\begin{align}
		\begin{split}
			\norm{\Pi^H\Pi^K}&\leq\norm[\Big]{\int_{\tsf{CS}(K)\times\tsf{IB}(H)}A^H(\tsf{CS}(H)\times\{\gamma_{m,n}\})A^K([Eq]_K\times\tsf{IB}(K))\otimes d(B^K\otimes C^H)([q]_k,\gamma_{m,n})}\\
			&\leq\sup_{\substack{[q]_K\in\tsf{CS}(K)\\\gamma_{m,n}\in\tsf{IB}(H)}}\norm*{A^H(\tsf{CS}(H)\times\{\gamma_{m,n}\})A^K([Eq]_K\times\tsf{IB}(K))},
		\end{split}
		\end{align}
		using the result of Lemma~\ref{lem:int-bound}. Then, using Lemma~\ref{lem:compact-overlaps}, we get
		\begin{align}
			\norm{\Pi^H\Pi^K}\leq\sup_{\substack{g\in G\\\gamma\in\tsf{Irr}(H)}}\sqrt{d_\gamma\mu_H(H\cap gEK)},
		\end{align}
		 which gives the result.
	\end{proof}

	\appendix

	\section{Integration by an operator-valued measure}\label{sec:POVM}

	In this section, we fix a measurable space $(X,\scr{S})$, a Hilbert space $\mc{H}$, a separable Hilbert space $\mc{K}$, and $P:\scr{S}\rightarrow\mc{B}(\mc{H})$ a POVM measure. We generally follow the standard notation of \cite{Axl20}.

	\begin{definition}
		Let $f:X\rightarrow\C$ be a measurable function. If
		\begin{align}
			\sup\set*{\int|f|d\braket{v}{P}{v}}{\ket{v}\in\mc{H},\norm{\ket{v}}\leq 1}<\infty,
		\end{align}
		we say $f$ is operator integrable and define the integral $\int fdP\in\mc{B}(\mc{H})$ as the operator such that
		\begin{align}
			\braket{u}{\int fdP}{v}=\int fd\braket{u}{P}{v}
		\end{align}
		for all $\ket{u},\ket{v}\in\mc{H}$.
	\end{definition}

	As the complex measure $\braket{u}{P}{v}$ is linear in $\ket{v}$ and antilinear in $\ket{u}$, $\int f dP$ is in fact a linear operator; and, by the polarization identity, $\norm{\int fdP}<\infty$.

	\begin{lemma}
		Let $f=\sum_ic_i\chi_{A_i}$ be a measurable simple function. Then, $\int fdP=\sum_i c_iP(A_i)$.
	\end{lemma}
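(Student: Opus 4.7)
The plan is to verify the claim directly from the definition of the operator integral via the sesquilinear form, using linearity of the scalar Radon--Nikod\'ym--style integrals $\int \cdot \, d\braket{u}{P}{v}$.

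First I would check that $f = \sum_i c_i \chi_{A_i}$ is operator integrable. Since $|f| \leq \max_i |c_i|$ pointwise and $\braket{v}{P}{v}$ is a positive finite measure of total mass $\braket{v}{P(X)}{v} = \braket{v}{\mathrm{Id}}{v} = \norm{\ket{v}}^2 \leq 1$ for any unit vector $\ket{v}$ (using $P(X) = \mathrm{Id}$ from the POVM condition), we have
\begin{equation*}
  \int |f| \, d\braket{v}{P}{v} \leq \max_i |c_i| < \infty,
\end{equation*}
uniformly in $\ket{v}$, so the supremum is finite and $\int f \, dP$ exists as a bounded operator on $\mc{H}$.

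Next, for any $\ket{u},\ket{v} \in \mc{H}$, I would unfold the definition and use linearity of the complex measure integral together with the defining property of a characteristic function, $\int \chi_{A_i} \, d\mu = \mu(A_i)$ applied to the complex measure $\mu = \braket{u}{P}{v}$:
\begin{equation*}
  \braket{u}{\int f \, dP}{v} = \int f \, d\braket{u}{P}{v} = \sum_i c_i \int \chi_{A_i} \, d\braket{u}{P}{v} = \sum_i c_i \braket{u}{P(A_i)}{v}.
\end{equation*}
The last expression equals $\braket{u}{\sum_i c_i P(A_i)}{v}$ by linearity of the inner product and the fact that the sum is finite.

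Finally, I would invoke uniqueness: two bounded operators agreeing as sesquilinear forms on all pairs $(\ket{u},\ket{v})$ must be equal (by the polarization identity, or directly since $\braket{u}{A}{v} = \braket{u}{B}{v}$ for all $\ket{u}$ implies $A\ket{v} = B\ket{v}$ for all $\ket{v}$). This gives $\int f \, dP = \sum_i c_i P(A_i)$ as operators in $\mc{B}(\mc{H})$. There is no real obstacle here; the statement is a direct consistency check that the operator integral reduces to the expected finite sum on simple functions, and the only thing worth being careful about is the well-definedness step (which relies on $P(X) = \mathrm{Id}$, a property of POVMs rather than general operator measures).
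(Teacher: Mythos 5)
Your proof is correct and follows essentially the same route as the paper: evaluate the matrix elements $\braket{u}{\int f\,dP}{v}$ via the definition, use linearity of the scalar integrals $\int\cdot\,d\braket{u}{P}{v}$ and $\int\chi_{A_i}\,d\braket{u}{P}{v}=\braket{u}{P(A_i)}{v}$, and conclude by equality of sesquilinear forms. The only difference is that you make the operator-integrability check explicit (via $\braket{v}{P(X)}{v}\leq 1$), which the paper leaves implicit; this is a harmless, even welcome, addition.
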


	\begin{proof}
		Let $\ket{u}, \ket{v}\in\mc{H}$. Then,
		\begin{align}
			\braket{u}{\int fdP}{v}=\sum_ic_i\int\chi_{A_i}d\braket{u}{P}{v}=\braket{u}{\sum_ic_iP(A_i)}{v}
		\end{align}
	\end{proof}

	We can, as for a usual measure, approximate the operator integral of a function by the integrals of simple functions that approximate it. Let $(f_n)$ be a sequence of simple functions that converges pointwise to $f$. Then, as for each $\ket{u},\ket{v}\in\mc{H}$, $\int f_nd\braket{u}{P}{v}\rightarrow\int fd\braket{u}{P}{v}$, the sequence $\parens*{\int f_ndP}$ in $\mc{B}(\mc{H})$ converges \emph{weakly} to $\int f dP$.

	We would like to able to integrate not just scalar functions, but operator-valued functions by a POVM measure. The following proceeds similarly to Pettis' integral \cite{Yos95}.

	\begin{definition}
		We say that $F:X\rightarrow\mc{B}(\mc{K})$ is \emph{weakly measurable} if, for every $\ket{u},\ket{v}\in\mc{K}$, $\braket{u}{F}{v}$ is measurable.
	\end{definition}

	\begin{lemma}
		Let $\Delta\geq 0$. The open ball
		\begin{align}
			B_\Delta:=\set*{A\in\mc{B}(\mc{K})}{\norm{A}<\Delta}
		\end{align}
		is separable in the strong operator topology.
	\end{lemma}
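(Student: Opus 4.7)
The plan is to exhibit an explicit countable SOT-dense subset of $B_\Delta$ constructed from a fixed orthonormal basis of $\mathcal{K}$. Since $\mathcal{K}$ is separable, I first fix an orthonormal basis $\{e_n\}_{n \in \mathbb{N}}$ and let $P_n$ denote the orthogonal projection onto $V_n := \operatorname{span}(e_1, \ldots, e_n)$. I then define
\begin{align}
D := \set[\big]{P_n M P_n}{n \in \mathbb{N},\ M \in \mc{B}(\mc{K})\text{ with matrix entries in the basis }\{e_k\}\text{ in }\mathbb{Q}+i\mathbb{Q},\ \norm{P_n M P_n} < \Delta},
\end{align}
which is a countable subset of $B_\Delta$. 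The goal is then to verify that $D$ is dense in $B_\Delta$ in the strong operator topology.

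To show density, I would take an arbitrary $A \in B_\Delta$ and a basic SOT-neighborhood $U = \set*{B \in B_\Delta}{\norm{(B-A)x_i} < \varepsilon,\ i = 1,\ldots,k}$ specified by finitely many vectors $x_1, \ldots, x_k \in \mc{K}$ and $\varepsilon > 0$, and produce some $C \in D \cap U$. First I would use the fact that $P_n \to \Id$ strongly (since $\bigcup_n V_n$ is dense) to see that $P_n A P_n x_i \to A x_i$ for each $i$, via the splitting $P_n A P_n x_i - A x_i = P_n A (P_n - \Id)x_i + (P_n - \Id)A x_i$, whose two terms are respectively bounded by $\norm{A}\,\norm{(P_n - \Id)x_i}$ and $\norm{(P_n - \Id)Ax_i}$, both tending to $0$. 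Thus for $n$ large enough, $\norm{P_n A P_n x_i - A x_i} < \varepsilon/2$ for all $i$. Note also that $\norm{P_n A P_n} \leq \norm{A} < \Delta$.

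Next, on the finite-dimensional subspace $V_n$, the compression $P_n A P_n$ can be approximated in operator norm by some $P_n M P_n$ with $M$ having matrix entries in $\mathbb{Q}+i\mathbb{Q}$, since such rational-entry matrices are dense in $M_n(\mathbb{C})$ in operator norm. Choosing the approximation so that
\begin{align}
\norm{P_n M P_n - P_n A P_n} < \min\set*{\tfrac{\varepsilon}{2\max_i \norm{x_i}},\ \tfrac{\Delta - \norm{A}}{2}}
\end{align}
gives simultaneously $\norm{P_n M P_n} < \Delta$ (hence $P_n M P_n \in D$) and $\norm{P_n M P_n x_i - P_n A P_n x_i} < \varepsilon/2$, which together with the previous estimate yields $\norm{P_n M P_n x_i - A x_i} < \varepsilon$ for each $i$. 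Thus $P_n M P_n \in D \cap U$ as required.

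The only mildly delicate point will be maintaining the strict inequality $\norm{P_n M P_n} < \Delta$ while approximating, which is handled by the second term in the minimum above; everything else reduces to the strong convergence $P_n \to \Id$ and the norm-density of rational-entry matrices in finite-dimensional operator algebras.
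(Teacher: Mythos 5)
Your proof is correct and follows essentially the same strategy as the paper's: a countable family of finite-rank operators adapted to a fixed orthonormal basis, with the strict bound $\norm{A}<\Delta$ preserved by building in a margin proportional to $\Delta-\norm{A}$, and density verified by approximating $A$ strongly on finitely many vectors. The only cosmetic differences are that you use two-sided compressions $P_nMP_n$ with rational matrix entries (so countability comes from rational $n\times n$ matrices), whereas the paper uses operators $\ketbra{v_1}{1}+\ldots+\ketbra{v_n}{n}$ with columns drawn from a countable dense subset of $\mc{K}$, and that you check density directly against a basic strong-operator neighborhood rather than exhibiting a sequence converging strongly at every vector.
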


	In particular, $\mc{B}(\mc{K})=\bigcup_{N=1}^\infty B_N$ is separable.

	\begin{proof}
		Let $D$ be a countable dense subset of $\mc{K}$ and $\set*{\ket{n}}{n\in\N}$ be an orthonormal basis. Set
		\begin{align}
			D_\Delta=\set{\ketbra{v_1}{1}+\ldots+\ketbra{v_n}{n}}{n\in\N;\ket{v_1},\ldots,\ket{v_n}\in D}\cap B_\Delta.
		\end{align}
		$D_\Delta$ is countable as $D$ is countable, and I claim $D_\Delta$ is dense in $B_\Delta$. Let $A\in B_\Delta$. By density of $D$, for each $n\in\N$ and $i\leq n$, there exists $\ket{v^n_i}\in D$ such that $\norm{\ket{v^n_i}-A\ket{i}}<\min\set*{\frac{1}{n^2},\frac{\Delta-\norm{A}}{n}}$. Set $A_n=\sum_{i=1}^n\ketbra{v^n_i}{i}$, so we have
		\begin{align}
		\begin{split}
			\norm{A_n}\leq\norm[\Big]{A\sum_{i=1}^n\ketbra{i}}+\sum_{i=1}^n\norm{\ket{v^n_i}-A\ket{i}}<\norm{A}+(\Delta-\norm{A})=\Delta,
		\end{split}
		\end{align}
		giving $A_n\in D_\Delta$. I claim now that the sequence $(A_n)$ converges strongly to $A$. Let $\varepsilon>0$ and $\ket{v}\in\mc{K}$. There exists $N\in\N$ such that both $\sum_{i=N+1}^\infty\abs*{\braket{i}{v}}^2<\parens*{\frac{\varepsilon}{2\Delta}}^2$ and $\frac{\norm{\ket{v}}}{N}<\frac{\varepsilon}{2}$. Then, for $n\geq N$,
		\begin{align}
		\begin{split}
			\norm{(A-A_n)\ket{v}}&\leq\norm[\Big]{\sum_{i=1}^n(A\ket{i}-\ket{v^n_i})\braket{i}{v}}+\norm[\Big]{A\sum_{i=n+1}^\infty\braket{i}{v}\ket{i}}\\
			&\leq\sum_{i=1}^n\norm{A\ket{i}-\ket{v^n_i}}\norm{v}+\norm{A}\sqrt{\sum_{i=n+1}^\infty\abs*{\braket{i}{v}}^2}\\
			&<\frac{\norm{v}}{n}+\Delta\frac{\varepsilon}{2\Delta}<\varepsilon.
		\end{split}
		\end{align}
	\end{proof}

	\begin{lemma}\label{lem:simp-limit}
		Let $F:X\rightarrow\mc{B}(\mc{K})$. Then, there exists a sequence of simple functions $(F_n)$ that converges strongly pointwise to $F$. Additionally, if $F$ is weakly measurable, then each $F_n$ can be chosen to be weakly measurable; and if $F$ is bounded, then for all $\delta>0$, $(F_n)$ can be chosen so that $\norm{F_n(x)}<\sup_x\norm{F(x)}+\delta$.
	\end{lemma}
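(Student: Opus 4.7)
The plan is to define $F_n$ by discretizing the matrix elements of $F(x)$ in a fixed orthonormal basis of $\mc{K}$ at progressively finer resolution. This produces a genuine simple function with the required operator-norm and measurability properties, sidestepping the difficulty of directly extracting a finitely-valued approximation from the SOT-dense subset supplied by the preceding lemma.

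Concretely, I would fix an orthonormal basis $\set*{\ket{k_i}}{i\geq 1}$ of $\mc{K}$ and, for each $n$, define $F_n(x)\in\mc{B}(\mc{K})$ to be the operator whose matrix elements are $D_n\parens*{\braket{k_j}{F(x)}{k_i}}$ for $i,j\leq n$ and zero otherwise, where $D_n:\C\to\C$ rounds to a finite grid of resolution $\delta/(2n)$ inside a disc of radius $\sup_x\norm{F(x)}+\delta/2$ in the bounded case (in the unbounded case, grow the disc radius with $n$ and set $F_n(x)=0$ when $\norm{F(x)}$ exceeds the radius). Since $D_n$ has finite image and the $n\times n$ block has finitely many entries, $F_n$ takes only finitely many values and is thus a simple function. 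Writing $P_n$ for the projector onto $\spn\set*{\ket{k_i}}{i\leq n}$, the Frobenius-norm bound on the $n\times n$ discretization-error matrix gives $\norm{F_n(x)-P_nF(x)P_n}\leq\sqrt{n^2\cdot(\delta/(2n))^2}=\delta/2$, so in the bounded case $\norm{F_n(x)}<\sup_x\norm{F(x)}+\delta$, yielding the stated norm estimate.

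For strong pointwise convergence, $(F_n(x)-F(x))\ket{k_i}$ splits (for $n\geq i$) into the discretization error in the $n\times n$ block, of norm $O(n^{-1/2})$, and the Bessel tail $\sum_{j>n}\braket{k_j}{F(x)}{k_i}\ket{k_j}$, which vanishes as $n\to\infty$ since $F(x)\ket{k_i}\in\mc{K}$. Hence $F_n(x)\ket{k_i}\to F(x)\ket{k_i}$ for each $i$. To extend to arbitrary $\ket{v}=\sum_i v_i\ket{k_i}$, decompose $(F_n(x)-F(x))\ket{v}=(F_n(x)-F(x))\ket{v_{\leq N}}+(F_n(x)-F(x))\ket{v_{(N,n]}}-F(x)\ket{v_{>n}}$; the last two terms are controlled by the uniform bound $\norm{F_n(x)}<\sup_x\norm{F(x)}+\delta$ times $\norm{v_{>N}}$, which is small for $N$ large, and the first vanishes for $n$ large at fixed $N$ by basis-vector convergence. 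For the weakly measurable addendum, each matrix element $\braket{k_j}{F(x)}{k_i}$ is measurable in $x$ by hypothesis, $D_n$ is piecewise constant on Borel sets, and every preimage $F_n^{-1}(M)$ is a finite intersection of measurable sets, so $F_n$ is a weakly measurable simple function.

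The main obstacle is jointly controlling three features: finitely-many-valuedness (required for simplicity), operator-norm control (required both for the stated estimate and for upgrading basis-vector convergence to strong convergence on all of $\mc{K}$), and measurability. An approach that tries to select values from the countable SOT-dense subset given by the preceding lemma naturally produces countably-valued approximations with hard-to-control norms; coordinatewise discretization resolves all three issues simultaneously via sharp quantitative bounds on each matrix entry, reducing the remaining work to the standard head/tail estimate.
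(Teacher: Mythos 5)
Your proof is correct, but it takes a genuinely different route from the paper's. The paper leans on the preceding lemma (strong-operator separability of the norm balls $B_\Delta\subseteq\mc{B}(\mc{K})$): it covers the corestricted codomain $U$ by countably many SOT cylinder sets $B_n(M^n_i)$ centred at a countable dense family, pulls these back to a partition $E^n_m$ of $X$ (measurable precisely when $F$ is weakly measurable), and takes $F_n$ to be the resulting selection; convergence then follows from the defining conditions of the cylinder sets, and the norm bound from corestricting to $B_{\sup_x\norm{F(x)}+\delta}$. You instead fix an orthonormal basis of the (separable) space $\mc{K}$, truncate to the $n\times n$ corner and round each matrix entry to a finite grid, controlling the operator norm of the rounding error by the Frobenius bound $\delta/2$, and you upgrade convergence on basis vectors to strong pointwise convergence via the uniform bound $\norm{F_n(x)}\leq\norm{F(x)}+\delta/2$ and a standard head/tail estimate. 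Your route is more elementary and quantitative, avoids the separability lemma for $\mc{B}(\mc{K})$ entirely (both arguments still rely on separability of $\mc{K}$ itself), and makes measurability of the level sets immediate from measurability of matrix elements; the paper's route avoids choosing a basis and treats the bounded and unbounded cases uniformly through the choice of $U$. One small point in your unbounded case: for $F_n$ to be weakly measurable you need the cutoff set $\set*{x}{\norm{F(x)}>R_n}$ to be measurable, which is true because $\norm{F(\cdot)}$ is a countable supremum of $\abs*{\braket{u}{F(\cdot)}{v}}$ over a countable dense subset of the unit ball, but it is not literally a ``finite intersection of measurable sets'' as you assert; alternatively, clipping each entry to the disc instead of zeroing the whole operator removes the issue.
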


	\begin{proof}
		Let $\delta>0$. Then, if $F$ is bounded, we can corestrict it to $U=B_{\sup_x \norm{F(X)}+\delta}$; and if $F$ is not bounded, we do not change its codomain, $U=\mc{B}(\mc{K})$. Since $\mc{K}$ is separable, there exists a countable dense sequence $(\ket{v_n})_{n\in\N}$ in $\mc{K}$. For $n\in\N$ and $M\in\mc{B}(\mc{K})$, define the cylinder sets
		\begin{align}
			B_n(M)=\set*{A\in U}{\norm{A}<\norm{M}+1;\norm{(M-A)\ket{v_i}}<\frac{1}{n}\text{ for }i=1,\ldots,n}.
		\end{align}
		By separability of $U$, there exists a countable collection $M^n_1,M^n_2,\ldots\in U$ such that $U\subseteq\bigcup_{i=1}^\infty B_n(M^n_i)$. Next, define the sets $E^n_m$ as
		\begin{align}
			E^n_m=F^{-1}\parens*{B_n(M^n_m)}\backslash\bigcup_{i=1}^{m-1}F^{-1}\parens*{B_n(M^n_i)},
		\end{align}
		which are in $\scr{S}$ if $F$ is weakly measurable, and set $F^n_m=\sum_{i=1}^m M^n_m\chi_{E^n_m}$. Finally, define $F_n:X\rightarrow U$ as $F_n(x)=F^i_n(x)$, where $i\leq n$ is the largest such that $F^i_n(x)\neq 0$, and $F_n(x)=0$ if no such $i$ exists. Then, $F_n$ is weakly measurable if $F$ is, takes at most $n^2$ values, so is simple, and $F_n(X)\subseteq U$.

		It remains to show that $(F_n)$ converges pointwise strongly to $F$. Let $\varepsilon>0$, $x\in X$, and $\ket{v}\in\mc{K}$. Then, there exists $N\in\N$ such that $\frac{1}{N}<\frac{\varepsilon}{2}$ and $\norm{\ket{v_N}-\ket{v}}<\frac{\varepsilon}{2(2\norm{F(x)}+1)}$ As $F(x)\in U$, there exists $i$ such that $F(x)\in B_n(M^N_i)$. Let $n\geq\max\{i,N\}$. Then, there exists $m\geq N$ and $j\leq n$ such that $F_n(x)=M^m_j$. We must have $F(x)\in B_m(M^m_j)$, so
		\begin{align}
		\begin{split}
			\norm{(F(x)-F_n(x))\ket{v}}&\leq\norm{(F(x)-M^m_j)\ket{v_N}}+\norm{F(x)-M^m_j}\norm{\ket{v}-\ket{v_N}}\\
			&<\frac{1}{m}+(2\norm{F(x)}+1)\frac{\varepsilon}{2(2\norm{F(x)}+1)}\\
			&<\varepsilon.
		\end{split}
		\end{align}
	\end{proof}

	\begin{definition}
		Let $F:X\rightarrow\mc{B}(\mc{K})$. If the set
		\begin{align}
			\set*{\sum_{i,j=1}^m\sqrt{p_ip_j}\int\braket{k_i}{F}{k_j}d\braket{h_i}{P}{h_j}}{m\in\N;\sum_{i=1}^mp_i\leq1;\{\ket{h_i}\}\subseteq\mc{H},\{\ket{k_i}\}\subseteq\mc{K}\text{ orthonormal}}
		\end{align}
		is bounded in $\C$, we say $F$ is \emph{weakly operator integrable}. In that case, we take $\int F\otimes dP\in\mc{B}(\mc{K}\otimes\mc{H})$ to be the operator such that
		\begin{align}
			\bra{k}\otimes\bra{h}\int F\otimes dP\ket{k'}\otimes\ket{h'}=\int\braket{k}{F}{k'}d\braket{h}{P}{h'},
		\end{align}
		for any $\ket{h},\ket{h'}\in\mc{H}$ and $\ket{k},\ket{k'}\in\mc{K}$.
	\end{definition}

	As for integration of scalar-valued functions. this is well-defined as a bounded linear operator via the polarization identity and linearity of integrals with respect to complex measures.

	\begin{lemma}
		Let $F=\sum_lM_l\chi_{E_l}$ be a simple function. Then, $F$ is weakly operator integrable and
		\begin{align}
			\int F\otimes dP=\sum_l M_l\otimes P(E_l).
		\end{align}
	\end{lemma}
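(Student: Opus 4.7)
The plan is a direct verification that the finite-sum operator $T := \sum_l M_l \otimes P(E_l)$, which is manifestly bounded on $\mathcal{K} \otimes \mathcal{H}$, both satisfies the matrix-element identity defining $\int F \otimes dP$ and certifies that $F$ is weakly operator integrable.

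First I would check the defining matrix-element identity. Fix $\ket{h},\ket{h'} \in \mathcal{H}$ and $\ket{k},\ket{k'} \in \mathcal{K}$. The function $\braket{k}{F}{k'} = \sum_l \braket{k}{M_l}{k'} \chi_{E_l}$ is a scalar simple function, so by the preceding lemma on integration of simple scalar functions against $P$,
\begin{align*}
\int \braket{k}{F}{k'}\, d\braket{h}{P}{h'} &= \sum_l \braket{k}{M_l}{k'} \braket{h}{P(E_l)}{h'} \\
&= \bra{k}\otimes\bra{h}\, T\, \ket{k'}\otimes\ket{h'}.
\end{align*}
This shows that if $F$ is weakly operator integrable, then $\int F \otimes dP = T$.

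Second, I would verify weak operator integrability. For any $m \in \mathbb{N}$, $p_1,\ldots,p_m \geq 0$ with $\sum_i p_i \leq 1$, and orthonormal tuples $\{\ket{h_i}\} \subseteq \mathcal{H}$ and $\{\ket{k_i}\} \subseteq \mathcal{K}$, the scalar computation above applied to each pair $(i,j)$ gives
\begin{align*}
\sum_{i,j=1}^m \sqrt{p_i p_j} \int \braket{k_i}{F}{k_j}\, d\braket{h_i}{P}{h_j} = \braket{\phi}{T}{\phi},
\end{align*}
where $\ket{\phi} = \sum_i \sqrt{p_i}\, \ket{k_i} \otimes \ket{h_i}$. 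By orthonormality, $\norm{\ket{\phi}}^2 = \sum_i p_i \leq 1$, so $\abs{\braket{\phi}{T}{\phi}} \leq \norm{T} \leq \sum_l \norm{M_l}\,\norm{P(E_l)}$, which is finite. Thus the set in the definition of weak operator integrability is bounded, and combined with the first step we conclude $\int F \otimes dP = \sum_l M_l \otimes P(E_l)$.

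There is no substantial obstacle here: the only non-trivial input is the scalar simple-function lemma already established just above in the text, and everything else is the polarization identity and the bilinearity of the tensor product. The main thing to be careful about is that the matrix-element identity uniquely determines a bounded operator (which follows from the polarization identity together with the a priori bound just derived), so that writing ``$\int F \otimes dP = T$'' is unambiguous.
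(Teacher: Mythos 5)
Your proposal is correct and follows essentially the same route as the paper: apply the scalar simple-function computation to the matrix elements $\braket{k_i}{F}{k_j}$, recognize the resulting sum as $\braket{\phi}{\sum_l M_l\otimes P(E_l)}{\phi}$ for the unit-or-smaller vector $\ket{\phi}=\sum_i\sqrt{p_i}\ket{k_i}\otimes\ket{h_i}$, and bound it to obtain weak operator integrability together with the identification of the integral. The only differences are cosmetic: you make the matrix-element identity an explicit first step (the paper leaves it implicit in the same computation) and you bound by $\sum_l\norm{M_l}\norm{P(E_l)}$ rather than the paper's sharper $\max_l\norm{M_l}$, both of which are finite and equally sufficient.
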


	\begin{proof}
		Let $\set{\ket{h_i}}{i=1,\ldots,m}\subseteq\mc{H}$ and $\set{\ket{k_i}}{i=1,\ldots,m}\subseteq\mc{K}$ be orthonormal, and $\sum_{i=1}^mp_i\leq 1$. Write $\ket{v}=\sum_{i=1}^m\sqrt{p_i}\ket{k_i}\otimes\ket{h_i}\in\mc{K}\otimes\mc{H}$. Then, for each $i,j$, $\sum_l\braket{k_i}{F}{k_j}\chi_{E_l}$ is simple and weakly operator integrable, so
		\begin{align}
		\begin{split}
			\abs*{\sum_{i,j=1}^m\sqrt{p_ip_j}\int\braket{k_i}{F}{k_j}d\braket{h_i}{P}{h_j}}&=\abs*{\sum_{i,j=1}^m\sqrt{p_ip_j}\sum_{l}\braket{k_i}{M_l}{k_j}\braket{h_i}{P(E_l)}{h_j}}\\
			&=\abs*{\braket{v}{\sum_l M_l\otimes P(E_l)}{v}}\\
			&\leq\max_l\norm*{M_l}<\infty.
		\end{split}
		\end{align}
		Thus, $F$ is weakly operator integrable and its integral is $\sum_l M_l\otimes P(E_l)$.
	\end{proof}

	\begin{lemma}
		Suppose $F:X\rightarrow\mc{B}(\mc{K})$ is bounded and weakly measurable. Then, $F$ is weakly operator integrable.
	\end{lemma}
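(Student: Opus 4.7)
The plan is to reduce to the simple-function case via pointwise approximation. By \cref{lem:simp-limit}, since $F$ is bounded and weakly measurable, there exists a sequence $(F_n)$ of weakly measurable simple functions with $F_n(x) \to F(x)$ strongly and $\norm{F_n(x)} < \norm{F}_\infty + 1$ for all $x$. Each $F_n = \sum_l M^n_l \chi_{E^n_l}$ is weakly operator integrable by the previous lemma, with $\int F_n \otimes dP = \sum_l M^n_l \otimes P(E^n_l)$. I claim this operator satisfies $\norm{\int F_n \otimes dP} \leq \sup_l \norm{M^n_l} \leq \norm{F}_\infty + 1$. Dilating $P$ to a PVM $Q$ on some larger space via an isometry $V$ (\cref{thm:naimark}) gives $\sum_l M^n_l \otimes P(E^n_l) = (\Id \otimes V^\dag)(\sum_l M^n_l \otimes Q(E^n_l))(\Id \otimes V)$; since the $Q(E^n_l)$ are orthogonal projections summing to $\Id$, the middle operator has norm exactly $\sup_l \norm{M^n_l}$, and the claim follows by contractivity of conjugation by an isometry.

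Next, fix orthonormal families $\{\ket{h_i}\}, \{\ket{k_i}\} \subseteq \mc{K}$ and weights $\sum_i p_i \leq 1$, and set $\ket{v} = \sum_i \sqrt{p_i} \ket{k_i} \otimes \ket{h_i}$, which satisfies $\norm{\ket{v}}^2 = \sum_i p_i \leq 1$. A direct expansion shows
\begin{align*}
\sum_{i,j=1}^m \sqrt{p_ip_j} \int \braket{k_i}{F_n}{k_j} d\braket{h_i}{P}{h_j} = \braket{v}{\textstyle\sum_l M^n_l \otimes P(E^n_l)}{v},
\end{align*}
which is therefore bounded in absolute value by $\norm{F}_\infty + 1$, uniformly in $n$ and in the choice of $m, p_i, \ket{h_i}, \ket{k_i}$.

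It remains to pass the uniform bound to $F$ itself. Strong pointwise convergence $F_n(x) \to F(x)$ gives scalar pointwise convergence $\braket{k_i}{F_n(x)}{k_j} \to \braket{k_i}{F(x)}{k_j}$, uniformly bounded by $\norm{F}_\infty + 1$. The complex measure $\mu_{ij}(E) := \braket{h_i}{P(E)}{h_j}$ has finite total variation: for any measurable partition $\{E_k\}$,
\begin{align*}
\sum_k |\braket{h_i}{P(E_k)}{h_j}| \leq \sum_k \norm{P(E_k)^{1/2}\ket{h_i}}\,\norm{P(E_k)^{1/2}\ket{h_j}} \leq \norm{\ket{h_i}}\,\norm{\ket{h_j}} \leq 1
\end{align*}
by Cauchy--Schwarz applied twice (once to each term, once to the sum). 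The bounded convergence theorem for complex measures therefore yields $\int \braket{k_i}{F_n}{k_j} d\mu_{ij} \to \int \braket{k_i}{F}{k_j} d\mu_{ij}$. Taking $n \to \infty$ in the uniform bound above gives $|\sum_{i,j}\sqrt{p_ip_j} \int \braket{k_i}{F}{k_j} d\mu_{ij}| \leq \norm{F}_\infty + 1$, establishing weak operator integrability of $F$. The main obstacle is the norm bound $\norm{\sum_l M^n_l \otimes P(E^n_l)} \leq \sup_l \norm{M^n_l}$, which a priori would fail for generic POVMs but holds here by Naimark dilation; this is essentially the reason why the POVM hypothesis (rather than a weaker operator-measure hypothesis) is needed throughout.
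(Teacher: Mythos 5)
Your proof is correct and follows essentially the same route as the paper's: approximate $F$ by the bounded simple functions from \cref{lem:simp-limit}, use the simple-function case to get a bound on the bilinear sums that is uniform in $n$ and in the choice of $m$, $p_i$, $\ket{h_i}$, $\ket{k_i}$, and pass to the limit using dominated convergence for the finite complex measures $\braket{h_i}{P}{h_j}$. The only cosmetic differences are that you justify $\abs*{\braket{v}{\sum_l M^n_l\otimes P(E^n_l)}{v}}\leq\sup_l\norm{M^n_l}$ via Naimark dilation, whereas the paper takes it directly from its simple-function lemma (and it also follows without dilation by inserting $P(E^n_l)^{1/2}$ on both sides of each term and summing, so the dilation is not really what the POVM hypothesis is needed for), and that your explicit total-variation estimate for $\braket{h_i}{P}{h_j}$ makes precise a convergence step the paper leaves implicit.
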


	\begin{proof}
		Fix $m\in\N$, $\set*{\ket{h_i}}{i=1,\ldots,m}\subseteq\mc{H}$ and $\set*{\ket{k_i}}{i=1,\ldots,m}\subseteq\mc{K}$ orthonormal, and $p_i\geq 0$ such that $\sum_{i=1}^mp_i\leq 1$; set $S=\abs*{\sum_{i,j=1}^m\sqrt{p_ip_j}\int\braket{k_i}{F}{k_j}d\braket{h_i}{P}{h_j}}$. To show that $F$ is weakly operator integrable, it suffices to show that we can upper bound $S$ by a constant independent of $m$, or the $\ket{h_i}$, $\ket{k_i}$, and $p_i$. Using \cref{lem:simp-limit}, let $\parens*{F_n=\sum_{l}M_l^n\chi_{E^n_l}}$ be a sequence of simple functions that converges strongly pointwise to $F$ and $\norm{F_n(x)}\leq\sup_y\norm{F(y)}+1$ for all $x\in X$. In particular, $\braket{k_i}{F_n}{k_j}\rightarrow\braket{k_i}{F}{k_j}$ pointwise for all $i,j$, so there exists $N\in\N$ such that for all $n\geq N$, \begin{align}
			\abs[\Big]{\int\braket{k_i}{F_n-F}{k_j}d\braket{h_i}{P}{h_j}}\leq\sqrt{p_ip_j}.
		\end{align}
		As such, writing $\ket{v}=\sum_{i=1}^n\sqrt{p_i}\ket{k_i}\otimes\ket{h_i}$, we get that
		\begin{align}
		\begin{split}
			S&\leq\abs[\Big]{\sum_{i,j=1}^m\sqrt{p_ip_j}\int\braket{k_i}{F_n}{k_j}d\braket{h_i}{P}{h_j}}+\sum_{i,j}^mp_ip_j=\abs[\Big]{\braket{v}{\sum_{l}M^n_l\otimes P(E^n_l)}{v}}+1\\
			&\leq\sup_l\norm{M^n_l}+1=\sup_{x\in X}\norm{F_n(x)}+1,
		\end{split}
		\end{align}
		and thus $S\leq\sup_x\norm{F(x)}+2$, giving the result.
	\end{proof}

	\begin{theorem}\label{thm:int-simp-limit}
		Let $F:X\rightarrow\mc{B}(\mc{K})$ be norm bounded and weakly measurable. For any sequence of weakly operator integrable simple functions $(F_n)$ that converges to $F$ strongly pointwise and $\sup_n\norm*{\int F_n\otimes dP}<\infty$, we have that $\int F_n\otimes dP\rightarrow\int F\otimes dP$ weakly.
	\end{theorem}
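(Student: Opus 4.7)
The plan is to reduce the weak convergence in $\mc{B}(\mc{K}\otimes\mc{H})$ to convergence of matrix elements on a dense set of simple tensors, and then handle the latter by the bounded convergence theorem for a finite complex measure. Write $T_n=\int F_n\otimes dP$ and $T=\int F\otimes dP$; the latter is a well-defined bounded operator by the preceding lemma because $F$ is norm bounded and weakly measurable. By hypothesis $\sup_n\|T_n\|<\infty$, so the sequence $(T_n-T)$ is uniformly norm-bounded.

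First, I would leverage this uniform bound together with the density of the algebraic tensor product in $\mc{K}\otimes\mc{H}$: a standard $\varepsilon/3$ argument reduces the proof of $\braket{u}{T_n}{v}\to\braket{u}{T}{v}$ for all $\ket{u},\ket{v}\in\mc{K}\otimes\mc{H}$ to the case of pure tensors $\ket{u}=\ket{k}\otimes\ket{h}$ and $\ket{v}=\ket{k'}\otimes\ket{h'}$, using bilinearity to pass from pure tensors to the span. For such tensors, the definitions of the operator integrals give
\begin{align}
\braket{k\otimes h}{T_n-T}{k'\otimes h'} = \int \braket{k}{F_n(x)-F(x)}{k'}\,d\mu_{h,h'}(x),
\end{align}
where $\mu_{h,h'}(E):=\braket{h}{P(E)}{h'}$ is a complex measure whose total variation is bounded by $\|\ket{h}\|\|\ket{h'}\|$ (via Cauchy--Schwarz applied to $P(E)\geq 0$), hence finite.

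Second, I would invoke the bounded convergence theorem for this finite complex measure. Strong pointwise convergence of $F_n$ yields pointwise convergence $\braket{k}{F_n(x)}{k'}\to\braket{k}{F(x)}{k'}$, and under a uniform norm bound $\sup_n\|F_n(x)\|\leq C$ the integrand is uniformly bounded by $(C+\sup_y\|F(y)\|)\|\ket{k}\|\|\ket{k'}\|$, so bounded convergence delivers the scalar integral convergence and hence $T_n\to T$ weakly.

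The main obstacle is reconciling pointwise strong convergence of $(F_n)$ with the uniform pointwise norm bound needed for bounded convergence: strong pointwise convergence combined with the Uniform Boundedness Principle yields $\sup_n\|F_n(x)\|<\infty$ only for each fixed $x$, not uniformly in $x$. To close the argument cleanly I would use the additional structure guaranteed by \cref{lem:simp-limit}, which produces simple approximants satisfying $\|F_n(x)\|\leq\sup_y\|F(y)\|+\delta$; in the applications of this theorem in the paper, $(F_n)$ arises precisely in this way. If one wished to dispense with this implicit strengthening, the natural alternative would be to pass to the Naimark dilation of $P$ to a PVM $\tilde P$ and exploit bounded/monotone convergence for spectral integrals together with the assumed uniform bound $\sup_n\|T_n\|<\infty$ to control the behaviour on sets where $\|F_n\|$ is large.
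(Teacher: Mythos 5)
Your proposal is correct and takes essentially the same route as the paper: the paper fixes a test vector, Schmidt-decomposes it, truncates to finitely many terms using the uniform bound $W$ on $\norm{\int F\otimes dP}$ and $\sup_n\norm{\int F_n\otimes dP}$, and then concludes from convergence of the finitely many scalar integrals $\int\braket{k_i}{F_n-F}{k_j}d\braket{h_i}{P}{h_j}$ --- the same truncation-plus-matrix-element argument you carry out via density of finite tensors and an $\varepsilon/3$ estimate. The ``main obstacle'' you flag is genuine but is not resolved in the paper either: its proof simply asserts that pointwise convergence of $\braket{k_i}{F_n}{k_j}$ gives convergence of these integrals, which, as you observe, requires a domination argument such as the uniform pointwise bound $\norm{F_n(x)}\leq\sup_y\norm{F(y)}+\delta$ provided by \cref{lem:simp-limit}, and this bound is indeed available in every application of the theorem in the paper, so your proposed fix matches the intended use.
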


	Note that \cref{lem:simp-limit} shows that such a sequence of simple functions exists.

	\begin{proof}
		Write $F_n=\sum_l M^n_l\chi_{E^n_l}$ and $W=\max\set*{\norm{\int F\otimes dP},\sup_{n}}<\infty$. Fix $\ket{v}\in\mc{K}\otimes\mc{H}$ and $\varepsilon>0$. Via the Schmidt decomposition, we may write $\ket{v}=\sum_{i}\sqrt{p_i}\ket{k_i}\otimes\ket{h_i}$. As the norm of $\ket{v}$ is finite, there exists $I\in\N$ such that $\sum_{i=I+1}^\infty p_i<\min\{\frac{\varepsilon}{4(2\norm{\ket{v}}+1)W},1\}$; let $\ket{v'}=\sum_{i=I+1}^\infty\sqrt{p_i}\ket{k_i}\otimes\ket{h_i}$. Next, as $\braket{k_i}{F_n}{k_j}\rightarrow\braket{k_i}{F}{k_j}$ pointwise, there exists $N\in\N$ such that for all $n\geq N$, $\abs*{\int\braket{k_i}{F_n-F}{k_j}d\braket{h_i}{P}{h_j}}\leq\frac{\varepsilon}{2I\norm{\ket{v}}^2}$ for all $i,j\leq I$. Then, for any $n\geq N$,
		\begin{align}
		\begin{split}
			\abs[\Big]{\braket{v}{\int(F_n-F)\otimes dP}{v}}&\leq\abs[\Big]{\sum_{i,j=1}^I\sqrt{p_ip_j}\int\braket{k_i}{F_n-F}{k_j}d\braket{h_i}{P}{h_j}}\\
			&\qquad+(2\norm*{\ket{v}}+1)\norm{\ket{v'}}\norm[\Big]{\int(F_n-F)\otimes dP}\\
			&\leq\sum_{i,j=1}^I\sqrt{p_ip_j}\abs[\Big]{\int\braket{k_i}{F_n-F}{k_j}d\braket{h_i}{P}{h_j}}+(2\norm*{\ket{v}}+1)\norm{\ket{v'}}2W\\
			&<\parens[\Big]{\sum_{i=1}^I\sqrt{p_i}}^2\frac{\varepsilon}{2I\norm{\ket{v}}^2}+\frac{\varepsilon}{2}\leq\varepsilon.
		\end{split}
		\end{align}
	\end{proof}

	\begin{definition}
		Let $(X,\scr{S},P)$ and $(Y,\scr{T},Q)$ be POVM operator measure spaces. Then, the \emph{tensor product measure} on the measurable space $(X\times Y,\scr{S}\otimes\scr{T})$ is
		\begin{align}
			(P\otimes Q)(E)=\int_Y P((E)^y)\otimes dQ(y),
		\end{align}
		where $(E)^y=\set*{x\in X}{(x,y)\in E}$ is the cross section of $E$ at $Y$.
	\end{definition}

	This definition is sensible since the map $y\mapsto P((E)^y)$ is weakly measurable and bounded in norm by $1$. This definition allows us to immediately generalise Fubini's theorem, as the matrix elements must all satify it: for any weakly operator integrable $F:X\times Y\rightarrow \mc{B}(\mc{K})$,
	\begin{align}
		\int F\otimes d(P\otimes Q)=\int_Y\parens[\Big]{\int_X F(x,y)\otimes dP(x)}\otimes dQ(y)=\int_X\int_Y F(x,y)\otimes dP(x)\otimes dQ(y).
	\end{align}

	\section{Damping operators and maximally-entangled states}\label{sec:damping}

	In this section, let $\mc{H}$ be any Hilbert space. Write $\mc{T}_1(\mc{H})$ for the set of trace-class operator and for $T\in\mc{T}_1(\mc{H})$ the trace norm $\norm{T}_1=\Tr(\sqrt{A^\dag A})$; and write $\mc{T}_2(\mc{H})$ for the Hilbert-Schmidt operators and for $T\in\mc{T}_2(\mc{H})$ the Hilbert-Schmidt norm $\norm{T}_2=\sqrt{\Tr(A^\dag A)}$.

	\begin{definition}
		A \emph{complex conjugate} on $\mc{H}$ is an antilinear involutive isometry $c:\mc{H}\rightarrow\mc{H}$, \emph{i.e.}\ $c(\alpha\ket{\psi}+\beta\ket{\phi})=\overline{\alpha}c\ket{\psi}+\overline{\beta}c\ket{\phi}$, $c^2=1$, and $\norm*{c\ket{\psi}}=\norm*{\ket{\psi}}$ for all $\alpha,\beta\in\C$ and $\ket{\psi},\ket{\phi}\in\mc{H}$.
	\end{definition}

	We quickly work out some properties of a complex conjugate, and use that to show that any complex conjugate as defined above may be expressed as a complex conjugate with respect to some basis --- the map that acts by taking the conjugates of the coefficients in the expansion with respect to a fixed orthonormal basis. For any $\ket{\psi}\in\mc{H}$, $\ket{\psi}=\frac{1}{2}(\ket{\psi}+c\ket{\psi})+\frac{1}{2}(\ket{\psi}-c\ket{\psi})$ so we may decompose $\mc{H}=\mc{H}_+\oplus \mc{H}_-$ where $H_\pm=\set*{\ket{\psi}\in H}{c\ket{\psi}=\pm\ket{\psi}}$. As $c$ is $\R$-linear, $\mc{H}_+$ and $\mc{H}_-$ are $\R$-vector spaces, and $\mc{H}_-=i\mc{H}_+$. For $\ket{\psi},\ket{\phi}\in\mc{H}$
	\begin{align}
		\braket{c\psi}{c\phi}=\frac{1}{4}\sum_{k=0}^3i^k\norm{c\ket{\psi}+i^kc\ket{\phi}}^2=\frac{1}{4}\sum_{k=0}^3i^k\norm{\ket{\psi}+(-i)^k\ket{\phi}}^2=\overline{\braket{\psi}{\phi}},
	\end{align}
	which implies that $\braket{c\psi}{\phi}=\braket{c\psi}{c^2\phi}=\overline{\braket{\psi}{c\phi}}$. Thus, for $\ket{\psi},\ket{\phi}\in\mc{H}_+$, $\braket{\psi}{\phi}=\overline{\braket{\psi}{\phi}}$, so $\mc{H}_+$ is a real inner product space. So, there exists an orthonormal basis $\set*{\ket{n}}{n\in\Gamma}\subseteq\mc{H}_+$; this basis is countable iff $\mc{H}_+$ is separable, which is iff $\mc{H}$ is. Using the decomposition, $\mc{H}$ is the closure of $\spn_\C\set*{\ket{n}}{n\in\Gamma}$. Thus, we may expand any $\ket{\psi}\in\mc{H}$ as $\ket{\psi}=\sum_{n\in\Gamma}\psi_n\ket{n}$, so
	\begin{align}
		c\ket{\psi}=\sum_{n\in\Gamma}\overline{\psi}_nc\ket{n}=\sum_{n\in\Gamma}\overline{\psi}_n\ket{n},
	\end{align}
	giving that any complex conjugate may be expressed as the complex conjugate with respect to a basis.

	\begin{definition}\hphantom{}
		\begin{itemize}
			\item A \emph{damping sequence} is a sequence $(\Delta_n)$ in $\mc{T}_2(\mc{H})$ such that $\norm{\Delta_n}\leq 1$, and $\Delta_n\ket{\psi}\rightarrow\ket{\psi}$ and $\Delta_n^\dag\ket{\psi}\rightarrow\ket{\psi}$ for all $\ket{\psi}\in\mc{H}$.

			\item The \emph{pure tensor norm} on $\mc{H}\otimes\mc{H}$ is $\norm{\ket{\Psi}}_\otimes=\sup\set[\big]{\abs*{(\bra{\psi}\otimes\bra{\phi})\ket{\Psi}}}{\ket{\psi},\ket{\phi}\in\mc{H};\norm{\ket{\psi}},\norm{\ket{\phi}}\leq 1}$.

			\item Let $c:\mc{H}\rightarrow\mc{H}$ be a complex conjugate. An \emph{approximate maximally entangled state} is a sequence $(\ket{\Psi_n})$ in $\mc{H}\otimes\mc{H}$ such that $\norm{\ket{\Psi_n}}=1$, and $\frac{1}{\norm{\ket{\Psi_n}}_\otimes}(\bra{\psi}\otimes\Id)\ket{\Psi_n}\rightarrow c\ket{\psi}$ and $\frac{1}{\norm{\ket{\Psi_n}}_\otimes}(\Id\otimes\bra{\psi})\ket{\Psi_n}\rightarrow c\ket{\psi}$ for all $\ket{\psi}\in\mc{H}$.
		\end{itemize}
	\end{definition}

	It is direct to see that every damping sequence generates an approximate maximally entangled state, and vice versa.

	\begin{lemma}\hphantom{}
	\begin{itemize}
		\item Let $(\Delta_n)$ be a damping sequence. Writing the singular-value decompositions $\Delta_n=\sum_{i=1}^\infty s_{n,i}\ketbra{\phi_{n,i}}{\chi_{n,i}}$, the sequence $(\ket{\Psi_n})$ defined as $\ket{\Psi_n}=\frac{1}{\norm{\Delta_n}_2}\sum_{i=1}^\infty s_{n,i}\ket{\phi_{n,i}}\otimes c\ket{\chi_{n,i}}$ is an approximate maximally entangled state.

		\item Let $(\ket{\Psi_n})$ be an approximate maximally entangled state. Writing the Schmidt decomposition $\ket{\Psi_n}=\sum_{i=1}^\infty\sqrt{p_{n,i}}\ket{\phi_{n,i}}\otimes\ket{\chi_{n,i}}$, the sequence $(\Delta_n)$ defined as $\Delta_n=\frac{1}{\norm{\ket{\Psi_n}}_\otimes}\sum_{i=1}^\infty\sqrt{p_{n,i}}\ketbra{\phi_{n,i}}{c\chi_{n,i}}$ is a damping sequence.
	\end{itemize}
	\end{lemma}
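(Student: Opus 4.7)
The plan is to treat the two directions of the equivalence in parallel by establishing a single translation between the partial inner products of $\ket{\Psi_n}$ with fixed vectors and the action of $\Delta_n$ (resp.\ $\Delta_n^\dag$) on those vectors, with the complex conjugate $c$ intertwining the two pictures. The only subtle point is handling the normalization correctly: the Hilbert-Schmidt norm of $\Delta_n$ comes from the \emph{sum} of squared Schmidt coefficients, whereas the pure tensor norm of $\ket{\Psi_n}$ comes from the \emph{largest} Schmidt coefficient.

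For the first direction, starting from the SVD $\Delta_n = \sum_i s_{n,i}\ketbra{\phi_{n,i}}{\chi_{n,i}}$, I would first verify that $\|\ket{\Psi_n}\|=1$, which is immediate since $\|\Delta_n\|_2^2=\sum_i s_{n,i}^2$. The key computation is to expand $(\bra\psi\otimes\Id)\ket{\Psi_n}$ and pull the complex scalars $\langle\psi|\phi_{n,i}\rangle$ through $c$ using antilinearity:
\begin{align*}
(\bra\psi\otimes\Id)\ket{\Psi_n}
&=\frac{1}{\|\Delta_n\|_2}\sum_i s_{n,i}\langle\psi|\phi_{n,i}\rangle\,c\ket{\chi_{n,i}}
=\frac{1}{\|\Delta_n\|_2}\,c\!\left(\sum_i s_{n,i}\langle\phi_{n,i}|\psi\rangle\ket{\chi_{n,i}}\right)
=\frac{c\Delta_n^\dag\ket\psi}{\|\Delta_n\|_2},
\end{align*}
and symmetrically $(\Id\otimes\bra\psi)\ket{\Psi_n}=\Delta_n c\ket\psi/\|\Delta_n\|_2$ via $\langle\psi|c\chi_{n,i}\rangle=\langle\chi_{n,i}|c\psi\rangle$. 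A further sup over $\|\phi\|\le 1$ in the first identity, using that $c$ is an isometry, identifies the pure tensor norm as $\|\ket{\Psi_n}\|_\otimes=\|\Delta_n\|/\|\Delta_n\|_2$. Dividing, $\tfrac{1}{\|\Psi_n\|_\otimes}(\bra\psi\otimes\Id)\ket{\Psi_n}=\tfrac{1}{\|\Delta_n\|}c\Delta_n^\dag\ket\psi$. The final ingredient is that the strong convergence $\Delta_n\to\Id$ together with $\|\Delta_n\|\le 1$ forces $\|\Delta_n\|\to 1$ (since $\|\Delta_n\ket v\|\le\|\Delta_n\|\,\|\ket v\|$ and the left side tends to $\|\ket v\|$). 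Continuity of $c$ then gives $\tfrac{1}{\|\Delta_n\|}c\Delta_n^\dag\ket\psi\to c\ket\psi$, and similarly for the other marginal.

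For the second direction, from the Schmidt decomposition $\ket{\Psi_n}=\sum_i\sqrt{p_{n,i}}\ket{\phi_{n,i}}\otimes\ket{\chi_{n,i}}$, the claimed Hilbert-Schmidt norm $\|\Delta_n\|_2=1/\|\ket{\Psi_n}\|_\otimes$ is a direct consequence of $\sum_i p_{n,i}=1$ combined with $\|\ket{\Psi_n}\|_\otimes=\max_i\sqrt{p_{n,i}}$; this also immediately gives $\|\Delta_n\|=\max_i\sqrt{p_{n,i}}/\|\ket{\Psi_n}\|_\otimes=1$, so $\Delta_n\in\mathcal T_2(\mathcal H)$ and the norm bound hold. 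To show strong convergence to $\Id$, I would invert the calculation above to obtain
\[
\Delta_n\ket\psi=\frac{1}{\|\ket{\Psi_n}\|_\otimes}(\Id\otimes\bra{c\psi})\ket{\Psi_n},\qquad c\Delta_n^\dag\ket\psi=\frac{1}{\|\ket{\Psi_n}\|_\otimes}(\bra\psi\otimes\Id)\ket{\Psi_n},
\]
using $\langle c\chi_{n,i}|\psi\rangle=\langle c\psi|\chi_{n,i}\rangle$ in the first line and pulling $c$ out on the right in the second. The defining property of an approximate maximally entangled state then gives $\Delta_n\ket\psi\to c(c\psi)=\ket\psi$ and $c\Delta_n^\dag\ket\psi\to c\ket\psi$; applying $c$ (another isometry) yields $\Delta_n^\dag\ket\psi\to\ket\psi$.

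The main obstacle, such as it is, is bookkeeping around the antilinearity of $c$: the sum $\sum_i\alpha_i\,c\ket{\chi_i}$ equals $c\sum_i\overline{\alpha_i}\ket{\chi_i}$, not $c\sum_i\alpha_i\ket{\chi_i}$, and one must use the identities $\langle c\phi|c\psi\rangle=\overline{\langle\phi|\psi\rangle}$ and $c^2=\Id$ carefully to convert partial inner products involving $c\ket{\chi_i}$ into either $\Delta_n^\dag\ket\psi$ (with a $c$ on the outside) or $\Delta_n c\ket\psi$. Once this dictionary is fixed, the rest reduces to the normalization identity $\|\ket{\Psi_n}\|_\otimes\|\Delta_n\|_2=\|\Delta_n\|$ and the observation that strong convergence of $\Delta_n$ to $\Id$ under the norm bound $\|\Delta_n\|\le 1$ forces $\|\Delta_n\|\to 1$, making the two normalizations match in the limit.
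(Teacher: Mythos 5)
Your proposal is correct and follows essentially the same route as the paper: verify normalization, compute the partial inner products to obtain $c\Delta_n^\dag\ket{\psi}/\norm{\Delta_n}_2$ and $\Delta_n c\ket{\psi}/\norm{\Delta_n}_2$, identify $\norm{\ket{\Psi_n}}_\otimes=\norm{\Delta_n}/\norm{\Delta_n}_2$, and invert the same computation for the converse using $c^2=\Id$. The only addition is that you explicitly justify $\norm{\Delta_n}\to 1$ from strong convergence and the bound $\norm{\Delta_n}\le 1$, a step the paper uses without comment.
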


	\begin{proof}
		For the first point, note first that $\braket{\Psi_n}=\frac{1}{\Tr(\Delta_n^\dag\Delta_n)}\sum_{i=1}^\infty s_{n,i}^2=1$. Next, for any $\ket{\psi}\in\mc{H}$,
		\begin{align}
			\parens*{\bra{\psi}\otimes\Id}\ket{\Psi_n}=\frac{1}{\norm{\Delta_n}_2}\sum_{i=1}^\infty s_{n,i}\braket{\psi}{\phi_{n,i}}c\ket{\chi_{n,i}}=\frac{1}{\norm{\Delta_n}_2}c\sum_{i=1}^\infty s_{n,i}\ket{\chi_{n,i}}\!\!\braket{\phi_{n,i}}{\psi}=c\frac{\Delta_n^\dag}{\norm{\Delta_n}_2}\ket{\psi}.
		\end{align}
		Since $\norm*{\ket{\Psi_n}}_\otimes=\frac{\norm{\Delta_n}}{\norm{\Delta_n}_2}$ and $\norm{\Delta_n}\rightarrow 1$, we have that $\frac{1}{\norm{\ket{\Psi_n}}_\otimes}\parens*{\bra{\psi}\otimes\Id}\ket{\Psi_n}=c\frac{\Delta_n^\dag}{\norm{\Delta_n}}\ket{\psi}\rightarrow c\ket{\psi}$. In the same way,
		\begin{align}
			\parens*{\Id\otimes\bra{\psi}}\ket{\Psi_n}=\frac{1}{\norm{\Delta_n}_2}\sum_{i=1}^\infty s_{n,i}\braket{\psi}{c\chi_{n,i}}\ket{\phi_{n,i}}=\frac{1}{\norm{\Delta_n}_2}\sum_{i=1}^\infty s_{n,i}\ket{\phi_{n,i}}\braket{\chi_{n,i}}{c\psi}=\frac{\Delta_n}{\norm{\Delta_n}_2}c\ket{\psi},
		\end{align}
		and hence $\frac{1}{\norm{\ket{\Psi_n}}_\otimes}\parens*{\bra{\psi}\otimes\Id}\ket{\Psi_n}=\frac{\Delta_n}{\norm{\Delta_n}}c\ket{\psi}\rightarrow\ket{\psi}$.

		For the second point, we have that $\norm{\Delta_n}=\frac{\norm{\ket{\Psi_n}}_\otimes}{\norm{\ket{\Psi_n}}_\otimes}=1$ and also $\norm{\Delta_n}_2^2=\frac{1}{\norm{\ket{\Psi_n}}_\otimes^2}\sum_{i=1}^\infty p_{n,i}=\frac{1}{\norm{\ket{\Psi_n}}_\otimes^2}<\infty$, so it is well-defined. Finally, for any $\ket{\psi}\in\mc{H}$,
		\begin{align}
			\Delta_n\ket{\psi}=\frac{1}{\norm{\ket{\Psi_n}}_\otimes}\sum_{i=1}^\infty\sqrt{p_{n,i}}\ket{\phi_{n,i}}\!\!\braket{c\psi}{\chi_{n,i}}=\frac{1}{\norm{\ket{\Psi_n}}_\otimes}(\Id\otimes\bra{c\psi})\ket{\Psi_n}\rightarrow c^2\ket{\psi}=\ket{\psi}
		\end{align}
		and
		\begin{align}
			\Delta_n^\dag\ket{\psi}=\frac{1}{\norm{\ket{\Psi_n}}_\otimes}c\sum_{i=1}^\infty\sqrt{p_{n,i}}\ket{\chi_{n,i}}\!\!\braket{\psi}{\phi_{n,i}}=c\frac{1}{\norm{\ket{\Psi_n}}_\otimes}(\bra{\psi}\otimes\Id)\ket{\Psi_n}\rightarrow c^2\ket{\psi}=\ket{\psi}.
		\end{align}
	\end{proof}

	Finally, we use damping sequences to construct states from operator-valued measures.

	\begin{definition}
		Let $(X,\scr{S},\mu)$ be a measure space and $P:\scr{S}\rightarrow\mc{B}(\mc{H})$ be an operator-valued measure. A damping sequence $(\Delta_n)$ in $\mc{B}(\mc{H})$ \emph{$\mu$-damps $P$} if $\norm{\Delta_n^\dag P\Delta_n}_1\ll \mu$, \emph{i.e.}\ if $\mu(E)=0$ then $\Delta_n^\dag P(E)\Delta_n=0$.
	\end{definition}

	Note first that this is well-defined. As $\Delta_n\in\mc{T}_2(\mc{H})$, $P(E)\Delta_n\in\mc{T}_2(\mc{H})$, and therefore $\Delta_n^\dag P(E)\Delta_n\in\mc{T}_1(\mc{H})$, so the trace norm is defined on this operator. Also, $(\Delta_n^\dag P(E)\Delta_n)$ converges weakly to $P(E)$, so the sequence can be seen to approximate the operator-valued measure. This is because, for any $\ket{\phi},\ket{\psi}\in\mc{H}$,
	\begin{align}
	\begin{split}
		\abs[\big]{\braket{\psi}{\Delta_n^\dag P(E)\Delta_n-P(E)}{\phi}}&\leq\abs[\big]{\braket{\psi}{\Delta_n^\dag P(E)\Delta_n-\Delta_n^\dag P(E)}{\phi}}+\abs[\big]{\braket{\psi}{\Delta_n^\dag P(E)-P(E)}{\phi}}\\
		&\leq\norm[\big]{P(E)^\dag\Delta_n\ket{\psi}}\norm[\big]{(\Delta_n-\Id)\ket{\phi}}+\norm[\big]{\ket{\psi}}\norm[\big]{(\Delta_n^\dag-\Id)P(E)\ket{\phi}}\\
		&\rightarrow 0.
	\end{split}
	\end{align}

	The definition of $\mu$-damping is meant to be a way to formalise the idea of damping unnormalizable states. That is, for Casimir operator $C$, we have a candidate damping sequence $(e^{-\frac{C}{2n^2}})$, where $\frac{1}{n}$ represents the damping strength, which we refer to as \emph{Casimir damping}. In order to extract a state-valued function $X\rightarrow\mc{D}(\mc{H})$ from the damped measure $\Delta_n^\dag P\Delta_n$, we may make use of the Radon-Nikodym theorem.

	\begin{lemma} \label{lemma:radon-nikodym}
			Let $P:\scr{S}\rightarrow\mc{B}(\mc{H})$ be a POVM measure, and let $(\Delta_n)$ be a sequence in $\mc{B}(\mc{H})$ that $\mu$-damps $P$, for $\mu:\scr{S}\rightarrow[0,\infty]$ a $\sigma$-finite measure. Then, for each $n$, there exists a measurable function $\rho_n:X\rightarrow\mc{D}(\mc{H})$ and an integrable function $\pi_n:X\rightarrow[0,\infty)$ such that, for all $E\in\scr{S}$,
			\begin{align}
				\Delta_n^\dag P(E)\Delta_n=\int_E\rho_n\pi_nd\mu
			\end{align}
	\end{lemma}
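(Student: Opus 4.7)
The plan is to realize $Q_n(E):=\Delta_n^\dag P(E)\Delta_n$ as a countably additive, $\mu$-absolutely continuous measure valued in the trace-class operators of a separable Hilbert space, and then invoke the Radon-Nikodym property (RNP) for this target space to extract an operator-valued density, which we normalize to a density-operator-valued function $\rho_n$ and a scalar density $\pi_n$.

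First I would reduce to a separable setting. Because $\Delta_n\in\mc{T}_2(\mc{H})$, the closure $\mc{K}_n$ of $\mathrm{ran}(\Delta_n^\dag)$ is a separable Hilbert subspace of $\mc{H}$, and each $Q_n(E)$ is supported on $\mc{K}_n$ (it vanishes on $\ker \Delta_n = \mc{K}_n^\perp$ and maps into $\mc{K}_n$). Thus $Q_n$ may be regarded as a set function $\scr{S}\to\mc{T}_1(\mc{K}_n)$. Next I would check that $Q_n$ is countably additive in trace norm: for any disjoint $\{E_i\}\subseteq\scr{S}$, the partial sums $\sum_{i\leq N}Q_n(E_i)$ are positive and increase to $Q_n\bigl(\bigcup_i E_i\bigr)$ weakly, and since for positive trace-class operators $\|A\|_1=\Tr A$ the monotone increase forces $\Tr$-convergence, hence trace-norm convergence. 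Together with the hypothesis $\|\Delta_n^\dag P\Delta_n\|_1\ll\mu$, this makes $Q_n$ a $\mu$-continuous vector measure on the separable Banach space $\mc{T}_1(\mc{K}_n)$.

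The main step is to apply the Radon-Nikodym theorem for vector measures. Since $\mc{K}_n$ is separable, the compact operators $\mc{K}(\mc{K}_n)$ are separable and $\mc{T}_1(\mc{K}_n)\cong\mc{K}(\mc{K}_n)^*$ is a separable dual space, which has the Radon-Nikodym property. The vector measure $Q_n$ is also of bounded variation (its total variation is dominated by $\Tr Q_n(X)\leq\|\Delta_n\|_2^2<\infty$). Hence there exists a Bochner-integrable $T_n\colon X\to\mc{T}_1(\mc{K}_n)$ with $Q_n(E)=\int_E T_n\,d\mu$ for all $E\in\scr{S}$. Applying this to a countable dense set of vector functionals $\braket{v}{\cdot}{v}$ shows $T_n(x)\geq 0$ $\mu$-almost everywhere, and using the scalar Radon-Nikodym theorem for $E\mapsto\Tr Q_n(E)$ gives a density $\pi_n\in L^1(\mu)$ with $\pi_n(x)=\Tr T_n(x)$ $\mu$-almost everywhere. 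Finally I set
\begin{align*}
\rho_n(x)=\begin{cases}T_n(x)/\pi_n(x)&\text{if }\pi_n(x)>0,\\\rho_0&\text{otherwise,}\end{cases}
\end{align*}
for any fixed reference density matrix $\rho_0\in\mc{D}(\mc{H})$, viewed as an operator on $\mc{H}$ (zero outside $\mc{K}_n$). Then $\rho_n$ is measurable into $\mc{D}(\mc{H})$, $\pi_n\geq 0$ is integrable, and $\Delta_n^\dag P(E)\Delta_n=\int_E \rho_n\pi_n\,d\mu$ as claimed.

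The main obstacle I expect is the interplay between the two types of additivity: verifying that $Q_n$ is not merely weakly countably additive (which is inherited from $P$) but is countably additive in trace norm, so that it fits the hypotheses of the Bochner/vector-valued Radon-Nikodym theorem. The positivity of $P$ combined with $\Delta_n$ being Hilbert-Schmidt is what makes this work, and without the separability reduction through $\mc{K}_n$ the target space $\mc{T}_1(\mc{H})$ would not in general have the RNP, so this reduction is essential.
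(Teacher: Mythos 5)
Your proof is correct, but it takes a genuinely different route from the paper's. The paper stays entirely at the level of scalar measure theory: it applies the classical Radon--Nikodym theorem to each complex measure $\braket{\psi}{\Delta_n^\dag P(\cdot)\Delta_n}{\phi}$ and to the trace measure, argues that the resulting densities $f_{\psi,\phi}(x)$ are sesquilinear for almost every $x$, assembles them into an operator-valued function $F$, and then normalizes by $\Tr F$. You instead package $Q_n(E)=\Delta_n^\dag P(E)\Delta_n$ as a single vector measure with values in $\mc{T}_1(\mc{K}_n)$ for the separable space $\mc{K}_n=\overline{\mathrm{ran}(\Delta_n^\dag)}$, verify trace-norm countable additivity and bounded variation (both of which do follow as you say, since positivity plus weak countable additivity plus Tonelli over an orthonormal basis make $E\mapsto\Tr Q_n(E)$ countably additive, so the tails vanish in trace norm), and then invoke the Radon--Nikodym property of the separable dual $\mc{T}_1(\mc{K}_n)\cong\mc{K}(\mc{K}_n)^*$ to get a Bochner-integrable density in one stroke; positivity a.e.\ and the factorization $T_n=\rho_n\pi_n$ (using that a positive trace-class operator with zero trace vanishes) then go through exactly as you describe. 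Your approach uses heavier machinery (vector measures, RNP, Bochner integration) where the paper uses only the scalar theorem, but it buys a cleaner treatment of the measure-theoretic subtleties: the paper's ``it is direct to see'' that sesquilinearity holds almost everywhere implicitly requires passing to a countable dense family of vectors, i.e.\ a separability reduction of exactly the kind your restriction to $\mc{K}_n$ makes explicit (and which is automatic there because $\Delta_n$ is Hilbert--Schmidt), and your $\rho_n$ comes out strongly measurable rather than merely weakly so. The only points worth tightening are routine: the vector-valued Radon--Nikodym theorem is usually stated for finite $\mu$, so for $\sigma$-finite $\mu$ you should patch densities together over an exhaustion by sets of finite measure, and you should say explicitly that countable additivity of the trace measure is what drives the trace-norm convergence of the partial sums.
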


	\begin{proof}
		As $\norm{\Delta_n^\dag P\Delta_n}=\Tr(\Delta_n^\dag P\Delta_n)\ll \mu$, this implies that, in particular, $\braket{\psi}{\Delta_n^\dag P\Delta_n}{\phi}\ll \mu$ for all $\ket{\psi},\ket{\phi}\in\mc{H}$. Thus, by the Radon-Nikodym theorem, there exist integrable functions $f_{\psi,\phi}:X\rightarrow\C$ and $g:X\rightarrow[0,\infty)$ such that $\braket{\psi}{\Delta_n^\dag P(E)\Delta_n}{\phi}=\int f_{\psi,\phi}d\mu$ and $\Tr(\Delta_n^\dag P\Delta_n)=\int gd\mu$. Then, it is direct to see that, for almost every $x\in X$, $(\ket{\psi},\ket{\phi})\mapsto f_{\psi,\phi}(x)$ is linear in $\ket{\phi}$ and antilinear in $\ket{\psi}$. Then, there exists a function $F:X\rightarrow\mc{L}(\mc{H})$ such that $\braket{\psi}{F(x)}{\phi}=f_{\psi,\phi}(x)$ almost everywhere. Next, as $\norm{F(x)}_1=g(x)=\Tr(F(x))$ almost everywhere, we may assume that $F$ is bounded, trace class, and positive. Finally, we take $\rho_n(x)=\frac{F(x)}{\Tr(F(x))}$ and $\pi_n(x)=\Tr(F(x))$.
	\end{proof}


	\bibliography{monogamy}

\newcommand{\etalchar}[1]{$^{#1}$}
\begin{thebibliography}{GWM{\etalchar{+}}09}

\bibitem[ACP20]{ACP20}
Victor~V. Albert, Jacob~P. Covey, and John Preskill.
\newblock Robust encoding of a qubit in a molecule.
\newblock {\em Physical Review X}, 10(3), Sep 2020.

\bibitem[Aro]{Arovas}
D.~P. Arovas.
\newblock {\em {Lecture Notes on Group Theory in Physics}}.
\newblock online notes.

\bibitem[Axl20]{Axl20}
Sheldon. Axler.
\newblock {\em Measure, Integration \& Real Analysis}.
\newblock Graduate Texts in Mathematics, 282. Springer International
  Publishing, Cham, 1st ed. 2020. edition, 2020.

\bibitem[BB84]{BB84}
Charles~H. Bennett and Gilles Brassard.
\newblock Quantum cryptography: Public key distribution and coin tossing.
\newblock In {\em International Conference on Computers, Systems and Signal
  Processing}, pages 175--179, 1984.

\bibitem[BB03]{BB03}
Philippe Blanchard and Erwin Br\"uning.
\newblock {\em Mathematical methods in physics : distributions, Hilbert space
  operators, and variational methods}.
\newblock Progress in mathematical physics ; v. 26. Birkh\"auser, Boston, 2003.

\bibitem[BKJJ20]{Banaszek:20}
Konrad Banaszek, Ludwig Kunz, Micha{\l} Jachura, and Marcin Jarzyna.
\newblock Quantum limits in optical communications.
\newblock {\em J. Lightwave Technol.}, 38(10):2741--2754, May 2020.

\bibitem[BL20]{BL20}
Anne Broadbent and S{\'e}bastien Lord.
\newblock Uncloneable quantum encryption via oracles.
\newblock pages 4:1 -- 4:22, 2020.

\bibitem[Bra98]{braunstein1998quantum}
Samuel~L Braunstein.
\newblock Quantum error correction for communication with linear optics.
\newblock {\em Nature}, 394(6688):47--49, 1998.

\bibitem[CLLZ21]{coladangelo2021hidden}
Andrea Coladangelo, Jiahui Liu, Qipeng Liu, and Mark Zhandry.
\newblock Hidden cosets and applications to unclonable cryptography.
\newblock {\em arXiv preprint arXiv:2107.05692}, 2021.

\bibitem[CS96]{PhysRevA.54.1098}
A.~R. Calderbank and Peter~W. Shor.
\newblock Good quantum error-correcting codes exist.
\newblock {\em Phys. Rev. A}, 54:1098--1105, Aug 1996.

\bibitem[CV22]{CV22}
Eric Culf and Thomas Vidick.
\newblock A monogamy-of-entanglement game for subspace coset states.
\newblock {\em {Quantum}}, 6:791, September 2022.

\bibitem[ecz22a]{eczoo_ag}
Algebraic-geometry (ag) code.
\newblock In Victor~V. Albert and Philippe Faist, editors, {\em The Error
  Correction Zoo}. 2022.

\bibitem[ecz22b]{eczoo_analog_stabilizer}
Analog stabilizer code.
\newblock In Victor~V. Albert and Philippe Faist, editors, {\em The Error
  Correction Zoo}. 2022.

\bibitem[ecz22c]{eczoo_group_gkp}
Group gkp code.
\newblock In Victor~V. Albert and Philippe Faist, editors, {\em The Error
  Correction Zoo}. 2022.

\bibitem[FFB{\etalchar{+}}12]{furrer2012continuous}
Fabian Furrer, Torsten Franz, Mario Berta, Anthony Leverrier, Volkher~B Scholz,
  Marco Tomamichel, and Reinhard~F Werner.
\newblock Continuous variable quantum key distribution: finite-key analysis of
  composable security against coherent attacks.
\newblock {\em Physical review letters}, 109(10):100502, 2012.

\bibitem[FNA{\etalchar{+}}20]{PhysRevX.10.041018}
Philippe Faist, Sepehr Nezami, Victor~V. Albert, Grant Salton, Fernando
  Pastawski, Patrick Hayden, and John Preskill.
\newblock Continuous symmetries and approximate quantum error correction.
\newblock {\em Phys. Rev. X}, 10:041018, Oct 2020.

\bibitem[FRM{\etalchar{+}}12]{FRMDH22}
E.~Flurin, N.~Roch, F.~Mallet, M.~H. Devoret, and B.~Huard.
\newblock Generating entangled microwave radiation over two transmission lines.
\newblock {\em Phys. Rev. Lett.}, 109:183901, Oct 2012.

\bibitem[GGDL19]{GGDL19}
Shouvik Ghorai, Philippe Grangier, Eleni Diamanti, and Anthony Leverrier.
\newblock Asymptotic security of continuous-variable quantum key distribution
  with a discrete modulation.
\newblock {\em Physical Review X}, 9(2):021059, 2019.

\bibitem[GHD{\etalchar{+}}15]{gehring2015implementation}
Tobias Gehring, Vitus H{\"a}ndchen, J{\"o}rg Duhme, Fabian Furrer, Torsten
  Franz, Christoph Pacher, Reinhard~F Werner, and Roman Schnabel.
\newblock Implementation of continuous-variable quantum key distribution with
  composable and one-sided-device-independent security against coherent
  attacks.
\newblock {\em Nature communications}, 6(1):1--7, 2015.

\bibitem[GKP01]{GKP01}
Daniel Gottesman, Alexei Kitaev, and John Preskill.
\newblock Encoding a qubit in an oscillator.
\newblock {\em Physical Review A}, 64(1):012310, 2001.

\bibitem[GP01]{PhysRevA.63.022309}
Daniel Gottesman and John Preskill.
\newblock Secure quantum key distribution using squeezed states.
\newblock {\em Phys. Rev. A}, 63:022309, Jan 2001.

\bibitem[GWM{\etalchar{+}}09]{Gu_2009}
Mile Gu, Christian Weedbrook, Nicolas~C. Menicucci, Timothy~C. Ralph, and Peter
  van Loock.
\newblock Quantum computing with continuous-variable clusters.
\newblock {\em Physical Review A}, 79(6), jun 2009.

\bibitem[Hol11]{Hol11}
Alexander~S. Holevo.
\newblock {\em Probabilistic and Statistical Aspects of Quantum Theory}.
\newblock Monographs (Scuola Normale Superiore) ; 1. Scuola Normale Superiore,
  Pisa, 1st ed. 2011. edition, 2011.

\bibitem[ISGA22]{designs}
Joseph~T. Iosue, Kunal Sharma, Michael~J. Gullans, and Victor~V. Albert.
\newblock {Continuous-variable quantum state designs: theory and applications}.
\newblock {\em e-print}, nov 2022.

\bibitem[Lev17]{leverrier2017security}
Anthony Leverrier.
\newblock Security of continuous-variable quantum key distribution via a
  gaussian de finetti reduction.
\newblock {\em Physical review letters}, 118(20):200501, 2017.

\bibitem[LS98]{lloyd98}
Seth Lloyd and Jean-Jacques~E. Slotine.
\newblock Analog quantum error correction.
\newblock {\em Phys. Rev. Lett.}, 80:4088--4091, May 1998.

\bibitem[Men14]{Menicucci2014}
Nicolas~C. Menicucci.
\newblock {Fault-Tolerant Measurement-Based Quantum Computing with
  Continuous-Variable Cluster States}.
\newblock {\em Phys. Rev. Lett.}, 112(12):120504, mar 2014.

\bibitem[Mor17]{Mor17}
Valter Moretti.
\newblock {\em Spectral theory and quantum mechanics mathematical foundations
  of quantum theories, symmetries and introduction to the algebraic
  formulation}.
\newblock UNITEXT, 110. Springer, Cham, 2nd ed. edition, 2017.

\bibitem[MR22]{MR22arxiv}
Tony Metger and Renato Renner.
\newblock Security of quantum key distribution from generalised entropy
  accumulation, 2022.

\bibitem[MWV22]{MWV22}
Fabian Meylahn, Benno Willke, and Henning Vahlbruch.
\newblock Squeezed states of light for future gravitational wave detectors at a
  wavelength of 1550 nm.
\newblock {\em Phys. Rev. Lett.}, 129:121103, Sep 2022.

\bibitem[Nac76]{Nac76}
Leopoldo. Nachbin.
\newblock {\em The Haar integral}.
\newblock R. E. Krieger Pub. Co., Huntington, N.Y, 1976.

\bibitem[NGJ20]{PhysRevLett.125.080503}
Kyungjoo Noh, S.~M. Girvin, and Liang Jiang.
\newblock Encoding an oscillator into many oscillators.
\newblock {\em Phys. Rev. Lett.}, 125:080503, Aug 2020.

\bibitem[Pau02]{Pau02}
Vern~I. Paulsen.
\newblock {\em Completely bounded maps and operator algebras}.
\newblock Cambridge studies in advanced mathematics ; 78. Cambridge University
  Press, Cambridge ;, 2002.

\bibitem[PGT{\etalchar{+}}22]{Primaatmaja2022}
Ignatius~W. Primaatmaja, Koon~Tong Goh, Ernest Y.~Z. Tan, John T.~F. Khoo,
  Shouvik Ghorai, and Charles C.~W. Lim.
\newblock {Security of device-independent quantum key distribution protocols: a
  review}.
\newblock jun 2022.

\bibitem[Ren05]{Ren05}
Renato Renner.
\newblock Security of quantum key distribution.
\newblock 06(01):1--127, 2005.

\bibitem[Ste96a]{PhysRevLett.77.793}
A.~M. Steane.
\newblock Error correcting codes in quantum theory.
\newblock {\em Phys. Rev. Lett.}, 77:793--797, Jul 1996.

\bibitem[Ste96b]{Steane1996b}
Andrew Steane.
\newblock {Multiple-particle interference and quantum error correction}.
\newblock {\em Proceedings of the Royal Society of London. Series A:
  Mathematical, Physical and Engineering Sciences}, 452(1954):2551--2577, nov
  1996.

\bibitem[TFKW13]{tomamichel2013monogamy}
Marco Tomamichel, Serge Fehr, Jedrzej Kaniewski, and Stephanie Wehner.
\newblock A monogamy-of-entanglement game with applications to
  device-independent quantum cryptography.
\newblock {\em New Journal of Physics}, 15(10):103002, 2013.

\bibitem[TL17]{tomamichel2017largely}
Marco Tomamichel and Anthony Leverrier.
\newblock A largely self-contained and complete security proof for quantum key
  distribution.
\newblock {\em Quantum}, 1:14, 2017.

\bibitem[TR11]{tomamichel2011uncertainty}
Marco Tomamichel and Renato Renner.
\newblock Uncertainty relation for smooth entropies.
\newblock {\em Physical review letters}, 106(11):110506, 2011.

\bibitem[VAW{\etalchar{+}}18]{Vuillot2018}
Christophe Vuillot, Hamed Asasi, Yang Wang, Leonid~P. Pryadko, and Barbara~M.
  Terhal.
\newblock {Quantum Error Correction with the Toric-GKP Code}.
\newblock sep 2018.

\bibitem[VNT07]{VNT07}
Serge Vl{\u{a}}du{\textcommabelow{t}}, Dmitry Nogin, and Michael Tsfasman.
\newblock {\em Algebraic Geometric Codes: Basic Notions}.
\newblock American Mathematical Society, USA, 2007.

\bibitem[VZ21]{vidick2021classical}
Thomas Vidick and Tina Zhang.
\newblock Classical proofs of quantum knowledge.
\newblock In {\em Annual International Conference on the Theory and
  Applications of Cryptographic Techniques}, pages 630--660. Springer, 2021.

\bibitem[Yos95]{Yos95}
K\=osaku Yoshida.
\newblock {\em Functional analysis}.
\newblock Classics in mathematics. Springer, Berlin ;, 1980 - 1995.

\end{thebibliography}

\end{document}